\theoremstyle{plain}
\newtheorem{thm}{Theorem}
\newtheorem{definition}{Definition}
\newtheorem{lemma}{Lemma}
\definecolor{blue}{RGB}{41,5,195}
\definecolor{internal}{rgb}{0.6,0.,0.5}
\definecolor{citecolor}{rgb}{0.,0.5,0.3}
\definecolor{urlcol}{rgb}{0.05,0.,0.2}
\renewcommand*{\backrefalt}[4]{
	\ifcase #1 %
 No citations in the text.%
	 \or
	Cited in page #2.%
 \else
	 #1 citations in pages #2.%
	\fi}%
\newcommand{\on}[1]{\operatorname{#1}}
\begin{document}

\selectlanguage{brazil}

\includepdf{capa}
\newpage

\clearpage
\pagenumbering{arabic} 

\selectlanguage{brazil}

\includepdf{folhaderosto}



     \includepdf{FICHACAT}


%

   \includepdf[pages=1-2]{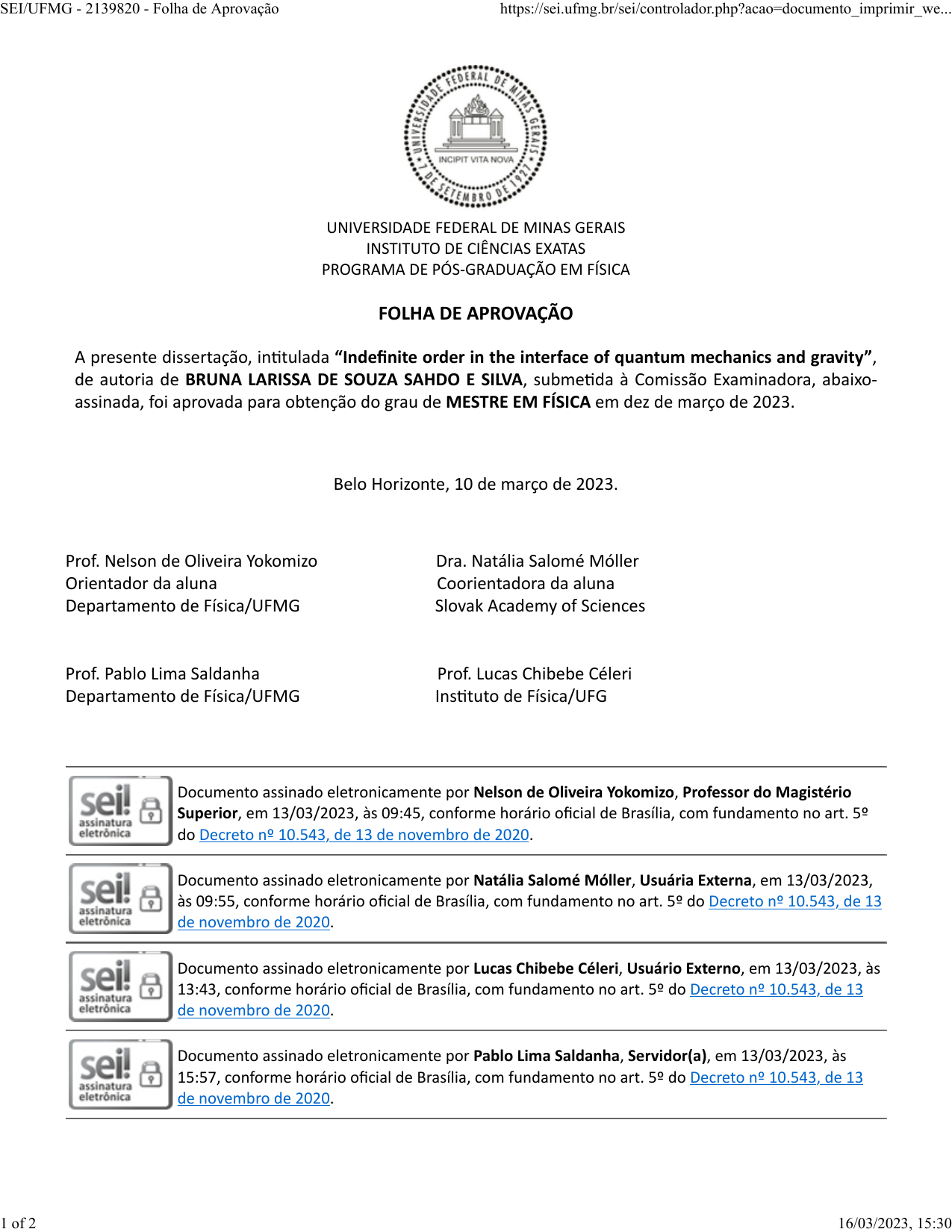}

%


\begin{agradecimentos}
Agradeço aos meus pais Márcia e Jesse, e a minha irmã Aline por todo o suporte ao longo dos anos com minhas idas e vindas entre Manaus e Belo Horizonte.

Aos meus orientadores Nelson Yokomizo e Natália S. Móller pela melhor orientação que eu poderia ter recebido neste mestrado, ainda mais levando em conta a pandemia. Agradeço pela parceria, compreensão, incentivo e, enfim, pela amizade. Espero que continuem tendo muito sucesso e que possamos trabalhar juntos no futuro novamente.

Aos professores do Departamento de Física do ICEX com quem tive aula por terem, em sua maior parte, proporcionado a mim e aos meus colegas  uma boa experiência de aprendizado apesar das dificuldades do ensino virtual na pandemia.

Aos ex e atuais integrantes do Grupo de Física Teórica Fundamental (GFT) por, desde a época graduação, trazerem as mais interessantes dúvidas e discussões sobre física, apresentarem seus trabalhos excepcionais e serem ótimos parceiros de cafézinho.

Aos meus amigos em Belo Horizonte, Pedro (Bruni), João, Ana e Amanda, que estiveram ao meu lado vivendo as experiências da faculdade e fora dela, compartilhando frustrações e alegrias e criando boas memórias ao longo dos anos. Agradeço também especialmente ao Bruni por nossas conversas espontâneas sobre tópicos em física e matemática, que mantêm vivas as nossas paixões por ambas as áreas até hoje.

Ao Davi por me conhecer tão bem e me ajudar nos momentos difíceis com a maior leveza, pela amizade de tantos anos e, recentemente, pela companhia e apoio moral na rotina de ficar até tarde escrevendo nossos trabalhos de fim de curso. Agradeço pelo carinho e desejo tudo o que há de melhor na sua caminhada.

Ao Gustavo por me ouvir e manter as conversas ainda que nossas rotinas se distanciem um pouco, por torcer por mim e por sempre arrumar um tempinho para tomar um café, passear e conversar pessoalmente quando estou em Manaus.

À CAPES pelo suporte financeiro.
\end{agradecimentos}

\begin{epigrafe}
 \vspace*{\fill}
	\begin{flushright}
 \textit{``We all like to congregate at boundary conditions.
\\
Where land meets water. Where earth meets air. 
\\
Where body meets mind. Where space meets time. 
\\
We like to be on one side, and look at the other.''
\\
\textbf{Douglas Adams}\\ (Mostly Harmless)}
	\end{flushright}
\end{epigrafe}
---


\setlength{\absparsep}{18pt} 
\begin{resumo}[Resumo]
\selectlanguage{brazil}

Há tempos procura-se entender como as características da Teoria Quântica e da Relatividade Geral se unem para explicar a física na sua interface. Um dos motivos pelos quais essa é uma tarefa difícil é a discrepância entre as formas de abordar o tempo e a causalidade em cada teoria. Por exemplo, a estrutura causal na relatividade é dinâmica, determinada pela distribuição de massa no espaço-tempo, enquanto que, no formalismo quântico, ela é fixa e deve ser estabelecida \emph{a priori}. Nesta dissertação, discutimos a noção de ordem indefinida, que aparece pela primeira vez em uma generalização abstrata da Teoria Quântica. Tal generalização é feita com o intuito de eliminar a incompatibilidade da teoria com a Relatividade Geral no que diz respeito à causalidade. Para isso, o formalismo remove a exigência de que haja uma estrutura causal global e, portanto, a ordem entre operações em protocolos não precisa estar bem definida. Um típico exemplo de ordem indefinida é o processo do \textit{quantum switch}, que realiza uma superposição quântica da ordem em que duas operações são aplicadas em um sistema. As probabilidades do \textit{quantum switch} já foram reproduzidas experimentalmente com fótons, em protocolos completamente descritos pela mecânica quântica usual. Já que esses experimentos são compatíveis com a estrutura causal do espaço-tempo, isso gerou incertezas sobre o que se pode concluir da realização de um processo com ordem indefinida dependendo do contexto. Retornaremos às motivações iniciais, apresentando como cenários gravitacionais a energias baixas poderiam fazer surgir ordem indefinida. Nisto estão inclusas as formulações de um \textit{quantum switch} em um cenário de gravidade quântica e um \textit{quantum switch} em uma métrica clássica de Schwarzschild. Assim, o \textit{quantum switch} pode ser usado como uma base comum para discutir diferenças entre os contextos. A última proposta, de um \textit{quantum switch} em uma métrica clássica, é um trabalho original. Além de ser um exemplo de ordem indefinida, a sua realização na gravidade da Terra é proposta como um teste das previsões da mecânica quântica em espaços-tempos curvos, um regime que até hoje não foi experimentalmente testado.

   \vspace{\onelineskip}
 
   \noindent 
 \textbf{Palavras-chave}: Ordem causal indefinida; \textit{Quantum switch}; Relógios quânticos; Mecânica quântica em espaços-tempos curvos.
\end{resumo}

\begin{resumo}[Abstract]
 \begin{otherlanguage*}{english}
Researchers have long been aiming to understand how the characteristics of Quantum Theory and General Relativity combine to account for regimes in their interface. One reason why this is a hard task is how differently the theories approach time and causality. For instance, causal structure in relativity is dynamical, determined by the distribution of mass in spacetime, while, in the quantum formalism, it is supposed to be fixed and given in advance. In this master's thesis, we discuss the notion of indefinite order, which
first appears in an abstract generalization of Quantum Theory. The purpose of that generalization is eliminating the incompatibility of the theory with General Relativity with respect to causality. For this, the demand for global causal structure is removed, in principle allowing cases for which the order of operations in protocols is not necessarily well-defined. One epitomical example of indefinite order is the quantum switch process, which realizes a
quantum superposition of orders of two operations on a target system. The quantum switch probabilities have been reproduced in experimental optical setups that are fully described by regular quantum mechanics. Since these experiments are compatible with spacetime causal structure, this generated uncertainty on what conclusions can be drawn from the realization of indefinite order depending on the context. Here, we return to the initial motivations and present how scenarios involving gravity in low energies could lead to indefinite order. This includes the formulation of a quantum switch in a quantum gravity scenario and of a quantum switch in a classical Schwarzschild metric. Thus, the quantum switch will be used as a common 
ground for discussing differences between all setups. The latter proposal of a quantum switch in a classical metric is an original work that, aside from being an example of indefinite order, proposes the realization of the protocol in Earth’s gravity as a test of quantum mechanics on curved spacetimes, a regime which has not yet been explored experimentally.

   \vspace{\onelineskip}
 
   \noindent 
   \textbf{Keywords}: Indefinite causal order; Quantum switch; Quantum clocks; Quantum mechanics on curved spacetimes.
 \end{otherlanguage*}

\end{resumo}


\selectlanguage{english}
\tableofcontents*

\textual

\selectlanguage{english}

\chapter{Introduction}

    One of the main goals of a physical theory is to describe how systems \emph{change in time}.  Way before relativistic physics, the fundamental explanations for the movement of stars, falling apples, the atomic world, electric and magnetic effects were all made with respect to a global time parameter.
    
         When a global time exists, it determines which physical events can exert influence on others, i.e. causal relations. An event is said to be in the causal future of another event if it has the \emph{possibility} of being influenced by it. To talk about possibility of influence in physics, we assume that one can freely choose what happens at the first event, say A. Then, by analysing the virtual changes predicted to occur in the physical theory at the other event, say B, under that free variation, we can determine the causal relation between them. More specifically, we inspect whether what happens at B is dependent on what happens at A considering all possibilities and, if that is true, B is in the causal future of A. In a theory with global time in which any velocity is allowed, an event A$=(x_A,t_A)$ is in the causal future of another B$=(x_B,t_B)$ if it happens later in the arrow of time ($t_A>t_B$).

 The Theory of Relativity changed the paradigm that time and space should be treated as part of a passive background on top of which physics is formulated. Not only time becomes a relative quantity already in Special Relativity, but also causal relations are modified: systems cannot travel faster than light, and that reduces the set of events that can be reached by a physical influence coming from a fixed event. In the theory, the set of events in spacetime that can either influence or be influenced by an event A is called the \emph{lightcone} of A. And we say that events outside the lightcone are \emph{spacelike separated} from A and are fundamentally unable to establish a causal relation with A.
 
 If two events are inside each other's ligthcones, relativity says that every observer agrees upon their order, even if the perceived elapsed time between them may vary. If they are spacelike separated, however, an observer may perceive A happening before B, while another observer perceives B before A or even both happening simultaneously. Then, although each observer has a perception of time, perceived time order does not fully determine causal future and past like it did before, in newtonian mechanics. The lightcones of all events are the objects really characterizing the causal structure of the theory.

 Furthermore, if we consider General Relativity, spacetime curvature (gravity) can shape lightcones differently, because it limits how physical information travels. For instance, gravity bends the otherwise straight paths naturally followed by light between 2 points, and those define the lightcones' surfaces. We can visualize that in Fig.~\ref{fig:BHlightcones}, which illustrates local and global lightcones in a spacetime generated by a black hole. Now, the spacetime matter configuration shaping the lightcones is determined by solving Einstein's equations. Therefore, causal relations in General Relativity behave as extra physical variables of the theory rather than part of a fixed structure on top of which the physical variables live.

 \begin{figure}
     \centering
     \includegraphics[scale=0.7]{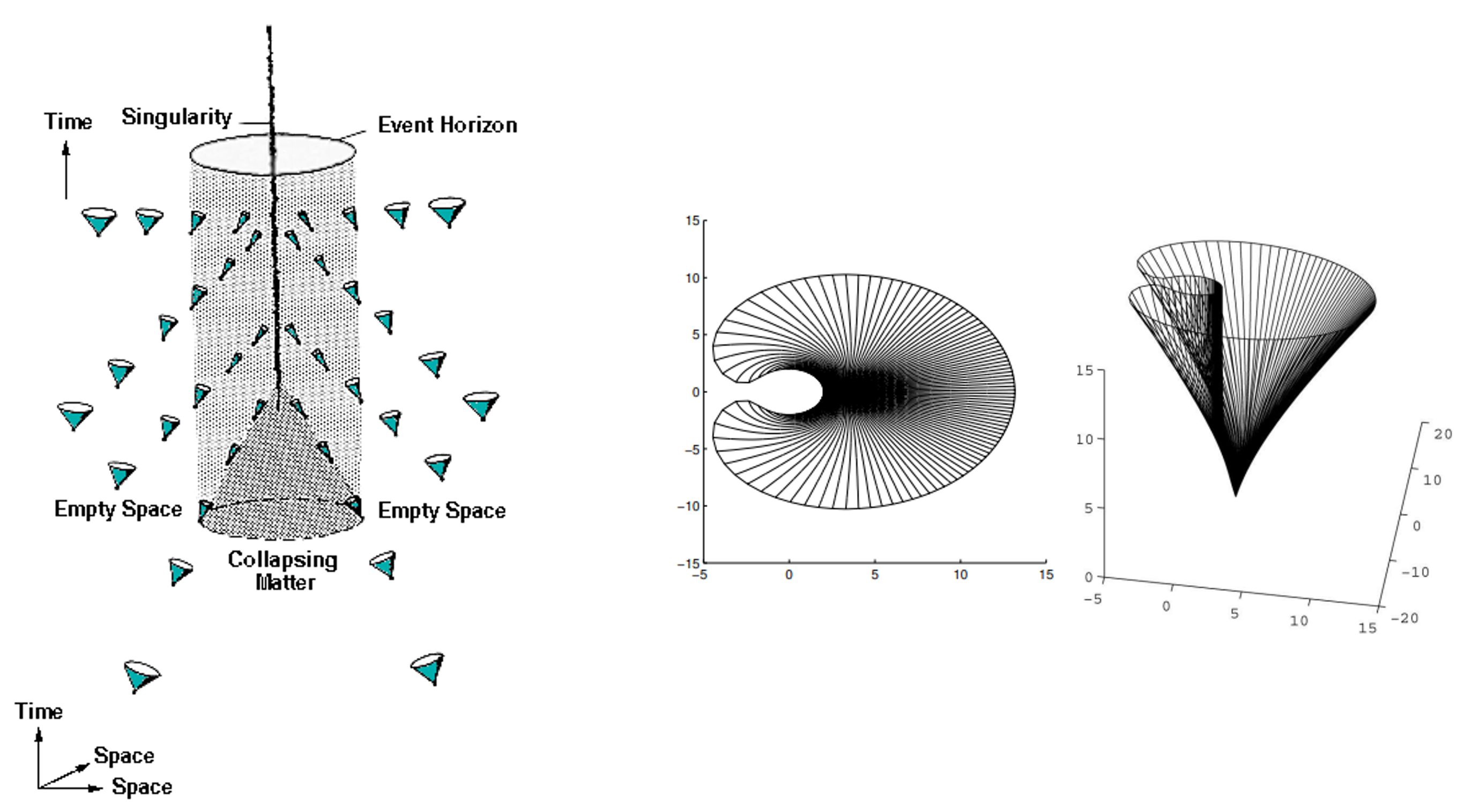}
     \caption{Black hole spacetime: an example of how lightcones can have modified shapes in gravitational configurations. a - representation of the local lightcones near the collapse of matter that forms a black hole (in Eddington-Finkelstein coordinates), from reference \protect\cite{Penrose1979}. b - top down and rotated views of the global future lightcone of a point in the proximity of a black hole, from reference \protect\cite{elmabrouk2013visualizing}.}
     \label{fig:BHlightcones}
 \end{figure}

 Quantum Theory, on the other hand, usually describes changes with respect to a global time parameter, for instance, through the quantum wave equation. In the expectation of making the theories more similar and possibly join their descriptions, we could wonder: what would it be like if causal structure was treated as a physical variable in quantum mechanics too? Physical quantities in Quantum Theory are described by observables, like position, momentum and spin. Measurements associated to them give out inherently probabilistic results and we cannot generally assign definite values to them independently of the measurement context. In other words, they are subject to quantum uncertainty. That would also happen to causal relations if they were treated within the quantum formalism.

 In a regime in which both General Relativity and Quantum Theory contributions are significant, we could speculate the appearance of this indefiniteness of causal relations, assuming the degrees of freedom shaping spacetime are described quantum mechanically~\cite{quantumCausality,tbell}. This could result in pairs of events for which order relations are indefinite, like a superposition of A being in the past of B and B being in the past of A.
 
 The idea of describing such possibly indefinite causal relations in Quantum Theory has been explored abstractly using quantum information methods~\cite{Oreshkov,Chiribella,Hardy}. The generalized causal structures resulting from these frameworks rely on operational approaches. For instance, operationally an event is not expressed as a point $(x,t)$ in a background, since this is usually a theory-dependent notion. Statements must be made based mainly on possible experimental outcomes probabilities. This shift brings consequences for the physical interpretation of the elements in the frameworks. For instance, it has been suggested that, in a certain context, indefinite order/quantum causal structure may appear within regular quantum mechanics scenarios, without gravity~\cite{Araujo2015,Oreshkov2019timedelocalized}. The central topic in this discussion is the quantum switch~\cite{Chiribella}, a theoretical protocol with indefinite order. The probabilities predicted for it are reproducible in photonic laboratories~\cite{ReviewExp,Rubino,Procopio,Goswami,Wei,Taddei,RubinoAgain}. Later, theoretical versions of that protocol involving quantum and classical gravity were constructed as well~\cite{tbell,Voji,QSonEarth}. By examining the proposals, one can possibly investigate the role of indefinite order as an indicator of incompatibility with causal structure for each context. This is what we intend to do in the second half of this work, after understanding how causal relations are dealt with in regular quantum mechanics and in the abstract process formalism in~\cite{Oreshkov}.

 The structure of this thesis is as follows. In chapter~\ref{Chap CausalityinQT}, we will introduce the mathematical definition of causal structure and how to approach causal relations from the point of view of a general probabilistic theory, along with a Bell-type causal inequality to illustrate the meaning of definite order~\cite{Oreshkov,Chiribellanetworks,Voji}, and then we will better understand how causal structure appears within Quantum Theory by doing an overview on operational quantum dynamics~\cite{Nielsen,MilzPollock,Kraus,Chiribellanetworks}. In chapter~\ref{Chap Process Matrix}, we present motivations and construct explicitly the bipartite process matrix formalism~\cite{Oreshkov,Chiribella}, a generalization of Quantum Theory which does not assume definite causal structure. We also introduce the quantum switch~\cite{Chiribella}, a typical process with indefinite order, and quickly discuss its implementations~\cite{ReviewExp}. From there, we study the formalism of ideal clocks on curved spacetimes~\cite{Zych} in chapter~\ref{Chap Clocks}. With the tools provided by it, we are able to describe, in chapter~\ref{Chap:QGravitySwitch}, the gravitational quantum switch, a thought experiment happening in a quantum gravity scenario~\cite{tbell}. Finally, in chapter~\ref{Chap QSonEarth}, an original work is presented regarding the formulation of a quantum switch. It uses gravitational time dilation as a resource to produce indefinite order of operations in a classical spherical spacetime. Other than being an example of indefinite order, the realization of the protocol in the gravity of Earth could be used to probe the regime of quantum mechanics on curved spacetimes.

\chapter{Causality in the quantum framework}\label{Chap CausalityinQT}     
A first step in learning about causality when Quantum Theory (QT) and General Relativity (GR) are both relevant is to understand how it appears in the mathematical formulation of each theory. In General Relativity, causality shows up right away encoded in lightcones. For each point in spacetime, the associated lightcone is defined as the union of the region to which it can send a signal with the region from which it can receive a signal that is no faster than light. As mentioned in the introduction, distinct distributions of mass and energy in spacetime shape the lightcones differently because, according to Eintein's equations, the distribution determines curvature and consequently dictates the possible paths followed by any signals. In the quantum realm, causality is not usually brought up in such a direct manner. The goal of this chapter is to understand how causality presents itself in Quantum Theory from an information-theoretic perspective.

First of all, we will introduce the mathematical definition of causal structure and a couple of examples~\cite{Chiribellanetworks,Voji}. Then, we present a way to address causal relations in a general probabilistic framework which is well known to the quantum information community, the signaling conditions for probabilities~\cite{Costa,Oreshkov,Chiribellanetworks}. They represent a notion of future and past that depends only on the probabilities directly acquired from experimental outcomes, and not on a specific physical theory. To illustrate this, we present the Bell-style causal inequality from references~\cite{Oreshkov,Costa}. Its violation would imply that the results obtained by two agents in a certain task are not compatible with causal structure. Next, we finally consider Quantum Theory through an overview on operational quantum dynamics~\cite{Nielsen,MilzPollock,Kraus,Chiribellanetworks,ChirQuantumCircuitArc,Chiribella}, the study of general evolutions of quantum systems.The mathematical objects that characterize how a general quantum state transforms into a final output state are called quantum operations. The description can also be generalized to account for multiple inputs/outputs, generating the so-called quantum networks or quantum combs~\cite{Chiribellanetworks,ChirQuantumCircuitArc,ChirSupermap}. These objects, whose definitions are grounded in QT, contain the information on how systems are allowed to evolve. They enable a more clear discussion on how causality appears in the general structure of the theory. Operational dynamics is also a particularly convenient topic since it introduces useful mathematical tools for the next chapter.

\section{Operational view on causal relations}\label{Operationalview}

In quantum scenarios, it is commonly assumed that particles and fields live on a background, sometimes with a global time coordinate. Causal structure is then known from the description General Relativity (or even Classical Theory) gives to that background. But if we intend to address problems that arise in the interface between the theories, where we know they might break, it is better to not rely on that. We should define what is past and future independently.

But how to do that? The safest way is to work with statements about what can be measured in principle, adopting an operational view. When comparing experimental data with theory predictions in physics, a fundamental assumption is made: recreating initial conditions of system and measurement apparatuses in laboratory counts as making the \emph{same} single measurement over and over, and the probabilities acquired from the frequency of results reveal a tendency that is a \emph{property} of the system+measurement ensemble\footnote{This is related to the objective/statistical interpretations of probability, such as the frequency interpretation, which was followed by the propensity interpretation by K. Popper~\cite{propensity_Popper}. Experiments with deterministic results can be seen as special cases.}. In the end, results of experiments are described in terms of probabilities. The role of a physical theory is to model system+measurement as mathematical entities from which probability distributions can be derived and then tested, thanks to that lower level assumption. In the spirit of neutrality, we can try to construct the notion of causal relations with statements about experimental outcomes. With this, causality can be analyzed for all imaginable general probabilistic theories under the paradigm of the assumption above, including the established physical theories to date. In particular, we may use it to talk about causality in Quantum Theory and its eventual generalizations.

The notion of causal structure can be mathematically described as follows.
\begin{definition}
A partial order relation over a set V is a binary relation $\prec$ that is
\begin{align*}
&\text{Reflexive}: x \prec x
\\
&\text{Antissymetric}: (x \prec y \text{ and } y\prec x) \implies x=y
\\
&\text{Transitive}: (x\prec y  \text{ and } y\prec z) \implies x \prec z.
 \end{align*}
$\forall x,y,z \in V$. A \textbf{causal structure} on a finite set V is characterized by a partial order relation $\prec$ over V. The elements of a set equipped with such relation will be called events.
\end{definition}
If we think about this partial order as representing the event on the left-hand side being in the past of (or equal to) the event on the right-hand side, it is natural to ask for the properties above to hold. Because of this intuitive notion of how causal relations should behave, a finite set\footnote{If the set was not finite, the partial order would be required to be locally finite in order to characterize a causal set. Only finite causal sets will be considered throughout the text.} equipped with a partial order is called a causal set. For example, if we construct a set V whose elements are a number of events in a Minkowski spacetime, the possibility of sending a signal between each two of those events naturally obeys those rules. Then, we can introduce a causal structure with the partial order defined by A$\prec$ B iff event A is in the past lightcone of or equal to event B. 

Another example of causal set is a circuit. A circuit is a mathematical object capturing the idea of a diagram of operations linked by arrows that indicate the order of their application on a system.
\begin{definition}
A \textbf{directional acyclic graph} is a pair $G=(N,E)$ where N is a set, whose elements are called \textbf{nodes}, and $E \subset \{(u,v) \mid u,v \in N\}$ is a set of ordered pairs of nodes, called \textbf{edges}, such that for any $x\in N$ there is no set $\{x=u_1,u_2...,u_k=x\}\subset N$ with $(u_i,u_{i+1}) \in E$  $\forall i \in\{1,...,k\}$. In words, there is no sequence of consecutive edges in $E$, a \textbf{path} for short, going from any node x to itself (acyclic property).
\end{definition}
\begin{definition}
A \textbf{circuit} $\mathcal{C}$ over a set of operations $\mathcal{G}$ is a pair $(G,g)$ where $G=(N,E)$ is a directional acyclic graph and $g:N \to \mathcal{G}$ is a function which assigns to each node an operation.\end{definition}
For any circuit $\mathcal{C}:=((N,E),g)$, we can define a natural partial order in $N$: $x\prec y$ iff there is a path from $x$ to $y$. This relation automatically obeys the properties listed before, characterizing a causal structure. The definition reproduces our intuitive idea of a circuit with no loops, which can be visualized as a diagram of directed wires connecting operation gates like the ones in Fig.~\ref{fig:circuit}.
\begin{figure}
\centering
\includegraphics[scale=0.8]{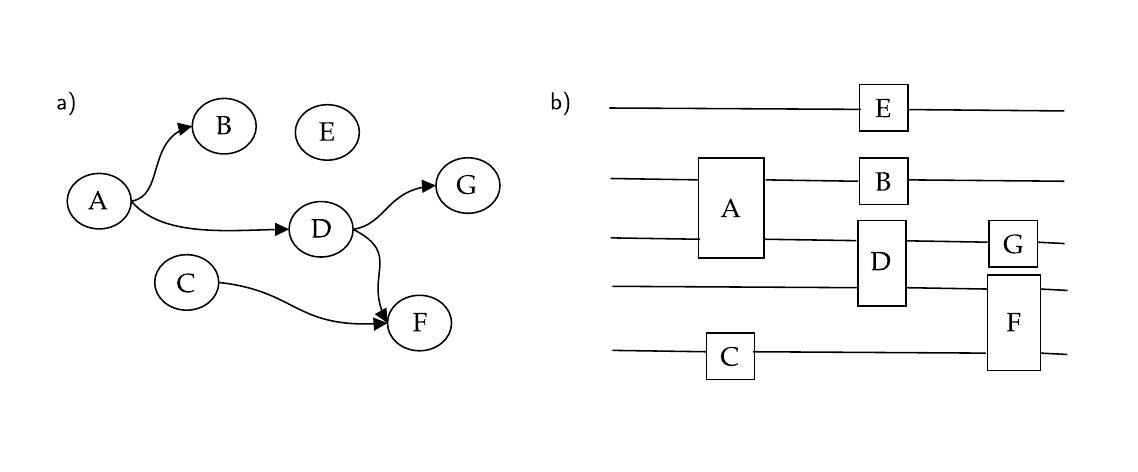}
\caption{\label{fig:circuit} Two ways one can schematically represent a circuit. The letters represent operations from the set $\mathcal{G}$ which label the nodes in the graph. a) The edges are represented by arrows connecting the labeled nodes. b) The edges are represented by wires connecting gates. The figure is supposed to be read from left to right.}
\end{figure}
  
   If the operations of a circuit are taken from the quantum operation formalism, we call it a quantum circuit. Quantum circuits are a common tool to describe the time evolution of quantum systems routinely employed in quantum computation. In this case, the description of a protocol begins with a given circuit, and one knows in advance in which order to apply operations in the mathematical framework to match the physical situation. If we assign spacetime locations to a collection of operations, the causal relations of spacetime induce a circuit structure for them. Thus, a causal structure is usually required \emph{a priori} in quantum mechanics, if not in the form of an underlying spacetime, by at least assuming a predefined circuit structure for the events of interest.

To understand how causal structure limits the statistics of experiments, we introduce the notion of signaling in a general probabilistic theory.

Consider a situation in which two experimenters, Alice and Bob, each perform one measurement over a physical system. Let $a$ and $b$ represent the general settings of the measurement apparatuses and $x$ and $y$ the outcomes measured by Alice and Bob, respectively. An example of this would be Alice and Bob acting on a spin $1/2$ system with Alice's measurement being $a$ = spin in the $z$ direction, obtaining result $x=+1/2$, and Bob's being $b$ = spin in the $x$ direction with result $y=-1/2$.

This class of bipartite problems is characterized\footnote{The reasoning made here and in the next chapter is partially heuristic, but one can find similarities with approaches that 
depart from basic operational elements, like preparations and effects, generating a class of probabilistic theories~\cite{Kraus,BarretGPT,MullerGPT}. The construction is mainly used to search for principles that uniquely identify QT among this sea of theories and to study correlations in a general setting~\cite{Hardy5axioms,DerivQTChir,Bellnonlocality_review}. Here, we will be specially concerned with signaling properties, as seen next.} by the form of a conditional probability function $P$ over the possible pairs of settings and possible outcomes for each pair. The symbol $P(xy \vert ab)$ represents the probability for Alice and Bob to measure results $x$ and $y$, given they make measurements with settings $a$ and $b$. Assuming that events are defined here such that each occurrence of an operation or measurement corresponds to one event, we will denote the two events in this problem by A: Alice realizes her measurement and B: Bob realizes his measurement.

\begin{definition}\label{cannotSignal}
We say A \textbf{cannot signal} to B if
 $$P(y\vert ab):=\sum\limits_{x}P(xy\vert ab)=P(y\vert b),$$
where $P(y\vert b)$ symbolizes that $P(y\vert ab)$ has no explicit dependence on the variable a, representing the settings of Alice's measurement. We denote this property by \text{A}$\nprec$\text{B}.
 If, otherwise, $P(y\vert ab)$ depends on $a$, we say \textbf{signaling is possible} from A to B, denoting \text{A}$\prec$\text{B}.
 
\end{definition}
Then, A not being able to signal to B means that the probability for Bob to obtain result $y$ from his measurement $b$ does not depend on the measurement $a$ done by Alice. This relates to the notion of A not being in the causal past of B: if Alice acts on a system before Bob, then she \emph{could} in principle choose a setting to modify the state of the system before Bob's measurement and influence his results. More generally, this should be true even if they manipulate distinct systems, because if Alice's action is in the causal past of Bob's there is a possibility of influence by definition. For example, her system \emph{could} interact (even if indirectly through other systems) with Bob's system before it gets to him and influence the results. Thus, if A was in the causal past of B, the probability function that accounts for all outcomes of \emph{any} of Bob's possible measurements would have to have explicit dependence on Alice's choice. Interestingly, this is similar to Einstein's construction of lightcones, substituting the possibility of sending a classical signal with the possibility of influencing a conditional probability to determine what lies in the ``future''.

In quantum mechanics, the most studied types of bipartite problems are the ones for which A cannot signal to B and B cannot signal to A. These are called non-signaling conditions and are used as a translation of spacelike separation for experiment statistics. That is the case when Alice and Bob share a bipartite system in an entangled state and measure their respective parts independently, with no channel to exchange information. Since the formulation of Bell-type inequalities and experimental tests for their violation~\cite{Bell,CHSH,Clauser_ExpBell,Aspect_ExpBell,Aspect2_ExpBell,Zeilinger_ExpBell}, it is understood\footnote{Thanks to scientists like Alain Aspect, John F. Clauser and Anton Zeilinger, awarded the Nobel prize in Physics 2022 for their work!} that although the tensor product description of spacelike separated systems by Quantum Theory always obeys non-signaling conditions, it still produces correlations that are more general than those allowed by Classical Theory for this class of problems. Thus, there is interest in specifying where the set of non-signaling quantum correlations stands between the set of classical (local-realistic) and the set of all non-signaling correlations\cite{Bellnonlocality_review,Nosignaling_quantumset,AlmostQuantumCorrel}.

If an underlying causal structure exists, it means there can be signaling at most in one direction. That is, a circuit either has a directed path from A to B, from B to A or the events are not linked, and similarly for the spacetime case with lightcones. At most, we could be uncertain of which one is the case. For instance, if Alice chooses to make her measurement at different times depending on the roll of a dice, she could in some rounds be at Bob's past and in others be at his future. But a classical probability distribution would describe that uncertainty. Expressing this mathematically, we get the following condition on the bipartite conditional probabilities:
\begin{equation}\label{classord}
    P(xy\vert ab) = q P^{\text{A}\nprec \text{B}}(xy\vert ab) + (1-q)P^{\text{B}\nprec \text{A}}(xy\vert ab) \qquad 0 \leq q \leq 1,
    \end{equation}
where $P^{\text{A}\nprec \text{B}}(xy\vert ab)$ is a probability function such that A cannot signal to B, that is, $P^{\text{A}\nprec \text{B}}(y\vert ab)= P^{\text{A}\nprec \text{B}}(y\vert b)$, and similarly for $P^{\text{B} \nprec \text{A}}$. We are using the convention that the probabilities for which A$\nprec$B and B$\nprec$A both hold are of the type $P^{\text{B}\nprec \text{A}}$. The condition above is, therefore, the restriction on probabilities imposed by demanding definite causality.

A remark to be made here is: it is not clear until now what is the mathematical nature of events A and B. Should they be regarded merely as elements of a causal set or do they have more structure, like events in General Relativity? We only established that the bipartite Alice and Bob case contains two events, each corresponding to the realization of one measurement. In reference~\cite{Costa}, two possibilities are discussed. The first is to consider that each operation is localized in an arbitrarily small region of spacetime, ideally a single event. Hence, there would be some well-defined local notion of spacetime, at least for the points where the measurements happen, and the types of probabilities (signaling or non-signaling) would allow us to deduce the causal relation between those spacetime events. The second possibility is to consider that the operations happen in closed laboratories. A closed laboratory could be \emph{``pictured as a finite region of spacetime bounded by two spacelike surfaces such that physical systems can enter in the laboratory only from the past surface and can exit only from the future one, while no exchange of information is possible through the time-like boundaries of the region''}~\cite{Costa}. This is to avoid the ``arbitrarily small'' part and consider regions instead of points. For this case, each of the events A and B still have the status of being in a classical spacetime location, with no need to acknowledge a global spacetime structure. This discussion is important because we want the analysis to be as operational as possible and assume the minimum without relying on specific physical theories to proceed.  

It is argued, however, that not even the notion of spacetime regions for the laboratories is necessary. For instance, assuming Quantum Theory is valid for the measurements, the set of allowed quantum operations that an agent can perform on a single system can be used as an abstract definition of closed laboratory, a concept explored further in~\cite{Giarmatzipaper}. This definition making no reference to localization can generate some broad interpretations for quantum experiments, as we will later comment in chapter~\ref{Chap Process Matrix}. In this text, we will mostly refer to closed laboratories, whether they are thought of as regions where spacetime is locally definite or in terms of the information-theoretic definition, following the approach in references~\cite{Oreshkov,Costa}.

Considering these definite notions of localization of laboratories, the general probability in~(\ref{classord}) can be interpreted as a situation for which the localization of A and B may not be known with certainty, but only because of a classical ignorance. That is, similarly to when we do not know the positions of each particle of a gas in statistical mechanics, we could possibly not be aware of what is the location of the two events with certainty. Surely, though, assuming a definite causal structure, for each round of experiment the possibility of signaling from A to B precludes signaling from B to A. So. independently of the exact nature of the events, we were able to formulate \emph{a necessary condition for two events in a bipartite experiment to be in a causal structure} in terms of probabilities~\eqref{classord}.

\label{causalineqsec}\section{Causal Inequality}
Let us introduce a thought bipartite experiment to conceptually clarify the condition in~(\ref{classord}). If a causal structure is assumed for this protocol, we can arrive at a statement about some evaluations of the probability function $P(xy\vert ab)$, a \emph{causal inequality}, as proposed in~\cite{Oreshkov}. The violation of a causal inequality is considered to be a task impossible to accomplish if the events of a probabilistic experiment are in a causal structure, in a very similar way that the violation of a Bell inequality~\cite{Bell} would be impossible for classical states assuming local realism and measurement independence.

Consider again an experiment involving two agents, Alice and Bob, in their closed laboratories. In every round of the experiment, Alice and Bob will each receive a physical system once, perform an operation and send the system out of the laboratory. As discussed before, the laboratories are isolated while the experimenters make their operations. 

Suppose signaling is possible from Alice to Bob and that Bob can only receive information through the system that enters his laboratory. Hence, Alice's operation must have happened before the system entered Bob's lab, because it was able to influence it. In this case, if the laboratories are in a definite causal structure, Bob cannot send a signal back to Alice, because she has already received her system and done her operation. Alice also can only receive information through the system that enters her laboratory, and that happens once, as pointed out in Fig.~\ref{fig:causalineq}(a). To illustrate, we can think that Alice and Bob enter their laboratories only in time to make their operations and leave right after that. In this scenario, by the time Bob makes his operation, Alice has already left her laboratory so he cannot communicate his operation to her.

\begin{figure}
\centering
\includegraphics[scale=0.3]{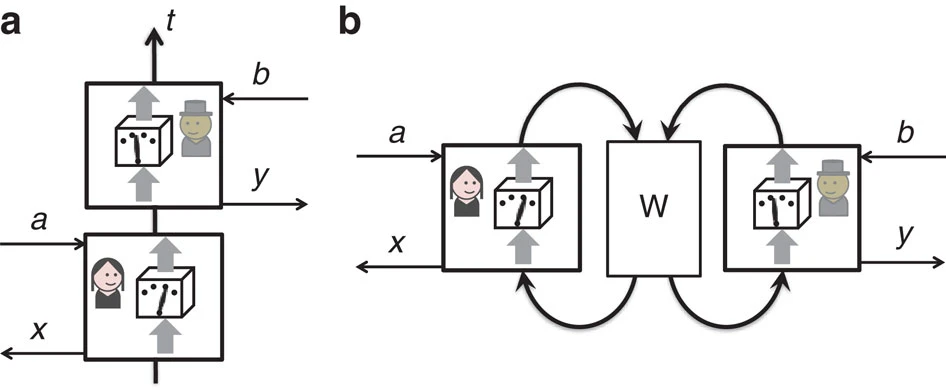}
\caption{(a) There exists a global background time according to which Alice's actions happen before Bob's. She sends her input $a$ to Bob, who can read it out at some later time. However, Bob cannot send his bit $b$ to Alice, since the system has already passed through her laboratory earlier. (b) Representation of generic situation where a causal structure is not assumed and could possibly violate a causal inequality, characterizing `indefinite order'. This will be discussed again in  section~\protect\ref{violation} of the next chapter. Figure from~\protect\cite{Oreshkov}.}
\label{fig:causalineq}
\end{figure}

The agents will be challenged with a communication task: every time one of them receives their system, that person will generate a random bit, which can assume the values 0 or 1. In each round, Alice will generate a bit $a$ and Bob will generate a bit $b$. Then Alice and Bob will try to guess each other's bit values. Alice's guess about $b$ will be called $x$ and Bob's guess about $a$ will be called $y$. Bob will also generate one extra random bit $b'$ and the game is that, if $b'$=0, we will check if Alice's guess $x$ is right and discard Bob's guess, while if $b'$=1, we will check if Bob guessed right and discard Alice's guess. The goal is to make as many right non-discarded guesses as possible, using the system as a resource if needed. That means maximizing the probability of success,
\begin{equation}
    P_{succ}:= \frac{1}{2}\left[P(x=b\mid b'=0) + P(y=a\mid b'=1) \right].
\end{equation}
For example, if signaling from Alice to Bob is possible and Alice is a good player, she will encode her bit value in the system, so that he ``guesses'' right if $b'$ happens to be $1$. The following inequality is always true if the events involved in this protocol are in a definite causal structure:
\begin{equation} 
       P_{succ} \leq \frac{3}{4}.
\end{equation}
This expression is a causal inequality. The idea of the proof is the following: if the events of Alice's and Bob's laboratories are in a definite causal structure, for each round it is true that either Bob cannot signal to Alice or Alice cannot signal to Bob\footnote{It can be both.}. Consider Bob cannot signal to Alice. The best case scenario for the guesses is if Alice can signal to Bob, otherwise it just means both guesses will be random with 1/2 chance to be right. So, consider Alice can signal to Bob. Then, if $b'=1$, only Bob's guess will be valid for this round and Alice can (and will, if trying to win the game) send him her bit value so he guesses it right, making $P(y=a\mid b'=1)=1$. While if $b'=0$, we will check Alice's guess, and the probability for her to get it right will be $P(x=b\mid b'=0)=1/2$, a random guess, since Bob cannot communicate with her. If Alice cannot signal to Bob, the situation is completely analogous exchanging the roles of Alice and Bob. In any situation, the best they could achieve is getting it always right for one value of $b'$ and getting it right with 50$\%$ probability for the other value of $b'$, resulting in the causal inequality:
\begin{equation} \label{Causal Inequal complete}
       P_{succ}= \frac{1}{2}\left[P(x=b\mid b'=0) + P(y=a\mid b'=1)\right] \leq \frac{1}{2}\left[1+ \frac{1}{2}\right]= \frac{3}{4}.
\end{equation}

Let us elaborate on the assumptions and formal proof of the inequality above. In the bipartite task described, assume:

\begin{itemize}
    \item[i.]\textbf{Causal structure}: The events A$_1$ and B$_1$ of the systems entering the laboratories of Alice and Bob, along with the events A, B, B', X and Y corresponding to obtaining the bits $a, b, b'$ and producing guesses $x$ and $y$ respectively, are in a causal structure. That is to say there exists a well-defined partial order in this 7 element set, as discussed earlier.
    
    \item[ii.]\textbf{Free-choice}: The bits $a, b$ and $b'$ can only be correlated with events in their causal future and each of them takes values $0$ or $1$ with probability $1/2$.
    
    \item[iii.] \textbf{Closed laboratories}: The guess $x$ can only be correlated with $b$ if B$ \prec$ A$_1 $ and the guess $y$ can only be correlated with $a$ if A$\prec$ B$_1$. 
\end{itemize}

The second assumption is a way to express that the bits generated are really independent and random. It implies all three bits are uncorrelated, since they can only correlate to events in their causal future and no pair of events can be in each other's causal future. The third assumption assures information can only arrive to the agents and affect their guesses through the entry of their laboratories. With these considerations, let us prove the inequality.
     
\begin{proof}

From assumption i, in each round of the experiment, only one of the following holds: A$_{1} \prec$ B$_{1},$ B$_{1} \prec $A$_{1},$ or neither:$A_{1}  \nprec \nsucc B_{1}$. Then, their probabilities obey
\begin{equation}\label{prob1}
    P\left(\text{A}_{1} \prec \text{B}_{1}\right)+P\left(B_{1} \prec \text{A}_{1}\right)+ P\left(\text{A}_{1}  \nprec \nsucc  \text{B}_{1}\right)=1
\end{equation}

a) From assumption ii, the bits $a, b, b'$ do not depend on which of the situations above holds.

Consider $b$, for instance. Since the bit is generated in Bob's laboratory after the system is received, we have B$_1\prec$ B. It is also true that the probability for A$_1\prec$ B$_1$ to happen cannot depend on $b$ by assumption ii, since it concerns two events in the causal past of B. Then,
\begin{equation}\label{A1precB1}
P\left(\text{A}_{1} \prec \text{B}_{1} \vert b\right)=P\left(\text{A}_{1} \prec \text{B}_{1}\right)
\end{equation}

If A$_1$ is outside the causal future of B$_1$, it is also outside the causal future of B, because of the transitivity property. Again by assumption ii, there cannot be a correlation between $b$ and the situation B$_1\nprec$A$_1$:
\begin{equation}\label{B1nprecA1}
P\left(\text{B}_{1} \nprec \text{A}_{1} \vert b\right)=P\left(\text{B}_{1} \nprec \text{A}_{1}\right).
\end{equation}
By definition we have
\begin{align*}
 P\left(\text{B}_{1} \nprec \text{A}_{1} \vert b\right)&= P\left(\text{A}_{1} \prec \text{B}_{1} \vert b\right) + P\left(\text{A}_{1} \nprec \nsucc \text{B}_{1} \vert b\right)
 \\
 &=P\left(\text{A}_{1} \prec \text{B}_{1} \right) + P\left(\text{A}_{1} \nprec \nsucc \text{B}_{1} \vert b\right),
\end{align*}
where we used Eq.~\eqref{A1precB1}. And the left hand side is not correlated with $b$ by  Eq.~\eqref{B1nprecA1}. So we can conclude that 
\begin{equation}\label{A1nprecnsuccb1}
P\left(\text{A}_{1} \nprec \nsucc \text{B}_{1} \vert b\right) = P\left(\text{A}_{1} \nprec \nsucc \text{B}_{1} \right)
\end{equation}
as well. Now, using Eq.~\eqref{A1precB1} and Eq.~\eqref{A1nprecnsuccb1}, we get
\begin{align*}
        1 &= P\left(\text{A}_{1} \prec \text{B}_{1}\vert b\right)+P\left(\text{B}_{1} \prec \text{A}_{1}\vert b\right)+ P\left(\text{A}_{1}  \nprec \nsucc  \text{B}_{1}\vert b\right)
        \\
        &=P\left(\text{A}_{1} \prec \text{B}_{1}\right)+P\left(\text{B}_{1} \prec \text{A}_{1}\vert b\right)+ P\left(\text{A}_{1}  \nprec \nsucc  \text{B}_{1}\right)
        \\
        &=P\left(\text{A}_{1} \prec \text{B}_{1}\right)+P\left(\text{B}_{1} \prec \text{A}_{1}\right)+ P\left(\text{A}_{1}  \nprec \nsucc  \text{B}_{1}\right)
        \\
        &\implies P\left(\text{B}_{1} \prec \text{A}_{1}\vert b\right) = P\left(\text{B}_{1} \prec \text{A}_{1}\right),
\end{align*}
where we used Eq.~\eqref{prob1} to go from the second to the third line. So we proved bit $b$ cannot depend on whether we have A$_1 \prec$ B$_1$, B$_1 \prec$ A$_1$ or A$_1 \nprec \nsucc$ B$_1$. The exact same argument can be made for bit $b'$ while for bit $a$ we only need to swap A$_1$ with B$_1$.

b) Let us use the fact above to calculate the probability of success summing over all possibilities:
\begin{align*}
P_{\text {succ }}&:=\frac{1}{2} \left[P\left(x=b \mid b^{\prime}=0\right)+P\left(y=a \mid b^{\prime}=1\right)\right]
\\
&=\frac{1}{2} P\left(x=b \mid b^{\prime}=0 ; \text{A}_{1} \prec \text{B}_{1}\right) P\left(\text{A}_{1} \prec \text{B}_{1}\right)
\\
&+\frac{1}{2} P\left(x=b \mid b^{\prime}=0 ; \text{B}_{1} \prec \text{A}_{1}\right) P\left(\text{B}_{1} \prec \text{A}_{1}\right)
\\
&+\frac{1}{2} P\left(x=b \mid b^{\prime}=0 ; \text{A}_{1} \not \npreceq \nsucceq \text{B}_{1}\right) P\left(\text{A}_{1} \npreceq \nsucceq \text{B}_{1}\right)
\\
&+\frac{1}{2} P\left(y=a \mid b^{\prime}=1 ; \text{A}_{1} \prec \text{B}_{1}\right) P\left(\text{A}_{1} \prec \text{B}_{1}\right)
\\
&+\frac{1}{2} P\left(y=a \mid b^{\prime}=1 ; \text{B}_{1} \prec \text{A}_{1}\right) P\left(\text{B}_{1} \prec \text{A}_{1}\right)
\\
&+\frac{1}{2} P\left(y=a \mid b^{\prime}=1 ; \text{A}_{1} \not \npreceq \nsucceq \text{B}_{1}\right) P\left(\text{\text{A}}_{1} \not \npreceq \nsucceq \text{B}_{1}\right)
\end{align*}
\begin{align}\label{psuccdecomp}
    &=\frac{1}{2} \left[ P\left(x=b \mid b^{\prime}=0 ; \text{A}_{1} \prec \text{B}_{1}\right) + P\left(y=a \mid b^{\prime}=1 ; \text{A}_{1} \prec \text{B}_{1}\right) \right] P\left(\text{A}_{1} \prec \text{B}_{1}\right)\nonumber
    \\
    &+\frac{1}{2} \left[ P\left(x=b \mid b^{\prime}=0 ; \text{B}_{1} \prec \text{A}_{1}\right) + P\left(y=a \mid b^{\prime}=1 ; \text{B}_{1} \prec \text{A}_{1}\right) \right] P\left(\text{B}_{1} \prec \text{A}_{1}\right)\nonumber
    \\
    &+\frac{1}{2} \left[ P\left(x=b \mid b^{\prime}=0 ; \text{A}_{1} \nprec \nsucc \text{B}_{1}\right) + P\left(y=a \mid b^{\prime}=1 ; \text{A}_{1} \nprec\nsucc \text{B}_{1}\right) \right] P\left(\text{A}_{1} \nprec \nsucc \text{B}_{1}\right).
\end{align}

Consider the case A$_1\prec$B$_1$. The first term of the first line above is  
\begin{align*}
     P\left(x=b \mid b^{\prime}=0 ; \text{A}_{1} \prec \text{B}_{1}\right) = 
     P\left(b=0; x=0 \mid b^{\prime}=0 ; \text{A}_{1} \prec \text{B}_{1}\right)+\\
     +P\left(b=1;x=1 \mid b^{\prime}=0 ; \text{A}_{1} \prec \text{B}_{1}\right).
\end{align*}
In this case, we have A$_1\prec$ B, because B$_1\prec$ B. Therefore, from assumption iii, $b$ cannot be correlated with $x$. We also have, from assumption ii, that $b$ is not correlated with $b'$ and that it does not depend on whether condition A$_1\prec$ B$_1$ is satisfied or not, as proven in a). The probability for b to assume values 0 or 1 is thus 1/2 when conditioned on these independent variables. The probability above becomes 
\begin{align}
    P\left(x=b \mid b^{\prime}=0 ; \text{A}_{1} \prec \text{B}_{1}\right) =& P\left(b=0\mid x=0;b^{\prime}=0 ; \text{A}_{1} \prec \text{B}_{1}\right)P\left(x=0\mid b^{\prime}=0 ; \text{A}_{1} \prec \text{B}_{1}\right)\nonumber
\\
&+P\left(b=1\mid x=1; b^{\prime}=0 ; \text{A}_{1} \prec \text{B}_{1}\right)P\left(x=1\mid b^{\prime}=0 ; \text{A}_{1} \prec \text{B}_{1}\right)\nonumber
\\
=&\frac{1}{2}P\left(x=0\mid b^{\prime}=0 ; \text{A}_{1} \prec \text{B}_{1}\right)+ \frac{1}{2}P\left(x=1\mid b^{\prime}=0 ; \text{A}_{1} \prec \text{B}_{1}\right)\nonumber
\\
=&1/2.
\end{align} 
The last equality comes from Alice's guess not depending on Bob's bit, since this is the case A$_1\prec$ B$_1$.

For B$_1\prec$ A$_1$, the analogous argument leads to
\begin{equation}
     P\left(y=a \mid b^{\prime}=1 ; \text{B}_{1} \prec \text{A}_{1}\right) = 1/2,
\end{equation}
while for A$_1\nprec \nsucc $B$_1$, the guesses are all independent of the bits, giving both
\begin{align}
     &P\left(x=b \mid b^{\prime}=0 ; \text{A}_{1} \nprec \nsucc \text{B}_{1}\right)= 1/2, 
\\
      &P\left(y=a \mid b^{\prime}=1 ; \text{A}_{1} \nprec\nsucc \text{B}_{1}\right)=1/2.
\end{align}

Substituting the four last results in Eq.~\eqref{psuccdecomp} we get
\begin{align*}
P_{s u c c}=&\left(\frac{1}{4}+\frac{1}{2} P\left(y=a \mid b^{\prime}=1 ; \text{A}_{1}\prec \text{B}_{1}\right)\right) P\left(\text{A}_{1} \prec \text{B}_{1}\right)+
\\
&+\left(\frac{1}{2} P\left(x=b \mid b^{\prime}=0 ; \text{B}_{1} \prec \text{A}_{1}\right)+\frac{1}{4}\right) P\left(\text{B}_{1} \prec \text{A}_{1}\right)
+\left(\frac{1}{4}+\frac{1}{4}\right) P\left(\text{A}_{1} \not \nprec \nsucc \text{B}_{1}\right)
\\
&\leq \frac{3}{4}P\left(\text{A}_{1} \prec \text{B}_{1}\right) + \frac{3}{4} P\left(\text{B}_{1} \prec \text{A}_{1}\right)+ \frac{2}{4} P\left(\text{A}_{1} \not \nprec \nsucc \text{B}_{1}\right) 
\end{align*}
\begin{equation}\label{Causal Inequal}
    \implies P_{s u c c} \leq \frac{3}{4}.
\end{equation}
\end{proof}
Now, we have found a property that is sufficient for showing that events in this specific setting are not in a causal structure: the probabilities violating the causal inequality~\eqref{Causal Inequal}. With this example, we can better appreciate the statement on probabilities made before~\eqref{classord}. Violating the inequality seems to require that
the sole interaction of Alice sends information to Bob in time for his sole interaction to be influenced, as well as the opposite, Bob's sole interaction influences Alice's, in the same round of experiment. In a global time, if Alice makes her operation at $t=1$s,
there is no way Bob could use his only interaction with the system at $t=3$s to send information that reaches her at $t=1$s. She, otherwise, could send him information. Even considering relativity of time, having at most one directional signaling well captures the idea of causality, implying impossibility to send information to one's past, and causal inequalities are a way to attest incompatibility of results with causal structure. We can say that if experiment results disobeyed the inequality, that would attest ``indefinite causal order''. Intuition says this is a bound never to be violated. Yet, in the next chapter we will discuss the Process Matrix formalism, which contains probabilities that do violate this inequality~(\ref{violation}) while obeying quantum mechanics locally. The motivations for considering such odd correlations have to do with how causality appears in QT, how that differs from GR and how to put the theories on an equal footing with respect to that. One step at a time, let us focus on Quantum Theory.

\section{Operational Quantum Dynamics}

Now that we formulated causality related notions based on experiment statistics, we should explore what Quantum Theory predicts for those statistics. When moving freely, quantum systems are described by state operators $\rho$ that evolve unitarily according to a wave equation. But how does quantum mechanics describe the situation where two agents realize one interaction each? In a quantum protocol we typically ask first: does Alice act on the system before Bob, is it Bob who acts first or do they act independently? The latter can be the case where they make measurements on an entangled pair of particles at spacelike separated events. Say Alice makes a measurement represented by $\mathcal{A}$ and Bob makes another measurement represented by $\mathcal{B}$. In the first case, the initial quantum state $\rho$ will be updated to $\mathcal{B}\mathcal{A}(\rho)$\footnote{Composition symbols will be suppressed between operators. This reads $\mathcal{B}\circ\mathcal{A}(\rho)=\mathcal{B}(\mathcal{A}(\rho))$.}, while if Bob acts first, we get $ \mathcal{A}\mathcal{B}(\rho)$. If they are spacelike separated, the description of the problem is changed, since the state space is identified with a tensor product of two spaces for which it makes sense to write $\mathcal{A}=\mathcal{A}\otimes\mathcal{I}$ and $\mathcal{B}=\mathcal{I}\otimes\mathcal{B}$. The update, then, does not depend on the order: $\mathcal{A}\otimes\mathcal{B}(\rho)$. Thus, there is an implicit assumption when we approach the dynamics of a quantum system: if it undergoes certain transformations, we have to say in advance in what order they occur, so that we know how the operators compose. As discussed before, this comes with the \emph{a priori} specification of the underlying spacetime or circuit and, therefore, causal structure.

But how natural is it, from the point of view of the theory alone, that the general Alice-Bob experiment is restricted to one of the 3 cases above? Is it a property of Quantum Theory to only output probabilities compatible with causal structure or is it an additional assumption? It all goes back to which we consider to be the most general transformations quantum systems can undergo according to the formulation of the theory. To answer this, we will make an overview on quantum operations~\cite{Nielsen, MilzPollock, Kraus} and comments on some generalizations~\cite{MilzPollock,Chiribellanetworks,ChirQuantumCircuitArc,Chiribella} which provide a useful way to understand the general form of quantum time evolutions.

Aside from highlighting the mathematical aspects of Quantum Theory that point to causality, quantum operations are important because, more basically, they represent the possible actions that agents like Alice and Bob can realize inside a laboratory according to quantum mechanics. They determine, for instance, the variables we named a and b in the last sections for the conditional probabilities $P(xy|ab)$ in the quantum formalism.

 \subsection{Basic concepts and notation for quantum mechanics}\label{notationOps}
Here we outline the basic notions of quantum mechanics needed to discuss quantum operations~\cite{Nielsen,schuller}. A Hilbert space is a complex vector space with inner product $(\mathcal{H},+,\cdot_\mathds{C},\braket{\cdot|\cdot})$ that is complete with respect to the induced norm. The adjoint of a densely defined operator $A:\mathcal{D}_A\to\mathcal{H}$ is the operator $A^\dagger$ satisfying $\braket{A^{\dagger}x|z}=\braket{x|Az} \forall z\in \mathcal{D}_A$ defined wherever such $y=A^\dagger x\in\mathcal{H}$ exists for $x$. The space of linear bounded operators from $\mathcal{H}_1$ to $\mathcal{H}_2$ will be denoted $\mathcal{L}(\mathcal{H}_1,\mathcal{H}_2)$ and the subspace of operators $A$ for which the trace $\operatorname{Tr}\left(\sqrt{A^{\dagger} A}\right)$ is finite, the trace-class operators, by $\mathcal{T}(\mathcal{H}_1,\mathcal{H}_2)$. Whenever $\mathcal{H}_1=\mathcal{H}_2=\mathcal{H}$, we will shorten the notation to $\mathcal{L}(\mathcal{H})$ and $\mathcal{T}(\mathcal{H})$. For finite dimensional Hilbert spaces, the space of trace-class linear bounded operators $\mathcal{T}(\mathcal{H})$ becomes the entire space of linear operators $\mathcal{L}(\mathcal{H}).$

\textbf{States. }A separable Hilbert space $\mathcal{H}$ can be associated to every isolated physical system and the possible states of the system are represented by operators $\rho\in \mathcal{T}(\mathcal{H})$ that are positive and have unit trace, the density operators. An operator $\eta \in \mathcal{L}(\mathcal{H})$ is positive if\footnote{From now on, we will use Dirac's braket notation. For reference, the symbol $\ket{k}\bra{j}$ represents the map from $\mathcal{H}$ to itself acting like $\braket{j,\cdot} k$ where $j,k$ are elements of a basis for $\mathcal{H}$.} $\bra{\psi} \eta \ket{\psi}\geq0$, $\forall \ket{\psi} \in \mathcal{H}$, and $\eta$ has unit trace if $\sum_n \bra{e_n}\eta\ket{e_n}=1$ for any basis $\{\ket{e_n}\}$ of $\mathcal{H}$. Let us call the set of density operators $\mathcal{Q(\mathcal{H})}$. Density operators provide a more comprehensive description of quantum states than unit vectors of $\mathcal{H}$, since they include general subsystems and statistical ensembles of pure states. Unit vectors can only describe pure states. When that is the case, the density operator associated to a state vector $\ket{\psi}\in \mathcal{H}$ is the projector $\ket{\psi}\bra{\psi}\in \mathcal{L}(\mathcal{H}).$

\textbf{Evolution. }The evolution of a closed quantum system is given by a unitary transformation, namely $U\in \mathcal{L}(\mathcal{H})$ such that $U^\dagger U=UU^\dagger=\mathds{1}_\mathcal{H}$, where $\mathds{1}_\mathcal{H}$ is the identity. It updates the state as $\rho \to U\rho U^{\dagger}$. For pure states, vectors transform as $\ket{\psi}\to U\ket{\psi}$. This is the discrete picture for evolution, which contrasts with the perhaps more familiar quantum wave equation for state vectors $i\hbar\partial{\ket{\psi}}/\partial{t}=H \ket{\psi}$ or the corresponding equation for density operators $i\hbar\partial\rho/\partial t = [H,\rho]$. However, the descriptions are equivalent, since every unitary can be written as $U=e^{-iH/\hbar}$ for some self-adjoint operator $H$ that can be interpreted as the hamiltonian of the system~\cite{Nielsen}.

\textbf{Measurements.} A generalized measurement is described by a family of operators $\{M_m\in \mathcal{L}(\mathcal{H})\},$ where the index m takes value in the set of all possible outcomes. The operators are required to obey the completeness relation, $\sum_m M_m^{\dagger} M_m=\mathds{1}$. For a initial state $\rho$, they give us the probability to obtain the result m, $p(m)=\operatorname{Tr}(M_m^{\dagger}M_m \rho)$, and also the state after the measurement, $\rho'=M_m \rho M_m^{\dagger}/\operatorname{Tr}(M_m^{\dagger}M_m \rho)$. The completeness relation assures the probability to measure any of the outcomes is 1. 

This description includes the special case of a special case of a projective  measurement: if the operators are orthogonal projectors $M_m=P_m$, we have $M_m^{\dagger}M_m=P_m^2=P_m$ and $p(m)=\operatorname{Tr}(P_m\rho)$. This measurement can be associated to the observable $A=\sum_m m P_m$ and it is called a PVM (Projection-Valued Measure) measurement. The more general POVM (Positive Operator-Valued Measure) measurements are also included above, with elements given by $E_m=M_m^{\dagger}M_m$~\cite{Nielsen}.

Although the sum symbols suggest a discrete set of allowed outcomes, this can be adapted for spectra with continuous parts~\cite{hall,schuller}. For instance, if a particle can be found anywhere in one dimension, one can define a PVM family $\{M_E:=Q(E)\}$ of position operators, one for each Borel-measurable set $E$ of the real line, such that $Q(\mathds{R})=\mathds{1}$ and the relation $E\to\operatorname{Tr}(Q(E) \rho)$ is well defined as a probability measure for every state $\rho$. The family is associated to a self-adjoint operator, the position observable, which can be written as $X = \int \lambda dQ(\lambda)$ according to the spectral theorem. In such infinite dimensional cases, attention might be required regarding domain and boundedness of operators. The linear operators $M_E^{\dagger} M_E$, however, are constructed to be positive and define a measure with $M_{\Omega}^\dagger M_{\Omega}=\mathds{1}$, where $\Omega$ is the union of all sets $E$. In particular, for unit vectors $\ket{\psi}$ we have $p_{\ket{\psi}}(E)=\operatorname{Tr}(M^{\dagger}_EM_E\ket{\psi}\bra{\psi})=||M_E\ket{\psi}||^2\leq p_{\ket{\psi}}(\Omega)= 1$, whenever this expression is defined. Thus, they are naturally defined as bounded operators. Nevertheless, we will not be too concerned about these matters throughout the text as most of the discussion happens around simple finite dimensional cases.

\textbf{Other remarks.}
 Given a finite dimensional Hilbert space $\mathcal{H}$, the space $\mathcal{L}(\mathcal{H})$ of linear operators is also a complex vector space with basic operations inherited from $\mathcal{H}$ and we can define a canonical inner product, the Hilbert-Schmidt product, as $\braket{\eta|\rho}:= \operatorname{Tr}(\eta^{\dagger}\rho)$. If $d$ is the dimension of $\mathcal{H}$, $d^2$ is the dimension of the space of linear operators $\mathcal{L}(\mathcal{H})$. Having chosen some basis  $\{\ket{k}\}_{k\leq d}$ for $\mathcal{H}$, we can use it to construct a basis for $\mathcal{L}(\mathcal{H})$ given by $\{\ket{i}\bra{k}\}_{i,k \leq d}$.

\subsection{Quantum Operations Axiomatic}\label{QuantumOpAxiomatic}
As summarized above, Quantum Theory for closed systems establishes there are two ways for a physical state to change: it can evolve unitarily or a measurement can be performed on it so that the state transforms into a final measurement state\footnote{An eigenstate, when measuring an observable with discrete non-degenerate spectrum.}. However, we also have open systems, that is, systems capable of interacting with other systems that do not enter the calculations explicitly. This is a practical issue because no system we prepare is completely isolated in reality, and one wishes to understand the noise caused by the environment. Furthermore, thinking of the more fundamental aspect, we would like to characterize how a quantum system can change in the most general physical scenarios. This includes non-isolated systems and even those strongly coupled to an environment and undergoing measurements not necessarily restricted to them.

Quantum operations intend to be the mathematical objects describing the most general changes quantum states can undergo. They are maps 
$\mathcal{E}$ transforming an input into an (unnormalized) output state. This makes them different from other formalisms used for open systems because they are well suited for describing discrete changes without reference to a continuous time parameter.

The evolutions of closed systems can be written as operations: a unitary transformation produces a state given by $\mathcal{E}(\rho)=U \rho U^\dagger$ and when a measurement described by a family $\{M_m\}$ returns the output $m$, the state transforms up to normalization into $\mathcal{E}(\rho)=M_m \rho M_m^\dagger$. We can conceive other operations as well, such as a measurement followed by unitary dynamics followed by another measurement, and things of the sorts. What is the form of the most general transformation $\rho \to \rho'$? One way to investigate that is to begin with general maps $\mathcal{E}:\mathcal{T}(\mathcal{H_I})\to \mathcal{T}(\mathcal{H_O})$ and establish physically reasonable axioms grounded in QT for the restrictions they should obey. Let us take a look at that approach as presented in references~\cite{Kraus,Nielsen}:

\begin{definition}
A \textbf{quantum operation} $\mathcal{E}$ is a map from the set of trace-class operators of an input Hilbert space $\mathcal{T}(\mathcal{H_I})$ to the trace-class operators of an output Hilbert space $\mathcal{T}(\mathcal{H_O})$ such that:

\begin{itemize}
    \item[A1.] Given a density operator $\rho\in \mathcal{Q}(\mathcal{H_I}) \subset\mathcal{T}(\mathcal{H_I})$, the trace $\operatorname{Tr}(\mathcal{E}(\rho))$ is the probability for the operation represented by $\mathcal{E}$ to occur when the initial state is given by $\rho$. Therefore, we must have $\operatorname{Tr}(\mathcal{E}(\rho))\leq 1$ for any density operator $\rho\in \mathcal{Q}(\mathcal{H_I}).$

    \item[A2.] $\mathcal{E}$ is a linear map. 
    
    \item[A3.] $\mathcal{E}$ is completely positive (CP). That is, $\mathcal{E}(A)$ must be positive for any positive operator $A\in \mathcal{T}(\mathcal{H}_I)$, and furthermore, if we introduce an extra system with Hilbert space $R$ of arbitrary dimensionality, it must be true that $(\mathcal{I}_R\otimes \mathcal{E})(A)$ is positive for any positive operator $A \in \mathcal{T}(R\otimes \mathcal{H}_I)$, where $\mathcal{I}_R$ denotes the identity supermap on operators of the system $R.$
\end{itemize}
\end{definition}
 We allow the input and output spaces of operations to be distinct to cover the possibility that part of the initial system ($\mathcal{H}_I$) is used and then discarded, making it different from the output space ($\mathcal{H}_O$). Let us understand why the 3 axioms above are considered to be the minimum physical requirements for a transformation:

\textbf{A1.} The first axiom is made to account for measurements. A standard operation that corresponds to a unitary evolution, $\mathcal{E}(\cdot) = U (\cdot) U^\dagger$, is such that $\operatorname{Tr}(\mathcal{E}(\rho))= 1$ for all density operators, therefore it is a trace-preserving transformation and takes states to states. Meanwhile, we will not demand that every operation obeys that property. For example, suppose a measurement is made and result $m$ is obtained. Then, we will define the transformation described by the measurement operator $M_m$ as $\mathcal{E}_m(\cdot) = M_m(\cdot)M_m^{\dagger}$. This transformation is not normalized and hence not guaranteed to return density operators for
input states. But, if we defined the operation to return the normalized state, it would not be defined for all possible states, namely the ones for which result m cannot be obtained inducing a division by 0. Besides, we know from the definition of these measurement operators that the trace $\operatorname{Tr}(M_{m}^{\dagger} M_m \rho)=\operatorname{Tr}(M_m \rho M_{m}^{\dagger})=\operatorname{Tr}(\mathcal{E}_m(\rho))$ is the probability of measuring value $m$ when the system is in the state $\rho$. Giving up the normalization turns that value, which is also the probability for the transformation to happen, into a property one can obtain from $\mathcal{E}_m$ itself.

Thus, to include measurements, the general form of an operation is a map that takes states $\rho$ to operators $\mathcal{E}(\rho)$ such that $\operatorname{Tr}(\mathcal{E}(\rho))$ is the probability for the operation to be applied. Of course, the final state has to be corrected by a normalization factor and will generally be given by
\begin{equation}\label{finalstate}
\rho' = \frac{\mathcal{E}(\rho)}{\operatorname{Tr}(\mathcal{E}(\rho))}.
\end{equation}
Under that interpretation, an operation is only `physical' if $\operatorname{Tr}(\mathcal{E}(\rho))\leq 1$, the maximum probability, when applied to density operators. This is the same as demanding the operation to be trace-nonincreasing on the entire domain. For trace-preserving transformations, we now interpret that $\mathcal{E}$ is applied with certainty, $\operatorname{Tr}(\mathcal{E}(\rho))=1$, and the expression above reduces to $\rho'=\mathcal{E}(\rho)$.

\textbf{A2.} The second axiom comes from the physical or statistical requirement that, if we initially have an ensemble, that is, a system that has probability $p_i$ of being in the quantum state $\rho_i$ for each $i$ in a finite set, then the final state after an operation should consist of a corresponding ensemble of final states. In quantum mechanics, the initial ensemble is represented by a mixed state of the form $\rho= \sum_i p_i \rho_i$. Thus, if the operation $\mathcal{E}$ is applied on $\rho$ with probability $p(\mathcal{E})=\operatorname{Tr}(\mathcal{E}(\rho))$, we ask that the final state is given by a mixture of the final states $\rho_i'$, each one related to $\rho_i$ as indicated in equation~(\ref{finalstate}), with probabilities $p(i|\mathcal{E})$. The latter is the probability that the initial state is $\rho_i$ given that operation $\mathcal{E}$ occurs. Mathematically, this requirement translates to
\begin{equation}\label{2axstate}
\rho'= \sum_i p(i|\mathcal{E})  \frac{\mathcal{E}(\rho_i)}{\operatorname{Tr}\left(\mathcal{E}(\rho_i)\right)}.
\end{equation}
Bayes rule for conditional probabilities asserts that
$$p(i|\mathcal{E})= p(\mathcal{E}|i)\frac{p_i}{p(\mathcal{E})}=\operatorname{Tr}(\mathcal{E}(\rho_i))\frac{p_i}{\operatorname{Tr}(\mathcal{E}(\rho))}.$$

Substituting this in equation~(\ref{2axstate}), comparing equations~(\ref{finalstate}) and~(\ref{2axstate}), we have
\begin{equation}
\mathcal{E}(\rho)=\mathcal{E}\left(\sum_i p_i \rho_i\right)=\sum_i p_i\mathcal{E}(\rho_i).
\end{equation}
This is a linearity condition for the case where the coefficients are real and sum up to 1 (convex-linearity). It does not arise specifically from the linear nature of QT, but from a general property of stochastic theories asserting that statistical experiments should behave as explained, in accordance with linearity of mixing~\cite{MilzPollock}.

Although we are mostly interested on how these operations act on quantum states, we have defined them over a bigger domain and going to a codomain also bigger than the state space. This is done because $\mathcal{Q}(\mathcal{H})$ is not a vector space and we need the usual notion of tensor products of operations over vector spaces to make sense of an operation acting on a composite system, as used in the axiom A3. Regardless, this does not create liberty to arbitrarily choose how an operation $\mathcal{E}$ acts outside of the set of states because, if a transformation with domain restricted to $\mathcal{Q}(\mathcal{H})$ obeys the axioms of an operation (with convex-linearity replacing A2), then it admits a unique linear extension to $\mathcal{T}(\mathcal{H})$, as demonstrated by Kraus in one of the main references about this subject~\cite{Kraus}. Therefore, convex-linearity on states and the possibility of defining tensor products of operations with no loss of generality justify the requirement for linearity on the domain $\mathcal{T}(\mathcal{H}_I)$.

\textbf{A3.} The third axiom is a requirement of consistency. Valid states have to be mapped to other valid states up to normalization, which is to say that positive operators in $\mathcal{T}(\mathcal{H}_I)$ are mapped to positive operators in $\mathcal{T}(\mathcal{H}_O).$ We also have to account for consistency of the quantum operation applied on a subsystem. Let us say that the input system, $\mathcal{H}_I$, is part of a bigger system described by the tensor product between $\mathcal{H}_I$ and a Hilbert space $R$ representing the rest of the system. If $\rho_{R,\mathcal{H}_I} \in \mathcal{Q}(R\otimes \mathcal{H}_I)$ is a valid density operator and the operation $\mathcal{E}$ acts on the degrees of freedom of $\mathcal{H}_I$ only, then the operator given by $(\mathcal{I}_{\mathcal{L}(R)}\otimes \mathcal{E})(\rho_{R \mathcal{H}_I})$, where $\mathcal{I}_{\mathcal{L}(R)}$ is the identity, should also be a valid density operator for the composite system up to normalization. Thus, we ask that $(\mathcal{I}_R\otimes \mathcal{E})$ maps positive operators to positive operators as well, characterizing complete positivity.

From now on, we will work with finite dimensional spaces unless stated otherwise. In that case, a quantum operation is a completely positive, trace-nonincreasing map belonging to $\mathcal{L}(\mathcal{L}(\mathcal{H}_I), \mathcal{L}(\mathcal{H}_O))$. The particular operations which are trace-preserving are called \textbf{quantum channels}.

Note that the set of axioms is fairly general. We are not asking for quantum operations to be always defined by unitary operators or measurements on a system. We are introducing a set of transformations that must consistently contain these ones, because we are already familiar with them in quantum mechanics, but could potentially contain much more. Operations satisfy minimum requirements to not be unphysical, but we have all the reasons to inquire whether all, or just some, of them are indeed realizable. We address this in the next section.
 
 It is good to emphasize that a quantum operation is determined unambiguously by how it acts on actual quantum states. To illustrate this, we can think of a two-level system~\cite{MilzPollock} for which we want to define operation maps. Then, instead of choosing Pauli matrices or other orthonormal basis for the 4-dimensional vector space $\mathcal{L}(\mathcal{H})\simeq \mathbb{C}^4$, we choose a basis whose elements are density operators, such as
$$\rho_1=\frac{1}{2}\begin{bmatrix}
1 & 1 \\
1 & 1 
\end{bmatrix} \quad 
\rho_2=\frac{1}{2}\begin{bmatrix}
1 & -i \\
i & 1 
\end{bmatrix}  \quad 
\rho_3=\begin{bmatrix}
1 & 0 \\
0 & 0 
\end{bmatrix} \quad 
\rho_4=\frac{1}{2}\begin{bmatrix}
1 & -1 \\
-1 & 1 
\end{bmatrix}.  $$ If we consider a ``ket'' representation of the space of operators $\mathcal{L}(\mathcal{H})$, an arbitrary operator is written in terms of the basis above as $A= \sum_i a_i \ket{\rho_i}$. Although the basis is not orthonormal, it is possible to find matrices $D_i$,
$$D_1=\frac{1}{2}\begin{bmatrix}
0 & 1+i \\
1-i & 2 
\end{bmatrix} \quad 
D_2=\begin{bmatrix}
0 & -i \\
i & 0 
\end{bmatrix}  \quad 
D_3=\begin{bmatrix}
1 & 0 \\
0 & -1 
\end{bmatrix} \quad 
D_4=\frac{1}{2}\begin{bmatrix}
0 & -1+i \\
-1-i & 2 
\end{bmatrix},
$$ such that $\braket{D_i|\rho_i}\equiv \operatorname{Tr}(D_i^{\dagger}\rho_j)=\delta_{ij}$. They are called dual matrices and assume the role of a ``bra'' for $\ket{\rho_i}$ in this context. The coefficients of the decomposition are $a_i:=\braket{D_i|A}= \operatorname{Tr}(D_i^\dagger A)$. Then, we can write the action of an operation $\mathcal{E}$ like
\begin{equation}\label{tomographicRep}
    \mathcal{E}(A)=\sum_i \operatorname{Tr}(D_i^\dagger A) \mathcal{E}(\rho_i).
\end{equation}
 This way to write the action of $\mathcal{E}$ using a basis of states is called tomographic representation. The name comes from the idea of quantum process tomography,
which is the method that uses this to reconstruct the transformation realized in a protocol by gathering experimental data from just a few input quantum states (the basis).

\subsection{Kraus decomposition and open system dynamics}

There are some ways to express the specific form of a quantum operation, one of them being the tomographic representation above, a type of linear decomposition. The Kraus form, a special case of operator-sum decomposition for linear maps, is another representation massively used in quantum information to model system-environment interactions~\cite{Nielsen,MilzPollock}. Before we get to it, consider the following.
\begin{lemma}[Choi-Jamiolkowski isomorphism]\label{ChoiJ}
Given two Hilbert spaces $\mathcal{H}_I$ and $R$ of same dimension and a choice of basis for each, $\{\ket{i}_{I}\}, \{\ket{j}_{R}\}$, define the vector
$\ket{\alpha}:= \sum_k \ket{k}_R\ket{k}_I \in R\otimes\mathcal{H}_I.$ We can think of it as the maximally entangled state, up to normalization, between a system associated to $\mathcal{H}_I$ and another system $R$. Given $\mathcal{E}\in \mathcal{L}(\mathcal{L}(\mathcal{H}_I),\mathcal{L}(\mathcal{H}_O))$, the operator $\sigma_{\mathcal{E}}\in \mathcal{L}(R\otimes \mathcal{H}_O)$, with action defined by
\begin{equation}
    \sigma_{\mathcal{E}} := \left(\mathcal{I}_{\mathcal{L}(R)} \otimes \mathcal{E}\right)(\ket{\alpha}\bra{\alpha}),
\end{equation} will be called the \textbf{Choi operator} or matrix of $\mathcal{E}$. When $R=\mathcal{H}_I$, the correspondence $\mathcal{E}\mapsto \sigma_\mathcal{E}$ is an isomorphism between the spaces $\mathcal{L}(\mathcal{L}(\mathcal{H}_I),\mathcal{L}(\mathcal{H}_O))$ and $\mathcal{L}(\mathcal{H}_I\otimes \mathcal{H}_O).$ The action of $\mathcal{E}$ is then completely characterized by $\sigma_{\mathcal{E}}$ through the inverse morphism:
\begin{equation}
    \mathcal{E}(A) = \operatorname{Tr}_{\mathcal{H}_I}[(A^T \otimes \mathds{1}_{\mathcal{H}_O})\sigma_{\mathcal{E}}]\quad \forall A\in \mathcal{L}(\mathcal{H}_I).
\end{equation}
\end{lemma}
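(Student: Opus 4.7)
My approach is to verify the inverse formula by direct computation and then invoke a dimension count. First, I would observe that the map $\mathcal{E}\mapsto \sigma_{\mathcal{E}}$ is linear in $\mathcal{E}$ by construction, since $\mathcal{I}_{\mathcal{L}(R)}\otimes\mathcal{E}$ depends linearly on $\mathcal{E}$ and the Choi operator is obtained by applying this map to the fixed vector $\ket{\alpha}\bra{\alpha}$. So the statement reduces to exhibiting an inverse.

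The second step is to expand $\ket{\alpha}\bra{\alpha}$ in the chosen bases, giving
$$\ket{\alpha}\bra{\alpha}=\sum_{i,j}\ket{i}\bra{j}_{R}\otimes \ket{i}\bra{j}_{I},$$
and therefore
$$\sigma_{\mathcal{E}}=\sum_{i,j}\ket{i}\bra{j}_{R}\otimes \mathcal{E}(\ket{i}\bra{j}).$$
This explicit form encodes the action of $\mathcal{E}$ on every basis operator $\ket{i}\bra{j}$, and hence (by linearity of $\mathcal{E}$) determines $\mathcal{E}$ completely. To turn this observation into the claimed inverse formula, I would compute, for an arbitrary $A\in\mathcal{L}(\mathcal{H}_I)$,
$$\operatorname{Tr}_{\mathcal{H}_I}\!\left[(A^{T}\otimes \mathds{1}_{\mathcal{H}_O})\sigma_{\mathcal{E}}\right] = \sum_{i,j}\operatorname{Tr}(A^{T}\ket{i}\bra{j})\,\mathcal{E}(\ket{i}\bra{j}) = \sum_{i,j}\bra{i}A\ket{j}\,\mathcal{E}(\ket{i}\bra{j}) = \mathcal{E}(A),$$
where I use $\bra{j}A^{T}\ket{i}=\bra{i}A\ket{j}$ in the chosen basis together with the decomposition $A=\sum_{i,j}\bra{i}A\ket{j}\ket{i}\bra{j}$. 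This shows that the proposed formula is a left inverse, so the correspondence $\mathcal{E}\mapsto\sigma_{\mathcal{E}}$ is injective.

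To finish, I would invoke a dimension count. When $R=\mathcal{H}_I$, with $d_I=\dim\mathcal{H}_I$ and $d_O=\dim\mathcal{H}_O$, the space $\mathcal{L}(\mathcal{L}(\mathcal{H}_I),\mathcal{L}(\mathcal{H}_O))$ has complex dimension $d_I^{2}d_O^{2}$, and so does $\mathcal{L}(\mathcal{H}_I\otimes\mathcal{H}_O)$. An injective linear map between finite-dimensional vector spaces of the same dimension is automatically a bijection, so the correspondence is an isomorphism and the inverse formula exhibited above is \emph{the} inverse.

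The calculation itself is routine; the only subtlety to keep track of is the basis-dependence hidden in both $\ket{\alpha}$ and the transpose $A^{T}$. For the formulas to be consistent the same basis $\{\ket{i}\}$ of $\mathcal{H}_I$ must be used throughout, in particular when identifying $R$ with $\mathcal{H}_I$ in order to take the partial trace $\operatorname{Tr}_{\mathcal{H}_I}$. I expect this bookkeeping, rather than any conceptual difficulty, to be the main thing to watch in writing out the proof cleanly.
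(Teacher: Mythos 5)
Your proposal is correct and follows essentially the same route as the paper: expanding $\sigma_{\mathcal{E}}=\sum_{ij}\ket{i}\bra{j}\otimes\mathcal{E}(\ket{i}\bra{j})$ and verifying $\operatorname{Tr}_{\mathcal{H}_I}[(A^{T}\otimes\mathds{1}_{\mathcal{H}_O})\sigma_{\mathcal{E}}]=\mathcal{E}(A)$ by direct computation in the chosen basis. Your added remarks on linearity of $\mathcal{E}\mapsto\sigma_{\mathcal{E}}$ and the dimension count make the bijectivity claim slightly more explicit than the paper, which only checks the inverse formula, but the substance is the same.
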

We can check the last expression by using basis a $\{\ket{i}\}_{\mathcal{H}_I}$ to write $\sigma_{\mathcal{E}}=\sum_{ij}\ket{i}\bra{j}\otimes\mathcal{E}(\ket{i}\bra{j})$. Then, writing the action of the operation in a basis element of $\mathcal{L}(\mathcal{H}_I)$ as $\mathcal{E}(\ket{i}\bra{j})=\sum_{mn} e_{ijmn}\ket{m}\bra{n}$, we have
\begin{align}    \operatorname{Tr}_{\mathcal{H}_I}\left[\left( A^T\otimes\mathds{1}_{\mathcal{H}_O}\right)\sigma_{\mathcal{E}}\right] &= \operatorname{Tr}_{\mathcal{H}_I}\left[\left(A^T\otimes\mathds{1}_{\mathcal{H}_O}\right)\left(\sum_{ijmn}\ket{i}\bra{j}\otimes e_{ijmn}\ket{m}\bra{n} \right)\right] \nonumber
    \\
&=\sum_{ijmn}e_{ijmn}\operatorname{Tr}_{\mathcal{H}_I}\bigl[\left(A^T\otimes\mathds{1}_{\mathcal{H}_O}\right)\left(\ket{i}\bra{j}\otimes \ket{m}\bra{n} \right)\bigr] \nonumber
\\
&=\sum_{ijmn}e_{ijnm}\bra{j}A^T\ket{i} \ket{m}\bra{n} \nonumber
\\
&=\sum_{ij} A_{ij} \mathcal{E}(\ket{i}\bra{j})=\mathcal{E}(A). 
\end{align}

This isomorphism has a crucial role in the study of quantum evolutions. Essentially, it allows us to treat operations in $\mathcal{L}(\mathcal{L}(\mathcal{H}_I),\mathcal{L}(\mathcal{H}_O))$ as matrices in $\mathcal{L}(\mathcal{H}_I\otimes\mathcal{H}_O)$ at the same level of quantum states. Since it is an useful tool in linear algebra for manipulating general linear supermaps, we shall see it again later. Here, we will use it to prove that every quantum operation has a Kraus form, which is a decomposition in the shape $\mathcal{E}(\cdot)=\sum_iE_i(\cdot)E_i^\dagger$ for certain operators $E_i\in\mathcal{L}(\mathcal{H}_I,\mathcal{H}_O)$. Interestingly, a map can be written like that iff it is a quantum operation, as we state properly below for finite dimensions.

\begin{thm}[Choi-Kraus theorem]
 Let $\mathcal{H_I}$ and $\mathcal{H}_O$ be Hilbert spaces of dimension $m$ and $n$ respectively, and $\mathcal{E}:\mathcal{L}(\mathcal{H_I}) \to \mathcal{L}(\mathcal{H_O})$ be a quantum operation. Then, there exists a family of linear operators in $\mathcal{L}(\mathcal{H}_I,\mathcal{H}_O)$, $\left\{E_{i} \right\}_{1 \leq i \leq n m}$, that satisfies $\sum_{i} E_{i}^{\dagger} E_{i}\leq\mathds{1}_{\mathcal{H}_I}$ and such that $\mathcal{E}$ is described by the \textbf{Kraus form}:
$$\mathcal{E}(A)=\sum_{i} E_{i} A E_{i}^{\dagger}
$$
Conversely, any map $\mathcal{E}$ whose action is defined in this form for some family of $n\times m$ linear operators with $\sum_{i} E_{i}^{\dagger} E_{i}\leq\mathds{1}_{\mathcal{H}_I}$ is a quantum operation. The operators $E_i$ are called \textbf{Kraus operators} of $\mathcal{E}$~\cite{Kraus}.
\end{thm}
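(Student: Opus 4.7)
The plan is to leverage the Choi--Jamiolkowski isomorphism (Lemma~\ref{ChoiJ}) to translate the hypothesis ``$\mathcal{E}$ is a quantum operation'' into a statement about a single positive operator in $\mathcal{L}(\mathcal{H}_I \otimes \mathcal{H}_O)$, from which a Kraus decomposition drops out via the spectral theorem. The converse direction is then essentially a direct verification of axioms A1--A3 applied to the explicit operator-sum form.

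For the forward direction, I would take $R = \mathcal{H}_I$ in Lemma~\ref{ChoiJ} and form the Choi operator $\sigma_{\mathcal{E}} := (\mathcal{I}_{\mathcal{L}(R)} \otimes \mathcal{E})(\ket{\alpha}\bra{\alpha}) \in \mathcal{L}(\mathcal{H}_I \otimes \mathcal{H}_O)$. Since $\ket{\alpha}\bra{\alpha}$ is a positive rank-one operator on $R \otimes \mathcal{H}_I$ and $\mathcal{E}$ is completely positive by axiom A3, $\sigma_{\mathcal{E}}$ is positive, hence self-adjoint; on the $nm$-dimensional space $\mathcal{H}_I \otimes \mathcal{H}_O$ the spectral theorem gives a decomposition $\sigma_{\mathcal{E}} = \sum_{i=1}^{nm} \ket{w_i}\bra{w_i}$, where the $\ket{w_i} = \sqrt{\lambda_i}\,\ket{v_i}$ absorb the non-negative eigenvalues and at most $nm$ of them are nonzero. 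Each $\ket{w_i} \in \mathcal{H}_I \otimes \mathcal{H}_O$ then uniquely defines a linear operator $E_i \in \mathcal{L}(\mathcal{H}_I, \mathcal{H}_O)$ through the ``vectorization'' identification $\ket{w_i} = (\mathds{1}_{\mathcal{H}_I} \otimes E_i)\ket{\alpha}$, equivalent componentwise to $\ket{w_i} = \sum_{j,k}(E_i)_{kj}\ket{j}_{\mathcal{H}_I}\ket{k}_{\mathcal{H}_O}$.

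Next I would feed $\sigma_{\mathcal{E}} = \sum_i (\mathds{1}_{\mathcal{H}_I} \otimes E_i)\ket{\alpha}\bra{\alpha}(\mathds{1}_{\mathcal{H}_I} \otimes E_i^\dagger)$ back into the inverse map $\mathcal{E}(A) = \operatorname{Tr}_{\mathcal{H}_I}[(A^T \otimes \mathds{1}_{\mathcal{H}_O})\sigma_{\mathcal{E}}]$, and use the standard identity $(A^T \otimes \mathds{1})\ket{\alpha} = (\mathds{1} \otimes A)\ket{\alpha}$ to shift $A$ across the maximally entangled state; the partial trace then collapses cleanly, yielding $\mathcal{E}(A) = \sum_i E_i A E_i^\dagger$. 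This is the step where I expect to be most careful with index bookkeeping, since convincing myself (and the reader) that the reconstruction really agrees with $\mathcal{E}$ on \emph{all} of $\mathcal{L}(\mathcal{H}_I)$ reduces to checking it on the basis $\{\ket{i}\bra{j}\}$ and invoking linearity. The operator inequality $\sum_i E_i^\dagger E_i \leq \mathds{1}_{\mathcal{H}_I}$ then follows from axiom A1: for any density operator $\rho$, cyclicity of the trace gives $\operatorname{Tr}(\mathcal{E}(\rho)) = \operatorname{Tr}\!\bigl(\bigl(\sum_i E_i^\dagger E_i\bigr)\rho\bigr) \leq 1$, and because this must hold for every $\rho \in \mathcal{Q}(\mathcal{H}_I)$, it forces the claimed operator bound.

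For the converse, given any family $\{E_i\}$ with $\sum_i E_i^\dagger E_i \leq \mathds{1}_{\mathcal{H}_I}$, I would verify the axioms in turn for $\mathcal{E}(A) := \sum_i E_i A E_i^\dagger$: linearity (A2) is immediate from the algebraic form; complete positivity (A3) follows because $(\mathcal{I}_R \otimes \mathcal{E})(B) = \sum_i (\mathds{1}_R \otimes E_i)\,B\,(\mathds{1}_R \otimes E_i^\dagger)$ is a sum of conjugations of a positive $B$, each of which preserves positivity; and trace-nonincrease (A1) is exactly the hypothesized bound, repackaged as $\operatorname{Tr}(\mathcal{E}(\rho)) = \operatorname{Tr}\!\bigl(\bigl(\sum_i E_i^\dagger E_i\bigr)\rho\bigr) \leq \operatorname{Tr}(\rho) = 1$. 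The single non-mechanical ingredient in the entire argument is the spectral decomposition of $\sigma_{\mathcal{E}}$ paired with the vectorization identification $\ket{w_i} \leftrightarrow E_i$, which is simultaneously the source of the existence of the $E_i$ and of the cardinality bound $i \leq nm$.
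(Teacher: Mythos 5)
Your proposal is correct and takes essentially the same route as the paper: both directions rest on the Choi--Jamiolkowski isomorphism, positivity of $\sigma_{\mathcal{E}}$ from complete positivity, its decomposition into rank-one projectors, and the identification of each vector in $\mathcal{H}_I\otimes\mathcal{H}_O$ with an operator $E_i:\mathcal{H}_I\to\mathcal{H}_O$ (the paper writes this as $E_i\ket{\psi}:=\braket{\tilde{\psi}|s_i}$, which is precisely your vectorization $\ket{w_i}=(\mathds{1}\otimes E_i)\ket{\alpha}$), with the converse being the same axiom-by-axiom check. The only cosmetic difference is that you extract the operator bound $\sum_i E_i^\dagger E_i\leq\mathds{1}_{\mathcal{H}_I}$ via cyclicity of the trace on arbitrary density operators while the paper evaluates on pure states directly; both are valid.
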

An inequality $A\leq B$ of operators in $\mathcal{L}(\mathcal{H})$ means that the operator $B-A$ is positive: $\braket{\psi|(B-A)|\psi}\geq 0$ for every $\ket{\psi} \in \mathcal{H}$. The theorem above is an important characterization because it allows more direct manipulation of operations and provides us with interpretations on how to achieve them. Let us prove it.
\begin{proof}($\impliedby$) If the action of an operator $\mathcal{E}:\mathcal{L}(\mathcal{H}_I)\to\mathcal{L}(\mathcal{H}_O)$ can be written as $\mathcal{E}(A)=\sum_{i} E_{i} A E_{i}^{\dagger}$ for Kraus operators $E_i$, then $\mathcal{E}$ is a quantum operation.

If each $E_i$ is linear, then $\mathcal{E}$ is automatically \emph{linear}. One can see this by writing $\mathcal{E}(c A +B)$ for arbitrary $A$ and $B$ in the domain and complex constant $c$. Using operators $E_i$ explicitly and  representing $A$ and $B$ by their expansions in a common basis for maps in $\mathcal{L}(\mathcal{H}_I)$, which will be of the form  $\{\ket{k}\bra{j}\}_{k,j\leq m}$, the statement follows from the linearity of the operators $E_i$. From the assumption that $\sum_{i} E_{i}^{\dagger} E_{i}\leq\mathds{1}_{\mathcal{H}_I}$ we have for any state $\rho$ that $$\operatorname{Tr}(\mathcal{E}(\rho))= \operatorname{Tr}\left(\sum_{i} E_{i} \rho E_{i}^{\dagger}\right) = \sum_{i}\operatorname{Tr}(E_{i}^{\dagger}E_{i} \rho )= \operatorname{Tr}\left(\sum_{i}E_{i}^{\dagger}E_{i} \rho \right)\leq \operatorname{Tr}(\mathds{1} \rho )=1,$$ guaranteeing that $\mathcal{E}$ is \emph{trace-nonincreasing} and the probability for it to be applied is well-defined. The step where the inequality appears can be verified by writing the state in an orthogonal basis in which it is diagonal, $\rho=\rho_{kk}\ket{e_k}\bra{e_k}$, and using $\bra{e_k}\sum_iE_i^{\dagger}E_i\ket{e_k}\leq 1$, a direct consequence of the operator inequality. We are only left to check complete positivity:

Consider a system composed of the input space and an extra space R. Let $A$ be a positive map in the space $\mathcal{L}( R\otimes \mathcal{H}_I)$ of linear operators over the composite Hilbert space. For a fixed $\ket{\psi} \in R\otimes \mathcal{H}_O$, define the vectors
$$\ket{\phi_i}:=(\mathds{1}_R\otimes E_i^{\dagger})\ket{\psi}$$
for each $E_i$. Suppose that the dimensions of the input and output spaces are equal. Then we can write
\begin{align}
\bra{\psi}(\mathcal{I}_R\otimes \mathcal{E})(A)\ket{\psi}&=\bra{\psi}\left(\mathcal{I}_R\otimes \sum_i E_i(\cdot)E_i^{\dagger}\right)(A)\ket{\psi}\nonumber
\\
&=\sum_i\bra{\psi}(\mathds{1}_R\otimes E_i)A(\mathds{1}_R\otimes E_i^{\dagger})\ket{\psi}\nonumber
\\
&=\sum_i \bra{\phi_i}A\ket{\phi_i}\geq 0,
\end{align}
where the last inequality is valid because each term of the sum is non-negative thanks to the positivity of $A$. We have proved that $\bra{\psi}(\mathcal{I}_R\otimes \mathcal{E})(A)\ket{\psi}\geq0$ for an arbitrary $\ket{\psi}\in R\otimes \mathcal{H}_O$, therefore the operator $(\mathcal{I}_R\otimes \mathcal{E})(A)$ is positive. Hence, $(\mathcal{I}_R\otimes \mathcal{E})$ maps positive operators to positive operators and $\mathcal{E}$ is \emph{completely positive}. If the input and output have different sizes, the proof is virtually the same, but we have to extend the smaller space to the size of the higher dimensional one, redefining the domain of the operators to cover the entire space acting trivially outside the original domain.

Since $\mathcal{E}$ satisfies the 3 properties, we have shown it is a quantum operation.

($\implies$) Conversely, let us prove that if $\mathcal{E}:\mathcal{L}(\mathcal{H}_I)\to\mathcal{L}(\mathcal{H}_O)$ is a quantum operation, then there exists a family of Kraus operators $E_i$ for it.

Let $\{\ket{k}_I\}_{k\leq m}$ be a basis for $\mathcal{H_I}$ and let us consider another Hilbert space $R$ such that dim($R$) = dim($\mathcal{H}_I$) and introduce a basis $\{\ket{k}_R\}_{k\leq m}$ for it. This auxiliary space will help us find the decomposition. Then, for each arbitrary $\ket{\psi} = \sum_k \psi_k\ket{k}_I \in \mathcal{H}_I$ define a corresponding vector $\ket{\Tilde{\psi}} \in R$ as\[\ket{\Tilde{\psi}}:= \sum_k \psi_k^{*} \ket{k}_R.\]
Using these objects, we can rewrite the general action of our operation $\mathcal{E}$:
\begin{align}
\mathcal{E}\left(\ket{\psi}\bra{\phi}\right)&=\sum\limits_{k,k'} \psi_k \phi^{*}_{k'}  \mathcal{E}\left(\ket{k}\bra{k'}_I \right) \nonumber
    \\
    &=\sum_{k, k'}\delta_{n k} \left[\sum\limits_{n,m} \psi_n \phi^{*}_m  \mathcal{E}\left(\ket{n}\bra{m}_I \right)\right] \delta_{m k'} \nonumber
    \\
       &=\left[\sum_k \psi_k \bra{k}_R\right]  \left[\sum\limits_{n,m} \ket{n}\bra{m}_R \otimes \mathcal{E}\left(\ket{n}\bra{m}_I \right)\right] \left[\sum_{k'} \phi^{*}_{k'} \ket{k'}_R\right] \nonumber
       \\
&=\bra{\Tilde{\psi}}\left[\mathcal{I}_{\mathcal{L}(R)} \otimes \mathcal{E} \left(\ket{\alpha}\bra{\alpha}\right)\right] \ket{\Tilde{\phi}}\nonumber
    \\
    &= \bra{\Tilde{\psi}} \sigma_{\mathcal{E}} \ket{\Tilde{\phi}}. \label{actionofE}
    \end{align}
Remember that $\sigma_{\mathcal{E}}$ is an element of $\mathcal{L}(R \otimes \mathcal{H}_O)$ while the vectors $\ket{\Tilde{\psi}},\ket{\Tilde{\phi}}$ belong to R. Thus\footnote{Technically, the symbol $\sigma_{\mathcal{E}} \ket{\Tilde{\psi}}$ represents the operator $\sigma_{\mathcal{E}}(\ket{\Tilde{\psi}}\otimes\cdot):\mathcal{H}_O \to R\otimes \mathcal{H}_O$. The bra $\bra{\Tilde{\psi}}$ acting on the space $R\otimes \mathcal{H}_I$ means $\braket{\Tilde{\psi},\cdot} \otimes (\cdot): R\otimes \mathcal{H}_O \xrightarrow[]{\bra{\Tilde{\psi}}\otimes \mathds{1}_I}\mathds{C}\otimes \mathcal{H}_O \xrightarrow{c \otimes \ket{\psi} \mapsto c \ket{\psi}} \mathcal{H}_O$.}, $\bra{\Tilde{\psi}}\sigma_{\mathcal{E}} \ket{\Tilde{\phi}}$ is not a number, but a map going from $\mathcal{H}_O$ to itself, as expected since it is in the image of $\mathcal{E}.$

As a consequence of $\mathcal{E}$ being CP, $\sigma_{\mathcal{E}}$ is positive and it is possible to decompose it with some family of vectors {$\ket{s_i}\in R\otimes \mathcal{H}_O$}, 
\begin{equation}\label{fixdecomp}
    \sigma_{\mathcal{E}} = \sum_i \ket{s_i}\bra{s_i}.
\end{equation} 
So, let us define the family of maps $E_i: \mathcal{H}_I \to \mathcal{H}_O$ with action given by
$$E_i \ket{\psi} := \braket{\Tilde{\psi} | s_i} \quad \in \mathcal{H}_O .$$ They are linear because $c \ket{\psi}+\ket{\phi} \mapsto  \braket{c^{*}
\Tilde{\psi}+\Tilde{\phi} |s_i} =  c \braket{\Tilde{\psi}|s_i}+\braket{\Tilde{\phi}|s_i}$. We can also verify that, according to \eqref{actionofE},
$$\sum_i E_i \ket{i}\bra{k} E_i^{\dagger} = \sum_i \braket{\Tilde{i}|s_i}\braket{s_i|\Tilde{k}}=\bra{\Tilde{i}} \sigma_{\mathcal{E}} \ket{\Tilde{k}}= \mathcal{E}(\ket{i}\bra{k}).$$
Since $\mathcal{E}$ is linear, its action on an arbitrary operator $A=\sum_{jk}A_{jk}\ket{j}\bra{k}\in \mathcal{L}(\mathcal{H}_I)$ is
\begin{equation}
    \mathcal{E}(A)=\sum_{jk}A_{jk}\mathcal{E}\left(\ket{j}\bra{k}\right)=\sum_{ijk}A_{jk}E_i\ket{j}\bra{k}E_i^\dagger=\sum_i E_i A E_i^{\dagger}.
\end{equation}
We have proved that the operators characterize the action of $\mathcal{E}$. The assumption that $\mathcal{E}$ does not increase trace means that, for $\ket{\psi}\in \mathcal{H}_I$, $\operatorname{Tr}\left(\mathcal{E}(\ket{\psi}\bra{\psi})\right)\leq\operatorname{Tr}(\ket{\psi}\bra{\psi})=\braket{\psi|\psi}$. Substituting the Kraus operators in that expression, we have \begin{equation}
    \sum_i\operatorname{Tr}\left(E_i^\dagger E_i\ket{\psi}\bra{\psi}\right)\leq \braket{\psi|\psi}\implies\bra{\psi}[\mathds{1}_{\mathcal{H}_I}-\sum_iE_i^\dagger E_i]\ket{\psi}\geq0,
\end{equation}
valid for all $\ket{\psi}\in\mathcal{H}_I$. Therefore, $\sum_i E_i^{\dagger}E_i \leq \mathds{1}_{\mathcal{H}_I} $, completing the proof.

The result is also guaranteed to hold in the infinite dimensional case (separable Hilbert spaces) by the Stinespring factorization theorem, with trace-class operator spaces and a  
sequence $\left\{E_{i}\right\}$ of bounded linear operators, with the sum in the Kraus representation turned into a convergent series~\cite{Kraus,Attal}.

\end{proof}

  We constructed the Kraus operators with a generic vector decomposition for $\sigma_{\mathcal{E}}$, indicating that the Kraus form is not unique. This is in agreement with the idea that a system can undergo the same transformation in different ways. In fact, there is a unitary freedom in the choice of Kraus operators because of different decompositions one might write for $\sigma_{\mathcal{E}}$~\cite{Nielsen}. There is always a factorization with the minimum possible number of Kraus operators for $\mathcal{E}$ which are mutually orthogonal, $\operatorname{Tr}(E_i E_{i'}^{\dagger})=\delta_{i i'}$. Such canonical form can be achieved by writing the diagonalized matrix $\sigma _\mathcal{E}=\sum_i\lambda_i\ket{i}\bra{i} $ and choosing the vectors in~(\ref{fixdecomp}) to be $\ket{s_i}:=\sqrt{\lambda_i}\ket{i}$~\cite{Landi,MilzPollock}. The number of Kraus operators then coincides with the rank of the Choi matrix and is no bigger than the multiplication of the dimensions $d_{\mathcal{H}_I}\times d_{\mathcal{H}_O}$.

The fact that quantum operations are the maps which have a Kraus decomposition helps us understand them better. In special, it is possible to show that there always exists a model involving evolutions of a quantum system + environment which realizes any given quantum operation. The formal mathematical result is known as Stinespring dilation~\cite{Stinespring_1955}. The idea behind it is to use the Kraus decomposition of $\mathcal{E}$ to explicitly construct a sequence of unitary and measurement operators on an extended Hilbert space which reproduces its application. The simplest case is for a trace-preserving operation. Suppose we have 
\begin{equation}
    \mathcal{E}=\sum_{i=1}^{d}E_i (\cdot) E_i^{\dagger}:\mathcal{L}(\mathcal{H}_I)\to \mathcal{L}(\mathcal{H}_O), \quad \quad \sum_{i=1}^{d} E_i^{\dagger}E_i = \mathds{1}_{\mathcal{H}_I}.
\end{equation}
Let us define the environment as a Hilbert space $E$ of dimension $d$, the same as the number of Kraus operators, or Kraus rank, and let $\ket{0}$ be a normalized fixed vector of $E$ representing the initial state of the environment. Although we could choose the initial state to be mixed, it turns out we will not need that. We could always arrive at the same operation by purifying the environment. Let $\{\ket{i}\}_{1\leq i\leq d}$ be a basis for $E$. Then, we can define a map $U$ that acts on vectors of the form $\ket{\psi}\ket{0} \in \mathcal{H}_I\otimes E$ like
$$U\left(\ket{\psi}\ket{0}\right)=\sum_{i=1}^{d}E_i \ket{\psi} \ket{i}. $$ 
Now, since
$$\bra{\psi}\bra{0}U^{\dagger}U\ket{\phi}\ket{0}=\sum_{i=1}^{d}\bra{\psi} E_i^{\dagger}E_i\ket{\phi}=\braket{\psi|\phi},$$ this map preserves the inner product when acting on the subspace $\mathcal{H}_I\otimes \ket{0}$, mapping it to a vector space in the codomain. As a consequence, there exists~\cite{Nielsen} a unitary extension of it on the entire domain $\mathcal{H}_I\otimes E$, which will also be called $U$. We constructed this definition for $U$ so that we can write $\bra{i}U\ket{0}$\footnote{This is an operator because the product is only with respect to $E$, similarly to a partial trace.}$=E_i$. Then,
\begin{align}
    \mathcal{E}(\rho) = \sum_{i=1}^d E_i\rho E_i^\dagger&= \sum_{i=1}^{d} \bra{i}U\ket{0}\rho \bra{0}U^{\dagger}\ket{i} \nonumber
    \\
&=\operatorname{Tr}_E\left[\sum_{mn}\rho_{mn}U\ket{m}\ket{0}\bra{n}\bra{0}U^\dagger\right] \nonumber
    \\
    &=\operatorname{Tr}_E \left[ U\left(\rho\otimes \ket{0}\bra{0}\right)U^{\dagger}\right],
\end{align}
where $\operatorname{Tr}_E$ is the partial trace with respect to space $E$. Therefore, the application of any trace-preserving $\mathcal{E}$ can be interpreted as if the system $\mathcal{H}_I$ evolved unitarily together with an environment $E$ whose initial state was $\ket{0}$. The environment was subsequently ignored or ``traced-out'' after the joint evolution, leaving us with the correct output state.

For the trace decreasing case, we can add one extra Kraus operator $E_{d+1}:=\sqrt{\mathds{1}-\sum_{i=1}^d E_i^\dagger E_i}$  so that we have $\sum_{i=1}^{d+1} E_i^{\dagger}E_i=\mathds{1}$ again. Then, we can follow the steps above leading to a unitary operation with an augmented environment E'. At the end, we have to project the unitarily evolved system back to the space $\mathcal{H}_O\otimes E$ and only after that trace-out the environment E. The operation has the form
$$\mathcal{E}(\rho) = \sum_{i=1}^{d}E_i \rho E_i^{\dagger}=\operatorname{Tr}_E \left[P_{\mathcal{H}_OE} U_{\mathcal{H}_IE'}\left(\rho\otimes \ket{0}\bra{0}\right)U_{\mathcal{H}_IE'}^{\dagger}P_{\mathcal{H}_OE}\right],$$ which can be visualized in Fig.~\ref{fig:QuantumOp}. Thus, unitary evolutions and projections on composite systems cover all quantum operations. Conversely, any evolution of this form has a Kraus decomposition, as we will discuss shortly in an example.
  \begin{figure}
    \centering
    \includegraphics{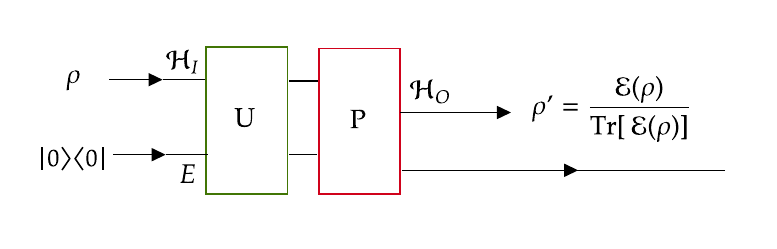}
    \caption{Schematics of Stinespring dilation. Any map obeying the axioms of quantum operations can be rewriten in the form of a unitary evolution $U$ followed by a projective measurement $P$ on the system of interest together with an environment. Part of the system can then be discarded, leaving a system in state $\rho'$ in the output Hilbert space $\mathcal{H}_O$. Therefore, quantum operations are always described by open quantum system dynamics.}
    \label{fig:QuantumOp}
\end{figure}
  
  In quantum as much as classical mechanics, we postulate how closed systems evolve and expect that any system can be seen as part of a closed system, ultimately the entire universe. This assumption is not made for quantum operations, however, and they end up satisfying it independently. Any quantum operation can be reduced to a simple evolution of an open quantum system, and the converse is true. Then, quantum operations are a general description of quantum dynamics, as good as systems+environment models or master equations~\cite{LIDAR200135,Nielsen}. This reinforces that
  fundamental aspects of evolutions in QT are abstractly characterized by the mathematical properties of these maps. We draw special attention to the fact that they can always be depicted as quantum circuits, as in Fig.~\ref{fig:QuantumOp}.

\subsection{Elementary examples of quantum operations}

With the framework settled, we can talk about some of the simplest examples of quantum operations. We have mentioned unitary evolutions and projective measurements because they are known possible operations coming from modeling open systems. We can also define operations by simply stating their action on a basis, since the map has to be linear, or by providing Kraus operators that represent it. Those three approaches are equivalent ways of defining specific quantum operations and the form chosen is a matter of convenience.

\noindent a) The $\mathcal{H}_I\simeq\mathbb{C}$ case

Consider an operation (a linear CP trace non-increasing map) from $\mathcal{L}(\mathcal{\mathbb{C}})$ to $\mathcal{L}(\mathcal{H}_O)$ for some Hilbert space $\mathcal{H}_O$. How can we interpret an operation of this type?

Each functional $f \in \mathcal{L}(\mathbb{C})$ is completely defined by the action of $f$ on the number $1\in \mathbb{C}$, because, by linearity, $f(a+bi)=(a+bi)f(1)$ for every complex number $a+bi$. The action of $f$ may be written as $f=f(1) \mathds{1}$, where $\mathds{1}:\mathbb{C}\to\mathbb{C}$ is the identity. Analogously, an operation $\mathcal{E}:\mathcal{L}(\mathbb{C})\to \mathcal{L}(\mathcal{H}_O)$ is characterized by its action on the identity $\mathds{1}$ because, by linearity again, $\mathcal{E}(f)=\mathcal{E}(f(1) \mathds{1})=f(1)\mathcal{E}(\mathds{1}).$ The information encoded in this map is just one state $\mathcal{E}(\mathds{1})/\operatorname{Tr}(\mathcal{E}(\mathds{1}))\equiv\rho'\in \mathcal{H}_O$. So one can see this map as a representation of the state $\rho'$ itself, or its preparation. To find a Kraus decomposition, we observe that
$$\mathcal{E}(f)=f(1)\mathcal{E}(\mathds{1})\text{ and  }\mathcal{E}(f)=\sum_k E_k f(1) \mathds{1} E_k^{\dagger}.$$ If the map represents a density operator $\rho'=\mathcal{E}(\mathds{1})/{\operatorname{Tr}(\mathcal{E}(\mathds{1}))}$, and $\rho' =\sum_i p_i \ket{\psi_i}\bra{\psi_i}$, 
$$\mathcal{E}(\mathds{1})=\sum_k E_k \mathds{1} E_k^{\dagger}=\rho'\operatorname{Tr}(\mathcal{E}(\mathds{1})) \implies E_k = \sqrt{\operatorname{Tr}(\mathcal{E}(\mathds{1})) p_k}\ket{\psi_k},$$
where the ket $\ket{\psi_k}$ represents the map $\ket{\psi_k}: \mathbb{C}\to\mathcal{H}_O$ taking every complex number to correspondent multiples of $\ket{\psi_k}.$ As expected, the Kraus operators are maps from the input to the output Hilbert spaces. The adjoint of these maps are the bras $\bra{\psi_k}$.

\noindent b) The $\mathcal{H}_O \simeq \mathbb{C}$ case and the trace 

A general operation from $\mathcal{L}(\mathcal{H}_I)$ to $\mathcal{L}(\mathcal{H}_O)$ has Kraus operators going from $\mathcal{H}_I$ to $\mathcal{H}_O$. So, if $\mathcal{H}_O\simeq \mathbb{C}$, each Kraus operator $E_k$ is a linear functional on $\mathcal{H}_I$. Since they are elements of the dual space, each $E_k$ can be represented by a bra $\bra{v_k}$ because of Riesz representation theorem. Therefore, the operation acts like
$$\mathcal{E}(\rho)=\sum_k E_k\rho E_k^{\dagger}=\sum_k \bra{v_k} \rho \ket{v_k}.$$
If we define an operator $P:=\sum_k\ket{v_k}\bra{v_k}$, then the action can be rewritten as $\operatorname{Tr}(P\rho)$. Operations of this kind can be interpreted as quantum effects, that is, operations describing the change in a yes-no measurement apparatus when it interacts with a system. If $P$ is a projection describing the evolution when outcome $m$ is obtained, the corresponding operation takes the state as input and returns the probability for the apparatus to indicate the value $m$, as opposed to ``not $m$''. State preparations and effects are used as primitive concepts in approaches to axiomatically reconstruct quantum mechanics from more basic assumptions on probabilities~\cite{Kraus,Ludwig1983}. In particular, if the vectors $\ket{v_k}$ form an orthonormal basis, this operation is simply the trace operation $\mathcal{E}(\rho)=\operatorname{Tr}(\rho).$

\noindent c)The partial trace 

The partial trace on states is also a quantum operation. One can see that by defining its Kraus operators. Let $S\otimes R$ be the Hilbert space of the composite system, with $R$ being the space we want to trace out. Define operators
$$E_k\left(\sum_j \lambda_j \ket{\psi_j}\ket{j}\right):=\lambda_k\ket{\psi_k},$$
for some basis $\{\ket{j}\}$ of $R$ and arbitrary vectors $\ket{\psi_j}$. Then, define the operation $\mathcal{E}:\mathcal{L}(S\otimes R)\to S$ acting like $\mathcal{E}(\rho)=\sum_k E_k \rho E_k^{\dagger}$. By definition, this is a quantum operation, since it is defined in terms of a Kraus decomposition. Moreover, one can verify
$$\mathcal{E}(\rho^S\otimes\rho^R)= \sum_k E_k\left(\sum_{ii'}\rho^S_{ii'}\ket{s_i}\bra{s_i'} \otimes \sum_{jj'} \rho^R_{jj'}\ket{j}\bra{j'}\right)E_k^{\dagger} $$
$$= \sum_k E_k\left(\sum_{i i' j j'}\rho^S_{ii'}\rho^R_{j j'}\ket{s_i}\ket{j}\bra{s_i'}\bra{j'}\right)E_k^{\dagger} = \sum_k \rho^R_{kk} \rho^S =\operatorname{Tr}_R(\rho^S\otimes\rho^R).$$

\noindent d) General evolution of open systems

We showed that if a map is a quantum operation it has a Kraus decomposition, and the converse. We also showed that for any given Kraus decomposition, there exists a system+environment model realizing the operation. The converse of this statement is simpler: if we define an operation by describing the evolution of a system+environment, then we can directly find a Kraus decomposition for it.

An operation describing unitary evolution given by operator $U$ followed by a projective measurement $P_m$ on a system and environment acts like
\begin{equation}\label{sys+environ}
    \operatorname{Tr}_E \left(P_mU(\rho \otimes \sigma) U^{\dagger}P_m\right),
\end{equation}where $\sigma$ is the initial state of the environment.  If $\sigma=\sum_n \sigma_n \ket{n}\bra{n}_E$, then we have
$$\mathcal{E}(\rho)= \sum_{k,n} \sigma_n \bra{k}P_m U(\rho \otimes \ket{n}\bra{n}_E) U^{\dagger}P_m \ket{k}= \sum_{k,n} E_{k n} \rho E_{k n}^{\dagger},$$
with \begin{equation}\label{KrausopOpen}
    E_{k n}:=\sqrt{\sigma_n}\bra{k}P_m U\ket{n}_E.
\end{equation} 
 From the expression above, one can obtain the special cases when either $P_m=\mathds{1}$ or $U=\mathds{1}$.

\noindent d)The CNOT gate

Let us discuss one explicit operation to clarify the representations in this overview. The CNOT, or controlled-NOT, gate is a computational transformation which takes 2 quantum bits as inputs and returns 2 quantum bits. Its action can be described as a conditional flipping on the second bit depending on the state of the first bit. That is, if the first input bit is the control $\ket{c}$ and the second is the target $\ket{t}$, then the map takes $\ket{00}\mapsto\ket{00}$, $\ket{01}\mapsto\ket{01}$, $\ket{10}\mapsto\ket{11}$ and $\ket{11}\mapsto\ket{10}$.
The associated quantum operation $\mathcal{E}_U:\mathcal{L}(\mathcal{H}^{\otimes 2})\to \mathcal{L}(\mathcal{H}^{\otimes 2})$ is given by $\mathcal{E}(\rho)=U \rho U^{\dagger}$, with U being the matrix corresponding to the CNOT gate:
   \begin{center}
    \includegraphics{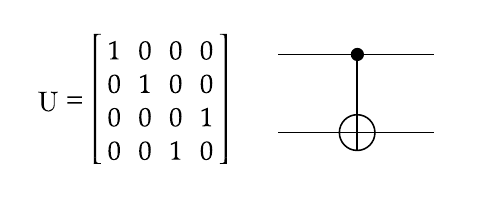}
    \end{center} with respect to the computational basis $\{\ket{0},\ket{1}\}^{\otimes 2}$. As we can see from the form of the matrix, U is unitary. As a result, the operation is trace-preserving and it is already written in the trivial Kraus decomposition $U \rho U^{\dagger}$.

We can also consider the situation in which the control $\ket{c}$ is measured in the computational basis. This new operation $\mathcal{E}_{UP}:\mathcal{L}(\mathcal{H}^{\otimes 2})\to \mathcal{L}(\mathcal{H}^{\otimes 2})$ acts like $\mathcal{E}_{UP}(\rho)=PU\rho U^\dagger P$ for the observable
\begin{equation}
    P:= \left(\ket{0}\bra{0}-\ket{1}\bra{1}\right)\otimes \mathds{1}_{\mathcal{H}}
\end{equation}
From calculating $PU$ we find the Kraus operators:  
$$E_1 =\ket{00}\bra{00},\quad E_2 =\ket{01}\bra{01},\quad E_3 =-\ket{10}\bra{11},\quad E_4 =-\ket{11}\bra{10},$$
and we can check that $\sum_k E_k^{\dagger}E_k= \mathds{1}_{\mathcal{H}}$. 

 The above was the situation in which 2 bits are taken as inputs, they pass through a CNOT gate and a measurement is made on the control right after that. If we further select only the cases for which the control returns the output -1 and then discard the control giving only the state of the target as output, this is an example of non-trace preserving operation.

In fact, the domain of the operation is still $\mathcal{L}(\mathcal{H}^{\otimes 2})$, with a 2-dimensional $\mathcal{H}$, but the codomain is now $\mathcal{L}(\mathcal{H})$, since only the target bit is returned. The description given is similar to what was done above. The unitary evolution is followed by the projection $P_1=-\ket{1}\bra{1}_c$, and then a partial trace with respect to the control space is performed. Therefore, the operation acts like $\mathcal{E}(\rho)=\on{Tr}_c\left( P_1 U \rho U^{\dagger}P_1\right)=\bra{1}_c U \rho U^{\dagger}\ket{1}_c$. There are now only two Kraus operators, which can be found by writing $\bra{1}_c U$ explicitly:
\begin{align}
    &\bra{1}_c U= \bra{1}_c \left[\ket{00}\bra{00}+ \ket{01}\bra{01}+ \ket{10}\bra{11}+  \ket{11}\bra{10}\right]=\ket{0}\bra{11} +\ket{1}\bra{10}
\end{align}
The Kraus operators fitting the expression above are $E_1=\ket{0}\bra{11}$ and $E_2=\ket{1}\bra{10}$. We can check that indeed
\begin{equation}
    \sum_k E_k^{\dagger}E_k = \ket{10}\bra{10}+\ket{11}\bra{11}<  \mathds{1}_{\mathcal{H}^{\otimes 2}},
\end{equation}
showing that this is a case of operation that does not preserve trace. The expression $\rho'= \mathcal{E}(\rho)/\operatorname{Tr}(\mathcal{E}(\rho))$ will then give us the normalized transformed state.

\subsection{The initial correlation problem and quantum supermaps}

Quantum operations are linear and, as a consequence, probing them with a few different states is sufficient for reconstructing their action. But, for that to make sense, we need to assume one can independently prepare the system $\mathcal{H}_I$ in different initial states. One example where this would not be the case is the evolution of a system and environment starting in a correlated state. Preparing the input system in different states would change the environment state modifying the operation itself. This leads us to the fact that operations can \emph{only} describe situations for which that independence exists. In fact, a map taking states of the system $\mathcal{H}_I$ as arguments is CP and linear if, and only if, the initial state in its system+environment models is a \emph{product state} $\rho_{\mathcal{H}_I}\otimes \eta_E$~\cite{Sudarshan}.

This is known as the initial correlation problem, and experimentalists began to see it in practice in the late 90s when trying to reproduce quantum gates in laboratory. The transformations with environment noise which should be described by quantum operations were not even CP maps~\cite{MilzPollock}. Considering non-complete positive maps to describe them represents an alternative, although an artificial one, since this property is closely related to a consistency requirement that probabilities remain well-defined. At the same time, nobody wants to drop linearity of the maps, which guarantees quantum process tomography can be done. Fortunately, there is a natural resolution to this problem: quantum supermaps.

The initial correlation problem has to do with not knowing how the input quantum state that suffers the operation was prepared. Starting with a possibly correlated system+environment state, the experimenter could choose to prepare a certain initial state $\rho$ for the system in different ways, which could even be conditioned on measurement results, and each of them would leave the environment in a distinct state. In fact, the act of preparation can involve unitary, projective evolutions and every possible way to achieve the desired state before it undergoes the transformation $\mathcal{E}$. And this is precisely what a quantum operation is. A preparation $\mathcal{A}\in\mathcal{L}(\mathcal{L}(\mathcal{H}_I))$ of an initial state for a system $\mathcal{H}_I$ will be a quantum operation whose form is defined by the procedures adopted by the experimenter. If we wish to write the final state of the system after preparation and evolution, we get something of the form
\begin{equation}\label{supermap}
\rho' = \operatorname{Tr}_E \left[P U \left(\mathcal{A}\otimes\mathcal{I}_{\mathcal{L}(E)}(\rho_{IE})\right)U^{\dagger}P\right]   
\end{equation}
which is the regular system-environment model defined in the last sections, only this time the initial state is further specified as the result of preparation $\mathcal{A}\otimes\mathcal{I}_{\mathcal{L}(E)}$ on some fixed $\rho_{IE}\in\mathcal{L}(\mathcal{H}_I\otimes\mathcal{H}_E)$ that need not be a product state. We can then define a map $\mathcal{M}: \mathcal{L}(\mathcal{L}(\mathcal{H}_I)) \to \mathcal{L}(\mathcal{H}_O),$
\begin{equation}\label{supermapaction}
    \mathcal{A} \mapsto \mathcal{M}(\mathcal{A}) = \rho'.
\end{equation}
This map is linear and completely positive on its domain, as we can expect due to the form~(\ref{supermap}). Those properties guarantee a valid output quantum state for every preparation. It is virtually the same kind of map as a quantum operation, but while quantum operations take states to states, a supermap takes preparations to states. With the Choi-Jamiolkowski isomorphism~(\ref{ChoiJ}), we can associate a preparation $\mathcal{A}$ to a matrix $\sigma_{\mathcal{A}}\in\mathcal{L}(\mathcal{H}_I\otimes \mathcal{H}_I)$. The corresponding map $M:\mathcal{L}(\mathcal{H}_I\otimes \mathcal{H}_I)\to \mathcal{L}(\mathcal{H}_O)$ taking Choi matrices to states is a linear CP trace non-increasing map, being therefore mathematically identical to a quantum operation.

The advantage of considering such maps is that now we include the case where the initial state of a composite system is not a product state. Instead of considering input states, we consider input operations which reveal how the environment is changed after preparation, and the initial correlation problem goes away. Furthermore, they are operationally adequate because they take as inputs the objects the experimentalist will actually control, preparation settings, and outputs the final states, which are the objects that will be measured. Because supermaps are linear and CP, we have again that they can be Kraus decomposed, i.e. there are maps $\mu_k$ such their action can be written as
\begin{equation}
    \mathcal{M}(\mathcal{A}) = \sum_k \mu_k \mathcal{A} \mu_k^{\dagger}.
\end{equation}
A quantum operation is a special case of quantum supermap where the state $\rho_{IE}$ is a product state $\rho_I\otimes\rho_E$. The prepared state is $\rho=\mathcal{A}(\rho_I)$ resulting in $\mathcal{M}(\mathcal{A})=\mathcal{E}(\mathcal{\rho})$. 
There are a lot of properties of quantum operations and further discussions about supermaps, but a comprehensive study is not the goal of this chapter. This is intended to give a general picture of how evolutions are treated in Quantum Theory operationally and introduce mathematical objects we will use in the following chapter. 

\section{Wrapping up: where is causality after all?}\label{causalityCombs}
So, this was the overview and we promised causality was hidden somewhere around here. In fact, it can all be summarized in a simple statement: all quantum operations are in principle implementable in a quantum circuit~\cite{Nielsen}, namely the one shown in Fig.~\ref{fig:QuantumOp}. If those objects really are all there is in terms of dynamics, the evolution always has an associated causal structure and therefore it is mathematically compatible with causality. But it is still not clear how this relates to the probabilities measured in the problem with Alice and Bob and the causal inequality setting.

Operations are transformations taking an input to an output state while, in the bipartite problem, both Alice and Bob act on a system. We can think of using the idea of a supermap, as defined in \eqref{supermapaction}, to describe this situation. This time, the map takes two operations as inputs, $\mathcal{M}(\mathcal{A},\mathcal{B})=\rho'$, instead of just one. This is exactly what is done in the study of general supermaps~\cite{ChirSupermap} and quantum combs/networks~\cite{Chiribellanetworks,MilzPollock} for an arbitrary number of input maps. These formalisms use the Choi-Jamiolkowski isomorphism \eqref{ChoiJ} to simplify the analysis of such maps and even maps between them. 

It could be that new situations appeared in such cases involving multiple steps. For example, if $\mathcal{A}_\rho$ is the preparation of an input state $\rho$ and $\mathcal{B},\mathcal{C}$ are operations to be applied on $\rho$, a supermap could give us the final state $\mathcal{M}(\mathcal{A}_\rho,\mathcal{B},\mathcal{C})=\rho'$. The transformation $\rho\to\rho'$ has to be a quantum operation, and thus it is compatible with a circuit, but it is not clear whether it could be written as a circuit containing the specified input operations. In reference~\cite{Chiribellanetworks}, the authors develop an axiomatic approach for $N$-supermaps analogous to the axiomatic for quantum operations reproduced here. It is possible to show that the properties of those maps induce a generalized Stinespring dilation~\cite{Chiribellanetworks,MilzPollock,PollokStinesp}. See Fig.~\ref{fig:GeneralStinesp} for a representation of the Stinespring dilation of a supermap with $k$ inputs. Any such supermap is implementable with a circuit structure with open slots and therefore the final evolution is compatible with causality involving the specified input operations.

But one of the properties these maps have to obey so that the above is true is the condition that the output of the $N$-th entry, $\mathcal{A}_N$, cannot statistically influence the input of the entry to its left, $\mathcal{A}_{N-1}$. This is usually postulated, but can also be derived by asking for maps to obey CP and linearity in a recursive construction~\cite{Chiribellanetworks}. Either way, this is the mathematical requirement of causal ordering and it is taken to be true when defining quantum dynamics through supermaps.
The main reason is because, with this assumption, we can always understand evolutions as coming from a generalized system+environment model, which should be true if QT is globally valid. In particular, if the operations of Alice and Bob are taken as inputs of a quantum supermap, they are guaranteed to be applied in a definite order, and their probabilities will obey the causal inequality.

\begin{figure}
   \includegraphics[scale=0.48]{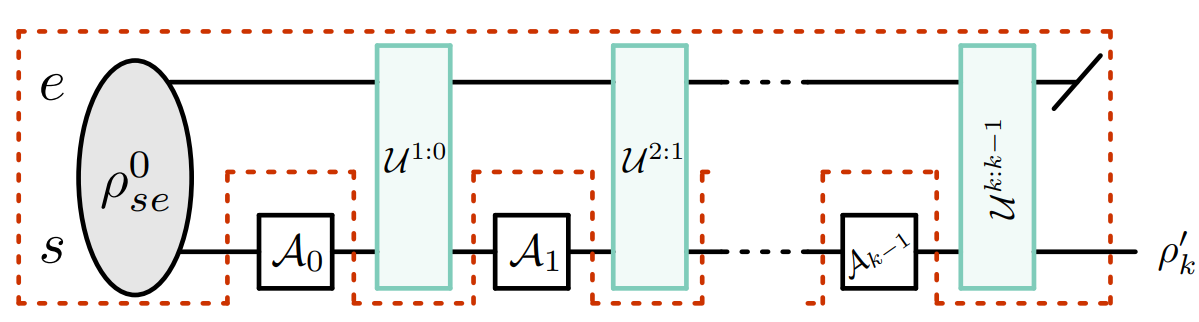}
    \caption{Generalized Stinespring dilation. For a trace-preserving supermap that takes k inputs there exists a set of unitary maps $\{\mathcal{U}^{1:0},\dots, \mathcal{U}^{k:k−1}\}$ and an initial state $\rho^0_{se}$ describing system+environment such that the output state of the system is $\rho'_k=\on{Tr}_e\left[\mathcal{U}^{k:k-1}\left(\mathcal{A}_{k-1}\otimes \mathcal{I}_e\right)\dots\mathcal{U}^{1:0}\left(\mathcal{A}_{0}\otimes \mathcal{I}_e\right)(\rho^0_{se})\right]$. Figure from~\protect\cite{MilzPollock}.}
    \label{fig:GeneralStinesp}
\end{figure}

 Despite the naturality of this assumption, a reasonable question to ask now is: would non-compatibility with circuit implementation cause any sort of paradox? Can we conceive evolutions with indefinite causal structure that nevertheless agree with quantum mechanics for the observers involved? That is what we explore in the next chapter.

\chapter{The process matrix formalism}\label{Chap Process Matrix}
In the present chapter, we will develop the process matrix formalism, which consists on a generalization of Quantum Theory achieved by letting aside the demand for causal structure in its evolutions~\cite{Oreshkov,Chiribella}. As we have seen in the last chapter, Quantum Theory requires that each operation is either applied before, after or ``with spacelike separation of'' any other. The process matrix formalism removes that demand in a global sense and analyzes what properties should be satisfied so that quantum mechanics is still valid locally. The notions of closed laboratories and definite causal order of section~(\ref{Operationalview}) are fundamental in the formalism. It can be revealed in advance that the resulting framework includes processes describing usual physical situations, like a quantum system shared by two agents frequently called Alice and Bob in quantum information problems, as well as new types of structures.

The motivation behind this formalism is being a first step in building a framework for a Probabilistic Theory of Quantum Gravity, as proposed in~\cite{Hardy}. The author argues that General Relativity and Quantum Theory are incompatible, one of the reasons being the way causality is treated in each theory. Then, a first step in trying to make a theory of Quantum Gravity would be to first remove these issues of incompatibility. 

In General Relativity, causal relations between events are a property of the metric in spacetime. And the metric is, in turn, obtained uniquely from solving Einstein’s equations with boundary conditions. Therefore, causal relations come out as a result, an output, of General Relativity, and they can behave dynamically.

Quantum Theory, on the other hand, presumes operations and measurements fit an order structure, which is predefined and has to be given in advance. One can say, for example, that an operation $\mathcal{A}$ is applied before, after or with spacelike separation from operation $\mathcal{B}$ by looking at the spacetime points of their applications. Or, without making reference to a spacetime, one should establish the order \emph{a priori} by specifying time steps or a quantum circuit the system will go through. Only then it is possible to compose operations and extract results from the theory, as argued in the last chapter. Since causal relations determine what are the orders for operations made at definite locations and, therefore, the final probabilities for the experiments, they are an input required by Quantum Theory. 

Then, how are we supposed to understand causal relations in a quantum spacetime? Of course, it could be the case that spacetime cannot even be regarded as a totally quantum entity, as argued and modeled in~\cite{Penrose,Diosi,Scully2018}. Even then, the fact that the theories both aim to explain physical phenomena in certain scales where their effects are dominant poses a problem, for instance, in small enough spatio-temporal scales and for big enough masses~\cite{Bischoff2018}. The leap from the quantum to the classical world is a regime of interest which leads to a compromise for one or both theories regarding their domain of validity and completeness.

 The assumption that spacetime and gravity should be described as quantum entities or that they should at least arise as an emergent effect from a microscopic quantum world leads us to a quest for unification of Physics. As noted above, the two theories, in the way they are formulated now, are incompatible in this sense: to describe a system and its evolution in Quantum Theory, one needs predefined order. Then, spacetime could not be described by it since order will only be established after Einstein`s equations or eventual modified equations for spacetime are solved. In addition, General Relativity is also incompatible with Quantum Theory because it is deterministic\footnote{Here and in the majority of the quantum information literature, saying a theory is deterministic means just the opposite of probabilistic: its predictions for direct experimental outcomes have no uncertainty, not even a fundamental one. They predict a unique value for an outcome when the system meets fixed initial conditions. Thus, classical mechanics is deterministic while both statistical and quantum mechanics are not. Not to be mistaken with notions such as the existence of a differential equation governing the evolution of a physical entity, which garantees a uniquely \emph{determined} evolution for it. This holds true for the quantum wave function, uniquely determined by its equation, but we still cannot in general arrive at unique values for results of single quantum measurements.}, while Quantum Theory is irreducibly probabilistic. The mathematical structures are not ready to merge.

%

 In reference~\cite{Hardy}, the author introduces a program for constructing a Theory of Quantum Gravity using an operational methodology. According to it, the features we discussed are conservative and radical features of General Relativity and Quantum Theory when compared to Classical Theory, as illustrated in the diagram below. The argument is that if we are able to construct a framework which displays both radical features, then it could contain both General Relativity and Quantum Theory. Hopefully it could contain a Theory of Quantum Gravity as well.
 \begin{figure}[ht]
     \centering
\includegraphics[scale=0.95]{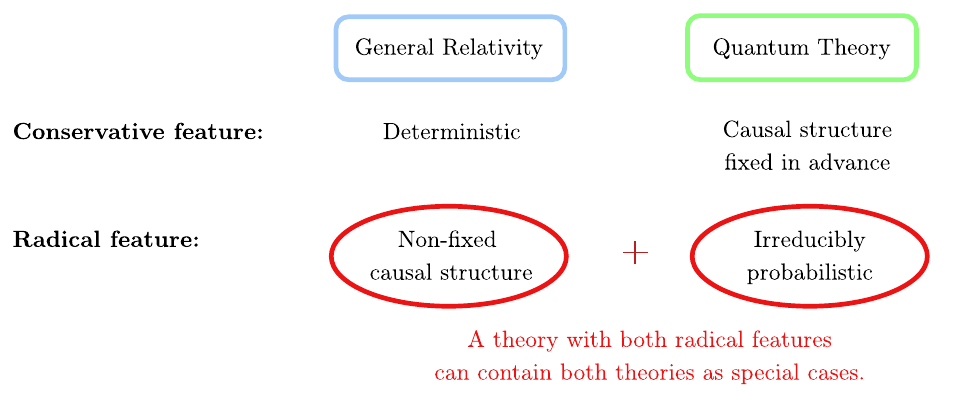}
     \caption{Reasoning behind a program for a probabilistic Theory of Quantum Gravity~\protect\cite{Hardy}.}
     \label{fig:TQG}
 \end{figure}

One attempt to construct a theory with both features is to generalize Quantum Theory, which is already irreducibly probabilistic, in a way that it does not require fixed causal structure. This would leave causal structure free to be determined by something else, i.e. solutions (possibly quantum) for the metric which could affect order in new ways. 

For example, in Quantum Gravity, if one considers that the spacetime metric can be found in a superposition state, this is expected to generate a superposition of causal structures as well, possibly resulting, depending on the configuration, in a superposition of distinct order relations of operations, $\mathcal{A}$ before $\mathcal{B},$ $\mathcal{B}$ before $\mathcal{A}$ and $\mathcal{A}$ spacelike separated from $\mathcal{B}$. However, it is necessary to not assume General Relativity \emph{a priori}, because one wishes to construct a general quantum framework where General Relativity will only fit later. Causal relations have to be formulated from the point of view of Quantum Theory alone, not making any reference to an underlying spacetime, so that it is possible to make clear assumptions about causality or remove them. That is why we introduced in the last chapter a way to treat causal relations relying on general outcome probabilities. From there, we can set up a theory which agrees with quantum mechanics locally leaving causal structure to be determined by a formulation of gravitation consistent with the program. The process matrix formalism was constructed to be that theory. 

The formalism has attracted interest, aside from the foundational point of view, in the context of quantum information processing, because it goes beyond the quantum circuit model. It has been suggested that indefinite order appears within regular quantum mechanics scenarios, without gravity~\cite{Araujo2015}. The main protocol included in these discussions is the quantum switch~\cite{Chiribella}, one of the simplest cases of indefinite order, which is claimed to have been realized experimentally~\cite{ReviewExp}. A lot of the work done in the area does not necessarily focus on causal-modeling, and processes with indefinite order are explored in quantum computation, quantum information, quantum metrology and thermodynamics~\cite{Araujo,Guerin,Wei,ChirDisc,Goswami2020,Guo2020,Colnaghi_2012,Taddei,Refrigerator_Vedral,RefrigExp,Zhao2020}. Questions have been raised about whether the experiments can be regarded as realizations of indefinite order, as opposed to simulations, and in what sense there is incompatibility with definite causality~\cite{Voji,Oreshkov2019timedelocalized}. On the other hand, gravitational scenarios involving indefinite orders have also been considered~\cite{tbell,Rindler,QSonEarth,shells}, generating motivation to search for adequate anda meaningful operational notions for events and causal relations in scenarios involving curvature of spacetime, classical or quantum.

For all of those reasons, the field of indefinite orders has been gaining relevance over the years. In this chapter, we are going to construct the process matrix formalism and discuss indefinite order through the definition of causal non-separability according to references~\cite{Oreshkov} and~\cite{Costa}. Next, we will talk about the quantum switch as a causally non-separable process~\cite{Chiribella}. Finally, we will quickly comment on experimental implementations of the switch in optical tables~\cite{ReviewExp} and discuss the description of these experiments from the operational and the spacetime background point of view.

\section{Local quantum mechanics: quantum instruments}

Let us construct the process matrix formalism for the bipartite setting, as first introduced in reference~\cite{Oreshkov}. This will be done by requiring that the conditional probabilities in experiments are consistent with Quantum Theory locally, namely inside each of Alice and Bob's closed laboratories. The possible operations in a laboratory will be the ones allowed by quantum operational dynamics, therefore causal structure is definite inside them. Besides that, no other assumption is made about global structure. We will only demand probabilities to be well defined. It could be that asking for the validity of Quantum Theory inside laboratories indirectly implied that
at most one-way signaling experiments are allowed. Then, a probabilistic theory compatible with local quantum mechanics would be inevitably causal, with probabilities satisfying causal inequalities like \eqref{classord}.  Interestingly, it turns out this is not the case.

Let us define the objects that represent possible procedures inside local laboratories. In \ref{QuantumOpAxiomatic}, we talked about how a projective measurement transformation can be written as a trace decreasing quantum operation. Thus, for a fixed measurement outcome, an operation describes how the state transforms. But we wish to describe the full measurement procedure, with the possibility of the quantum state suffering distinct operations depending on the outcome of a measurement and to store the measured value. In a measurement procedure, there is a probability value for obtaining each outcome $m$, which also determines the operation to be applied, $\mathcal{M}_m$. The input state transforms like $\rho \to \mathcal{M}_m (\rho)/\operatorname{Tr}[\mathcal{M}_m (\rho)]$. We are interested in complete families including the operations corresponding to each possible m of a measurement, and those are called quantum instruments~\cite{Davies&Lewis}. Instruments model the most general transformations on quantum systems done in laboratory and keep the classical information gathered. Let us state the definition for finite dimensional spaces.

\begin{definition}

 A \textbf{quantum instrument} is a family of linear, completely positive and trace non-increasing maps (quantum operations) from an input Hilbert space of operators to an output Hilbert space of operators, 
\begin{equation}
    \{\mathcal{M}^A_j\}_{j=1}^n, \quad \quad \mathcal{M}_{j}^{A}: \mathcal{L}\left(\mathcal{H}^{A_{1}}\right) \rightarrow \mathcal{L}\left(\mathcal{H}^{A_{2}}\right),
\end{equation} such that $\sum_{j=1}^{n} \mathcal{M}_{j}^{A}$ is completely positive and trace-preserving - CPTP. 

Equivalently, in terms of Kraus operators, a quantum instrument is a family of maps between the spaces above with action

\begin{equation}
    \mathcal{M}^A_j(\rho)= \sum_{k=1}^m E_{jk} \rho E_{jk}^{\dagger}
\end{equation}
for a family of linear operators $E_{kj}:\mathcal{H}^{A_{1}}\to \mathcal{H}^{A_{2}}$ such that $\sum_{k=1}^m E_{jk}^{\dagger}E_{jk} \leq \mathds{1}^{A1}$ for each fixed $j$, and moreover such that
$$\sum_{j=1}^{n}\sum_{k=1}^m E_{jk}^{\dagger}E_{jk} = \mathds{1}^{A1}, $$
where $m=d_{A1} \times d_{A2}$ is the multiplication of the the dimensions of $\mathcal{H}^{A_{1}}$ and $\mathcal{H}^{A_{2}}$ respectively (the optimal Kraus decomposition) and $\mathds{1}^{A1}$ is the identity on $\mathcal{H}^{A1}$.
\end{definition}
The indices $j$ correspond to the possible results of measurement, when there is a measurement. A regular quantum unitary operation on a system is a special case where the input and output spaces are the same and there is only one possibility for $j$, therefore only one operator $\mathcal{M}^{A}$, which is CPTP for the second condition to hold.

The simplest examples of quantum instruments can be derived from the simplest examples of quantum operations. An instrument from an input Hilbert space $\mathcal{H}_I \simeq \mathbb{C}$ to an output Hilbert space corresponds to a family of unnormalized state operations $\{\rho_i\}$ associated to probabilities $p_i=\operatorname{Tr}(\rho_i)\leq 1$ such that $\sum_i p_i=1$. Hence, the instrument can be interpreted as a probabilistic source of quantum states. An instrument from a input Hilbert space to an output Hilbert space $\mathcal{H}_O \simeq \mathbb{C}$ corresponds to a family of quantum effect operations $\{\mathcal{E}_m=\operatorname{Tr}(P_m(\cdot))\}$ such that $\sum_m P_m = \mathds{1}_{\mathcal{H}_I}$. Thus, the instrument describes a POVM: it takes a quantum state as input and returns the probabilities $p(m |\rho )=\operatorname{Tr}(P_m\rho)$ of getting each possible measurement outcome $m$ in the interaction. The probability of getting any outcome is, of course, $\sum_m \operatorname{Tr}(P_m\rho)= \operatorname{Tr}(\rho)=1.$

\section{Conditions on processes}

We will consider the general bipartite case, as the one in the communication task for the causal inequality~(\ref{causalineqsec}), where there are two agents who perform an instrument in a system inside their respective closed laboratories. If one agent, say Alice, performs a quantum instrument $\{\mathcal{M}^A_i
\}_{i=1}^n$ in her laboratory, she can get probabilities corresponding to the effects of measurement. That is, a function which gives information on what is the probability of measuring outcome $i$ for the state $\rho$. The form of this probability function is given by the quantum operations framework: it will always be the trace of a completely positive trace non-increasing map acting on a density operator. In this case, it is given by $P(\mathcal{M}^A_i)= \operatorname{Tr}\left[\mathcal{M}^A_i(\rho)\right]$. In the same way,  Bob, who performs its own quantum instrument $\{\mathcal{N}^B_j
\}_{i=1}^m$, will get the probability function $P(\mathcal{N}^B_j)= \operatorname{Tr}\left[\mathcal{N}^B_j(\rho)\right]$. The question now is: what is the general form allowed for the probability function in the joint description of these two laboratories?
\begin{equation}
    P(\mathcal{M}^A_i,\mathcal{N}^B_j)=?
\end{equation}
A complete list of the probabilities $P(\mathcal{M}^A_i,\mathcal{N}^B_j)$ for every possible outcomes will be called a \textbf{process}. One can recognize this as a proposal to construct maps similar to what we discussed in~(\ref{causalityCombs}). We will find out what are the general properties that the validity of quantum mechanics inside individual local laboratories imposes on processes, without assuming the causality requirement. The case for $N$ parties can be considered~\cite{Araujo2015}, but as did the authors of~\cite{Oreshkov}, we are only going to construct explicitly the bipartite case. 

\subsection{Probabilities behave linearly}
The first thing processes have to obey so that they consistently predict quantum mechanics results inside each laboratory is linearity in each of the entries of $P$. This requirement comes from the linear behavior of quantum instruments in quantum mechanics. Consider a situation in which a quantum instrument $\{\mathcal{M}_i\}_{i=1}^n$ is applied to a system with probability $p$, and another instrument $\{\mathcal{N}_i\}_{i=1}^n$ is applied with probability $(1-p)$. The probability to observe outcome $i$ is then $p P(\mathcal{M}_i)+ (1-p)P(\mathcal{N}_i)$. At the same time, the quantum mechanical description of instrument says that this situation is described by the new randomized instrument $\{\Tilde{\mathcal{M}}_i=p \mathcal{M}_i+(1-p)\mathcal{N}_i\}_{i=1}^n$. Thus,
$$P\left(p\mathcal{M}_i+(1-p)\mathcal{N}_i \right)= p P\left(\mathcal{M}_i\right)+ (1-p)P\left(\mathcal{N}_i\right).$$ This shows that the map $P$ must be convex-linear to agree with quantum mechanics locally. Consider another situation in which we have a quantum instrument $\{\mathcal{M}_i\}_{i=1}^n$ and we wish to treat two measurement outcomes as one, for instance, outcome $n$ can be relabeled to $n-1$. Then, the probability of measuring the outcome $n-1$ in this new setting will be $P(\mathcal{M}_{n-1})+P(\mathcal{M}_n)$. Quantum mechanics represents this situation with a new instrument, the coarse-grained instrument, $\{\Tilde{\mathcal{M}}_i\}_{i=1}^{n-1}$ such that $\Tilde{\mathcal{M}}_i=\mathcal{M}_i$ for $i< n-1$ and $\Tilde{\mathcal{M}}_{n-1}=\mathcal{M}_{n-1}+\mathcal{M}_{n}$. Therefore,
$$P\left(\mathcal{M}_{n-1}+\mathcal{M}_n\right)=P(\mathcal{M}_{n-1})+P(\mathcal{M}_n).$$
These two properties of convex-linearity and additivity impose linearity on the map $P$. 

Indeed, additivity implies that $P(0)=0$, that $P(-\mathcal{M})=-P(\mathcal{M})$ and that $P(n\mathcal{M})=nP(\mathcal{M})$ for any integer $n$. So, consider an element $C\mathcal{M}+\mathcal{N}$ for an arbitrary real scalar $C$. If we decompose it like $C=n+c$, where $n\in \mathds{Z}$ and $0\leq c\leq 1$, we can write \begin{align} P(C\mathcal{M}+\mathcal{N})&= P(n\mathcal{M}+c\mathcal{M}+\mathcal{N})
\\\nonumber
&= nP(\mathcal{M}) + P(c\mathcal{M}+(1-c)0) + P(\mathcal{N}) =CP(\mathcal{M})+P(\mathcal{N}).
\end{align} This reasoning can be made separately for each entry of a process. Therefore, $P(\cdot,\cdot)$ has to be a bilinear function.

Since each instrument element is a quantum operation, we can work with their Choi operators instead. For convenience, we are going to use a slightly modified version of the Choi-Jamiolkowki isomorphism~(\ref{ChoiJ}):
\begin{equation}
    \mathcal{E}\mapsto E=\left[\left(\mathcal{I}_{\mathcal{H}_I} \otimes \mathcal{E}\right)(\ket{\mathds{1}}\rangle\langle\bra{\mathds{1}})\right]^T, \quad \ket{\mathds{1}}\rangle := \sum_k \ket{k}\ket{ k}_{\mathcal{H_I}\otimes\mathcal{H}_I}.
\end{equation}Besides notation, the difference from the mapping we were working with before is just the transposition taken with respect to the basis in which $\ket{\mathds{1}}\rangle$ is written. Note that the inverse morphism is also modified. The action of the map in a state $\rho$ is given by
\begin{equation}\label{ChoiJNew}
    \mathcal{E}(\rho)=\left[\operatorname{Tr}_{\mathcal{H}_I}\left[\left(\rho\otimes\mathds{1}_{\mathcal{H}_O}\right)E\right]\right]^T.
\end{equation}
We will denote the Choi operator of $\mathcal{M}^A_i:\mathcal{L}(\mathcal{H}^{A_1}) \to \mathcal{L}(\mathcal{H}^{A_2})$ by $M_i^{A_1 A_2}\in \mathcal{L}(\mathcal{H}^{A_1}\otimes \mathcal{H}^{A_2})$ and analogously for other instrument elements. Then, quantum processes can be seen as \textbf{bilinear} maps from Choi operators in $ \mathcal{L}(\mathcal{H}^{A_1}\otimes \mathcal{H}^{A_2}) \times  \mathcal{L}(\mathcal{H}^{B_1}\otimes \mathcal{H}^{B_2})$ to $[0,1]\subset \mathbb{R}$. The natural property of the tensor product between vector spaces is that bilinear functionals on $V_1 \times V_2$ are isomorphic to linear functionals on $V_1\otimes V_2$. Thus, we can say that a bipartite process can further be represented by a \textbf{linear} function $f$ from space $\mathcal{L}(\mathcal{H}^{A_1}\otimes\mathcal{H}^{A_2}\otimes\mathcal{H}^{B_1}\otimes\mathcal{H}^{B_2})$ to $[0,1]\subset\mathbb{R}$. Any such map is given by a ``bra'' of that Hilbert space, which is equipped with the Hilbert-Schmidt product we talked about at the ending of section~\ref{notationOps}: $\braket{\eta | \rho}=\operatorname{Tr}(\eta^{\dagger}\rho)$. Therefore, corresponding to every bipartite process there exists an element $W^{A_1 A_2 B_1 B_2}\in\mathcal{L}(\mathcal{H}^{A_1}\otimes\mathcal{H}^{A_2}\otimes\mathcal{H}^{B_1}\otimes\mathcal{H}^{B_2})$ such that
\begin{align}
    P(\mathcal{M}^A_i, \mathcal{N}^B_j)&=f(M^{A_1A_2}_i\otimes N^{B_1B_2}_j)=\bigl<W^{A_1 A_2 B_1 B_2}|M^{A_1A_2}_i\otimes N^{B_1B_2}_j\bigr>\nonumber
    \\
&
=\operatorname{Tr}\left[W^{A_1 A_2 B_1 B_2} \left(M^{A_1A_2}_i\otimes N^{B_1B_2}_j \right)\right] \label{processMatrixProb}
\end{align}

The elements $W^{A_1 A_2 B_1 B_2}$ that in principle correspond to processes consistent with local quantum mechanics are going to be called \emph{process matrices}. Here, we are implicitly assuming that the probability function is non-contextual, that is, the probability to measure values $i,j$ does not depend on variables concerning the implementation of the instruments. For instance, there is no explicit dependence on the particular sets  $\{\mathcal{M}^{A}_{m}\},\{\mathcal{N}^B_{n}\}$ of instrument elements used by the agents.

\subsection{Probabilities are non-negative and sum up to 1}
 The above representation comes from demanding linearity, but not all matrices in $\mathcal{L}(\mathcal{H}^{A_1}\otimes\mathcal{H}^{A_2}\otimes\mathcal{H}^{B_1}\otimes\mathcal{H}^{B_2})$ represent situations compatible with local quantum mechanics. We also need to demand the basic assumption that probabilities are positive: $P(\mathcal{M}^A_i,\mathcal{N}^B_j)\geq0$, which corresponds to the condition
\begin{align}
    &\operatorname{Tr}\left[W^{A_1 A_2 B_1 B_2} \left(M^{A_1 A_2}\otimes N^{B_1 B_2} \right)\right]\geq 0 \, ,
    \\
    \forall\quad M^{A_1 A_2}&\geq 0 \in \mathcal{L}(\mathcal{H}^{A_1}\otimes\mathcal{H}^{A_2}),  N^{B_1 B_2} \geq0 \in \mathcal{L}(\mathcal{H}^{B_1}\otimes\mathcal{H}^{B_2}),\nonumber
\end{align} where $M^{A_1 A_2}$ and $N^{B_1 B_2}$
 are arbitrary positive matrices, since those are the Choi maps of quantum operations. One can say equivalently that the linear map $f$ must be positive on pure tensors (POPT) with respect to the partition $A_1A_2 - B_1B_2$.
 
 Furthermore, we also demand that probabilities are well-defined if Alice's and Bob's input spaces are extended to describe auxiliary systems that are independent of the process. That means Alice and Bob can use other systems, called ancillas, to operate on their main systems. Those ancillas can even be entangled systems shared between the laboratories. For arbitrary operations on the extended spaces $\mathcal{M}:\mathcal{L}(\mathcal{H}^{A'_1}\otimes\mathcal{H}^{A_1})\to \mathcal{L}(\mathcal{H}^{A_2})$ and $\mathcal{N}:\mathcal{L}(\mathcal{H}^{B'_1}\otimes\mathcal{H}^{B_1})\to \mathcal{L}(\mathcal{H}^{B_2})$ and their Choi maps $M^{A'_1 A_1 A_2}$ and $N^{B'_1 B_1 B_2}$, we demand 
 \begin{equation}\label{CPprocesses}
     \operatorname{Tr}\left[\rho\otimes W^{A_1 A_2 B_1 B_2}\left(M^{A'_1 A_1 A_2}\otimes N^{B'_1 B_1 B_2}\right)\right] \geq 0 \, , \quad \forall \rho\geq0 \in \mathcal{L}(\mathcal{H}^{A'_1}\otimes\mathcal{H}^{B'_1}).
 \end{equation} where the form of the extended process matrix $ W^{A'_1 A_1 A_2 B'_1 B_1 B_2}$ is $\rho\otimes W^{A_1 A_2 B_1 B_2}$ indicates the independence between auxiliary systems and the process. It can be shown~\cite{POPTpositivity} that condition~\eqref{CPprocesses} for maps is equivalent to positivity, $W^{A_1 A_2 B_1 B_2}\geq0$. Since positivity is more restrictive than POPT, we stick with the former.
 With positivity of probabilities guaranteed, we are left to demand normalization. We need that $P(\mathcal{M}^A,\mathcal{N}^B)=1$ for all complete instruments $\mathcal{M}^A=\sum_i\mathcal{M}^A_i$ and $\mathcal{N}^B=\sum_j\mathcal{N}^B_j$, because it represents the probability of detecting any possible outcome. The possible operations of this type are all CPTP maps (quantum channels). For Choi operators, this translates as
 \begin{align}
     &\operatorname{Tr}\left[W^{A_1 A_2 B_1 B_2} \left( M^{A_1 A_2}\otimes N^{B_1 B_2}\right)\right]=1  
     \\
      \forall & M^{A_1 A_2}, N^{B_1 B_2} \text{ Choi maps of quantum channels.}\nonumber
 \end{align}
 
 \section{Process matrices}
 Summing it all up, we have constructed the general maps describing probabilities compatible with Quantum Theory in the terms outlined above for the bipartite case. 
 \begin{definition}\label{processmatrix}
 A bipartite process matrix acting on input-output spaces $(\mathcal{H}^{A_1}\to \mathcal{H}^{A_2})$ and $(\mathcal{H}^{B_1}\to \mathcal{H}^{B_2})$ is a map $ W^{A_1 A_2 B_1 B_2}\in \mathcal{L}(\mathcal{H}^{A_1}\otimes\mathcal{H}^{A_2}\otimes\mathcal{H}^{B_1}\otimes\mathcal{H}^{B_2})$ such that
 \begin{equation}
      W^{A_1 A_2 B_1 B_2}\geq 0 
     \end{equation}
and, given any maps $M^{A_1 A_2}\in \mathcal{L}(\mathcal{H}^{A_1}\otimes\mathcal{H}^{A_2})$ and $N^{B_1 B_2} \in \mathcal{L}(\mathcal{H}^{B_1}\otimes\mathcal{H}^{B_2})$ satisfying $M^{A_1 A_2},N^{B_1 B_2}\geq 0$ and $\operatorname{Tr}_{A_2}M^{A_1 A_2}=\mathds{1}_{A_1},$ $\operatorname{Tr}_{B_2}M^{B_1 B_2}=\mathds{1}_{B_1},$
\begin{equation}\label{normalizationProcess}
          \operatorname{Tr}\left[W^{A_1 A_2 B_1 B_2} \left( M^{A_1 A_2}\otimes N^{B_1 B_2}\right)\right]=1.
          \end{equation}
 \end{definition}
 
 Processes encompass all the basic quantum setups descriptions, from a mere specificatiom of a quantum state to a situation with agents making measurements on systems and sending information to each other. For instance, in the bipartite case, the situation in which Alice and Bob share a quantum state and then realize measurements on it with no output state is represented by a process matrix with $\mathcal{H}^{A_2},\mathcal{H}^{B_2}\simeq\mathbb{C}$, since the input maps are POVM elements. Thus, their Choi operators $M^{A_1A_2}_i,N^{B_1B_2}_j$  are in the spaces $\mathcal{L}(\mathcal{H}^{A_1})$ and $\mathcal{L}(\mathcal{H}^{B_1})$. In this case, the expression for the probabilities $P(\mathcal{M}^{A}_i,\mathcal{N}^B_j)=\operatorname{Tr}\left[W^{A_1 B_1 } \left( M^{A_1 }\otimes N^{B_1}\right)\right]$ is simply Born's rule in its usual form for a state given by a density matrix $W^{A_1 B_1}$ and a measurement operator $M^{A_1}\otimes N^{B_1}$. A quantum state shared by Alice and Bob is then generally represented by a bipartite process matrix $W^{A_1A_2 B_1B_2}=\rho^{A_1 B_1}\otimes\mathds{1}^{A_2B_2}.$
 
 Another simple example is the case where Bob receives a state $\rho^{B_1}\in \mathcal{L}(\mathcal{H}^{B_1})$, performs a measurement and sends the output to Alice through a quantum channel $\mathcal{C}$. The probability for Bob to measure outcome $j$ and Alice to measure outcome $i$ is $\operatorname{Tr}[\mathcal{M}^A_i \circ \mathcal{C}\circ\mathcal{N}^B_j(\rho^{B_1})].$ The process matrix for this case is given by 
 \begin{equation}\label{ChannelMatrix}
W^{A_1A_2 B_1B_2}=\mathds{1}^{A_2}\otimes (C^{B_2A_1})^T \otimes \rho^{B_1} \, ,
 \end{equation}
 where $C$ is the Choi map of the channel. We can check this is correct by manipulating the expression for the probability~\ref{processMatrixProb} and using the inverse CJ morphism, $\mathcal{M}(X)=\left[\operatorname{Tr}_{1}[(X\otimes \mathds{1}_{2})M]\right]^T$:
 \begin{align*}
 P(\mathcal{M}^A_i,\mathcal{N}^B_j) &= \operatorname{Tr}\left[\mathds{1}^{A_2}\otimes \left(C^{B_2A_1}\right)^T \otimes \rho^{B_1}\circ\left(M_i^{A_1A_2}\otimes N_j^{B_1B_2}\right)\right] \\
 &=\operatorname{Tr}\left[\mathds{1}^{A_2}\otimes \left(\sum_{ij}\ket{i}\bra{j}^{B_2}\otimes \mathcal{C}(\ket{i}\bra{j})^{A_1}\right) \otimes \rho^{B_1}\circ\left(M_i^{A_1A_2}\otimes N_j^{B_1B_2}\right)\right] \\
 &=\operatorname{Tr}_{A_2B_2}\Biggl[\sum_{ij}\operatorname{Tr}_{A_1}\left[\left(\mathcal{C}(\ket{i}\bra{j})^{A_1}\otimes\mathds{1}^{A_2}\right) M_i^{A_1A_2}\right]  \\
    & \qquad \qquad \qquad  \otimes \operatorname{Tr}_{B_1} \left[\left(\mathds{1}^{B_1}\otimes\ket{i}\bra{j}^{B_2}\right) \left(\rho^{B_1}\otimes\mathds{1}^{B_2}\right)N^{B_1B_2}_j\right]\Biggr] \\
&= \operatorname{Tr}_{A_2B_2}\left[\sum_{ij} \left[\mathcal{M}^A_i\left(\mathcal{C}(\ket{i}\bra{j}\right)^{A_1}\right]^T  \otimes \left[\ket{i}\bra{j}^{B_2} \circ \left[\mathcal{N}^B_j\left(\rho^{B_1}\right)\right]^T\right]\right] \\
&=\sum_{ij}\operatorname{Tr}_{A_2}\left[ \left[\mathcal{M}^A_i\left(\mathcal{C}(\ket{i}\bra{j}\right)^{A_1}\right]^T  \bra{j}\left[\mathcal{N}^B_j\left(\rho^{B_1}\right)\right]^T\ket{i}\right] \\
&= 
\sum_{ij}\bra{i}\mathcal{N}^B_j\left(\rho^{B_1}\right)\ket{j}\operatorname{Tr}_{A_2}\left[\mathcal{M}^A_i\left(\mathcal{C}(\ket{i}\bra{j})^{A_1}\right)\right] = \operatorname{Tr}\left[\mathcal{M}_i^A\circ \mathcal{C}\circ \mathcal{N}_j^B(\rho^{B_1})\right].
 \end{align*}
 
In reference~\cite{Oreshkov}, the authors derive a characterization of process matrices in terms of their expansion in a basis. For each space $\mathcal{L}(\mathcal{H}^X)$, $X=A_1,A_2,B_1,B_2$, a Hilbert-Schmidt basis is a set of matrices $\{\sigma_i^X\}_{i=0}^{d^2-1}$, where $d$ is the dimension of $\mathcal{H}^X$, such that $\sigma_0^X = \mathds{1}_X$, $\operatorname{Tr}(\sigma^X_i \otimes \sigma^X_j)= d \delta_{ij}$ and $\operatorname{Tr}(\sigma^X_j=0)$ for $j>0$. An example of such a basis for a 2-dimensional space is the set of Pauli matrices. Any element of $\mathcal{L}(\mathcal{H}^{A_1}\otimes \mathcal{H}^{A_2}\otimes \mathcal{H}^{B_1}\otimes \mathcal{H}^{B_2})$ is decomposable as $\sum_{abcd}w_{abcd} \,  \sigma^{A_1}_a\otimes\sigma^{A_2}_b\otimes\sigma^{B_1}_c\otimes\sigma^{B_2}_d$, but the conditions for a process are only met when the decomposition has certain types of terms. A term is of type A$_1$ if it has the form $\sigma^{A_1}_i\otimes \mathds{1}^{A_2B_1B_2}$, $i\neq0$. A term of type A$_2$B$_1$ is written as $\mathds{1}^{A_1}\otimes \sigma^{A_2}_i\otimes \sigma^{B1}_j\otimes\mathds{1}^{B_2}$, $i,j\neq0$, and so on. Fig.~\ref{fig:allowedProcesses} illustrates the interpretation of allowed terms.

\begin{figure}[ht]
    \centering
    \hspace{1.8cm}\includegraphics[scale=0.18]{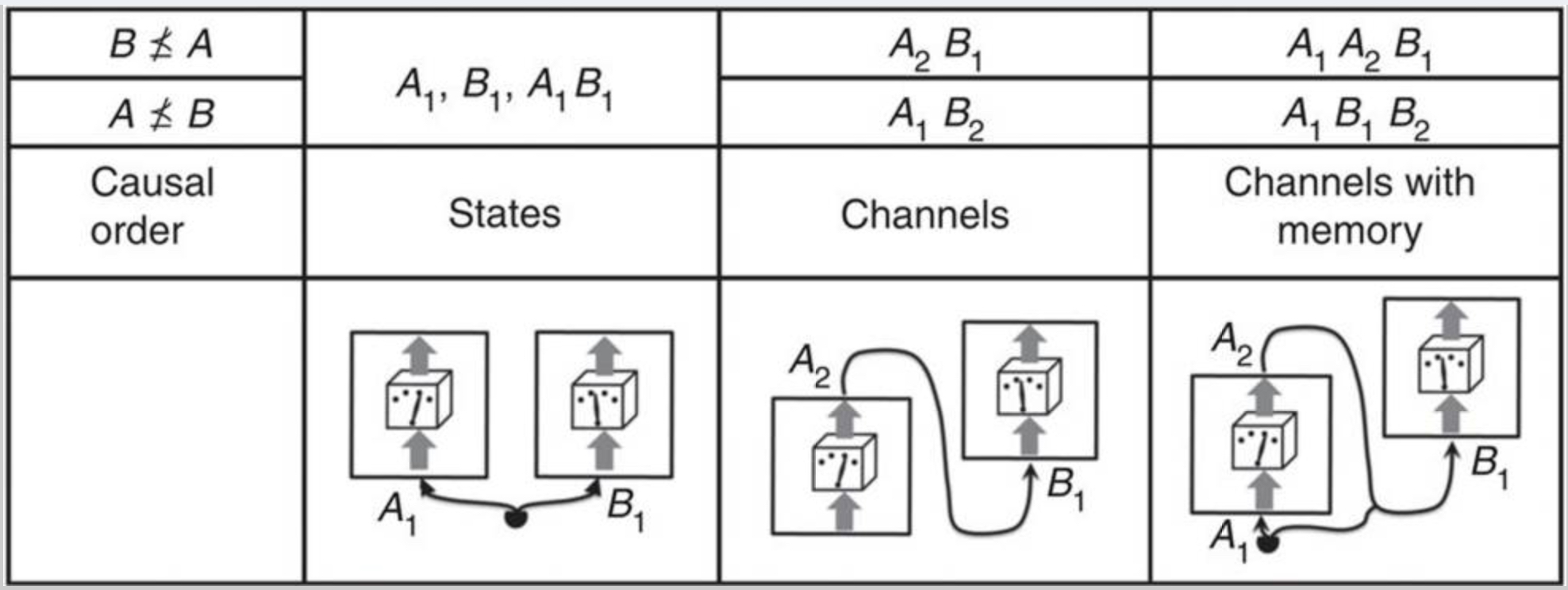} 
    \caption{Reference \cite{Oreshkov} proves that a matrix satisfies the conditions of a process only if it contains the terms listed in this table when written in terms of a Hilbert-Schmidt basis. Although the individual terms are compatible with signaling in at most one direction, a general process matrix can contain terms from both rows and may not be decomposable into a mixture of quantum channels from Alice to Bob and from Bob to Alice. Figure from~\protect\cite{Oreshkov}.}
    \label{fig:allowedProcesses}
\end{figure}  

When the process matrix has a non trivial part on two or more spaces, it means a correlation is being established between them. A process matrix of the type $\mathds{1}^{A_2}\otimes W^{A_1B_1B_2}$ describes a situation in which, for instance, Bob operates on part of a system and both his output and the the other part of the system are sent to Alice through a channel (see Fig.~\ref{fig:allowedProcesses}). In this case Bob can signal to Alice and she cannot signal to Bob. Matrices of this type will be denoted $W^{A\nprec B}$. The opposite situation is given by matrices $W^{B\nprec A}$ of the type $W^{A_1A_2B_1}\otimes \mathds{1}^{B_2}$. If the causal structure is definite but unknown we can have a process given by a classical probabilistic mixture of these matrices. So, we can make a definition characterizing causality in this framework. In analogy with the previous discussion about signaling probabilities, we define:\begin{definition}
A bipartite process matrix is called \textbf{causally separable} if it can be written as
 \begin{equation}\label{causallyseparable}
    W^{A_1A_2B_1B_2}= q W^{A\nprec B}+(1-q) W^{B \nprec A}.
    \end{equation}
    for $0\leq q \leq 1.$ \end{definition} Causally separable process matrices generate probabilities with a definite causal structure. The objects representing a bipartite situation compatible with a quantum circuit all fall into this category, including entangled states shared by Alice and Bob. No process of this type can violate a causal inequality, since every causally separable process is also causal, i.e. it produces causal conditional probabilities as described by condition~\eqref{classord}. Indeed, this is a weaker notion of causality, in the sense that it is theory-dependent, while the condition for probabilities used in the causal inequality is not. In reference~\cite{Giarmatzipaper}, the differences between causal and causally separable processes are further analyzed.
 We also refer the reader to~\cite{Araujo2015} for an alternative characterization of process matrices and the generalization for the multipartite case.

 In the classification represented in Fig.~\ref{fig:allowedProcesses} it is possible to infer that the process formalism contains causally non-separable processes. Explicit examples are given in the next sections. It is an interesting result that the formalism comprises more than usual quantum evolutions by assuming the theory is valid only locally. This means, in particular, that there are processes which cannot be represented by a circuit model. See Fig.~\ref{fig:combsXprocesses}.

 \begin{figure}
\centering
    \includegraphics[scale=0.8]{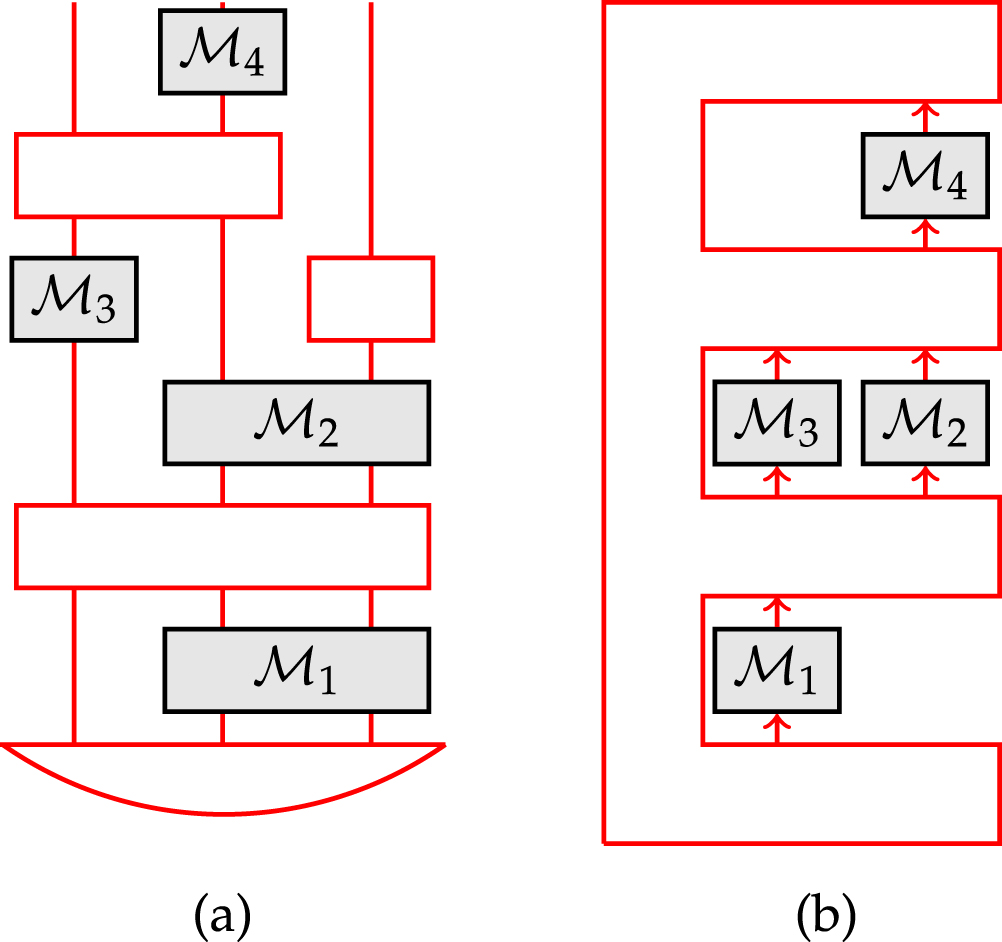}
    \caption{ (a) When a set of operations $\mathcal{M}_i$ made by multiple local agents is compatible with a causal structure, the total transformation can be described by inserting them in a circuit. (b) Process matrices, on the other hand, cannot always be decomposed in this way, since the formalism allows processes not compatible with any causal sequence . Figure from~\protect\cite{Araujo2015}.}
    \label{fig:combsXprocesses}
\end{figure}

 \section{Violation of the causal inequality}\label{violation}

 Surprisingly, there is at least one bipartite process which not only is causally non-separable, but also can be used to violate the causal inequality:
 \begin{equation}\label{processviolates}
     W^{A_1A_2B_1B_2}=\frac{1}{4}\left[\mathds{1}^{A_1 A_2 B_1 B_2}+ \frac{1}{\sqrt{2}}\left(\sigma^{A_2}_z\sigma^{B_1}_z+\sigma^{A_1}_z\sigma^{B_1}_x\sigma^{B_2}_z\right)\right],
 \end{equation}
where $\sigma_{z,x}$ are the Pauli matrices, and the tensor products between matrices and identities are omitted. Recalling the task for the inequality, Alice and Bob each produce a bit in their laboratories, bits $a$ and $b$. They also register a guess for each other's bits $x$ and $y$. Bob produces an extra bit $b'$ and the task is, when $b'=0$, maximizing the probability of Alice guessing Bob's bit right ($x=b$), while when $b'=1$, maximizing the probability of Bob making a right guess ($y=a$). As we have seen, the probability of success has an upper bound, $P_{succ}\leq 3/4$, when we assume the protocol happens in a definite causal structure. With the process above, if Bob measures the received system, assumed to be a qbit, in the $z$ basis, the probabilities calculated using~(\ref{processMatrixProb}) reduce to those of a channel from Alice to Bob. If he otherwise decides to measure the system in the $x$ basis, the probabilities are those of a channel from Bob to Alice. The process thus gives Bob the power of activating a channel in the convenient direction for each case $b'=0$ or $1$. With this resource and the right procedures for Alice and Bob, they can achieve $P_{succ}=(2+\sqrt{2})/4$.

More specifically, in order to violate the inequality Alice has to do the following. Once the qbit arrives at her laboratory, she measures it in the $z$ basis and makes her guess x according to the measurement $\ket{\uparrow}\rightarrow x=0$ and $\ket{\downarrow}\rightarrow x=1$. Then, she generates bit a and prepares the system in a new state, encoding the value of bit a according to the same rules. For example, the Choi operator that corresponds to this action for the special case of result $\ket{\uparrow}$ and $a=1$ is $E=\ket{\uparrow}\bra{\uparrow}\otimes \ket{\downarrow}\bra{\downarrow}$, since it generates Alice's correct output state when inserted in the formula $\mathcal{E}(\rho)=\left[\operatorname{Tr}_{\mathcal{H}_I}\left[\left(\rho\otimes\mathds{1}_{\mathcal{H}_O}\right)E\right]\right]^T$. One can check that the Choi operator for the general case can be written as
$$
M^{A_{1} A_{2}}(x, a)=\frac{1}{4}\left[\mathds{1}+(-1)^{x} \sigma_{z}\right]^{A_{1}} \otimes\left[\mathds{1}+(-1)^{a} \sigma_{z}\right]^{A_{2}},
$$
where $\sigma_z$ is the z Pauli matrix, given by $\ket{0}\bra{0}-\ket{1}\bra{1}$.

Bob must proceed as follows. First, he evaluates bit $b'$, the one that sets which guess will be considered in the current round. If $b'=1$, Bob will try to read Alice's bit, so he will measure the qbit that enters his laboratory in the $z$ basis and make his guess $y=0/1$ for state $\ket{\uparrow}/\ket{\downarrow}$, because that is how Alice encodes her bit in every round. He can prepare the system in an arbitrary state $\rho^{B_2}$, which will not matter for the task since Alice's guess will be discarded. If $b'=0$, Alice will have to guess Bob's bit, so he must try to send it to her. In that case, when he receives the qbit, he will measure it in the $x$ basis and make a random guess $y$, which will not be taken into account in this round. Then, he evaluates bit $b$. The result of the measurement on the system will determine how to send the value of bit $b$ to Alice.  If the measurement result was $\ket{x_+}$, he will prepare the system encoding $b$ in the $z$ basis like $0\rightarrow \ket{\uparrow},1\rightarrow\ket{\downarrow}$. But, if the outcome is $\ket{x_-}$, the encoding will be reversed, $0\rightarrow\ket{\downarrow},1\rightarrow\ket{\uparrow}$. The Choi map of this procedure is given by
$$
\begin{aligned}
 N^{B_{1} B_{2}}\left(y, b, b^{\prime}\right) &=b^{\prime} N_{1}^{B_{1} B_{2}}(y, b)+\left(b^{\prime} \oplus 1\right) N_{2}^{B_{1} B_{2}}(y, b), \\
N_{1}^{B_{1} B_{2}}(y, b) &=\frac{1}{2}\left[\mathds{1}+(-1)^{y} \sigma_{z}\right]^{B_{1}} \otimes \rho^{B_{2}} , \\
N_{2}^{B_{1} B_{2}}(y, b) &=\frac{1}{4}\left[\mathds{1}+(-1)^{y} \sigma_{x}\right]^{B_{1}} \otimes\left[\mathds{1}+(-1)^{b+y} \sigma_{z}\right]^{B_{2}},
\end{aligned}
$$
where $\oplus$ is the sum modulo $2$, so that the term vanishes for $b'=1.$ The general probability function for the process $P(xy|abb')$ is given by~(\ref{processMatrixProb}). But to evaluate the inequality we need $P(x|ab, b'=0)$ and $P(y|ab, b'=1)$. We can get those by averaging the general probability, $P(y|ab, b'=1)=\sum_x P(xy|ab,b'=1)$. The expression can be written as 
\begin{align}
P(y|abb')&=\sum_{x} \operatorname{Tr}\left[W^{A_{1} A_{2} B_{1} B_{2}}\left(M^{A_{1} A_{2}}(x, a) \otimes N^{B_{1} B_{2}}\left(y, b, b^{\prime}\right)\right)\right]\nonumber \\
&=\operatorname{Tr}_{B_{1} B_{2}}\left\{ N^{B_{1} B_{2}}\left(y, b, b^{\prime}\right) \operatorname{Tr}_{A_{1} A_{2}}\left[W^{A_{1} A_{2} B_{1} B_{2}}\left(\sum_{x} M^{A_{1} A_{2}}(x, a)\right)\right]\right\}\label{marginalproby},
\end{align}and the second factor in the trace can be seen as reduced matrix
$$
\overline{W}^{B_{1} B_{2}}(a):=\operatorname{Tr}_{A_{1} A_{2}}\left[W^{A_{1} A_{2} B_{1} B_{2}}\left(\sum_{x}M^{A_{1} A_{2}}(x, a)\right)\right],
$$
 where we have omitted the identity $\mathds{1}^{B_1B_2}$ inside the trace. We have that $$\sum_{x} M^{A_{1} A_{2}}(x, a)= \sum_x\frac{1}{4}\left[\mathds{1}+(-1)^x \sigma_z\right]^{A_1}\otimes\left[\mathds{1}+(-1)^a\sigma_z\right]^{A_2}= \frac{1}{2}\left[\mathds{1}\right]^{A_1}\otimes\left[\mathds{1}+(-1)^a \sigma_z\right]^{A_2},$$ and W given by~(\ref{processviolates}). We can therefore calculate the reduced matrix:
 \begin{align*}
      \overline{W}^{B_{1} B_{2}}(a)&= \operatorname{Tr}_{A_1A_2}\left[\left(\frac{1}{4}\mathds{1}^{A_1A_2B_1B_2}+\frac{1}{4\sqrt{2}}\left(\sigma_z^{A_2}\sigma_z^{B_1}+\sigma_z^{A_1}\sigma_x^{B_1}\sigma_z^{B_2}\right)\right)\right. 
      \\
      &\hspace{8cm}\left.
      \times\left( \frac{1}{2}\left[\mathds{1}\right]^{A_1}\otimes\left[\mathds{1}+(-1)^a \sigma_z\right]^{A_2}\right)\right]
 \\
&= \frac{1}{4}\operatorname{Tr}_{A_1A_2}\Bigg[\frac{\mathds{1}^{A_1}}{2}\otimes\left[\mathds{1}+(-1)^a\sigma_z\right]^{A_2}\otimes \mathds{1}^{B_1B_2} +
    \\
    &+ \frac{\mathds{1}^{A_1}}{2\sqrt{2}}\otimes\left[\sigma_z+(-1)^a \mathds{1}\right]^{A_2}\otimes \sigma_z^{B_1}\otimes \mathds{1}^{B_2}
 +\frac{\sigma_z^{A_1}}{2\sqrt{2}}\otimes \left[\mathds{1}+(-1)^a\sigma_z\right]^{A_2}\otimes\sigma_x^{B_1}\otimes\sigma_z^{B_2}\Bigg],
\end{align*}
  where we have used that $\sigma_z^2= \mathds{1}$. When we make the partial trace with respect to $A_1$, the last term vanishes because the trace of $\sigma_z$ is 0. Taking the trace for the other terms, using $\operatorname{Tr}_X(\mathds{1^{X}})=2$, yields
 \begin{align*}
     \overline{W}^{B_{1} B_{2}}(a)&=\frac{1}{4}\left[2\mathds{1}^{B_1B_2}+\frac{1}{\sqrt{2}}(-1)^a 2\sigma_z^{B_1}\otimes \mathds{1}^{B_2} \right]
     \\
     &=\frac{1}{2}\left[\mathds{1}+(-1)^{a} \frac{1}{\sqrt{2}} \sigma_{z}\right]^{B_{1}} \otimes \mathds{1}^{B_{2}}.
 \end{align*}
Now, we can use this to find the probability $P(y|ab, b'=1)$. We just need to use the formula~\eqref{marginalproby} with $N^{B_1B_2}(y,b,1)$: 
\begin{align*}
P\left(y | a b, b^{\prime}=1\right) &=\operatorname{Tr}_{B_1B_2}\left[\left[\frac{1}{2}\left(\mathds{1}+(-1)^y\sigma_z\right)^{B_1}\otimes \rho^{B_2}\right] \circ\left[ \frac{1}{2}\left(\mathds{1}+\frac{(-1)^a}{\sqrt{2}}\sigma_z\right)^{B_1}\otimes \mathds{1}^{B_2}\right]\right] \\
&=\frac{1}{4}\operatorname{Tr}_{B_1B_2}\left[\left[\mathds{1}+(-1)^a\frac{1}{\sqrt{2}}\sigma_z+ (-1)^y\sigma_z+\frac{(-1)^{y+a}}{\sqrt{2}}\mathds{1}\right]^{B_1}\otimes \rho^{B_2}\right] \\
&=\frac{1}{4}\left[2+0+0+\frac{(-1)^{y+a}}{\sqrt{2}}2\right] 1 \\
&=\frac{1}{2} \left[1+\frac{(-1)^{y+a}}{\sqrt{2}}\right].
\end{align*}
Thus, the first probability in the causal inequality for this case is
$$P(y=a|b'=1)=\frac{2+\sqrt{2}}{4}.$$
The other probability, namely $P(x=b|b'=0),$ can be similarly calculated. First we have the reduced matrix for $P(x|abb')$. Since we are interested in $b'=0$, we can go directly to the evaluation of the reduced matrix:
\begin{align*}
    \overline{W}^{A_1A_2}(b,b'=0) &= \operatorname{Tr}_{B_1B_2}\left[W^{A_1A_2B_1B_2}\left(\sum_y N^{B_1B_2}(y,b)\right)\right]\\
    &= \operatorname{Tr}_{B_1B_2}\Bigl[\frac{1}{4}\left[\mathds{1}^{A_1A_1B_1B_2}+\frac{1}{\sqrt{2}}\left(\sigma_z^{A_2}\sigma_z^{B_1}+\sigma_z^{A_1}\sigma_x^{B_1}\sigma_z^{B_2}\right)\right]
    \\
    &\hspace{7cm}\times\frac{1}{2}\left[\mathds{1}^{B_1B_2}+(-1)^b\sigma_x^{B_1}\sigma_z^{B_2}\right]\Bigr],
\end{align*}
Making the distributive in the expression above, one can identify some terms which contain sigma matrices over $B_1$ or $B_2$. For example, there is the term $(-1)^b \mathds{1}^{A_1A_2}\otimes\sigma_x^{B_1}\otimes\sigma_z^{B_2}$. Since they are inside the trace, and the trace of such matrices is $0$, we can cancel them. We are left only with
\begin{align*}
\overline{W}^{A_1A_2}(b,b'=0)&=\frac{1}{8}\operatorname{Tr}_{B_1B_2}\left[\mathds{1}^{A_1A_2B_1B_2}+(-1)^b \sigma_z^{A_1}\otimes \mathds{1}^{A_2}\otimes \left(\sigma^2_x\right)^{B_1}\otimes \left(\sigma^2_z\right)^{B_2}\right]\\
&=\frac{1}{8}\left[4 \mathds{1}^{A_1A_2}+(-1)^b\frac{4}{\sqrt{2}}\sigma_z^{A_1}\otimes\mathds{1}^{A_2}\right]=\frac{1}{2}\left[\mathds{1}+\frac{(-1)^b}{\sqrt{2}}\sigma_z \right]^{A_1}\otimes \mathds{1}^{A_2}.
\end{align*}
With this, we get, with similar calculation steps,
\begin{align*}
    P(x|ab,b'=0)&=\operatorname{Tr}_{A_1A_2}\left[M^{A_1A_2}(x,a)\overline{W}^{A_1A_2}(b,b'=0)\right]\\
    &=\frac{1}{2}\left[1+ \frac{(-1)^{x+b}}{\sqrt{2}}\right] \xrightarrow{x=a} \frac{2+\sqrt{2}}{4}.
\end{align*}
Hence, the probability of success for this protocol with process $W^{A_1A_2B_1B_2}$ is
\begin{equation}
    P_{succ}= \frac{1}{2}\left[P(y=a|b'=1)+P(x=b|b'=0)\right] = \frac{2+\sqrt{2}}{4} >\frac{3}{4},
\end{equation}
and thus the process is not causal and not causally separable.

 The process above is an explicit example of how the more general formalism allows something with indefinite order. Operationally, we have motivation for demanding that quantum mechanics is obeyed only locally, since the laboratories are what agents have access to.  Process matrices like the one above are, to date, merely interesting mathematical objects. It is not known as of the time this thesis is being written whether there are matrices with a physical interpretation which violate the causal inequality and its variations~\cite{multip_abbot,Branciard_2015}, although some can be simulated~\cite{Ara_jo_2017}. We can also study processes which are causally non-separable~(\ref{causallyseparable}). Even in cases they cannot violate a causal inequality~\cite{Araujo2015}, they are still more general than quantum circuits.

At first, one would expect that the physics we already know is contained within the set of process matrices with definite order, while new processes could emerge from quantum gravity effects or be studied abstractly, but the issue is a little more subtle than that. It has been suggested that the circuit model should be generalized even for describing protocols in regular quantum mechanics~\cite{Chiribella,Araujo2015}. Depending on the context where the formalism is used, which has to do with the notion of laboratories and the Hilbert spaces associated to them\cite{Oreshkov2019timedelocalized}, processes with indefinite order may provide a description for quantum protocols realizable on top of a classical Minkowski spacetime, in table-top experiments. We will discuss this more clearly in the next section.

\section{The quantum switch}\label{quantum switch}
The quantum switch is one of the simplest examples of process with indefinite order. It was introduced in~\cite{Chiribella}, before the formulation we reproduced above was made. In reference~\cite{Araujo2015}, the switch was then recognized as a protocol represented by a process with indefinite order in the process matrix formalism.

The protocol involves two quantum systems: the target and the control system. Usually the control is taken to be a qbit, that is, a quantum system represented by a 2-dimensional Hilbert space with basis denoted by $\{\ket{0},\ket{1}\}$. In the switch, the control is entangled with the order in which two operations $\mathcal{A}$ and $\mathcal{B}$ will be applied on the target system: if the state of the control is $\ket{\phi}^c = \ket{0}$, $\mathcal{A}$ is applied before $\mathcal{B}$, while if the state of the control is $\ket{1}$, the operations are applied in the reversed order. It was originally proposed as a supermap taking two operations as inputs and returning a new one. When $\mathcal{A}$ and $\mathcal{B}$ are unitary operators on state vectors\footnote{That is, the actual operations being applied are A$(\rho)=\mathcal{A}\rho \mathcal{A}^\dagger$ and B$(\rho)=\mathcal{B}\rho \mathcal{B}^\dagger$.}, we can write the resulting map as \begin{equation}\label{QSsupermap}
\mathcal{V}(\mathcal{A},\mathcal{B}) = \ket{0}\bra{0}\otimes \mathcal{B}\circ\mathcal{A}+\ket{1}\bra{1}\otimes\mathcal{A}\circ\mathcal{B}.
\end{equation} Thus, if the control is initially in a state $1/\sqrt{2}(\ket{0}+\ket{1})$ and the target in $\ket{\psi}^t$, we are left with the pure state $1/\sqrt{2}(\ket{0}\otimes \mathcal{B}\circ\mathcal{A} \ket{\psi}^t+\ket{1}\otimes\mathcal{A}\circ\mathcal{B}\ket{\psi}^t)$. The definition can be extended for general quantum operations using an expression analogous to~(\ref{QSsupermap}) for pairs of Kraus operators of the inputs~\cite{ReviewExp,Vilasini:2022ist}. Either way, when the control is prepared in a superposition, the resulting state displays a quantum superposition of orders. However, if the control system is discarded, we get the traced-out final state \begin{equation}  \rho = \frac{1}{2}    \left(\mathcal{B}\mathcal{A}\ket{\psi}\bra{\psi}^t\mathcal{A}^{\dagger} \mathcal{B}^{\dagger}+\mathcal{A}\mathcal{B}\ket{\psi}\bra{\psi}^t\mathcal{B}^{\dagger}\mathcal{A}^{\dagger}\right),
\end{equation}which is a classical mixture of the operations applied in distinct orders. We are led to introduce a third party, Charlie, who can later make a measurement on the target+control systems to attest the quantumness of the order. For instance, he can measure the control in the basis $\ket{\pm}:=1/\sqrt{2}(\ket{0}\pm \ket{1})$ and get a disentangled superposition state of the target.
\begin{figure}[ht]
\centering
    \includegraphics[scale=0.25]{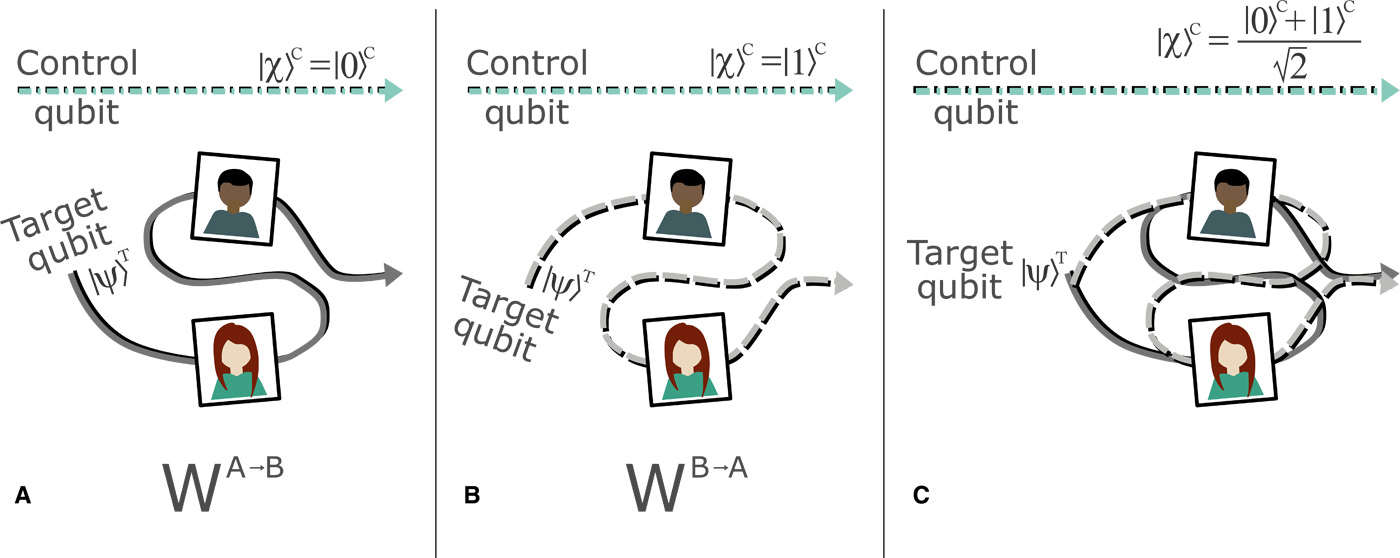}
    \caption{ The quantum switch is a protocol in which two parties, Alice and Bob, act once on a
target qbit in initial state $\ket{\psi}^t$. The order in which the operations are applied depends on the state of another qbit, the control system.  (A) If the control is in the state $\ket{0}^c$, Alice's operation is applied before Bob's. (B) If otherwise the control is in state $\ket{1}^c$, Bob's operation comes before Alice's. (C) Then, if the control is prepared in a superposition state $1/\sqrt{2}(\ket{0}+\ket{1})^c$, we have a superposition of the two situations. To attest the superposition, one has to introduce a third party, Charlie, to measure the systems in the end (see the text for further explanation). Figure from~\protect\cite{Rubino}.}
    \label{fig:QSfigure}
\end{figure}
In the formalism of processes, the quantum switch can be described by a tripartite quantum process. For simplicity, let us consider the target to be also a qbit. The dimensions of input and output spaces of the laboratories are $d_{A_I}=d_{A_O}=d_{B_I}=d_{B_O}=2,$ $d_{C_I}=4,$ $d_{C_O}=1$\footnote{We will now use letters I and O to denote input and output spaces instead of numbers 1 and 2.}, meaning that Alice and Bob can perform their operations on the target, while Charlie can perform a measurement on the 4-dimensional system target+control $\mathcal{H}^{C_I}:=\mathcal{H}^{C^c_I} \otimes \mathcal{H}^{C^t_I}$, with no outgoing system.

Sometimes it is useful to work at the level of process vectors, instead of matrices, when the processes are pure. A pure process is one that can be written as a projector onto some vector $W=\ket{w}\bra{w}$. This is analogous to the correspondence between pure density matrices and state vectors $\rho=\ket{\psi}\bra{\psi}.$ The Choi operators for some of the agents' operations, present in the expression for the probability~(\ref{processMatrixProb}), can also be written in this way. If the operation $\mathcal{M}_U$ is of the form $\mathcal{M}_U(\rho)=U \rho U^{\dagger}$ where $U$ is a linear operator $U:\mathcal{H}_I\to\mathcal{H}_O$ acting on a vector state like $U\ket{\psi},$ its Choi map is 
\begin{align*}
M^{\mathcal{H}_I\mathcal{H}_O}_U &=\left[\sum_{ik}\ket{i}\bra{k}\otimes U \ket{i}\bra{k} U^{\dagger}\right]^T \\
&= \left[\left(\mathds{1}_{\mathcal{H}_I}\otimes U \right)\ket{\mathds{1}}\rangle \langle\bra{\mathds{1}}\left(\mathds{1}_{\mathcal{H}_I}\otimes  U^{\dagger}\right)\right]^T \\
&= \ket{U^*}\rangle\langle\bra{U^*},
\end{align*}
where $\ket{U^*}\rangle :=(\mathds{1}_{\mathcal{H}_I}\otimes U^*)\ket{\mathds{1}}\rangle$ is the Choi vector associated to $U$, and $*$ denotes complex conjugation with respect to the basis used to write $\ket{\mathds{1}}\rangle:=\sum_k\ket{k}\ket{k}$.

So, in the case that the local operations are represented by these Choi vectors as $M^{A_1A_2}_i=\ket{M^{A*}_i}\rangle \langle \bra{M^{A*}_i}$ and the process matrix is pure, we can work with probability amplitudes
\begin{equation}
    \langle\bra{M^{A*}_i}\otimes\langle\bra{N^{B*}_j} ... \ket{w},
\end{equation} from which the probability is obtained through the squared modulus, agreeing with~\eqref{processMatrixProb}.

The process matrix representing a channel $\mathcal{C}$ from Bob to Alice is given by the transpose of its Choi map $(C^{B_OA_I})^T$ according to equation \eqref{ChannelMatrix}. If the channel is the identity, this turns out to be simply $\ket{\mathds{1}}\rangle \langle \bra{\mathds{1}}^{B_OA_I},$ by definition~\eqref{ChoiJNew}. We can construct the process matrix for the quantum switch based on this. The situation in which Alice receives a system in state $\ket{\psi}$ and, after realizing her operation, sends it to Bob, who then sends it to Charlie is represented by the process vector $\ket{\psi}^{A_I}\otimes \ket{\mathds{1}}\rangle^{A_O B_I}\otimes\ket{\mathds{1}}\rangle^{B_O C_I^t}$. Thus, the process matrix of the quantum switch, when the control system is in state $1/\sqrt{2}(\ket{0}+\ket{1})$ is given by $\ket{w}\bra{w}$, with
\begin{equation}\label{QSprocess}
\ket{w}= \frac{1}{\sqrt{2}}\left(\ket{\psi}^{A_I}\ket{\mathds{1}}\rangle^{A_O B_I}\ket{\mathds{1}}\rangle^{B_O C_I^t}\ket{0}^{C_I^c}+ \ket{\psi}^{B_I}\ket{\mathds{1}}\rangle^{B_O A_I}\ket{\mathds{1}}\rangle^{A_O C_I^t}\ket{1}^{C_I^c}\right).
\end{equation} In fact, if we act on $\ket{w}$ with some operations $\mathcal{A}$ and $\mathcal{B}$ represented by operation vectors, we recover the entangled state which characterizes the switch:
\begin{align}        
\langle\bra{\mathcal{A}^*}^{A_IA_O}\otimes \langle\bra{\mathcal{B}^*}^{B_IB_O}\ket{w} &=\sum_j \bra{j}^{A_I}\otimes\bra{j}^{A_O}\mathcal{A}^{*\dagger}\otimes\sum_k \bra{k}^{B_I}\otimes\bra{k}^{B_O}\mathcal{B}^{*\dagger} \nonumber
\\ 
&\qquad \times \frac{1}{\sqrt{2}} \left[\ket{\psi}^{A_I}\sum_m \ket{m}^{A_O}\ket{m}^{B_I}\sum_n\ket{n}^{B_O}\ket{n}^{C^t_I}\ket{0}^{C_I^c} + \cdots \right] \nonumber
\\
&=\frac{1}{\sqrt{2}}\sum_{jmn}\braket{j|\psi}^{A_I}\bra{j}^{A_O}\mathcal{A}^{T}\ket{m}^{A_O}\bra{m}^{B_O}\mathcal{B}^{T}\ket{n}^{B_O}\ket{n}^{C_I^t}\ket{0}^{C_I^c} + \cdots \nonumber
\\
&= \frac{1}{\sqrt{2}}\sum_{jn}\bra{j}^{A_O}\left(\mathcal{B}\mathcal{A}\right)^T\ket{n}^{B_O}\braket{j|\psi}^{A_I}\ket{n}^{C_I^t}\ket{0}^{C_I^c}+ \cdots \nonumber
\\
&= \frac{1}{\sqrt{2}}\left[\mathcal{B}\mathcal{A}\ket{\psi}^{C_I^t}\ket{0}^{C_I^c} + \mathcal{A}\mathcal{B}\ket{\psi}^{C_I^t}\ket{1}^{C_I^c}\right].  \label{QSstate}
\end{align} 
The quantum switch is sometimes treated as a 4-partite process, where an agent is introduced to prepare the initial state~\cite{Oreshkov2019timedelocalized,Vilasini:2022ist,Rubino}. In our case, the initial state information is contained in the process, and only the operations of A, B and C are taken as inputs. 
\begin{figure}[ht]
\centering
    \includegraphics[scale=0.9]{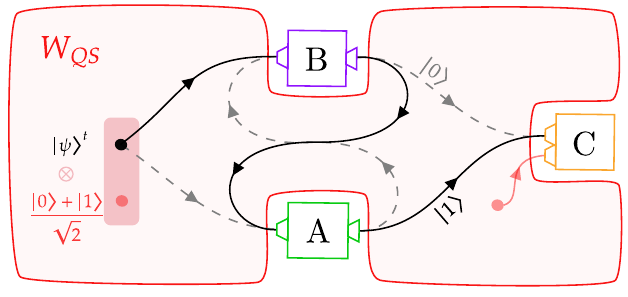}
    \caption{Sketch of the tripartite process for the quantum switch~(\protect\ref{QSprocess}). Alice's and Bob's laboratories are represented by boxes A and B, which have input and output Hilbert spaces. Charlie's lab, represented by box C, can only make measurements with no output systems.}
    \label{fig:QSprocess}
\end{figure}

Because the QS matrix is a projector, it cannot be written as a non-trivial probabilistic mixture of other different processes and, still, it allows distinct orders for the actions of the agents. Then, as argued in~\cite{Giarmatzipaper}, the switch cannot satisfy the extended tripartite notion of causal separability for this case, $W^{sep}=q W^{A\prec B \prec C}+ (1-q)W^{B\prec A \prec C}$. In reference\cite{Araujo2015}, the causal non-separability of the matrix is also demonstrated through the use of a causal witness, an operator $S$ such that $\operatorname{Tr}(SW)$ is non-negative for all causally separable matrices $W$. They show it is possible to construct a causal witness such that $\operatorname{Tr}(SW_{switch})$ is negative\footnote{It is possible to construct a causal witness for any causally non-separable process, as also showed in the reference.}. The authors demonstrate that, despite being causally non-separable, the quantum switch cannot violate any causal inequality.

This is a concrete description of a process with indefinite order. We can imagine that, if causally ordered process matrices are represented by a circuit diagram, this one could be drawn as a quantum superposition of two of them. Fig.~\ref{fig:QSprocess} illustrates that, mimicking diagrams made for usual quantum networks with definite order. 

\subsection{Implementations and discussion}

The quantum switch is claimed to have been realized experimentally\cite{ReviewExp,Rubino,Procopio,Goswami,Wei,Taddei,RubinoAgain}. The implementations follow the general idea represented in Fig.~\ref{fig:QSprocess}: they use the fact that we can put systems in quantum superposition of paths, for example make photons pass through a beam splitter and suffer operations in distinct orders depending on the path. 
\begin{figure}[ht]
\centering
\includegraphics[scale=0.37]{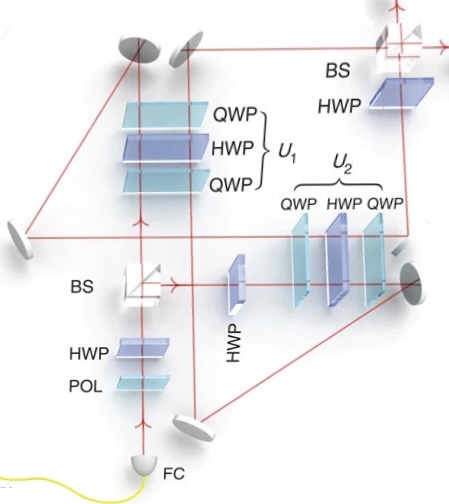}
\caption{Quantum switch with path as control: In this optical implementation of a quantum switch, the control is the spatial degree of freedom of the photon and the target is its polarization. A single photon is sent to a 50/50 beamsplitter which prepares the spatial qbit in a superposition of states $\ket{0}:$ transmitted and $\ket{1}:$ reflected. Operations $U_1,U_2$ are each implemented with three waveplates acting on the polarization. In path $\ket{0}$ the photon encounters operation $U_1$ before $U_2$ in the interferometer and the opposite happens for the other path. Note that the system always passes through the same waveplates, but hits them at different spots depending on the path. Figure from~\protect\cite{Procopio}.}
\label{fig:ExpQSProcopio}
\end{figure}

In the photonic implementations, the control and target qbits are encoded in two different degrees of freedom of the same photon. In reference \cite{Procopio}, for instance, the target system is the polarization of the photon. Therefore, Alice's and Bob's operations are made on that space, while the control is the path. See Fig.~\ref{fig:ExpQSProcopio}.

In reference \cite{Goswami}, the control qbit is the polarization while the target is the transverse spatial mode of the photon. In this case, polarizing beam splitters are used to orientate the photon along distinct paths depending on polarization, as showed in Fig.~\ref{fig:ExpQSGoswami}. In this implementation, the authors were able to make each operation be applied always in the same location in space, differently from the one in Fig.~\ref{fig:ExpQSProcopio}, where the system hits each plate in a different region of the plates depending on the path.

\begin{figure}[ht]
\centering
\includegraphics[scale=0.3]{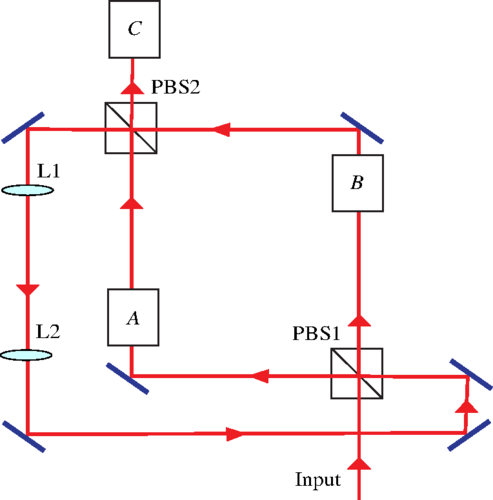} \hspace{1.5cm}
\includegraphics[scale=0.8]{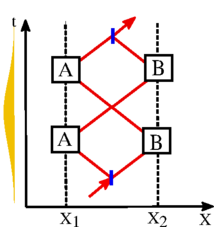}
\caption{Quantum switch with polarization as control: schematic of the protocol in reference~\protect\cite{Goswami}. A polarizing beamsplitter (PBS1) is used in order for the photon to travel through each path depending on the control state. The setup is arranged so that, when the beams are first combined in PBS2, they all take the path back to PBS1. Operations done at boxes A and B act on the spatial mode of the photon and are realized in distinct orders depending on the path. A polarization measurement in the diagonal basis is realized at C after the photon has gone through the interferometer twice. The figure in the right is a spacetime diagram with the events in which operations are applied in the protocol. Figures from~\protect\cite{Goswami}.}\label{fig:ExpQSGoswami}
\end{figure}

Still, as Fig.~\ref{fig:ExpQSGoswami} shows, we can draw the spacetime diagram for the protocol. It involves 4 spacetime events, as opposed to 2 operational events being considered in the mathematical description of the quantum switch, one for Alice's and another for Bob's operation. In reference~\cite{Procopio}, it is argued that these experiments should be considered genuine implementations of indefinite causal order because \emph{``any attempt to physically distinguish the two times at which a photon can pass through a gate would reveal which-way information and thus destroy the
interference. The results of the experiment confirm that such information is not available anywhere and that the interpretation of the experiment in terms of four, causally-ordered events cannot be given any operational meaning''}. This argument is strongly opposed in reference~\cite{Voji}, where the authors devise a non-destructive observable that could make the distinction in principle without revealing which-way information. The authors also show it is possible to describe these protocols in the language of the process matrix formalism as a process with 5 inputs, A, A', B, B' and C, one for each spacetime event, by means of introducing the vacuum as a possible input for the laboratories. The process matrix constructed in this way is causally separable and hence can be written as a circuit. The authors argue this description agrees more with the experimental setups and that there is an operational difference from a genuine quantum switch with 2 events.

The discussion of what constitutes an actual laboratory implementation of the quantum switch or, more broadly, indefinite order has been repeatedly brought up in the literature, for instance in \cite{Voji,Oreshkov2019timedelocalized,Vilasini:2022ist}. The fact that quantum mechanics allows for superpositions of time evolutions, which can be achieved through space or time delocalization of systems, makes it possible for the statistics of the quantum switch process to be reproduced in laboratory. The quantum switch and its variations seem to provide advantages when compared to standard quantum computation~\cite{Araujo,Taddei,Colnaghi_2012,Guerin,Wei,ChirDisc,EnhancedChiribella}, although some results might need revision~\cite{desbancou}. Thus, the experiments appear to be significant for quantum information processing. On the other hand, we clearly cannot conclude they are totally incompatible with causal structure, at least not in the sense of incompatibility with the existence of underlying spacetime, since we obviously can describe the experiments correctly using quantum mechanics on top of a definite Minkowski spacetime. This is where the information-theoretic and spacetime notions of causal structure collide. In a sense, if an event A is characterized by the application of an operation inside Alice's laboratory whose input space is an unusual Hilbert space, i.e. that of a temporally delocalized system, the order between the ‘events’ in the realizations of the switch really may be considered indefinite~\cite{Oreshkov2019timedelocalized}. That notion deviates from the initial closed laboratories proposal, with spacetime regions that cannot communicate, in favor of the informational approach. From the spacetime structure point of view, each localized spot is a different event and we are able to see the branches of the superposition as distinct pairs of localized operations (A,B') and (B,A'), with all four of them being part of a causal structure.

The first approach, which considers that the experimental switches are protocols with indefinite order, has gained traction in the quantum information community. A lot of theoretical work has been produced on further classification of processes, conjectures on which of them can have a physical realization in the same sense of the photonic quantum switch, and whether it is possible to violate causal inequalities in that context~\cite{Giarmatzipaper,Purves_2021,pagewooters,nogo_Costa_2022,Vilasini:2022ist,Barrett_2021}. 

If the goal is to study structures due to gravity and quantum spacetime, the second approach seems more appropriate. That is, we would like to affirm that a protocol is causally ordered whenever its probabilities are compatible with quantum mechanics on a classical spacetime. However, doing this is not that simple. In an optical-table experiment, we know \emph{a priori} that the Minkowski spacetime description accounts for the predictions consistently and that it is not an artificial solution, since it comes from a theory that is well-tested in other contexts. With that in mind, we may consider that a genuine explanation of the protocol involves 5 spacetime events without need to break the paradigm of relativity, as done in~\cite{Voji}, instead of 3 operational ones. But if we are trying to gather information from an unknown and maybe indefinite spacetime, we have no grounds to make a `genuine' assignment of the events of interest. For all we know, even the more concrete concept of laboratories could become nebulous if a quantum spacetime was considered. We fatally need an operational way to define the events/laboratories, in addition to the already operational way we analyze the set of probabilities that comes from them. That should clarify the confusion generated by the quantum switch and provide a way to understand protocols in terms of processes without a lot of ambiguity. It is indeed reasonable to search for an operational and relational approach to defining events, since that is the spirit in which Einstein first formulated relativity -- events are defined by the intersection of the world-lines of pairs of (classical) particles.
A more detailed attention to these matters~\cite{quantumevents,Esteban2020quantum, pagewooters,Vilasini:2022ist} may enrich the process matrix formalism when it comes to foundational aspects.

\chapter{Clocks on curved spacetimes}\label{Chap Clocks}
Despite the fact that we successfully constructed a formalism for local quantum mechanics that does not demand definite causality, extracting meaningful information about causal structure from statistics of experiments can still be challenging. That is because the correspondence between process matrices and associated physical protocols, when they exist, is not straightforward. Further specification on the nature of the
events of application of the operations can be useful for investigating in which sense causality is being violated in a process with indefinite order. For instance, in table-top experiments, we could establish that something is characterized as a single application of an instrument only if it happens at a definite location and time, as measured by a ruler and a clock available at the laboratory. Violating definite order under this requirement would appear meaningful, indicating conflict with the usual description with a background space and global time. One can think of doing the same sort of specification in curved spacetimes, and ultimately also try to define events/laboratories on quantum spacetimes.

Here, we will present the formalism for classical and quantum ideal clocks on curved spacetimes according to reference~\cite{Zych}. This formalism can be employed to operationally specify the events for processes when gravity is involved, and will be used to formulate the gravitational quantum switch~\cite{tbell} in the next chapter. Studying them is also a basic introduction to classical and quantum mechanics on curved spacetimes. The formalism of quantum clocks explains the behavior of systems such as the atomic and optical clocks with which experiments were realized to test time dilations predicted by special and general relativity~\cite{Hafele1972AroundtheWorldAC,OpticalClocksExp}. The formalism also includes the possibility for a clock to be delocalized, as a particle following a quantum superposition of multiple paths. In the last chapter, we use this to describe a test of quantum mechanics on a curved classical spacetime.

\section{Free relativistic particle}
Let us first summarize the general relativistic lagrangian and hamiltonian treatment of a particle on a curved spacetime~\cite{gr3,Carroll}, since it will be used in the description of clocks.

A free point-particle with mass $m$ in a spacetime with metric $g_{\mu\nu}$ moves along a curve $x^\mu$ according to the action
\begin{equation}\label{particleAction}
S=-mc^2\int d\tau=-mc\int \sqrt{-g_{\mu\nu}\dot{x}^\mu\dot{x}^\nu} d\lambda=\int L d\lambda,
\end{equation}
where $\dot{x}^\mu=dx^\mu/d\lambda$, $c$ is the speed of light, $\tau$ is the proper time, with $d\tau=1/c\sqrt{-g_{\mu\nu}dx^\mu dx^\nu}$, and $\lambda$ is a generic parametrization for the curve $x^\mu(\lambda)$. Applying the variational principle to this action, which is simply the relativistic interval, returns geodesic curves for the particle.

In fact, defining $f:= g_{\mu\nu}\dot{x}^\mu\dot{x}^\nu=\dot{x}^\mu\dot{x}_\mu$, the extremization of the action functional is expressed by
$$\delta (-mc\int\sqrt{-f}d\lambda)=\frac{1}{2} mc \int (-f)^{-1/2}\delta(f) d\lambda=0.$$ 
If we fix the parametrization $\lambda=\tau$ after the calculation of the variation, we may substitute $f|_{\lambda=\tau}$ outside the $\delta$ symbol~\cite{Carroll}. This happens to be the norm of the 4-velocity $f|_{\lambda=\tau}=U^\mu U_\mu=-c^2$, simplifying the expression to 
$$\frac{m}{2} \int \delta(f) d\tau=0 \implies \delta\left(\frac{m}{2} \int \dot{x}^\mu\dot{x}_\mu d\tau\right)=0.$$ Thus, for curves parametrized by their proper times, we can use the lagrangian 
\begin{equation}\label{L'}
    L'= \frac{m}{2}\frac{dx^\mu}{d\tau}\frac{dx_\mu}{d\tau},
    \end{equation}with $S'=\int L'd\tau$, to solve the original $L=-mc\sqrt{-f}$.

The lagrangian in \eqref{L'} is easier to deal with, has the same canonical momentum as the former and also leads to the geodesic equation. For the original lagrangian \eqref{particleAction}, we can readily verify that the expression for the hamiltonian
\[
H= p_\mu \dot{x}^\mu - L = m \frac{d{x}_\mu}{d\tau}\dot{x}^\mu -L
\]
identically vanishes. This is not an unfortunate accident, but a direct consequence of its reparametrization invariance. One can generally redefine this type of lagrangian by adding constraints to get relational information about the system, as explored in~\cite{Kiefer}. The expression for $L'$, on the other hand, was already obtained from fixing a parametrization. We can verify the associated hamiltonian does not vanish:
$$H' =m \frac{dx_\mu}{d\tau} \frac{dx^\mu}{d\tau}-L'=m \frac{dx_\mu}{d\tau} \frac{dx^\mu}{d\tau}-\frac{m}{2}\frac{dx_{\mu}}{d\tau}\frac{dx^{\mu}} {d\tau}=\frac{m}{2}\frac{dx_\mu}{d\tau} \frac{dx^\mu}{d\tau}=L'.$$ Since the parametrization is $\tau$, we conclude that $H'=m U^\mu U_\mu/2=-mc^2/2$. Another way to write that is $p_\mu p^\mu=-m^2c^2$. Indeed, fixing $\tau$ as the parameter turns the conjugate momenta of the problem into the actual 4-momentum components and its norm becomes a conserved quantity of the problem. The Hamilton equations, when combined, return again the geodesic equation~\cite{gr3}.

The formulation above is covariant and includes the time coordinate. It all works fine, but it is interesting to formulate a 3D hamiltonian system as well. It makes the dynamics description meaningful from the point of view of a fixed observer, who perceives motion with respect to coordinate time $x^0=ct$. For this, we can consider the invariant action $\int p_\mu dx^\mu$ decomposed in time and space parts:
\begin{equation}\label{S3D}
\int p_\mu \frac{dx^\mu}{dt} dt=\int \left(c p_0+g_{ij}p^i\frac{dx^j}{dt}\right)dt=\int L_{3D}dt.
\end{equation}
This action is simply $2S'$, but we are now using $t=x^0/c$ instead of $\tau$ as parameter and our coordinates are only the spatial $x^i$, with conjugate momenta $p_i$, as in classical mechanics. The form of this lagrangian is explicit, returning the hamiltonian $H_{3D}=g_{ij}p^i\dot{x}^j-L_{3D}=-c p_0$. This is suggestive, since it is the expression for energy in flat coordinates, $E=c p^0=-cp_0$. However, that interpretation does not hold in a general context, nor does it need to for a valid hamiltonian.

We can now find a formula for $H_{3D}=-c p_0$. Since we do not use $\tau$ as parameter, the norm of the 4-momentum is not necessarily fixed as before. So we can solve for \eqref{S3D} adding the extra constraint $H'\equiv p_\mu p^\mu/(2m)=-mc^2/2$ for the solutions, therefore using information from the 4D formulation in the 3D description. The reader is referred to~\cite{gr3} for formal justification of this procedure. In the end, to find an expression for $H_{3D}$, we only need to manipulate the expression
\begin{align*}
   p^\mu p_\mu + m^2c^2=0 \quad \implies \quad &g^{00}(p_0)^2 +g^{0i}p_0p_i+ g^{i0}p_ip_0 + g^{ij}p_ip_j + m^2c^2=0
    \\
    &p_0= \frac{- 2 g^{0i}p_i - \sqrt{4(g^{0i}p_i)^2-4g^{00}(m^2c^2+g^{ij}p_ip_j)}}{2 g^{00}},
\end{align*}
where we used that the metric is symmetric, $g^{0i}=g^{i0}$. Then, we generally have
\begin{equation}\label{H3D}
       H_{3D}=-c p_0=+c \frac{g^{0i} p_i}{g^{00}} +\sqrt{c^2\left(\frac{g^{0i}p_i}{g^{00}}\right)^2-\frac{(m^2c^4+c^2 g^{ij}p_ip_j)}{g^{00}}}.
\end{equation}

\section{Classical clocks}
A clock is a system that in some way displays the passage of time. We might look at a mechanical clock and say that an event happened at 9 o'clock if it coincides with a certain 90° configuration of the clock's pointers, which are constantly moving. Anything with dynamics really could be used as a clock in that context. When special and general relativity come to play, simultaneity is not an absolute notion anymore and time passes differently depending on the velocity of the reference frame and on whether the clock is in a region of gravitational influence~\cite{Carroll}. We can think of an ideal clock system in relativity as a point-particle with internal degrees of freedom. It is punctual so that time cannot pass differently in different parts of it and it witnesses the proper time elapsing through the change of its internal state. 

This serves as an effective theory to model non-punctual dynamical systems, like a bunch of interacting particles, that are sufficiently localized in space so that time does not perceptively run differently in different regions of the system. We will later discuss how the formalism meets this approximation.

\subsection{Hamiltonian} \label{subsecHamiltonian}
All through this chapter we will consider for simplicity a static, symmetric metric $g_{\mu \nu}$ such that $g^{i0}=0$, $i=1,2,3$ unless stated otherwise. Clocks are relativistic particles, but the presence of internal degrees of freedom requires a modification of the description above. Let us go to the particle's rest frame, represented by primed coordinates. We can also choose the time coordinate to be $c\tau$, which means we are fixing $g'_{00}=-1$. Then, we can write $p_\mu p^\mu =g'_{00}p'^0 p'^0=-(p'^0)^2$. In the formulation above, this equals $-m^2c^2$ and thus we can understand the quantity $cp'^0= mc^2$ as the energy of the particle in the rest frame, agreeing with special relativity, as it locally should. So, if we want a particle with internal dynamics, we should add an energy term to this, namely the internal hamiltonian:\begin{equation}\label{Hrest} cp'^0=mc^2+H_{int}\equiv H_{rest}.\end{equation}
Indeed, this equals the 3D hamiltonian of the last section in these coordinates, because $-cp'_0=-cg'_{00}p'^0=cp'^0$. We can use it to find the expression for generic coordinates. Note that the the modification of the hamiltonian also modifies the momentum, which now has the norm $p_\mu p^\mu= -(p'^0)^2=-H_{rest}^2/c^2$. To arrive at the new general form of the hamiltonian, we only need to replace the old momentum norm $-m^2c^2$ by this new norm in \eqref{H3D}, resulting in \begin{equation*} H=c \frac{g^{0i} p_i}{g^{00}} +\sqrt{c^2\left(\frac{g^{0i}p_i}{g^{00}}\right)^2-\frac{(H_{rest}^2+c^2 g^{ij}p_ip_j)}{g^{00}}}.\end{equation*} Since the metric is diagonal, $g^{i0}=0$, and static, $(g^{00})^{-1}=g_{00}$, this simplifies to
\begin{equation}\label{ClockHamiltonian}
      H= \sqrt{-g_{00}\left(c^2 p_ip^i+ H_{rest}^2\right)}.
\end{equation}
The squared energy term was replaced by the squared $H_{rest}$, expressing the mass-energy equivalence. If an opaque box full of moving particles is seen by an observer, it will appear to have a mass bigger than the sum of all masses. The kinetic energy $E$ of the particles will add an effective mass of $E/c^2$ which can be really perceived in its inertia. Any box comprising any kind of energetic system will behave like this. In the same way, we are adding to this system the extra effective mass of $H_{int}/c^2$ in order to account for the internal dynamics.

\subsection{Lagrangian}
The system described by the hamiltonian \eqref{ClockHamiltonian} has external coordinates $(x^i,p^i)$ as well as internal ones of which $H_{int}$ is a function. We will consider that the internal dynamics has $N$ degrees of freedom, and denote by $(q_k,\omega_k)$, $k=1,\dots,N$ the generalized coordinates and conjugate momenta defined with respect to the rest frame. The expression in \eqref{ClockHamiltonian} reduces to $H_{rest}$ in the rest frame and the lagrangian can then be written as
\begin{equation}
    L_{rest} \equiv \sum_k \dot{q}'_k\omega_k - H_{rest}.
\end{equation}
The Hamilton equation for $\dot{q}_k$ is
\begin{equation}\label{qdot}
    \dot{q}_k=\frac{\partial H}{\partial \omega_k}=-\frac{g_{00} H_{rest}}{H}\frac{\partial H_{rest}}{\partial \omega_k},
\end{equation}
which reduces to $\dot{q}'_k=\frac{\partial H_{rest}}{\partial \omega_k}$ in the rest frame.

The lagrangian associated to the hamiltonian in \eqref{ClockHamiltonian}, includes both internal and external coordinates and momenta:
\begin{equation}\label{preClockLagrangian}
    L=\dot{x}^ip_i + \sum_k \dot{q}_k\omega_k - H.
\end{equation}
From the Hamilton equation for $\dot{x}^i$, we have
\begin{equation}\label{xdot}
\dot{x}^i=\frac{\partial H}{\partial p_i}=\frac{(-1)}{2H}\frac{2 c^2 p^i}{g^{00}} = -\frac{g_{00} c^2 p^i}{H}.
\end{equation}
Substituting \eqref{qdot} and \eqref{xdot} into \eqref{preClockLagrangian}, we get
\begin{align}
     L&=-g_{00}\left(c^2 p_ip^i+ \sum_k H_{rest}\frac{\partial H_{rest}}{\partial \omega_k} \omega_k\right) H^{-1} - H \nonumber
     \\
     &=\frac{-g_{00}\left(c^2 p_ip^i+ H_{rest}\sum_k \dot{q}'_k \omega_k\right)+g_{00}\left(c^2 p_ip^i+ H_{rest}^2\right)}{H} \nonumber
     \\
     &=-\frac{g_{00} H_{rest}}{H} L_{rest},
\end{align}
where $\dot{q}_k^{\prime}$ is the rest frame version of \eqref{qdot}. The term multiplying $L_{rest}$ takes a more familiar form if we get $H_{rest}$ by isolating it in the formula \eqref{ClockHamiltonian} and get $H$ from \eqref{xdot}:
\begin{align*}
   -\frac{g_{00} H_{rest}}{H}& = -\frac{g_{00} \left[-g^{00}H^2 - c^2p_ip^i\right]^{1/2}}{H}
   \\
   &=\left[(-g_{00})^2\left(-g^{00}-\frac{c^2p_ip^i}{(g_{00}c^2)^2p_ip^i/(\dot{x}_i\dot{x}^i)}\right)\right]^{1/2}
   \\
   &=\frac{1}{c}\left[-g_{00}c^2-\dot{x}_i\dot{x}^i\right]^{1/2} =\frac{1}{c} \sqrt{-g_{\mu \nu}\dot{x}^\mu\dot{x}^\nu}.
\end{align*}
Given the parameter is $t$, we used a choice of coordinates such that $x^0=ct$ to get to the last equality. The result is simply the derivative of the proper time $\dot{\tau}$, which corresponds to the first lagrangian we introduced for a relativistic particle \eqref{particleAction}, except for the multiplying constant $-mc^2$. The final lagrangian is then
\begin{equation}\label{ClockLagrangian}
    L=\frac{L_{rest}}{c}\sqrt{-g_{\mu \nu}\dot{x}^\mu\dot{x}^\nu} = L_{rest} \dot{\tau}.
\end{equation}
In fact, one can verify that $L_{rest}=-mc^2 + L_{int}$ by doing the legendre transformation on \eqref{Hrest}. Therefore, the generalization is also clear in the lagrangian description, with the internal degrees of freedom being added to the energy terms. 

Since all derivatives are taken with respect to a parameter $t$, the action for this lagrangian is
\begin{equation}
    S=\int L dt=\int L_{rest}\frac{d\tau}{dt}dt = \int L_{rest} d\tau,
\end{equation}
which suggests that the proper time works as the time for internal dynamics regardless of the parameter $t$ chosen for constructing the formalism.

\subsection{Routhian and the non-relativistic limit}\label{subsecRouthian}
It is convenient that we introduce a third picture for the dynamics of a clock. The routhian is a function between the lagrangian and the hamiltonian. It consists of a partial legendre tranformation of the lagrangian with respect to only some degrees of freedom~\cite{goldsteinClassical,Zych}. In our case, those will be the internal degrees. We define the routhian of our system as
\begin{equation}
    R:= \sum_k \omega_k \dot{q}_k - L.
\end{equation}
Now, due to the form we found for the lagrangian in \eqref{ClockLagrangian}, we can write
\[
\omega_k \equiv \frac{\partial L}{\partial \dot{q}_k }= \frac{\partial L_{rest}}{\partial(dq_k/d\tau)} ,
\]
and the routhian becomes
\begin{equation}\label{ClockRouthian}
    R=\sum_k \frac{\partial L_{rest}}{\partial(\frac{dq_k}{d\tau})} \frac{dq_k}{d\tau} \dot{\tau} - L_{rest}\dot{\tau} = H_{rest} \dot{\tau}.
\end{equation}

By definition, the routhian acts as a lagrangian for external degrees of freedom and as a hamiltonian for the internal ones. In particular, given an \emph{internal} observable $A_{int}$, we may write its evolution with respect to $t$ as a Poisson bracket with the routhian:
\begin{equation}\label{ObsEvol}
    \dot{A}_{int}=\left\{A_{int},R\right\} = \left\{A_{int}, H_{rest}\right\}\dot{\tau}.
\end{equation}
We can then conclude immediately that
\[
\frac{d A_{int}}{d \tau}= \{A_{int}, H_{rest}\} ,
\]
which confirms that the internal degrees of freedom of a clock traveling along a worldline evolve according to the proper time elapsed from the initial to the final point. For instance, if $dA_{int}/d\tau=:v_{rest}$ is constant, we get
\begin{equation*}
    A_{int}(t_f)-A_{int}(t_i)=\int \dot{A}_{int} dt=\int \frac{dA_{int}}{d\tau}\dot{\tau}dt= v_{rest} \tau,
\end{equation*}
where $\tau$ is the proper time elapsed between the initial and final points of the worldline we are integrating over. Similarly, if two clocks for which we can measure this observable travel on different worldlines parametrized from $t_i$ to $t_f$ and $v_{rest}$ is constant for both, the difference of proper time between the paths can be measured by comparing how much $A_{int}$ changed in each of them: $ \Delta A_{int (2)}-\Delta A_{int (1)} =v_{rest} (\tau_2-\tau_1).$

 For an observer with coordinate time $t$, the internal evolution of the system is perceived with a time dilation of $\dot{\tau}$. Thus, the internal Hamiltonian from their perspective, which is actually the routhian, gets ``redshifted'' and looks like $H_{rest}\dot{\tau}$. We can see that all this was developed without need to specify the actual form of the internal dynamics. This shows that there is an universality of time dilation in this formalism. These systems can be thought of as ideal clocks because the time dilation exhibited when they follow a trajectory does not depend on their nature, but only on the trajectory and the underlying spacetime. 

In particular, when we go to the non-relativistic low-energy limit, these systems represent ideal clocks for classical mechanics. One can verify this by considering the metric for a spherical mass such as the Earth, the Schwarzchild metric to the lowest order\cite{Carroll}: $g_{00}\approx -(1+2\Phi(x)/c^2)$, $g_{ij}\approx \delta_{ij}$, where $\Phi(x)=-GM_{Earth}/x$ is the newtonian gravitational potential. The routhian \eqref{ClockRouthian} for this metric to the order $1/c^2$ reads
\begin{equation}\label{leRouthian}
    H_{rest}\dot{\tau} \approx H_{rest}\left(1-\frac{\vec{v}^{\,2}}{2c^2}+\frac{\Phi(x)}{c^2}\right)=:R_{le}.
\end{equation}
This is the low-energy limit, but it is not yet entirely non-relativistic, since the terms $H_{rest}[1-\vec{v}^{\,2}/(2c^2)]$ and $H_{rest}[1-\Phi(x)/c^2]$ give rise to special relativistic and gravitational time dilations respectively, as we can attest by the presence of $\dot{\tau}$ in the evolution of internal observables~\eqref{ObsEvol}. And so much that is true that the associated hamiltonian, $H = R +\dot{x}_i p^i$, is
\begin{equation}\label{leHamiltonian}
   H_{le}:= H_{rest} + \frac{1}{2}\frac{c^2 \vec{p}^{\,2}}{H_{rest}} + H_{rest} \frac{\Phi(x)}{c^2} = mc^2+H_{int}+ \frac{1}{2}\frac{ \vec{p}^{\,2}}{m+\frac{H_{int}}{c^2}} +\left(m+\frac{H_{int}}{c^2}\right) \Phi(x),
\end{equation}
which still does not fit the classical newtonian description. Here we used that $\dot{x}^i=p^i/M$ where $M=H_{rest}/c^2$, as well as \eqref{Hrest}.

We are missing one more limit that should be taken due to the internal dynamics. If we decompose the routhian \eqref{leRouthian} as $$R_{le}=H_{rest}+ mc^2\left(-\frac{\vec{v}^{\,2}}{2c^2}+\frac{\Phi(x)}{c^2}\right) + (H_{rest}-mc^2)\left(-\frac{\vec{v}^{\,2}}{2c^2}+\frac{\Phi(x)}{c^2}\right),$$ we can see that only the last term contributes to time dilation felt by the internal degrees of freedom. Note that $H_{rest} -mc^2\equiv H_{int}$. Therefore, to get to the non-relativistic limit, we should also demand for $H_{int}$ to be much smaller than the non-dynamical energy $mc^2$. We can interpret this as demanding slow internal evolution. If we neglect terms of order $H_{int}/c^2$, the routhian looses the last term and becomes 
\begin{equation}
    R_{le,nr}=mc^2+H_{int}- m\frac{\vec{v}^{\,2}}{2}+m\Phi(x).
\end{equation} In this case, the evolution of an internal observable will, for any coordinates, be given by the bracket $\{A_{int},H_{int}\}$, as expected in a global time theory. We also have that the hamiltonian in~\eqref{leHamiltonian} becomes
\begin{equation}
H_{le,nr}=mc^2+H_{int}+\frac{\Vec{p}^{\,2}}{2m}+ m\Phi(x),
\end{equation}
the classical newtonian expression. Therefore, the non-relativistic limit for a clock comprises the low energy limit as well as slow internal dynamics. In fact, the existence of interesting effects due to gravity in the regime of low energies that cannot be explained by newtonian physics is part of the reason we are studying clocks.

\subsection{Effective dynamics of a localized N-particle system}
As said earlier, the formalism of ideal clocks can serve as a model for a localized system of interacting particles traveling through spacetime, a more realistic approach to physical clocks. Here, we describe an example of such a system and delimit for which approximations the clock formalism is a good model for it. This can be regarded as a non-formal derivation of the point-particle clock framework.

Let us consider a system of $N$ particles that interact electromagnetically through the 4-potential $A_{\mu}$ on a spacetime with metric $g_{\mu \nu}$. For $n$ going from $1$ to $N$, their masses will be denoted by $m_n$ and their charges by $e_n$. The action for the system is given by
\begin{equation}
S_N=\int \sum_{n=1}^N \left( -m_n c^2 \frac{d\tau_n}{d\lambda} +e_n A_{\mu}(x_n)\frac{d x^\mu_n (\lambda)}{d\lambda}\right) d\lambda \equiv \int L_N d\lambda,
\end{equation}
where $\tau_n$ is the proper time of particle $n$. Dots will indicate derivatives with respect to $\lambda$. This lagrangian is the sum of the free particle lagrangians and the interaction terms $e_n A_\mu \dot{x}_n^\mu$. When we extremize the action, these terms generate  minimally coupled generalized Lorentz force expressions (see, for instance, references~\cite{jackson,Landau1,Landau2}). In flat spacetime, the evaluation of the spatial components returns the Lorentz force and the time component gives out the expression for electromagnetic work.

Let us fix the arbitrary parameter $\lambda$ as $ct=:x^0$, with $t$ being the coordinate time measured by a clock positioned at the origin with respect to which the coordinates $x_n ^\mu$ were defined. If we consider a worldline $Q^\mu(t)$, we can write $c\dot{\tau}=\sqrt{-g_{\mu\nu}(Q)\dot{Q}^\mu \dot{Q}^\nu}$. If we choose a coordinate system (indicated by primed symbols) such that $\dot{Q}'^i=0$, $i= 1, 2, 3$, and $Q'^0$ is the proper time along such a worldline, the lagrangian can be rewritten as:
\begin{equation}\label{NLagragian}
L_N=\sum_{n=1}^N \left(- m_n c \sqrt{-g'_{\mu \nu}\frac{d x'^{\mu}_n}{d\tau}\frac{d x'^{\nu}_n}{d\tau}} +e_n A'_{\mu}\frac{d x'^\mu_n}{d\tau}\right)\frac{1}{c}\sqrt{-g_{\mu\nu}(Q)\dot{Q}^\mu \dot{Q}^\nu}.
\end{equation}
This looks like \eqref{ClockLagrangian} because it has the factor $\dot{\tau}$. But, for this to be a clock lagrangian, $\tau$ has to be the proper time felt by the system of particles. That is, the curve $Q^\mu$ must be the worldline of the system's center of mass and the primed coordinates $Q'^\mu$ must correspond to the frame in which the total momentum of the system is zero. 

The canonical momentum for the nth particle is
\begin{equation}
    p_{ni}:=\frac{\partial L_N}{\partial \dot{x}_n^i}=\frac{-m_n c(-g_{i\nu}\dot{x}_n^\nu-g_{\nu i}\dot{x}_n^\nu)}{2 c \dot{\tau}}+e_nA_i =m_n\frac{\dot{x}_{n i}}{\dot{\tau}}+e_nA_i.
\end{equation}
The \emph{linear momenta} correspond to the terms $m_n\dot{x}_{ni}/\dot{\tau}=:P_{ni}$. To find the frame of zero total momentum, we should equate the vectorial sum of the linear momenta of all particles to zero, but this is not possible to do in general. Each $\vec{P}_n=(P_n^1,P_n^2,P_n^3)$ is part of a 1-form that lives in the cotangent space of the point the $n$-th particle is at, $x_n$. It does not make sense to sum vectors or 1-forms belonging to different spaces, unless we can somehow identify them with each other. For this, we consider that there exists a region $\Omega$ in the spacetime which contains $x_n$, $\forall n =1,\dots,N$, and such that we can make the approximation
\begin{equation} \label{gmunuApprox}
g_{\mu \nu}(x) \approx g_{\mu \nu}(y), \quad \forall x,y \in \Omega.    
\end{equation}
This guarantees that we can cover the region occupied by the particles with one common coordinate chart in which the metric takes the Minkowski form. From there, it is possible to define total momentum and arrive at a generally covariant notion of center of mass~\cite{Dixon}. We can go through an easier path as follows. Assuming a smooth spacetime, the condition above is always satisfied if the particles are sufficiently close to each other, leading to the alternative approximation:
\begin{equation}\label{xm=xnApprox}
    x_n^\mu \approx x_m^\mu \quad \forall n,m = 1,\dots,N.
\end{equation}
We define the total linear momentum as
\begin{equation}\label{totalMomentumN}
P_i := \sum_n P_{ni}= \sum_n \left[p_{ni}(x_n)-e_n A_i(x_n)\right],\end{equation} which is a well-defined sum because we are assuming \eqref{xm=xnApprox} that implies \eqref{gmunuApprox}. The center of mass coordinates are then defined by the condition $P'_i=\sum_n P'_{ni}= \sum_n \frac{\partial x^\mu}{\partial x'^i}P_{n\mu}=0$\footnote{Here, the $0$-th component of $P_n$ is defined analogously to the formula \eqref{totalMomentumN} with $i=0$, and we are applying the coordinate change rule for this 1-form.}. Validity of \eqref{gmunuApprox} guarantees the condition $P_i'=0$ is generally covariant. That is, if $P^i=\sum_n g^{ij}(x_n)P_{nj}$, 
\begin{align*}
P'^i=\sum_n P'^i_n &=\sum_n  \frac{\partial x'^i}{\partial x^\mu} g^{\mu\nu}(x_n) P_{n\nu}
\\
&=\sum_n  \frac{\partial x'^i}{\partial x^\mu} g^{\mu\nu}\frac{\partial x^\alpha}{\partial x^\nu} P_{n\alpha}=\sum_n  \left(\frac{\partial x'^i}{\partial x^\mu} g^{\mu\nu}\frac{\partial x'^j}{\partial x^\nu}\right)\left(\frac{\partial x^\alpha}{\partial x'^j} P_{n\alpha}\right) =\sum_n g'^{ij}(x_n) P'_{nj},
\end{align*}
and because of \eqref{gmunuApprox} we can take $g^{ij}(x_n)\approx g^{ij}(x_N)$, $\forall n$, out of the sum. Thus, we have
\begin{equation}\label{outofthesum}
    P'^i \approx g'^{ij}(x_N) \sum P'_{nj}=g'^{ij} P'_{j}=0.
\end{equation}

Since we are working with the stronger condition \eqref{xm=xnApprox}, we can take the worldline $Q^\mu$ to be simply the path of one of the particles (instead of using a more complicated definition for center of mass). We can choose, for instance, $Q^\mu=x_N^\mu$. Then, the primed coordinates for which $\dot{Q}^i=0$ can be interpreted as the center of momentum frame, the frame that accompanies the center of mass. An observer in this frame can only see the evolution of the interaction between the particles, while the system as a hole is not moving. Thus, it makes sense to interpret the term in \eqref{NLagragian},
\begin{equation}
    L_{rest}:= \sum_{n=1}^N \left(- m_n c \sqrt{-g'_{\mu \nu}\frac{d x'^{\mu}_n}{d\tau}\frac{d x'^{\nu}_n}{d\tau}} +e_n A'_{\mu}\frac{d x'^\mu_n}{d\tau}\right),
\end{equation}
as the ``rest frame'' lagrangian. Under the approximations discussed, the total lagrangian \eqref{NLagragian} reads \begin{equation} L_N \approx L_{rest} \dot{\tau},\end{equation} coinciding with the lagrangian for a clock. One can also verify that the hamiltonian description returns 
 \eqref{ClockHamiltonian} for the momenta $P^i$, which we leave for reference~\cite{Zych}.

One way to analyze the approximations quantitatively is to verify how much $P^\mu$ differs from $g^{\mu\nu}(x_N)P_\mu$ when the metric varies in the region of the particles. That is, how much sense does it make to take $g^{\mu \nu}$ out of the sum in \eqref{outofthesum}. The difference is given by
$$P^\mu -g^{\mu \nu}(x_N)P_\nu=\sum_n \left(g^{\mu\nu}(x_n)-g^{\mu\nu}(x_N)\right)P_{n\nu}.$$
We can quantify the approximation \eqref{gmunuApprox} by assuming that $$\exists K>0 \text{ such that } \left|g^{\mu\nu}(x_m)-g^{\mu\nu}(x_n)\right|\leq\frac{K}{4N}, \quad \forall \mu, \nu, m, n.$$ Thus, if we define $E_{max}:= \text{max}\{\left|P_{n\mu}(x_n)\right|; n=1,\dots,N;\mu=0,\dots, 3\}$, which quantifies the energy scale of the system, we have
\begin{equation}
\left|P^\mu -g^{\mu \nu}(x_N)P_\nu \right|=\left|\sum_n \left(g^{\mu\nu}(x_n)-g^{\mu\nu}(x_N)\right)P_{n\nu}\right| \leq K E_{max}. \end{equation}

Therefore, the formalism for clocks as point particles with internal degrees of freedom is a good model for this $N$-particle system, provided it describes a narrow worldtube in spacetime and has an energy bound. For smooth spacetimes, we will always be able to find a small enough volume in which to confine the particles so that $K$ is sufficiently small for the approximation to be good. 

\section{Quantum Clocks}
It is customary one turns directly to Quantum Field Theory (QFT) to treat quantum systems on curved spacetimes. General relativity encompasses effects due to velocities and curvature. If one disregards curvature, near-light velocities are the ones generating new effects compared to classical physics. So,
it is natural to think we might be dealing with a high-energy system, making the QFT treatment necessary. However, we can be interested in the regime for quantum systems on a curved spacetime that neither have high velocities nor are in a region of high gravitational energy. In that case, we still need to account for curvature, but it all should work somehow like usual quantum mechanics. Take, for example, quantum clocks that are able to feel time dilation at different heights on Earth~\cite{Hafele1972AroundtheWorldAC, OpticalClocksExp}. How are they described? In this section we discuss, based on references~\cite{LAMMERZAHL,Zych}, the process of first quantization of the formalism for clocks and arrive at a wave equation that includes time dilation effects in weak gravity.

As argued in the beginning of the chapter, the hamiltonian for a free relativistic particle comes directly from the norm of the 4-momentum $p_\mu p^\mu-m^2c^2=0$ and considering that $H=-cp_0$. The clock hamiltonian \eqref{ClockHamiltonian} is not so different. It is obtained from $p_\mu p^\mu-M^2c^2=0$, where $M=H_{rest}/c^2= m+H_{int}/c^2$ is the effective mass of the system. Let us first think of the case with no internal degrees of freedom. In a flat spacetime described by the Minkowski metric, we may use an approach analogous to that used for obtaining the quantum wave equation for a particle: we take the momenta $p_\mu$ into operators $-i\hbar\partial_{\mu}$\footnote{Here, and from now on, we use $\partial_\mu$ to denote $\frac{\partial}{\partial x^\mu}$.} acting on a field  $\varphi_{KG}$ and get the Klein-Gordon equation $\left(-\eta^{\mu\nu}\partial_\mu\partial_\nu+\frac{m^2c^2}{\hbar^2}\right)\varphi_{KG}=0$. For the general relativistic case, we need to consider the minimally coupled Klein-Gordon (KG) equation, that is, we need to substitute the standard partial derivative $\partial_\mu$ by the covariant derivative $D_\mu$, resulting in
\begin{equation}\label{KGeq}
  \hbar^2g^{\mu\nu}D_\mu D_\nu\varphi_{KG}-m^2c^2\varphi_{KG}=0.
\end{equation}
In the low-energy limit, this field can be regarded as a particle in first quantization. The valid inner product for the functions $\varphi_{KG}$ is~\cite{LAMMERZAHL}:
\begin{equation}\label{InnerProductKG}
\braket{\psi|\varphi}=\int\left(\psi^*\varphi-\frac{\hbar}{2m^2c^2}\psi^*\nabla^2\varphi\right)d^3V,
\end{equation}
where $d^3V=\sqrt{\on{det} g_{ij}}d^3x$.

\subsection{A Hamilton operator for a quantum particle in weak gravity}

In parallel to the classical case, we can try to extract a hamiltonian operator for a particle from \eqref{KGeq}. Since isolating $-c p_0$ does not make sense anymore, the trick is to manipulate this equation until a term in the form $i\hbar \partial_t $ appears, isolate it, and identify the hamiltonian operator with the other side of the equality~\cite{LAMMERZAHL}. To continue the calculations, we need to specify a metric we can work with.

Since we are mainly interested in situations of low energies and weak gravity, we may proceed with a simple version of the parametrized post-newtonian metric. It consists of an expansion of a general metric theory, $g_{\mu\nu}$, which reduces to newtonian gravity when its parameters are set to $0$ and it may be used to approximate solutions to Einstein's field equations in the aforementioned regimes~\cite{will_1993}. The non-zero components for that metric are 
\begin{equation}\label{PostNewtMetric}
g_{00}=-\left(
		  1+2\frac{\Phi}{c^2}+2\beta\frac{\Phi^2}{c^4}
        \right) \qquad g_{ij}=\left(
		 1-2\gamma\frac{\Phi}{c^2}
       \right)\delta_{ij},
\end{equation}
and its inverse is given by
\begin{equation}
	g^{00}=-\left(
	1-2\frac{\Phi}{c^2}+2(2-\beta)\frac{\Phi^2}{c^4}
	\right) \qquad	g^{ij}=\left(
	1+2\gamma\frac{\Phi}{c^2}+4\gamma^2\frac{\Phi^2}{c^4}
	\right)\delta^{ij},
\end{equation}
where $\gamma,\beta$ are the parameters and $\Phi=-Gm_{\operatorname{curv}}/x$ is the newtonian potential of the total mass generating the gravitational field.

Now, we can evaluate the Klein-Gordon equation \eqref{KGeq}. 
The following are open calculations for the steps outlined in reference \cite{LAMMERZAHL} until equation (8), for a vanishing electromagnetic field.

The covariant derivative acts on a vector as
\begin{equation*}
D_\mu T^\nu=\partial_\mu T^\nu+\Gamma^\nu_{\mu\sigma}T^\sigma,
\end{equation*}
where
\begin{equation*}
\Gamma^\nu_{\mu\sigma}=
\frac{1}{2}g^{\nu\rho}
(\partial_\mu g_{\sigma\rho}+\partial_\sigma g_{\mu\rho}-\partial_\rho g_{\mu\sigma})
\end{equation*}
are the Christoffel symbols. Its action on a scalar is just a partial derivative, $D_\mu\varphi=\partial_\mu\varphi$. When applied to a 1-form, it returns
\begin{equation*}
D_\mu T_\nu=\partial_\mu T_\nu-\Gamma^\sigma_{\mu\nu}T_\sigma.
\end{equation*}
Thus, the operator in the first term of the KG equation~(\ref{KGeq}) is
\begin{equation*}
D_\mu D_\nu=D_\mu \partial_\nu=\partial_\mu \partial_\nu-\Gamma^\sigma_{\mu\nu}\partial_\sigma,
\end{equation*}
and the equation can be rewritten as
\begin{equation}\label{KGeqnew}
\hbar^2g^{\mu\nu}\partial_\mu \partial_\nu\varphi_{KG}
-\hbar^2g^{\mu\nu}\Gamma^\sigma_{\mu\nu}\partial_\sigma\varphi_{KG}
-m^2c^2\varphi_{KG}=0.
\end{equation}

With the metric we are considering, most of the Christoffel symbols will not be needed in the expression above. Here, we compute those that we will indeed be used:
\begin{align}\label{ChrisSymb}
\Gamma_{00}^0&=g^{00}
             \left(
               -\frac{\partial_t\Phi}{c^3}-\frac{2\beta \Phi\partial_t \Phi}{c^5}
             \right),\nonumber
\\
\Gamma_{ij}^0&=\delta_{ij}g^{00}\frac{\gamma \partial_t\Phi}{c^3},\nonumber
\\
\Gamma_{00}^i&=g^{ij}
              \left(
                \frac{\partial_j \Phi}{c^2}+\frac{2\beta \Phi\partial_j \Phi}{c^4}
              \right),
\\
\Gamma^k_{ij}&=\delta_{ij}g^{kl}\frac{\gamma}{c^2}\partial_l\Phi \ , \ \ \ \ \ \text{for} \ k=i,j,\nonumber
\\
\Gamma^k_{ij}&=-\delta_{ij}g^{kl}\frac{\gamma}{c^2}\partial_l\Phi \ , \ \ \ \ \ \ \ \ \text{for} \ k\neq i,j.\nonumber
\end{align}
Note that
\begin{equation*}
g^{ij}\Gamma_{ij}^k=-
  \left(
    1+2\gamma\frac{\Phi}{c^2}+4\gamma^2\frac{\Phi^2}{c^4}
  \right)
g^{kl}\frac{\gamma}{c^2}\partial_l\Phi.
\end{equation*}

We draw attention to the fact that $\partial_0=(1/c)\partial_t$, while $\partial_i$ adds no explicit dependence on $c$. Expanding the sums in equation~(\ref{KGeqnew}) we have 
\begin{multline*}
\frac{\hbar^2g^{00}}{c^2}\partial_t\partial_t\varphi_{KG}
+\hbar^2g^{ij}\partial_i\partial_j\varphi_{KG}
-\frac{\hbar^2g^{00}}{c}\Gamma_{00}^0\partial_t\varphi_{KG}
-\frac{\hbar^2g^{ij}}{c}\Gamma_{ij}^0\partial_t\varphi_{KG}
\\
-\hbar^2g^{00}\Gamma_{00}^i\partial_i\varphi_{KG}
-\hbar^2g^{ij}\Gamma_{ij}^k\partial_k\varphi_{KG}
-m^2c^2\varphi_{KG}=0. \nonumber
\end{multline*}
Inserting the Christoffel symbols~(\ref{ChrisSymb}) and going up to order $1/c^4$, we find that 
\begin{align}\label{KGeqc2}
&\frac{\hbar^2g^{00}}{c^2}\partial_t\partial_t\varphi_{KG}
+\hbar^2g^{ij}\partial_i\partial_j\varphi_{KG}
+\hbar^2\left(\left(g^{00}\right)^2
-g^{ij}\delta_{ij}g^{00}\gamma 
 \right)\frac{\partial_t\Phi}{c^4}\partial_t\varphi_{KG} 
\\
-\hbar^2&g^{00}g^{ij}\left(\frac{\partial_j\Phi}{c^2}+\frac{2\beta \Phi\partial_j \Phi}{c^4}\right)\partial_i\varphi_{KG}-\hbar^2\left(
\frac{\gamma}{c^2}+2\gamma^2\frac{\Phi}{c^4}
\right)
g^{kl}(\partial_l\Phi)\partial_k\varphi_{KG}
-m^2c^2\varphi_{KG}=0. \nonumber
\end{align}

We can see that this equation is in powers of $c^2$, meaning we can try to solve it perturbatively. The following ansatz from~\cite{Kiefer1991} is presented in reference~\cite{LAMMERZAHL}:
\begin{equation}\label{ansatz}
\varphi_{KG}(x)=\exp
  \left(
    \frac{i}{\hbar}
      \left[
        c^2S_0(x)+S_1(x)+c^{-2}S_2(x)+...
      \right]
  \right),
\end{equation}
where $S_n(x)$ are scalar functions.

\hspace{2.5cm}\textcolor[RGB]{220,180,200}{\rule{8cm}{0.2pt}}

\noindent $\mathbf{c^4}:$ The lowest order term that emerges when we substitute the ansatz in \eqref{KGeqc2} is of order $c^4$. The equation for it reads:
\begin{align*}
\hbar^2g^{ij}\partial_i\partial_j e^{\frac{i}{\hbar}c^2S_0(x)}
&=\hbar^2g^{ij}
  \left(
    \frac{ic^2\partial_i\partial_j S_0(x)}{\hbar}
    -\frac{c^4(\partial_iS_0(x))(\partial_jS_0(x))}{\hbar^2}
  \right)
e^{\frac{i}{\hbar}c^2S_0(x)}
\\
&\approx -\frac{c^4}{\hbar^2}\sum_{i=1}^{3}(\partial_iS_0(x))^2
\exp \left(\frac{i}{\hbar}c^2S_0(x)\right)=0,
\end{align*}
where we considered only the first order approximation of $g^{ij}\approx 1$ to get from the first to the second line. From this expression, we obtain 
\begin{equation}
\sum_{i=1}^{3}(\partial_iS_0(x))^2
=
\nabla S_0(x) \cdot \nabla S_0(x)=0 \implies 
\nabla S_0(x)=0,
\end{equation}
where $\nabla$ is the 3D gradient. We can thus conclude that $S_0=S_0(t)$ depends only on $t$.

\noindent\textcolor[RGB]{220,180,200}{\rule{\linewidth}{0.2pt}}

\noindent $\mathbf{c^2:}$ The next order, $c^2$, gives us 
\begin{equation*}
\frac{\hbar^2g^{00}}{c^2}\partial_t\partial_t \exp \left(\frac{i}{\hbar}c^2S_0(t)\right)
-
m^2c^2\exp \left(\frac{i}{\hbar}c^2S_0(t)\right)=0.
\end{equation*}
Evaluating the derivative and considering only terms of order $c^2$, with $g^{00}\approx -1$, 
\begin{equation}
\left[c^2(\partial_tS_0(t))^2
-
m^2c^2\right]\exp \left(\frac{i}{\hbar}c^2S_0(t)\right)=0 \implies
\partial_tS_0(t)=\pm m.
\end{equation}
Let us choose the solution $S_0=-mt$. With $S_0$ determined, the ansatz~(\ref{ansatz}) becomes
\begin{equation}
\varphi_{KG}=\exp\left(\frac{i}{\hbar}\left[-mc^2t+S_1(x)+c^{-2}S_2(x)+...\right]\right)
=:
\varphi_0\varphi_1\varphi_2...
\end{equation}
where we defined $\varphi_0:=e^{-\frac{i}{\hbar}mc^2t}$, $\varphi_1:=e^{\frac{i}{\hbar}S_1}$, $\varphi_2:=e^{\frac{i}{\hbar c^2} S_2}$, and so on.

\noindent\textcolor[RGB]{220,180,200}{\rule{\linewidth}{0.2pt}}

\noindent $\mathbf{c^0:}$ Now we will look into the terms of order up to $1$ in equation~(\ref{KGeqc2}). They are
\begin{align*}
\frac{\hbar^2g^{00}}{c^2}\partial_t\partial_t\varphi_{KG}
&=
(m^2c^2-2m^2\Phi-2m\partial_t S_1)\varphi_{KG}
\\
&=
(m^2c^2-2m^2\Phi)\varphi_{1}\frac{\varphi_{KG}}{\varphi_1}+2i\hbar m(\partial_t\varphi_1)\frac{\varphi_{KG}}{\varphi_1},
\\
\hbar^2g^{ij}\partial_i\partial_j\varphi_{KG}
&=\hbar^2(\bigtriangleup\varphi_{1})\frac{\varphi_{KG}}{\varphi_1},
\\
-m^2c^2\varphi_{KG}
&=-m^2c^2\varphi_{1}\frac{\varphi_{KG}}{\varphi_1},
\end{align*}
where $\bigtriangleup$ is the laplacian operator in 3D. All the other terms would lead to order $c^{-2}$ or higher. Summing all of these and setting their sum to zero, we obtain
\begin{equation*}
i\hbar\partial_t\varphi_1=-\frac{\hbar^2}{2m}\bigtriangleup\varphi_1+m\Phi\varphi_1.
\end{equation*}
Defining $p=- i \hbar\nabla$, we have that $p^2=-\hbar^2\bigtriangleup$ and the above equation becomes
\begin{equation}
i\hbar\partial_t\varphi_1=\frac{p^2}{2m}\varphi_1+m\Phi\varphi_1,
\end{equation}
which is equation (7) from reference~\cite{LAMMERZAHL}. This is just the particle quantum wave equation with newtonian potential for $\varphi_1$, from which we get a first approximation with no relativistic terms for the hamiltonian operator\footnote{We will almost always omit identities to keep the calculations clean in this section. It is assumed we are working with operators. For instance, p symbolizes the momentum operator and $\Phi$ means $\Phi \mathds{1}$.}, $H\approx p^2/(2m)+m\Phi\mathds{1}$.

Before proceeding to the next order, we derive the equality below for later:
\begin{align}
-\hbar^2\partial_t\partial_t\varphi_1
&=
i\hbar\partial_t(i\hbar\partial_t\varphi_1)
=
i\hbar\partial_t
\left[
\left(
  \frac{p^2}{2m}+m\Phi
\right)\varphi_1
\right]\nonumber
\\
&=\left(
\frac{p^2}{2m}+m\Phi
\right)\left(
\frac{p^2}{2m}+m\Phi
\right)\varphi_1
+i\hbar m(\partial_t \Phi)\varphi_1 \nonumber
\\
&=
\left(
\frac{p^4}{4m^ 2}+\Phi p^2+\frac{p^2\Phi}{2}-i\hbar\nabla \Phi\cdot p
+m^2\Phi^2+i\hbar m\partial_t\Phi
\right)\varphi_1.
\label{delttphi1}
\end{align}

\noindent\textcolor[RGB]{220,180,200}{\rule{\linewidth}{0.2pt}}

\noindent $\mathbf{c^{-2}:}$ In the following, we denote 
\begin{equation}
    \varphi:=\varphi_1 \exp{\left(\frac{iS_2}{\hbar c^2}\right)} = \varphi_1 \varphi_2,
\end{equation}
which, up to order $c^{-2}$, is also equal to  $\varphi_{KG}/\varphi_0$.
The first term up to order $c^{-2}$ in equation~(\ref{KGeqc2}) is
\begin{align*}
\frac{\hbar^2g^{00}}{c^2}\partial_t\partial_t\varphi_{KG}
&=
\frac{\hbar^2g^{00}}{c^2}[(\partial_t\partial_t\varphi_0)\varphi
+2(\partial_t\varphi_0)(\partial_t\varphi)
+\varphi_0(\partial_t\partial_t\varphi)]
\\
&=\frac{\hbar^2g^{00}}{c^2}
\left(
  \frac{-m^2c^4}{\hbar^2}\varphi_{KG}
  -\frac{2imc^2}{\hbar}\varphi_0\partial_t\varphi
  +\varphi_0(\partial_t\partial_t\varphi_1)\varphi_2
\right)
\\
&=
g^{00}
\left(
  -m^2c^2\varphi_{KG}-2im\hbar\varphi_0(\partial_t\varphi)
+\frac{1}{c^2}\varphi_0(\hbar^2\partial_t\partial_t\varphi_1)\varphi_2
\right),
\end{align*}
where from the first to the second line we neglected the time derivatives of $\varphi_2$, since they lead to higher order terms. The second term of~(\ref{KGeqc2}) up to $c^{-2}$ is
\begin{equation}
\hbar^2g^ij\partial_i\partial_j\varphi_{KG}
=
\left(
  1+2\gamma\frac{\Phi}{c^2}
\right)
\varphi_0\hbar^2\bigtriangleup\varphi
=
-\left(
1+2\gamma\frac{\Phi}{c^2}
\right)
\varphi_0p^2\varphi.
\end{equation} 

\noindent The third term becomes
\begin{equation}
\hbar^2\left(\left(g^{00}\right)^2
-
g^{ij}\delta_{ij}g^{00}
\right)\frac{\partial_t\Phi}{c^4}\partial_t\varphi_{KG}
=
-\frac{im\hbar}{c^2}
(1+3\gamma)(\partial_t\Phi)\varphi_{KG},
\end{equation}
where we considered $g^{00}=-1$ and $g^{ij}\delta_{ij}=1\cdot\delta^{ij}\delta_{ij}=3.$ The fourth and fifth terms are
\begin{equation}
-\hbar^2g^{00}g^{ij}\left(\frac{\partial_j\Phi}{c^2}+\frac{2\beta \Phi\partial_j \Phi}{c^4}\right)\partial_i\varphi_{KG}
=\frac{i\hbar\varphi_0}{c^2}\nabla \Phi\cdot p \varphi,
\end{equation}

\begin{equation}
-\hbar^2\left(
\frac{\gamma}{c^2}+2\gamma^2\frac{\Phi}{c^4}
\right)
g^{kl}(\partial_l\Phi)\partial_k\varphi_{KG}
=-
\frac{i\hbar\gamma\varphi_0}{c^2}\nabla \Phi\cdot p \varphi.
\end{equation}

\noindent And, finally, the last term remains $-m^2c^2\varphi_{KG}$.

 Summing all of these terms and multiplying the equation by $g_{00}/\varphi_0$ we have 
\begin{align}
-m^2c^2\varphi-&2im\hbar(\partial_t\varphi)
+\frac{1}{c^2}(\hbar^2\partial_t\partial_t\varphi_1)\varphi_2
-g_{00}\left(
1+2\gamma\frac{\Phi}{c^2}
\right)
p^2\varphi \nonumber
\\ \nonumber
\\
&\hspace{0.5cm}-g_{00}\frac{im\hbar}{c^2}
(1+3\gamma)(\partial_t\Phi)\varphi
+ g_{00}i\hbar\frac{(1-\gamma)}{c^2}\nabla \Phi\cdot p\varphi
-g_{00}m^2c^2\varphi=0.
\end{align} 
Manipulating this equation further, one gets
\begin{align*}
&-m^2c^2\varphi-2im\hbar(\partial_t\varphi)
+\frac{1}{c^2}(\hbar^2\partial_t\partial_t\varphi_1)\varphi_2
+\left(
1+2\frac{\Phi}{c^2}
\right)\left(
1+2\gamma\frac{\Phi}{c^2}
\right)
p^2\varphi
\\
&-(-1)\frac{i m\hbar}{c^2}
(1+3\gamma)(\partial_t\Phi)\varphi
+(-1) i\hbar\frac{(1-\gamma)}{c^2}\nabla \Phi\cdot p\varphi
+\left(
1+2\frac{\Phi}{c^2}+2\beta\frac{\Phi^2}{c^4}
\right)m^2c^2\varphi=0;
\nonumber
\\
\\
&-i\hbar(\partial_t\varphi)
+\frac{1}{2mc^2}(\hbar^2\partial_t\partial_t\varphi_1)\varphi_2
+\left(
1+2(1+\gamma)\frac{\Phi}{c^2}
\right)
\frac{p^2}{2m}\varphi
\\
&+\frac{i\hbar}{2c^2}
(1+3\gamma)(\partial_t\Phi)\varphi
-\frac{i\hbar(1-\gamma)}{2mc^2} \nabla \Phi \cdot p \varphi
+\left(
+m\Phi+\frac{\beta m\Phi^2}{c^2}
\right)\varphi=0;\nonumber
\\
\\
&i\hbar\partial_t\varphi=
\frac{1}{2mc^2}(\hbar^2\partial_t\partial_t\varphi_1)\varphi_2
\\
&+\left(
\frac{p^2}{2m}+m\Phi+(1+\gamma)\frac{\Phi}{mc^2}p^2
+\frac{\beta m\Phi^2}{c^2}
+\frac{i\hbar(1+3\gamma)}{2c^2}\partial_t\Phi
-\frac{i\hbar(1-\gamma)}{2mc^2}\nabla \Phi\cdot p
\right)\varphi.
\nonumber
\end{align*}

Inserting~(\ref{delttphi1}) in the equation above, we obtain
\begin{align*}
&i\hbar\partial_t\varphi=
\left(
-\frac{p^4}{8m^3c^2}-\frac{\Phi p^2}{2mc^2}-\frac{p^2\Phi}{4mc^2}+\frac{i\hbar}{2mc^2}\nabla \Phi\cdot p
-\frac{m\Phi^2}{2c^2}-\frac{i\hbar \partial_t\Phi}{2c^2}
\right)\varphi
\\
&+\left(
\frac{p^2}{2m}+m\Phi+(1+\gamma)\frac{\Phi}{mc^2}p^2
+\frac{\beta m\Phi^2}{c^2}
+\frac{i\hbar(1+3\gamma)}{2c^2}\partial_t\Phi
-\frac{i\hbar(1-\gamma)}{2mc^2}\nabla \Phi\cdot p
\right)\varphi,
\nonumber
\end{align*}
giving us the expression
\begin{align}
i\hbar\partial_t\varphi=
\Bigg(\frac{p^2}{2m}-\frac{p^4}{8m^3c^2}&+m\Phi
+\left(\gamma+\frac{1}{2}\right)\frac{\Phi}{mc^2}p^2 
\\
&-\left(\frac{1}{2}-\beta\right)\frac{m\Phi^2}{c^2}+\frac{3 i\hbar\gamma}{2c^2} \partial_t \Phi +\frac{i\hbar\gamma}{2mc^2}\nabla \Phi\cdot p
+\frac{\hbar^2\bigtriangleup \Phi}{4mc^2}
\Bigg)\varphi, \nonumber
\end{align}
which corresponds to equation (8) from~\cite{LAMMERZAHL}. 
The expression in parenthesis acts as a Hamiltonian $H$ for $\varphi$. However, the probabilities of this theory are acquired using the inner product in \eqref{InnerProductKG}. We can make a transformation such that the product turns into $\braket{\varphi|\psi}=\int(\varphi')^*\psi'd^3x$, the familiar form of quantum particle mechanics. The transformations for $\varphi$ and $H$ are given by~\cite{LAMMERZAHL}\footnote{The transformation for $\varphi$ here is generally time-dependent because $\Phi$ can be time-dependent. This causes the transformation of the Hamiltonian to have an extra term proportional to $\partial_t \Phi$, because we demand that that $H' \varphi'=i\hbar\partial_t\varphi'$. Writing $\varphi'=\operatorname{T}(\varphi)$, we can manipulate the expression $H \varphi= i\hbar \partial_t \varphi$ to get $H'= \operatorname{T}H\operatorname{T}^{-1}+i \hbar(\partial_t \operatorname{T})\operatorname{T}^{-1}$, getting the extra term. We point out the sign of the exponent $1/4$ of this last term in equation~\eqref{Hprimef} is switched in the original reference~\cite{LAMMERZAHL}. We believe that to be a typo, since the reasoning above as well as the expression that follows in the paper itself~\eqref{HamiltonianOperator} agree with what we used for $H'$.}:
\begin{align}
\varphi'&=\left(1+\frac{p^2}{m^2c^2}\right)^{1/4}(\on{det}g_{ij})^{1/4} \varphi
    \\
    H'&= \left(1+\frac{p^2}{m^2c^2}\right)^{1/4} (\on{det}g_{ij})^{1/4} H \left(1+\frac{p^2}{m^2c^2}\right)^{-1/4}(\on{det}g_{ij})^{-1/4} + i\hbar\partial_t (\on{det}g_{ij})^{1/4}.\label{Hprimef}
\end{align}
Calculating the expression above up to order $c^{-2}$, we get to the final expression for $H'$:
\begin{equation}\label{HoperatorLamm}
    H'=\frac{p^2}{2m}+m\Phi-\frac{p^4}{8m^3c^2}
+\frac{2\gamma+1}{2mc^2}\left(\Phi p^2-i\hbar \nabla \Phi \cdot p\right) -\left(\frac{1}{2}-\beta\right)\frac{m\Phi^2}{c^2} - 3\gamma\frac{\hbar^2\bigtriangleup\Phi}{4c^2m}.
\end{equation}

\hspace{2.5cm}\textcolor[RGB]{220,180,200}{\rule{8cm}{0.2pt}}

If $\varphi$ obeys the quantum wave equation with relativistic corrections with this hamiltonian, then $\varphi_0 \varphi = e^{-\frac{i}{\hbar}m c^2 t} \varphi \approx \varphi_{KG}$ up to this order obeys $i\hbar \partial_t\varphi_{KG}\approx H_{fp} \varphi_{KG}$ with $H_{fp}$ being the operator $H'$ in \eqref{HoperatorLamm} plus the term $mc^2 \mathds{1}$. Moreover, the post-newtonian metric for general relativity is characterized by choosing $\beta=\gamma=1$. The explicit form of the metric, which is given in~\ref{PostNewtMetric}, when we consider these parameters turns into the Schwarzschild metric in isotropic coordinates in the approximation of low gravitational potential~\cite{will_1993}. We can simplify the expression by also identifying when $-i\hbar \nabla= p $ acts on $\Phi$. Given these observations, the hamiltonian operator found for a general relativistic free particle in the regime of weak gravity is
\begin{equation}\label{HamiltonianOperator}
    H_{fp} = mc^2 + \frac{p^2}{2m}  +m\Phi -\frac{p^4}{8m^3c^2} + \frac{1}{2}\frac{m\Phi^2}{c^2}  +\frac{3}{2mc^2}\left(\Phi p^2-\left[p\Phi\right]\cdot p+\frac{1}{2}\left[p^2 \Phi \right]\right),
\end{equation}
where we omitted the identity operator $\mathds{1}$ next to the scalar terms and the brackets indicate the terms $\left[p\Phi\right]$ and $\left[p^2\Phi\right]$ result from the application of $p$ on $\Phi$ which is then multiplied by the input function, rather than a composition of $p$ with $\Phi \mathds{1}$.

\subsection{First quantization for clocks} \label{subsecFirstQClocks}
As in the classical case we saw in \ref{subsecHamiltonian}, the hamiltonian for a particle can be used for a clock, a particle with internal degrees of freedom, if we consider that the system has mass $M=H_{rest}/c^2$~\cite{Zych}. Now, in the quantum case, we can consider the corresponding rest hamiltonian operator $$H_{rest}=mc^2 \mathds{1}_{\mathcal{H}_{int}} +H_{int},$$ with $H_{int}$ being the operator describing the internal clock dynamics aside from the mass term. Let us call the Hilbert space on which $H_{rest}$ acts $\mathcal{H}_{int}$. In particular, if the internal system is in a pure state  $\ket{E_i}\in\mathcal{H}_{int}$ that is an eigenstate of $H_{rest}$, $H_{rest}\ket{E_i}=E_i\ket{E_i}$, the total system's hamiltonian operator is that of a free particle, but with with a modified mass. The complete system is described in a space $\mathcal{H}_{int}\otimes \mathcal{H}_{ext}$, with $\mathcal{H}_{ext}$ comprising external degrees of freedom. If we rewrite the hamiltonian operator \eqref{HamiltonianOperator} with $$M:=\left(m\mathds{1}_{\mathcal{H}_{int}}+H_{int}/c^2\right)\otimes \mathds{1}_{\mathcal{H}_{ext}}$$ in place of $m$, we get, to first order in $H_{int}/(mc^2)$,
\begin{equation}\label{ClockHamiltonianOperator}
H_{clock}\approx H_{cm} + H_{int} \otimes \left(  \mathds{1}_{\mathcal{H}_{ext}}+\frac{\Phi}{c^2}\mathds{1}_{\mathcal{H}_{ext}}-\frac{p^2}{2m^2c^2}\right),\end{equation}
where the dynamics of the center of mass is given by
\begin{align*}
H_{cm}=mc^2 &\left(\mathds{1}_{\mathcal{H}_{int}}\otimes \mathds{1}_{\mathcal{H}_{ext}}\right) 
\\ + &\mathds{1}_{\mathcal{H}_{int}} \otimes \left[\frac{p^2}{2m} +m\Phi -\frac{p^4}{8m^3c^2}  + \frac{1}{2}\frac{m\Phi^2}{c^2}  +\frac{3}{2mc^2}\left(\Phi p^2-\left[p\Phi\right]\cdot p+\frac{1}{2}\left[p^2 \Phi \right]\right)\right],
\end{align*}
which is basically \eqref{HamiltonianOperator} except for the considerations on the extended space. So, this is the final hamiltonian operator for a clock in a low energy, weak gravity limit that is nonetheless relativistic.

In fact, we encountered the classical version of this hamiltonian in the way to find the newtonian limit for a clock in \eqref{leHamiltonian}. If we rewrite that classical expression to first order in $H_{int}/(mc^2)$, we arrive at 
\begin{equation}\label{HleComparison}
    H_{le}\approx H_{cm} + H_{int}\left(1+\frac{\Phi}{c^2}-\frac{p^2}{2m^2c^2}\right)\qquad H_{cm}=mc^2+\vec{p}^2/2m+m\Phi.
\end{equation} The term in parenthesis is simply $\dot{\tau}$ up to order $1/c^2$, indicating that we can write a routhian \eqref{leRouthian} as $H_{rest}\dot{\tau}$. The interaction terms between $H_{int}$ and variables associated to external degrees of freedom describe time dilation, a fact we can check by evaluating internal observables in the routhian formalism. 

    The quantum treatment is pretty much analogous. Since the hamiltonian operator~\eqref{ClockHamiltonianOperator} has virtually the same form of the classical one \eqref{HleComparison} for this approximation, we can describe the system with the routhian, now an operator, $H_{rest}\dot{\tau}$, which acts like a hamiltonian for the internal portion of the system. It generates a quantum wave equation for an observer looking at the system with respect to their coordinate time, $i\hbar \partial_t=H_{rest}\dot{\tau}$. And, in the rest frame, $i\hbar \partial_{\tau}=H_{rest}$. If we evaluate the commutator of an internal observable with the routhian, it will give us time dilation in the Heisenberg picture, in analogy with \eqref{ObsEvol}. Thus, the internal state of a quantum clock in this regime evolves according to proper time and it is time dilated depending on region of spacetime or velocity, in the same proportion as a classical system. In particular, from an observer's perspective, the time dilation that applies to a quantum clock's internal evolution at low energies and weak gravity is the same as that which applies to a classical clock, provided they describe the same path on spacetime. That is why we can use quantum clocks to perform classical tests of relativistic time dilation. However, note that the external degrees of freedom of a quantum clock are also quantized, and thus they can be indefinite, for instance the clock could be in a superposition of distinct velocities or in a superposition of being at locations that have distinct curvatures, and produce quantum superpositions of distinctly dilated time evolutions. 

\chapter{Quantum switch in a quantum spacetime}\label{Chap:QGravitySwitch}
  If we go back to the motivations behind the process matrix formalism, the situations of interest where indefinite causal structure is originally expected to appear should lie in the regime in which both General Relativity and Quantum Theory are relevant. That is because even though GR and QT both have definite causality, the causal relation between a pair of events in GR is determined by the distribution of matter and energy in their past lightcones. So, if matter presents significant quantum behavior and both theories are valid, this should affect causal relations. We expect that the framework of Quantum Theory was sufficiently generalized by the process formalism so that we can now explore how quantum behavior of gravity could potentially affect causal structure. But, concretely, how could a quantum spacetime induce the application of a process with indefinite order? We can start to address this by examining simple examples of processes that are already studied in the field of indefinite orders, like the quantum switch. As suggested in reference~\cite{quantumCausality}, we can think of the quantum switch as a toy-model to describe a spacetime that is itself in a quantum superposition. This kind of exploration might provide insight for causal models, as well as for generally understanding how order works in the interface between quantum mechanics and gravity. In this chapter, we present the first theoretical proposal of indefinite order induced by a superposition state of a massive object's gravity, the gravitational quantum switch of reference~\cite{tbell}. 
 
 Since we have no theory of Quantum Gravity, the description of such a protocol has its obstacles. For instance, some frameworks assume that a spatial superposition state of a sufficiently massive body would not live long enough to present any considerable effects~\cite{Diosi,Penrose,Scully2018}, while several approaches indeed consider the metric can present quantum features~\cite{Kiefer}. Even so, it is a challenge to define states and their evolution in the absence of a classical spacetime manifold in which we can separate time from space with a foliation. Instead of choosing some specific quantum gravity framework, the authors approach this by making only a few minimal assumptions, which concern the joint validity of the quantum superposition principle and of time dilation. Therefore, their results should hold in any theory that agrees with their way to join these features. In their approach, at low energy and for a weak gravitational field, the quantum formalism of chapter \ref{Chap Clocks} is used to characterize events in a quantum spacetime.
 
Finally, the authors also formulate a Bell's theorem for temporal order, which shows that if Bell's inequalities are violated in a specific task under certain theory-independent\footnote{The theory independence might be under analysis, as we comment in the end of the chapter.} assumptions, then temporal order is not described by classical variables. Afterwards, they demonstrate that two copies of the gravitational quantum switch could be used to violate a Bell inequality in that setting, basically showing that order must be non-classical, like a quantum variable, in theories conforming to their basic treatment.

\section{Operational events in spacetime}\label{operationalEvents}

As discussed at the end of chapter \ref{Chap Process Matrix}, specifying how we treat events/laboratories might be useful to discern in what sense causality is challenged by the realization of a process with indefinite order in each context. Specially in the case of a quantum spacetime, the lack of a well-defined underlying manifold makes it difficult not only to talk about causality and processes, but even to generally describe a protocol on it. To overcome this, the authors of~\cite{tbell} adopt a ``physical'' interpretation of events, meaning that they are specified operationally relative to physical systems used as clocks. An example of operational event A is the location in which a chosen clock measures some specific proper time $\tau_{a}$. The causal relations are defined as usual, A$\prec$B if A can send a signal to B, which in GR corresponds to sending a system that travels no faster than light. 

For classical clocks, we showed how velocity and curvature induce universal time dilation in subsection \ref{subsecRouthian}. Therefore, the presence of a mass near a clock delays operational events (in comparison with the virtual situation in which the mass is not present). Furthermore, the relations between events can also differ for distinct mass configurations. Let us consider an elementary example to illustrate this.

    Let $\on{a}$ and $\on{b}$ represent two agents at fixed positions with respect to a coordinate system, each one in possession of a clock, and let the clocks be initially synchronized. Suppose that a third agent is able to place a point-like body of mass $M$ in one of two locations, causing distinct time dilations on the clocks as shown in Fig.~\ref{fig:ZychLightcones}. If we define event A as the clock of agent $\on{a}$ attaining proper time $\tau_a=\tau^*$ and event B as the clock of $\on{b}$ showing $\tau_b=\tau^*$, then the causal relation between A and B can be different depending on where the mass is placed. Let us write an approximation of the metric around the mass M in the form:\begin{equation}\label{metricPostNc2}
g_{00}=-\left(1+2\frac{\Phi(r)}{c^2}\right), \qquad g_{ij}=\delta_{ij}\left(1+2\frac{\Phi(r)}{c^2}\right)^{-1}.
\end{equation}Similarly to the one we used in chapter~\ref{Chap Clocks}, this metric represents a weak field approximation for the Schwarzschild metric in isotropic coordinates~\cite{will_1993}, where $\Phi= -GM/r$ is the newtonian potential and $r$ is the distance from the clock to the mass $|r-r_M|$. Although the placing of the massive object generally changes the coordinate description, the authors assume that these spatial quantities can be defined operationally with respect to the coordinates of a distant observer, who does not feel the difference between one and the other mass configuration. Thus, we assume we are working with the coordinates of an observer at infinity. For simplicity, we also assume the clocks and the mass are aligned so that we can work with only one spatial dimension. Let us call the two mass configurations $\on{K_{A\prec B}}$ and $\on{K_{B\prec A}}$, as labeled in Fig.~\ref{fig:ZychLightcones}. We can see that in $\on{K_{A\prec B}}$ time runs slower for agent $\on{b}$. Therefore, there must exist some pair of events A and B defined in terms of the clocks measuring $\tau^*$ such that A is in the past lightcone of B. For instance, Fig.~\ref{fig:ZychLightcones} represents the ticks of the clocks with little dots. If A and B are defined as the third dot on $\on{a}$ and $\on{b}$'s clocks, that is, $\tau^* = 3$ ticks, then we have A$\prec$B.

\begin{figure}
    \centering
    \includegraphics[scale=0.35]{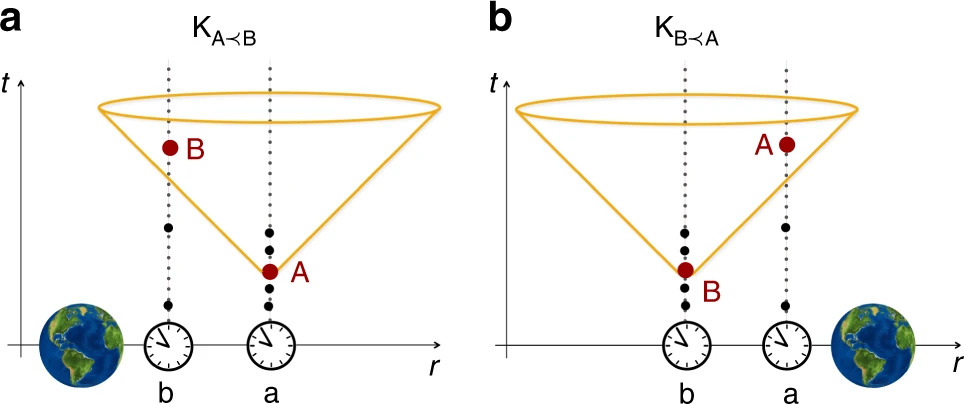} 
    \caption{An example of how mass distribution influences causal order between operational events in GR. \textbf{a} - Configuration $\on{K_{A\prec B}}$ consists of a massive body closer to agent $\on{b}$, therefore $\on{b}$ experiences stronger time dilation than $\on{a}$. Let us say time is measured in clock ticks, and both clocks have the same internal mechanism. The ticks are then represented by the black dots in the worldlines of the agents. If we define events A and B as the clocks $\on{a}$ and $\on{b}$ measuring proper time $\tau^*=3$ ticks, we have A$\prec$B. \textbf{b} -  If the agents were in configuration $\on{K_{B\prec A}}$ we would have B$\prec$A for the same definition. Thus, the causal relation between A and B generally depends on the placement of the mass done in their causal past. Figure from~\protect\cite{tbell}.}
    \label{fig:ZychLightcones}
\end{figure}

We can calculate the general condition for which event A enters the past lightcone of event B in configuration $\on{K_{A\prec B}}$. If agent $\on{a}$ sends a light signal to agent $\on{b}$ at event A, namely when $\tau_a=\tau^*$, the time taken for it to travel between the agents reads
\begin{equation}\label{lightTravelt}
t_c=\int dt =\int\frac{dt}{dr'}dr'=\frac{1}{c}\int_{r_a}^{r_b}\sqrt{\frac{-g_{rr}(r')}{g_{00}(r')}}dr', 
\end{equation}
where the last equality comes from the fact that, for a photon, $$ds^2\equiv g_{00}c^2dt^2+g_{rr}dr'^2=0.$$ The quantity $t_{c}$ is a coordinate time interval with respect to the coordinate system we chose for writing the metric. The agents will perceive this with a time dilation given by $d\tau/dt=\sqrt{-g_{00}}$. Thus, for agent a, the photon arrives at $\on{b}$'s position at proper time $\tau_{af}=\tau^*+\sqrt{-g_{00}(r_a)} t_c$. Setting $\tau_a=\tau_b=t=0$ in the beginning of the experiment, b perceives the photon's arrival at 
\begin{equation}\label{tbFinal}
    \tau_{bf}=\sqrt{-g_{00}(r_b)}\Delta t = \sqrt{-g_{00}(r_b)} \frac{(\tau_a^{f}-0)}{\sqrt{-g_{00}(r_a)}}= \sqrt{-g_{00}(r_b)} \left(\frac{\tau^*}{\sqrt{-g_{00}(r_a)}}+t_c\right).
\end{equation}
Then, event A being in the past lightcone of B translates to $\tau_{bf}<\tau^*$, and that condition is satisfied if
\begin{equation}
    \tau^*>t_c \frac{-g_{00}(r_b)}{1-\sqrt{\frac{g_{00}(r_b)}{g_{00}(r_a)}}}.
\end{equation}
This reasoning can be applied analogously for configuration $\on{K_{B\prec A}}$, with B entering the past lightcone of A. Since the events are timelike separated  for both classical cases, every observer of each spacetime agrees on the respective time orderings: A happens before B for $\on{K_{A\prec B}}$ and B happens before A for $\on{K_{B\prec A}}$. This illustrates the dynamical structure allowed in GR: the causal relation between two events can be a consequence of how a third agent prepared the mass configuration in their causal past.

If we consider that operational events can also be defined with respect to clocks that are quantum, this generates other consequences. For instance, they can be in quantum superposition of paths on a definite spacetime feeling distinct time dilations \cite{Zych2011,Zych_2012,Pikovski2015,Roura,Esteban2020quantum}. One can imagine this interpretation of events could possibly generate indefinite causality even when a classical spacetime description is available. Indeed, the next chapter is dedicated to an example of this. However, assuming this operational notion for events is also what allows us to understand agents on a quantum spacetime using an equivalence between some scenarios with delocalized clocks and scenarios with delocalized masses, as we will see in a moment.

\section{Gravitational quantum switch} \label{gravQSwitch}
In the configuration $\on{K_{A\prec B}}$ of Fig.~\ref{fig:ZychLightcones}, if one sends a quantum system in a pure state $\ket{\psi}$ to be manipulated at event A and subsequently sent to event B to suffer another operation, then $\on{a}$'s operation $\on{U_{A}}$ will be applied before $\on{b}$'s operation $\on{U_B}$, and analogously for $\on{K_{B\prec A}}$, leading to one of the final states: 
\begin{align}
    \ket{\tilde{\psi}_1}^{S} &= \on{U_B U_A}\ket{\psi}^{S},  \label{psi_1}\\
     \ket{\tilde{\psi}_2}^{S} &= \on{U_A U_B}\ket{\psi}^{S}. \label{psi_2}
\end{align} 
 Thus, if we consider the possibility that a massive configuration can be in quantum superposition, we expect that it could be used as quantum control of order in a quantum switch. To make this discussion meaningful, we will proceed with the following assumptions:

\begin{itemize}
    \item[i.] Macroscopically  distinguishable states of physical systems can be assigned orthonormal quantum states. 
    
    \item[ii.] Gravitational time dilation reduces to that predicted by GR in the classical limit.
    
    \item[iii.] The quantum superposition principle holds for all systems, independently of mass or nature.
\end{itemize}

From the perspective of a distant observer, we can use the same coordinates to describe situations happening in configurations $\on{K_{A\prec B}}$ and $\on{K_{B\prec A}}$, as we did in the last section. These coordinates define a foliation of spacetime in equal time slices such that, for each fixed time t, we have a spacelike hypersurface to which we can associate a Hilbert space and specify the states of quantum systems living on top of it. Time t can be operationally defined as the time measured by the clock of the distant observer. The mass itself is among the systems being described and, from assumption i, we can assign two orthogonal quantum states $\ket{\on{K_{A \prec B}}}^M$ and $\ket{\on{K_{B \prec A}}}^M$ to the two mass configurations discussed. Since each state corresponds to a classical configuration, by assumption ii, the evolution of $S$ for each case happens according to usual time dilation resulting in \eqref{psi_1} for $\ket{\on{K_{A \prec B}}}^M$ and \eqref{psi_2} for $\ket{\on{K_{B \prec A}}}^M$. Finally, by assumption iii, the state $\frac{1}{\sqrt{2}}\ket{\on{K_{A \prec B}}}^M+\ket{\on{K_{B \prec A}}}^M$ is physically allowed.

\begin{figure}[ht]
    \centering
\includegraphics{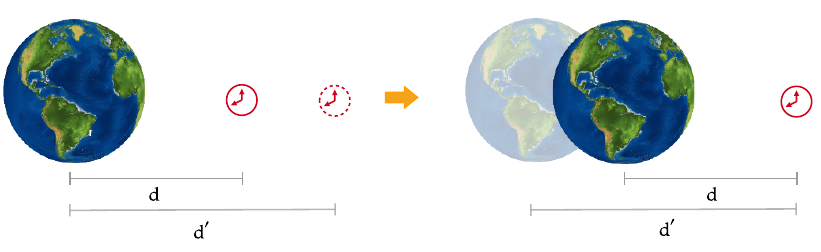}
    \caption{Portrayal of the principle of relativity of superpositions \protect\cite{ZychRelQuantSup} for a massive system. If gravitational influence only depends on relative distances, when a clock is in a spatial superposition of distances d and d' from a mass, this should be equivalent to the situation (if it exists) in which the mass is in the superposition state in the right side of the figure. Since a theory of Quantum Gravity is not available, the statement could be false, but the authors of~\protect\cite{tbell} make this assumption, among a few others, to understand its consequences for time order.}
    \label{fig:RelativitySuperpositions} 
\end{figure}

At this stage, there is one more consideration we should make. Suppose that a system 1 is interacting with a system 2 from a distance. If we move system 1 to the left and leave system 2 still, the situation is physically equivalent to equally moving system 2 to the right, leaving system 1 still. That means the interaction is invariant under translations, a standard physical requirement. Physical laws are usually invariant with respect to such symmetries, indicating that they depend only on relative quantities. Particularly, if  system 1 is a mass and 2 is a clock, the gravitational time dilation induced on system 2 depends only on relative distance. The authors of~\cite{tbell} argue, based on reference~\cite{ZychRelQuantSup}, that when a principle like this is valid classically, linearity of Quantum Theory implies it is also valid for each branch of a quantum superposition. In the following, this is assumed to be true, gravitational systems included. In analogy to the example above, this means that the situation in which a clock is in a spatial superposition of two distances from a localized mass is equivalent to the situation in which the mass is the one in a spatial superposition while the clock has a fixed location (see Fig.~\ref{fig:RelativitySuperpositions}). The equivalence statement refers to experiment probabilities, since both scenarios present the same probability amplitudes for the same relative distances.

 We can now calculate the joint evolution for the mass, the system $S$ and the clocks of agents $\on{a}$ and $\on{b}$ when the mass configuration is in a superposition state. Consider, in the following, that the systems are aligned so that the problem reduces to one spatial dimension. Denote the coordinate position of agent $\on{a}$ (and their clock) by $R_{a}$ and the mass position by $R_M$. The hamiltonian operator for this clock is the one we calculated for the post-newtonian metric in \eqref{ClockHamiltonianOperator}. We set $p=0$, because the clock is stationary, and take the low energy limit with $r= R_a - R_M >0$:
 \begin{equation}
     H_a\approx m_a c^2 + m_a\Phi(r) + H^a_{int}\left(1+\frac{\Phi(r)}{c^2}\right) = H_{rest}^a \left(1+\frac{\Phi(R_a - R_M)}{c^2}\right),
\end{equation}
 where $m_a$ is the clock's mass, $H_{int}$ is the hamiltonian of the dynamical part of internal evolution, while $H^a_{rest}:= m_ac^2 + H^a_{int}$ fully describes internal evolution. This is defined analogously for the clock of agent $\on{b}:$
 \begin{equation}
     H_b\approx H_{rest}^b \left(1+\frac{\Phi(R_b - R_M)}{c^2}\right).
 \end{equation}The general evolution for $\on{a}$'s clock with respect to time t of the distant observer can be written as
 \begin{equation}\label{ClockEvolutionGQS}
     e^{-\frac{i}{\hbar} H_{rest}^a t\left[1+\frac{\Phi\left(R_a - R_M\right)}{c^2}\right]} \ket{R_a}\ket{s_a(\tau_0)}= \ket{R_a} \ket{s_a(\tau_0 +\tau(R_a -R_M,t))},
 \end{equation}
where $$\tau(R_a-R_M,t):= t \left(1+\frac{\Phi(R_a - R_M)}{c^2}\right).$$ As discussed in subsection \ref{subsecFirstQClocks}, the Hilbert space of a quantum clock is a tensor product between an internal and an external space. The vector $\ket{R_a}$ of the composite state above is represented in the position basis and belongs to the space $\mathcal{H}^a_{ext}$, while $\ket{s_a(\tau)}\in \mathcal{H}^a_{int}$ is represented in an internal clock basis that indicates the clock is reading proper time $\tau$.
 
 Assume that the clocks of $\on{a}$ and $\on{b}$ are initially synchronized with the clock of the distant agent so that at $t=0$ they are in states $\ket{s_a(\tau=0)}$ and $\ket{s_b(\tau=0)}$. Also consider the initial state of system $S$, the target, to be $\ket{\psi}^S$ and that agents $\on{a}$ and $\on{b}$ will apply their operations
$\on{U_{A/B}}$ at events A$=(R_a,\tau_a=\tau^*)$ and B$=(R_b,\tau_b=\tau^*)$ respectively. At last, let the initial state of the mass be the superposition state $\frac{1}{\sqrt{2}}\left(\ket{\on{K_{A \prec B}}}+\ket{\on{K_{B \prec A}}}\right)^M$ of the two configurations in Fig.~\ref{fig:ZychLightcones}.

Therefore the initial state of the system is 
\begin{equation}\label{initialStateGQS}
    \ket{\psi(0)}=\frac{1}{\sqrt{2}}\ket{R_a} \ket{R_b}\ket{s_a(\tau=0)} \ket{s_b(\tau=0)} \ket{\psi}^S \left(\ket{\on{K_{A \prec B}}}+\ket{\on{K_{B \prec A}}}\right)^M,
\end{equation}
and the total hamiltonian for this scenario is
\begin{equation}
    H_{tot}= H_a+H_b+\on{\mathcal{O}_{A}}+\on{\mathcal{O}_{B}},
\end{equation}
with operators $\on{\mathcal{O}_{A/B}}$ defined as \begin{equation}
\on{\mathcal{O}_{A/B}}:= \delta(\tau-\tau^*,r-R_{a/b}) \on{O_{A/B}}, \qquad  e^{-i\on{O_{A/B}}} := \on{U_{A/B}},
\end{equation}
where the symbol $\delta$ represents a Dirac delta. Let us call $r_a=R_a-R_M$ the distance between the mass and clock $\on{a}$ in configuration $\on{K_{A\prec B}}$, then $r_b=r_a-h$ is the distance between the mass and clock $\on{b}$. We can use equation \eqref{ClockEvolutionGQS} to evolve the first term of the sum in \eqref{initialStateGQS}. Looking only at the clocks we get: $\ket{s_a(\tau(r_a,t))} \ket{s_b(\tau(r_a-h,t))}$. The same reasoning can be applied to the other configuration by making $r_a \to r_a-h$ and $r_a-h\to r_a$. If we evolve the total initial state until a time t such that both clocks are reading proper times greater than $\tau^*$ we have
\begin{multline}\label{preSyncEvolution}
     \ket{\psi(t)}=\frac{1}{\sqrt{2}}\ket{R_a} \ket{R_b}\Bigl[\ket{s_a(\tau(r_a,t))} \ket{s_b(\tau(r_a-h,t))}e^{-i\on{O_B}} e^{-i\on{O_A}} \ket{\psi}^S \ket{\on{K_{A \prec B}}}^M 
     \\
   +  \ket{s_a(\tau(r_a-h,t))} \ket{s_b(\tau(r_a,t))}e^{-i\on{O_A}} e^{-i\on{O_B}} \ket{\psi}^S\ket{\on{K_{B \prec A}}}^M\Bigr].
\end{multline}
Here we used relativity of superpositions. That is, we assumed that the form of the entanglement generated by time dilation between the mass configuration state and the internal states of the clocks is analogous to the entanglement between position and internal states of the clocks if they were the ones in a delocalized state $\ket{R_a}^a\ket{R_b}^b\to\ket{R_a}^a\ket{R_b}^b+\ket{R_b}^a\ket{R_a}^b$, and the mass was in a fixed position $\ket{\on{K_{A \prec B}}}^M+\ket{\on{K_{B \prec A}}}^M\to \ket{R_M}^M$. This is why the quantum clocks formalism was used from the beginning. The above only makes sense if the degrees of freedom of the clocks, at least the internal ones, are quantum. 

The internal clock states in \eqref{preSyncEvolution} can be disentangled from the joint state if the mass distributions are swapped right after the operations are applied: $\ket{\on{K_{B \prec A}}}^M \to \ket{\on{K_{A \prec B}}}^M$, $\ket{\on{K_{A \prec B}}}^M\to\ket{\on{K_{B \prec A}}}^M$. This way, the clocks that were delayed start to run faster, the ones that were faster get delayed and, after a time t, they get synchronized again, leaving us with a pure state for the rest of the systems:
\begin{equation}
    \ket{\psi(t)}=\ket{R_a} \ket{R_b}\ket{s_a(\tau_f)} \ket{s_b(\tau_f)}\left( \on{U_B} \on{U_A} \ket{\psi}^S \ket{\on{K_{B \prec A}}}^M+ \on{U_A} \on{U_B}\ket{\psi}^S\ket{\on{K_{A \prec B}}}^M\right),
\end{equation}
where $\tau_f:=\tau(r_a,t)+\tau(r_a-h,t)$. So, we  have the state of a quantum switch, like that in \eqref{QSstate}. The control system is the spacetime in the region. As in any other quantum switch (see section \ref{quantum switch}), the control has to be measured in a diagonal basis so that the final state results in a superposition of orders instead of a classical mixture. Thus, after a projective measurement of the mass in basis $\left(\ket{\on{K_{A \prec B}}}\pm  \ket{\on{K_{B \prec A}}}\right)^M$, the final state is
\begin{equation}
   \on{U_B} \on{U_A} \ket{\psi}^S \pm \on{U_A} \on{U_B}\ket{\psi}^S=\ket{\psi_1}\pm \ket{\psi_2}. 
\end{equation}

Note that the clocks of $\on{a}$ and $\on{b}$ used to define the events of interest are ideal, meaning we do not specify the form of their internal dynamics. In fact, this should not matter for the description. We can always simulate time dilation effects for specific clock systems by artificially interacting with their internal dynamics, for instance using electric and magnetic fields. But we can only assert time dilation is happening if any possible internal dynamics is time dilated in the same form (see chapter~\ref{Chap Clocks}). Thus, the assignment of events and surrounding causal structure is universal in this sense.

The difference between this quantum switch and the experimental implementations in~\ref{quantum switch} is that operations $\on{U_A}$ and $\on{U_B}$ are set to be realized at fixed operational events, in this case $(R_a,\tau^*)$ and $(R_b,\tau^*)$ respectively. So, the knowledge of the time of application of the operations does not create a problem but is indeed assumed, while in current implementations of the quantum switch, acquiring information about when the operations are applied destroys the superposition state, like a which-path indicator. Since there is no restriction here on how the agents measure time, it is actually crucial for the protocol of this chapter that the relative ordering of operational events is universally altered. Gravity enables this by providing genuine time dilation. The final indefinite order of operations in the gravitational switch therefore comes as a consequence of the more general indefinite operational order structure that emerges in the quantum spacetime.

Let us present a variation of this protocol that uses another pair of configurations to achieve the switch. We still consider two configurations in one spatial dimension. In one configuration, the mass is at a distance $r$ from agent $\on{a}$ and at a distance $r+h$ from $\on{b}$, while in the other the mass is located further to the left by an amount of $L$, as shown in Fig.~\ref{fig:GQSvariation} below. 
\begin{figure}[ht]
    \centering
    \includegraphics[scale=0.95]{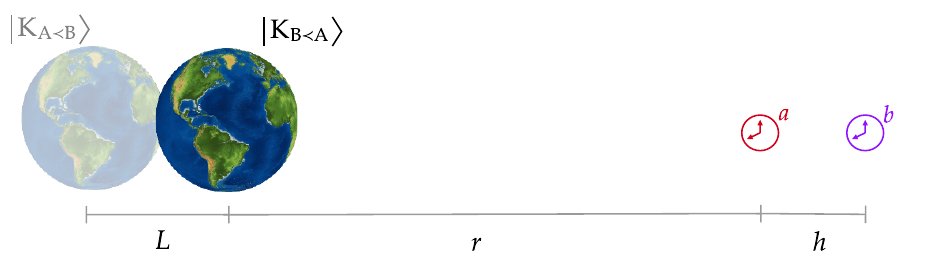}
    \caption{Alternative scenario for the realization of a gravitational quantum switch. The mass is on the left of the agents in both configurations of the superposition as seen by the distant observer. Even then, we can find operational events A and B such that A(B) is in the past lightcone of B(A) for $\on{K_{A\prec B}}\left(\on{K_{B\prec A}}\right).$}
    \label{fig:GQSvariation}
\end{figure}

Note that the mass is on the left of agents $\on{a}$ and $\on{b}$ for both configurations. In the last section we defined events A and B as the clocks measuring the same proper time $\tau^*$ only for simplicity. Generally, if an event A is defined as the agent $\on{a}$ measuring a proper time $\tau_a^*$ and event B as the agent $\on{b}$ measuring a proper time $\tau_b^*$ (not necessarily equal to $\tau_a^*$), then we can find values for $\tau_a^*$ and $\tau_b^*$ such that A$\prec$B in one configuration and B$\prec$A in the other.

In order to have A$\prec$B in configuration $\ket{\on{K_{A\prec B}}}$ we demand that a photon departing from event A arrives at agent $\on{b}$'s position before proper time $\tau_b^*$. This condition reads:
\begin{equation}\label{taubfCondition}
    \tau_{bf}=\sqrt{-g_{00}(r+L+h)}\left(\frac{\tau_a^*}{\sqrt{-g_{00}(r+L)}}+t_c(r+L,h)\right) \leq \tau_b^*,
\end{equation}
where $t_c(x,y)=\int_x^{x+y}dt$ is the coordinate time interval for light to travel between $x$ and $x+y$ given by equation \eqref{lightTravelt}. On the other hand, we also need that the light signal sent from event B arrives at agent $\on{a}$ before $\tau_a^*$ for configuration $\ket{\on{K_{B\prec A}}}$. Thus,  
\begin{equation}\label{tauafCondition}
    \tau_{af}=\sqrt{-g_{00}(r)}\left(\frac{\tau_b^*}{\sqrt{-g_{00}(r+h)}}+t_c(r,h)\right) \leq \tau_a^*.
\end{equation}
Let us fix $\tau_b^*=\tau_{bf}$, meaning that a photon would arrive just in time to suffer the operation at event B. Then, we can substitute \eqref{taubfCondition} in \eqref{tauafCondition}, which results in
\begin{equation}\label{tauafCondition2}
    \tau^*_{a}\geq \sqrt{-g_{00}(r)}\frac{\sqrt{\frac{g_{00}(r+L+h)}{g_{00}(r+h)}} t_c(r+L,h) + t_c(r,h)}{1 -\sqrt{\frac{g_{00}(r+L+h)g_{00}(r)}{g_{00}(r+h)g_{00}(r+L)}}}.
\end{equation}
Therefore, choosing some $\tau_a^*$ satisfying this and $\tau_b^*=\tau_{bf}$, we can use the pair of configurations to realize gravitational quantum control of orders just like in the previous case.

\section{Bell's theorem for temporal order}

 Although the assumptions in the last section seem natural enough for quantum gravity scenarios and lead to the formulation of a quantum switch, we are in new territory. As briefly commented in section \ref{quantum switch}, the quantum switch cannot violate any causal inequality. The only way we know of certifying indefiniteness of order from the probabilities of a quantum switch is to attest that it is a causally non-separable process (see section \ref{quantum switch}). This method, however, assumes the validity of Quantum Theory when local operations are applied on the target. So, even if we predict correct probabilities in the gravitational switch, one could question whether indefinite order has a physical meaning or there are unfulfilled hypotheses and the effect is just apparent, specially in this case where the required description of the protocol is coordinate/observer dependent. We had to make it that way, because the general coordinate-independent meaning of quantum states and operations is not clear in a quantum spacetime. All indicates that we should not rely on a test that depends on the validity of the quantum formalism. We should look for a theory-independent certification of indefinite order to account for such scenarios.
 
 Next, we formulate a specific task that would be impossible to realize if all events involved in it had definite order relations~\cite{tbell}. This resembles the causal inequality task of section~\ref{causalineqsec}, however, we note that some extra assumptions are made to attain such a result. While causal inequalities depend mainly on assuming causal structure operationally~\cite{Oreshkov,Branciard_2015}, to state the Bell's theorem for temporal order, we further need the notion of state and physical transformation. Therefore, this result will be a little more restrictive than obedience to causal inequalities, since it only makes sense for a class of general probabilistic theories in which these notions can be defined. 
 
 \subsection{Definitions}
Here, we make the definitions required for the formulation and proof of Bell's theorem for temporal order. In the context of generalized probabilistic theories, we define a \textbf{state} $\omega$ as the specification in an experiment of the conditional probabilities in the form $P(o|i,\omega)$, which denotes the probability of measuring outcome $o$ given a measurement with setting $i$ is performed on the system in state $\omega$. A \textbf{transformation} is given by a function taking states to states, $\omega\mapsto\on{T}(\omega)$.

\begin{definition}
The state $\omega$ of a system composed by 2 subsystems $\on{S}_1$ e $\on{S}_2$ is called a \textbf{product state}, denoted by $\omega_1 \otimes \omega_2$, if the joint probabilities for measurements $i_1$ and $i_2$ on each of them factorize as
\begin{equation}
    P\left(o_{1}, o_{2} | i_{1}, i_{2}, \omega\right)= P\left(o_{1} | i_{1}, \omega_{1}\right) P\left(o_{2} | i_{2}, \omega_{2}\right),
\end{equation}where $\omega_i$ is a state for system S$_i$, $i=1,2.$
\end{definition}

\begin{definition}\label{SeparableStateDef}
     A composite bipartite state $\omega$ is called \textbf{separable} if its joint probabilities can be written as
\begin{equation}
 P(o_{1}, o_{2} | i_{1}, i_{2}, \omega)=\int \mathrm{d} f \rho(f) P(o_{1} | i_{1}, \omega_{1}^{f}) P(o_{2} | i_{2}, \omega_{2}^{f}),
\end{equation}for some variable $f$, a probability distribution $\rho(f)$ and families of states $\omega_{1}^f$ of  S$_1$ and $\omega_{2}^f$ of S$_2$. To indicate this is the case, we denote the state as $\omega=\int \mathrm{d} f \rho(f) \omega_{1}^{f} \otimes \omega_{2}^{f}.$
\end{definition}

\begin{definition}
    A transformation on a composite system is \textbf{local} if, for every product state taken as input, it acts non-trivially on just one subsystem. For instance, if $\on{T}$ acts locally on S$_1$, we have
\begin{equation}
    \on{T}(\omega_1 \otimes \omega_2) = \on{T}_1(\omega_1) \otimes \omega_2,
\end{equation} where $\on{T}_1$ is a transformation on system S$_1$. Therefore, the joint probabilities for the transformed state when the input is a product state are in the form
\begin{equation}
    P\left(o_{1}, o_2| i_{1}, i_2, T(\omega_{1}\otimes \omega_2)\right) =P\left(o_{1} | i_{1}, T_1(\omega_{1})\right) P\left(o_{2} | i_{2}, \omega_{2}\right).
\end{equation} 
\end{definition}

  Local transformations $\on{T}_1$ and $\on{T}_2$ on systems $\on{S}_1$ and $\on{S}_2$ combine depending on their spatio-temporal locations: if they are spacelike separated, the resulting tranformation is $\on{T}_1 \otimes \on{T}_2$ that acts like $\omega_1 \otimes \omega_2 \mapsto \on{T}_1(\omega_1) \otimes \on{T}_2(\omega_2)$ on a product state. If $T_1$ is in the future of $\on{T}_2$ the transformation becomes $\on{T}_1 \circ \on{T}_2(\omega)$ and analogously for $\on{T_2}$ in the future of $\on{T}_1$. 
  
 \subsection{Formulation}

 Consider a system S composed of two subsystems S$_1$ and S$_2$, on which operations will be made, and a third system denoted by M. Also consider that the system S$_1$ suffers a transformation $\on{T_{A_{1}}}$ at an event $\on{A}_{1}$ and a transformation $\on{T_{B_{1}}}$ at an event $\on{B}_{1}$. Similarly, the system S$_2$ undergoes transformations $\on{T_{A_{2}}}$ and $\on{T_{B_{2}}}$ at events $\on{A}_{2}$ and $\on{B}_{2}$. Moreover, each system S$_j$ is measured at event C$_j$ with a setting $i_j$ producing and outcome $o_j$, $j=1,2$. The system M is also measured at an event D producing outcome $z$. The setup for the task can be visualized in the diagram of Fig.~\ref{fig:hipTBellZych}.

\begin{figure}[ht]
    \centering
    \includegraphics[scale=0.3]{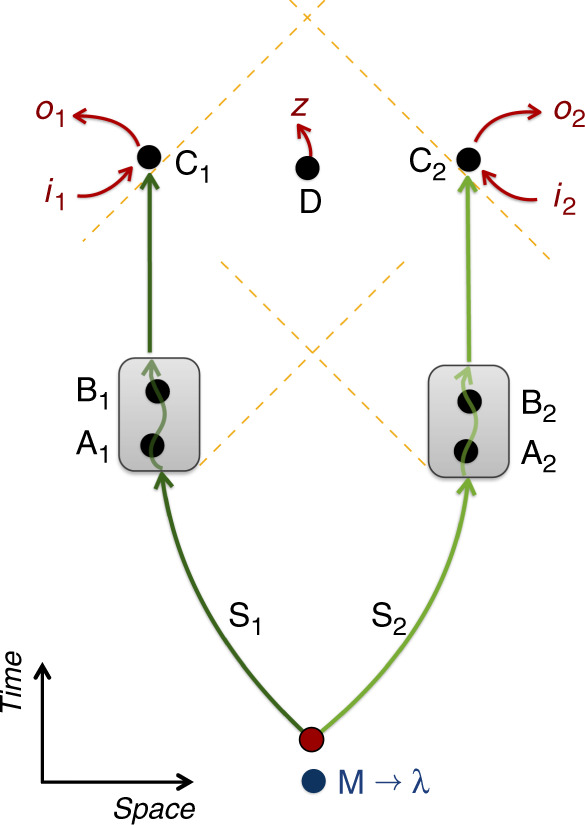} 
    \caption{Setup corresponding to the hypotheses of the Bell's theorem for temporal order. Local transformations $\on{T}_{\on{A}_1},\on{T}_{\on{B}_1}, \on{T}_{\on{A}_2}, \on{T}_{\on{B}_2}$ are applied on systems S$_1$ and S$_2$ at the indicated events. At events $\on{C}_1$ and $\on{C}_2$, measurements $i_1,i_2$ are made on S$_1$ and S$_2$, respectively, returning outcomes $o_1$ and $o_2$. Finally, at event D, system M is also measured, returning outcome $z$. The yellow lines represent lightcones. Figure from~\protect\cite{tbell}.} 
    \label{fig:hipTBellZych}
\end{figure}

Then, we can state the following:
\begin{thm}[Bell's theorem for temporal order]\label{BellsThm}
If an experiment in the setup above obeys the following assumptions, the final measurements on system S cannot result in a violation of any bipartite Bell inequality:
\begin{itemize}
\item[i] \textbf{Local initial state}: The total initial state $\omega$ describing the three systems, $\on{S}_1$, $\on{S}_2$ and $\on{M}$, is separable.
\item[ii] \textbf{Local operations}: All the transformations $\mathds{T} = \{\on{T}_{\on{A}_1},\on{T}_{\on{B}_1}, \on{T}_{\on{A}_2}, \on{T}_{\on{B}_2}\}$ and measurements are local.
    \item[iii] \textbf{Classical order}: The events in which the transformations are applied are classically ordered according to the definition in reference \cite{tbell}:
    A set of events is classically ordered if, for each pair of events $\on{A}$ and $\on{B}$,  there exists a spacelike surface and a classical variable $\lambda$ defined on it which settles the causal relation between the two events in a deterministic or probabilistic fashion (causal relations are determined by classical hidden variables). 

    \item[iv]\textbf{Spacelike separation}: The events $(\on{A}_1$,$\on{B}_1)$ are spacelike separated from events $(\on{A}_2$,$\on{B}_2)$ and the events $\on{C}_1$, $\on{C}_2$ and D are pairwise spacelike separated.
    \item[v]\textbf{Free choice}: The choices of measurement settings are statistically independent of the rest of the experiment. 
\end{itemize}
\end{thm}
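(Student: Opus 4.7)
The plan is to show that the five assumptions collectively force the conditional distribution $P(o_1,o_2\mid i_1,i_2)$ to admit a local hidden variable model on the bipartition $\on{S}_1 \mid \on{S}_2$, which, by the standard derivation of Bell inequalities, precludes any violation. The role of each hypothesis is fairly clear: the separable initial state supplies a classical ``preparation'' variable, the classical-order hypothesis supplies a classical ``ordering'' variable, local operations preserve the tensor-product structure of the state along each branch of these hidden variables, spacelike separation of $\on{C}_1, \on{C}_2$ guarantees factorization of the measurement probabilities on product states, and free choice prevents the hidden variables from correlating with the measurement settings.

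Concretely, I would first extend Definition~\ref{SeparableStateDef} to a tripartite system and write the initial state as $\omega = \int df\, \rho(f)\, \omega_1^f \otimes \omega_2^f \otimes \omega_M^f$. Assumption (iii) produces (possibly after enlarging the hidden variable to be joint over both pairs of events $(\on{A}_1,\on{B}_1)$ and $(\on{A}_2,\on{B}_2)$) a classical variable $\lambda$ with distribution $\mu(\lambda)$ such that, conditional on $\lambda$, the order in which the transformations in $\mathds{T}$ are applied is deterministic. After refining $\lambda$ in this way, for each realization $(f,\lambda)$ the transformations acting on $\on{S}_1$, namely $\on{T}_{\on{A}_1}$ and $\on{T}_{\on{B}_1}$, compose into some definite transformation $\on{T}_1^{\lambda}$, and similarly the transformations on $\on{S}_2$ compose into $\on{T}_2^{\lambda}$.

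By hypothesis (ii) these are local, so acting on the product state $\omega_1^f \otimes \omega_2^f \otimes \omega_M^f$ keeps it a product state, giving $\tilde\omega_1^{f,\lambda} \otimes \tilde\omega_2^{f,\lambda} \otimes \omega_M^f$, where $\tilde\omega_j^{f,\lambda} = \on{T}_j^{\lambda}(\omega_j^f)$. Hypothesis (iv) then places $\on{C}_1$ and $\on{C}_2$ at spacelike separation, and the measurements at these events are also local, so the joint probability on a product state factorizes: $P(o_1,o_2\mid i_1,i_2, f,\lambda) = P(o_1\mid i_1, \tilde\omega_1^{f,\lambda})\, P(o_2\mid i_2, \tilde\omega_2^{f,\lambda})$. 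Averaging, and invoking (v) so that $\rho(f)\mu(\lambda)$ does not depend on $(i_1,i_2)$, yields
\begin{equation}
P(o_1,o_2\mid i_1,i_2) = \int df\, d\lambda\, \rho(f)\mu(\lambda)\, P(o_1\mid i_1, \tilde\omega_1^{f,\lambda})\, P(o_2\mid i_2, \tilde\omega_2^{f,\lambda}),
\end{equation}
which is a Bell-local model with hidden variable $(f,\lambda)$, completing the proof.

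The main obstacle I anticipate is making precise the ``classical order'' hypothesis in a way that justifies treating the ordering of both pairs of events as conditioned on a single common classical variable, especially since (iii) as stated only requires, for each pair of events separately, a spacelike hypersurface and a classical variable on it. One has to argue that these variables can be combined into a single $\lambda$ defined on a common hypersurface in the joint past of the operations, and that this combination is still consistent with (iv) and (v); this is where most of the care, and potentially implicit additional assumptions (e.g. about no fine-tuning between the orderings and the preparation variable $f$), will go.
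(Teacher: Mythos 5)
Your overall strategy --- reduce the statistics to a factorized local-hidden-variable form over a combined variable $(f,\lambda)$ and invoke the standard equivalence with Bell inequalities --- is the same as the paper's, and your treatment of assumptions (i), (ii), (iii) and (v) matches the paper's steps. However, there is a genuine gap: your argument never uses the system M or the outcome $z$ of the measurement performed on it at event D. You establish locality of $P(o_1,o_2\mid i_1,i_2)$, but that statement is nearly empty in this setup: averaging over (i.e.\ ignoring) the measurement on M leaves S in a separable mixture, so the unconditioned correlations are trivially local. The content of the theorem --- and the quantity actually violated by the two entangled gravitational switches in the following section --- is the conditional distribution $P(o_1,o_2\mid i_1,i_2,z)$, obtained by post-selecting on the result of measuring the control M in a diagonal basis. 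Conditioning on the outcome of a measurement on a third system can in general destroy a local model (this is precisely the mechanism behind entanglement swapping), so you must show explicitly that it does not do so here.

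The paper handles this by carrying $z$ through the whole derivation: it writes $P(o_1,o_2,z\mid i_1,i_2,\mathds{T},\omega)$ as a sum over the orderings $\sigma_1,\sigma_2$ and an integral over $(\lambda,f)$ of the factorized local terms times $P(\sigma_1\mid\lambda)P(\sigma_2\mid\lambda)P(z\mid\lambda,f,\sigma_1,\sigma_2)$, where the essential input from assumption (iv) is that the distribution of $z$ depends only on the hidden data $(\lambda,f,\sigma_1,\sigma_2)$ and not on the settings or outcomes at C$_1$ and C$_2$ (pairwise spacelike separation of C$_1$, C$_2$ and D). A sequence of Bayes inversions then converts $P(\sigma_1\mid\lambda)P(\sigma_2\mid\lambda)P(z\mid\lambda,f,\sigma_1,\sigma_2)P(\lambda,f)$ into $P(\lambda,f,\sigma_1,\sigma_2\mid z)P(z)$, so that dividing by $P(z)$ yields a local model whose hidden-variable distribution is merely re-weighted by $z$; the paper then cites Fine's theorem to conclude. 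You need to add this step to your argument. Your concern about merging the per-pair ordering variables into a single $\lambda$ on a common hypersurface is legitimate, but the paper does no more there than you propose; the missing conditioning on $z$ is the real defect.
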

 
\begin{proof}
Assumption i with definition \eqref{SeparableStateDef} implies that there exists a variable $f$ that determines the states $\omega^f_1$ and $\omega^f_2$ of subsystems $\on{S}_1$ and $\on{S}_2$. By assumption iii, there also exists a classical variable $\lambda$ determining the order of events. 

    Let us define a variable $\sigma_i$ that can assume, for each $i=1,2$, one of two values: (A$_i\prec $B$_i$) or (B$_i\prec $A$_i$), the possible permutations of the events A$_i$ and B$_i$. For example, we can write $P(\sigma_1|\lambda)$ to denote the probability that the ordering between the events A$_1$ and B$_1$ is $\sigma_1:=($A$_1 \prec$B$_1)$ for a given $\lambda$. 

Then, we define $\on{T}^{\sigma_i}$ as the composition of transformations associated to $\sigma_i$. For example,\begin{equation}
  \sigma_1=(\on{A}_1\prec \on{B}_1) \implies  \on{T}^{\sigma_1} := \on{T}_{\on{B}_1} \circ \on{T}_{\on{A}_1}.
\end{equation}The transformation applied on system S before measurements can be generally expressed as $ \on{T}^{\sigma_1}\otimes \on{T}^{\sigma_2}$, since all transformations are local by assumption ii.

From assumptions i and ii, we know that the state $\omega$ of system S is separable ($\int df P(f)\omega_1^f\otimes\omega_2^f$) and that a pair of operations is applied locally for each subsystem S$_1$ and S$_2$. Thus, the probabilities for results of measurements $i_1,i_2$ made at C$_1$ and C$_2$, knowing the operations and their order of application deterministically $(\mathds{T}, \sigma_1, \sigma_2)$, should look like\begin{align*}
    P(o_{1}, o_{2}, z | i_{1}, &i_{2}, \mathds{T}, \omega,\sigma_1, \sigma_2)= 
    \\
    &\int df P(f)P\left(o_{1}| i_{1}, \operatorname{T}^{\sigma_1}\left(\omega_1^f\right)\right) P\left(o_{2}| i_{2}, \operatorname{T}^{\sigma_2}\left(\omega_2^f\right)\right) P\left(z| f,\sigma_1,\sigma_2\right).
\end{align*}Note that, by assumption iv, the probability of getting outcome z at event D has no dependence on measurements or results obtained at C$_1$ and C$_2$. 

Now, we wish to compute the probability of obtaining results $o_1$, $o_2$ and $z$ at the
end of the experiment without conditioning on information about the order. By assumption iii, the order is given by the hidden variable $\lambda$. The dependence of the protocol statistics on the two variables $f$ and $\lambda$ can be combined in a joint probability distribution $P(f,\lambda)$. We can then write the probability above for the case in which order is not determined, which has explicit dependence on $\lambda$. After that, we integrate over $\lambda$ and sum over the possible permutations $\sigma_1,\sigma_2$ to get the result\begin{align}
    P(o_{1}, o_{2}, z | i_{1}, i_{2}, \mathds{T}, \omega) = &  \sum_{\sigma_{1} \sigma_{2}} \int d \lambda d f  P(\lambda, f)P\left(o_{1} |i_{1},T^{\sigma_{1}}\left(\omega_{1}^{f}\right)\right) \nonumber
    \\
    &\times P\left(o_{2} | i_{2}, T^{\sigma_{2}}\left(\omega_{2}^{f}\right)\right) P\left(\sigma_{1} | \lambda\right) P\left(\sigma_{2} | \lambda\right) P\left(z | \lambda, f, \sigma_{1}, \sigma_{2}\right), \label{decompProbBell}
\end{align}

The quantity $P:=P\left(\sigma_{1} | \lambda\right) P\left(\sigma_{2} | \lambda\right) P\left(z | \lambda, f, \sigma_{1}, \sigma_{2}\right) P(\lambda, f)$ in \eqref{decompProbBell} can be simplified. Since order is determined by $\lambda$ independently of $f$, we can use that $P(\sigma_{i}| \lambda)= P(\sigma_{i}| \lambda, f )$ to get
\begin{align}
 P&=P\left(\sigma_{1} | \lambda, f \right) P(\lambda, f) P\left(\sigma_{2} | \lambda, f \right) P\left(z | \lambda, f, \sigma_{1}, \sigma_{2}\right) \nonumber
     \\\nonumber \\
     &= P\left(\lambda, f,\sigma_{1}  \right) P\left(\sigma_{2} | \lambda, f \right) P\left(z | \lambda, f, \sigma_{1}, \sigma_{2}\right) \nonumber
     \\\nonumber \\
     &= P\left(\lambda, f, \sigma_{1} \right) P\left(\sigma_{2} | \lambda, f,\sigma_1 \right) \frac{P\left( \lambda, f, \sigma_{1}, \sigma_{2}|z\right)P(z)}{P( \lambda, f, \sigma_{1}, \sigma_{2})} \nonumber
     \\\nonumber \\
     &= P( \lambda, f, \sigma_{1}, \sigma_{2})\frac{P\left( \lambda, f, \sigma_{1}, \sigma_{2}|z\right)P(z)}{P( \lambda, f, \sigma_{1}, \sigma_{2})}\nonumber
     \\ \nonumber     \\
     &= P\left( \lambda, f, \sigma_{1}, \sigma_{2}|z\right)P(z),
\end{align}
where we used Bayes rule in all steps: in the first two terms from line 1 to 2, in the third from line 2 to 3 and in the first two terms again to get to the last line. From line 2 to line 3 we also used $P\left(\sigma_{2} | \lambda, f \right)=P\left(\sigma_{2} | \lambda, f, \sigma_1\right)$ since, again, the regions associated to indices 1 and 2 are spacelike separated preventing that dependence. Expression \eqref{decompProbBell} then becomes
\begin{align*} 
  P(o_{1},  o_{2}, z | &i_{1}, i_{2}, \mathds{T},\omega) = 
 \\ &\sum_{\sigma_{1} \sigma_{2}} \int d \lambda d f \,P\left(o_{1} | i_{1}, T^{\sigma_{1}}\left(\omega_{1}^{f}\right)\right)  P\left(o_{2} | i_{2}, T^{\sigma_{2}}\left(\omega_{2}^{f}\right)\right) P\left(\lambda, f, \sigma_{1}, \sigma_{2} | z\right) P(z).
\end{align*}
Dividing both sides by $P(z)$, we get the probability for results $o_1$ and $o_2$ conditioned on result $z$ :
\begin{align}
 P\left(o_{1}, o_{2} | i_{1}, i_{2}, z, \mathds{T}, \omega \right)=\hspace{2cm}&\nonumber
 \\
 \sum_{\sigma_{1} \sigma_{2}} \int d \lambda d f \, P\left(o_{1} | i_{1},  T^{\sigma_{1}}\left(\omega_{1}^{f}\right)\right)  &P\left(o_{2} | i_{2}, T^{\sigma_{2}}\left(\omega_{2}^{f}\right) \right) P\left(\lambda, f, \sigma_{1}, \sigma_{2} | z\right) \nonumber
 \\
&\hspace{-0.2cm}= \int d \bar{f} P\left(o_{1} | i_{1}, T^{\sigma_{1}}\right) P\left(o_{2} | i_{2}, T^{\sigma_{2}}\right) P(\bar{f} | z), \label{provaBellZych}
\end{align}
where we denoted by $\bar{f}$ the set of variables $\sigma_1$, $\sigma_2$, $\lambda$ and $f$. One can show that the possibility of expressing the joint probabilities in a factorized form like the one above is a necessary and sufficient condition for the existence of a deterministic hidden-variable model for an experiment~\cite{Fine}. And then, by Bell's theorem~\cite{Bell,CHSH}, Bell inequalities must be obeyed by the expected values and conditional probabilities from the measurements made at events C$_1$ and C$_2$. With the expression above, that must be true even if the outcomes are conditioned on the measurement result $z$ obtained at event D.
\end{proof}

\subsection{Violation}
Let us show how two copies of the gravitational quantum switch can be used to violate Bell's theorem for temporal order. The events of interest are defined here with respect to the clocks of the agents, as explained in section~\ref{operationalEvents}. 

Consider that a bipartite system S starts in a product quantum state $\ket{\psi_1}^{S_1}\ket{\psi_2}^{S_2}$. Each part of the system is sent to the events where the operations will be applied. For $i=1,2$, the agents $\on{a}_i,\on{b}_i,\on{c}_i$ define events A$_i$, B$_i$ and C$_i$. The agents $\on{a}_1,\on{b}_1$ are assumed to be sufficiently distant from the agents $\on{a}_2,\on{b}_2$ so that the events in which they apply operations according to their clocks, A$_1$ and B$_1$, are spacelike separated from A$_2$ and B$_2$. The same is assumed for the agents $\on{c_1,c_2,d}$ who determine the events for the measurements in Fig.~\ref{fig:hipTBellZych}. Thus, $\on{a}_i,\on{b}_i,\on{c}_i$ only interact with the system S$_i$. Now, consider that one can prepare classical configurations for a massive system associated to states $\ket{\on{K}}$ and $\ket{\on{K'}}$ such that, in configuration $\ket{\on{K}}$, the relations between events are A$_1\prec$ B$_1\prec$ C$_1$ and B$_2\prec$ A$_2\prec$ C$_2$, while in configuration $\ket{\on{K'}}$ they are B$_1\prec$ A$_1\prec$ C$_1$ and A$_2\prec$ B$_2\prec$ C$_2$. An example of such pair of configurations is given in Fig.~\ref{fig:BellTempOrderViolation}, where a massive system is acting as control of two switches of the type presented in Fig.~\ref{fig:GQSvariation}.

\begin{figure}
    \centering
    \includegraphics[scale=0.3]{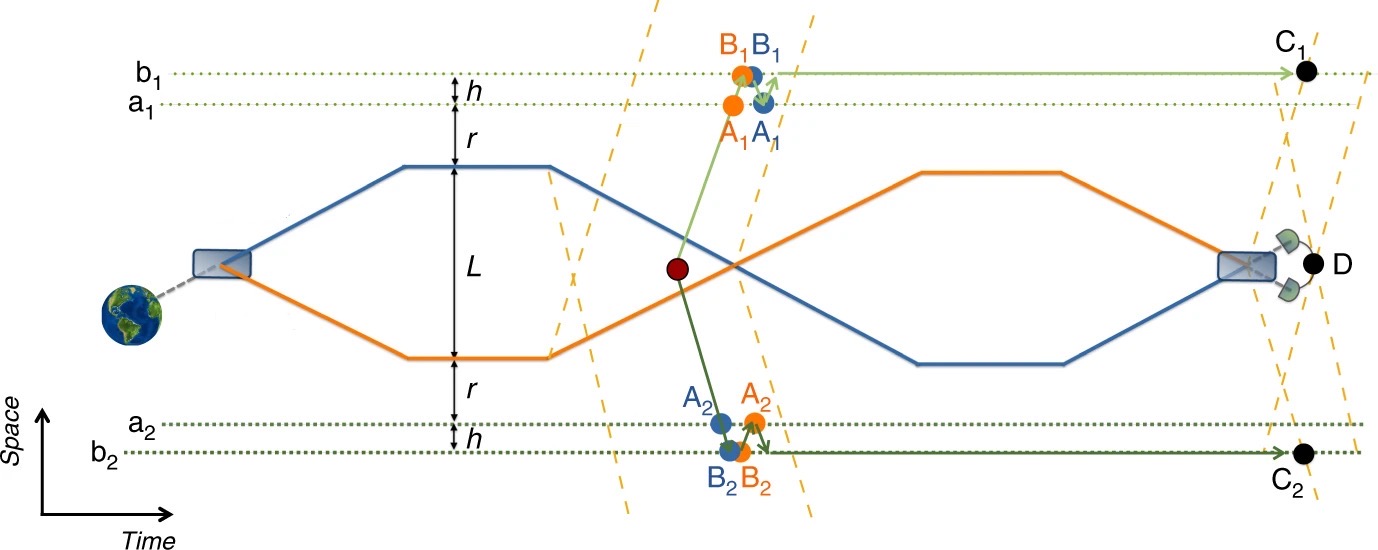}
       \caption{Protocol violating Bell's inequalities for temporal order using gravitational quantum control of order. The orange worldline represents configuration $\ket{\on{K}}^M$ of the massive system and the blue worldline represents $\ket{\on{K'}}^M$. The mass is prepared in state $1/\sqrt{2}\left(\ket{\on{K}}+\ket{\on{K'}}\right)^M$. The system S is prepared in the product state $\ket{\psi_1}\ket{\psi_2}$, and the subsystems S$_1$ and S$_2$ (dark and light green paths) are sent to spacelike separated regions to suffer two operations $\{\on{U_{A_i}},\on{U_{B_i}}\}$ for $i=1,2$ respectively. The relations between operational events are different depending on the configuration. For instance, the events A$_1$ and B$_1$ of configuration $\ket{\on{K}}$ (upper orange dots), obey A$_1\prec$B$_1$, while in configuration $\ket{\on{K'}}$ we have B$_1\prec$A$_1$ (upper blue dots). As a result, the quantum state of M enables entanglement of the total composite state. In particular, the system S$_1\otimes$S$_2\otimes$M can be put in a maximally entangled state for the right choice of operations, as long as one can disentangle the states of the agents' clocks from it. This is done in the protocol above by swapping the mass configurations in a symmetric fashion (the crossing of the blue and orange trajectories) and waiting for the clocks to resynchronize, as described in section \ref{gravQSwitch}. The subsystems of S are then measured at events C$_1$ and C$_2$, while the mass M is measured at D. Considering the state measured is maximally entangled, the results of measurements on S$_1\otimes$S$_2$ conditioned on the outcome of M can maximally violate a Bell inequality. Figure from~\protect\cite{tbell}.}
    \label{fig:BellTempOrderViolation}
\end{figure}

Therefore, if the mass is prepared in the superposition state $\frac{1}{\sqrt{2}}\left(\ket{\on{K}}+\ket{\on{K'}}\right)^M,$ we get an entangled state of orders. Indeed, the total state of systems S, M and the clocks of the agents after the operations is of the form
\begin{multline}     \frac{1}{\sqrt{2}}\ket{R_{a_1}} \ket{R_{b_1}}\ket{R_{a_2}} \ket{R_{b_2}}\Bigl[\ket{\tau_{a_1},\tau_{b_1},\tau_{a_1},
\tau_{b_2};\on{K},t} \on{U_{A_2}}\on{U_{B_2}} \ket{\psi_2}^{S_2}  \on{U_{B_1}}\on{U_{A_1}} \ket{\psi_1}^{S_1} \ket{\on{K}}^M 
     \\
+\ket{\tau_{a_1},\tau_{b_1},\tau_{a_1},
\tau_{b_2};\on{K'},t} \on{U_{B_2}}\on{U_{A_2}} \ket{\psi_2}^{S_2}  \on{U_{A_1}}\on{U_{B_1}} \ket{\psi_1}^{S_1} \ket{\on{K'}}^M \Bigr],
\end{multline}
where the states $\ket{R_{x}}$ are position state of the agents, who remain stationary, while the ket $\ket{\tau_{a_1},\tau_{b_1},\tau_{a_1},
\tau_{b_2};\on{K},t}$ represents the internal states of all clocks, which depend on the mass configuration because of time dilation. In order to get a pure state for the system S$\otimes$M, one can swap the mass configuration after the operations are done. This is depicted in Fig.~\ref{fig:BellTempOrderViolation}. After the same amount of coordinate time t has passed in the swapped scenario, the clocks will disentangle from the state above because $\ket{\tau_{a_1},\tau_{b_1},\tau_{a_1},
\tau_{b_2};\on{K},2t}= \ket{\tau_{a_1},\tau_{b_1},\tau_{a_1},
\tau_{b_2};\on{K'},2t}.$
The pure final state of the main system is thus given by\begin{align}
\ket{\psi}^{SM}=\frac{1}{\sqrt{2}} \on{U_{A_2}}\on{U_{B_2}} \ket{\psi_2}^{S_2}  \on{U_{B_1}}\on{U_{A_1}} \ket{\psi_1}^{S_1} \ket{\on{K}}^M + \on{U_{B_2}}\on{U_{A_2}} \ket{\psi_2}^{S_2}  \on{U_{A_1}}\on{U_{B_1}} \ket{\psi_1}^{S_1} \ket{\on{K'}}^M.
\end{align} And if agent d makes a measurement in the basis $\ket{\pm}=\frac{1}{\sqrt{2}}\left(\ket{\on{K}}\pm\ket{\on{K'}}\right)^M$ at event D, the joint state of S$=$S$_1\otimes$S$_2$ conditioned on this result reads
\begin{equation}\label{finalConditionedState}
 \ket{\psi}^{S}_{\pm}=\frac{1}{\sqrt{2}} \on{U_{A_2}}\on{U_{B_2}} \ket{\psi_2}^{S_2}  \on{U_{B_1}}\on{U_{A_1}} \ket{\psi_1}^{S_1}  \pm \on{U_{B_2}}\on{U_{A_2}} \ket{\psi_2}^{S_2}  \on{U_{A_1}}\on{U_{B_1}} \ket{\psi_1}^{S_1}.
\end{equation}
Both of these states can violate a Bell inequality for the right choices of unitary operations made by a$_i$ and b$_i$ and measurements for agents c$_i$. 

For example, consider that S$_1$ and S$_2$ are two-level systems starting at initial state $\ket{\psi_1}^{S_1}\ket{\psi_2}^{S_2}=\ket{\uparrow}\ket{\uparrow}$, written in the z basis. We can choose the operations
\begin{equation}
    \on{U_{A_1}}=\on{U_{B_2}} =\frac{\mathds{1}+\sigma_x}{\sqrt{2}}  \qquad \on{U_{A_2}}=\on{U_{B_1}}=\sigma_{z},
\end{equation}
where $\sigma_i$ are the Pauli matrices. One can quickly check that 
$$\frac{\mathds{1}+\sigma_x}{\sqrt{2}} \sigma_z = \frac{\sigma_z+\sigma_y}{\sqrt{2}},\qquad  \sigma_z\frac{\mathds{1}+\sigma_x}{\sqrt{2}}=  \frac{\sigma_z-\sigma_y}{\sqrt{2}}$$
Then, the state \ref{finalConditionedState} for this case becomes
\begin{align}
     \ket{\psi}^{S}_{\pm}&=\frac{1}{\sqrt{2}}\left[\left(\frac{\ket{\uparrow}+\ket{\downarrow}}{\sqrt{2}}\right)^{S_1} \left(\frac{\ket{\uparrow}+\ket{\downarrow}}{\sqrt{2}}\right)^{S_2}  \pm \left(\frac{\ket{\uparrow}-\ket{\downarrow}}{\sqrt{2}}\right)^{S_1} \left(\frac{\ket{\uparrow}-\ket{\downarrow}}{\sqrt{2}}\right)^{S_2} \right] \nonumber
     \\
     &=:\frac{1}{\sqrt{2}}\left[\ket{+}\ket{+}\pm \ket{-}\ket{-} \right].\label{entangledState}
\end{align}
A Bell inequality known as CHSH\cite{CHSH,Fine} can be evaluated in this setting. It is a statement on the expectation values of measurements on a bipartite system. It also considers that two agents c$_1$ and c$_2$ realize measurements on subsystems S$_1$ and S$_2$ of a bigger system S, while maintaining spacelike separation. For the task, agent c$_1$ can choose between two measurement settings $i_1=\{0,1\}$, and analogously c$_2$ can choose between a different pair $i_2=\{0,1\}$. Considering that the measurements chosen can only have outcomes $\pm 1$, the inequality below must be obeyed in a classical setting:
\begin{equation}
    E(i_1=0,i_2=0) +   E(i_1=0,i_2=1)  + E(i_1=1,i_2=0) -  E(i_1=1,i_2=1)\leq2,
\end{equation}
where $E(i_1,i_2)$ denotes the expectation value for the product of the observables being measured with settings $i_1$ and $i_2$. If we choose the possible settings to be 
\begin{align}
    i_1&=\left\{0:\text{Measurement of }\frac{\sigma_y-\sigma_z}{\sqrt{2}}, 1:\text{Measurement of }\frac{\sigma_y+\sigma_z}{\sqrt{2}}\right\}
    \\
    i_2&=\left\{0:\text{Measurement of }\sigma_y, 1:\text{Measurement of }\sigma_z\right\},
\end{align}
it is possible to calculate each expectation value ($E=\braket{\psi|A|\psi}$ for an observable $A$) and verify that the states in \eqref{entangledState} maximally violate the inequality:
\begin{equation}
    \bigl[E(0,0) + E(0,1)  + E(1,0) -  E(1,1)\bigr]_{\ket{\psi}^{S}_{\pm}} = \mp 2\sqrt{2}.
\end{equation}
Hence, the gravitational quantum switch in Fig.~\ref{fig:BellTempOrderViolation} does not obey the statement of Bell's theorem for temporal order in a situation in which all the assumptions in \ref{BellsThm} are believed to be satisfied, except for classical order of the events in the experiment.

Although we are using a picture of Earth to represent the massive system in figures, one might wonder how far are we from the realization of these protocols using microscopic or mesoscopic systems as the mass. Would we in principle need to go to the Planck scale to achieve indefinite order with the gravitational influence of these systems? Despite the complete infeasibility of the experiment in Fig.~\ref{fig:BellTempOrderViolation} in laboratory, the authors of~\cite{tbell}  analyze approximate  conditions on clocks and masses for small systems and tiny distances to answer this negatively: effects due to quantum gravity in the low-energy regime like this switch of orders are expected to show up for scales much bigger than Planck's, although still not very close to experimental realization. They also consider models in which a superposition of mass is postulated to decohere~\cite{Scully2018,Diosi,Penrose} and show that the duration of the protocol can in principle be made smaller than the decoherence time scale, so the violation could be verified even for these theories.

\section{Remarks}
The reader may have noticed a slight change of vocabulary in this chapter. For instance, the terms ``indefinite \emph{causal} order/structure'', ``indefinite \emph{temporal} order'', or even just ``indefinite order'' can appear interchangeably, although we tried to stick to the original works. By the definitions, we can verify that the term ``temporal order'', used in reference \cite{tbell}, refers to the same concept we were calling ``causal order'' in the first chapters, and the authors even compare their results to causal inequalities and causal witnesses. 
The former term was used in theoretical works involving universal time dilation as well as in experimental works about table-top implementations\cite{RubinoAgain}, while the latter is the most common, appearing in most works about order related to abstract computational structures, quantum information in general and also in works on the intersection of gravity and quantum theory~\cite{Voji}. Indeed, different authors hold different opinions on when to attribute the word `(non-)causal' to elements of their formalisms and there are a lot of subtleties to consider, such as the existence of structure to even start investigating causality, the nature of events, whether something can only be done using a close timelike curve, whether causal paradoxes emerge and can be probed by Alice and Bob, etc. This is a relatively new area of research, specially regarding explicit descriptions of order on quantum spacetimes as done here. Either way, we opt for mostly using the term ``indefinite order'' and rely on the definitions throughout the thesis. 

The idea of order remains fairly similar in this chapter, but the events related by that order are described rather differently. The authors of~\cite{tbell} specify their notion of operational event and the events in each protocol, which is hardly done in other works on indefinite orders. A clear notion of events, which does not necessarily need to be this one, provides better grounds to argue that a protocol should be interpreted as the realization of a specific process. The definition can also be useful for understanding what order can (and cannot) inform us about gravity and its quantum nature. For instance, velocities can also cause universal time dilation on clocks. In particular, some gravitational situations are analogous to accelerated frames, due to the equivalence principle. Some works explored the idea of indefinite order using superpositions of these frames~\cite{Rindler,RindlerDaZych}. We also have the fact that quantum clocks in superposition of paths on top of classical spacetimes can produce analogous results~\cite{Zych,QSonEarth} (indeed, the quantum gravity version is based on that relation). Despite all of these possible analogies in classical spacetimes, protocols with specified operational events are still interesting because they all seem to require genuine time dilation to produce indefinite orders, at least in the formulation that uses ideal clocks. Although the probabilities of the switch are always the same, the setup of the gravitational quantum switch in the quantum gravity case still has differences from its analogous classical spacetime counterparts. For instance, a quantum superposition of the mass would cause indefiniteness of order relations for several events defined by clocks surrounding it, while in the counterpart where a quantum clock is near a classical mass (see Fig.~\ref{fig:RelativitySuperpositions}), only that one clock attests indefiniteness of order. For more comments on this, the reader is referred to the supplemental material of~\cite{tbell}.

The result of Bell's theorem for temporal order in the last section is proposed as a theory independent but device-dependent test, meaning it relies on the notions of states and transformations and not just pure probabilities. In the hierarchy of tests of indefinite order, its violation should lie in between the detection of a causal witness~\cite{Araujo2015}, which is theory and device-dependent, and the violation of a causal inequality~\cite{Oreshkov}, which is independent of both. However, the validity of the theorem as a theory independent result was very recently questioned in reference~\cite{RindlerDaZych} by some of the same authors who proposed the theorem. They point out that one assumption is made but not stated: that the evolution of system S is trivial except for the transformations applied at events A$_i$, B$_i$ and C$_i$. Adding that to the list of assumptions makes the theorem work fine, but the authors argue that theory dependent notions would be required to properly write such an assumption on evolutions. If this is confirmed, it weakens the interpretation of the violation of the inequalities as a signature of indefinite order, putting it closer to a causal witness test. 
 
 \chapter{Quantum switch in a classical curved spacetime}\label{Chap QSonEarth}
 In this chapter, we describe the original work~\cite{QSonEarth}, which includes results obtained during the development of this Master's project. It consists of a formulation of a quantum switch in a classical curved spacetime. This switch uses quantum agents on a spacetime produced by a spherical mass to generate indefinite order of operations on a target system. The indefiniteness of order comes as a result of the entanglement between proper time rate and relative distance from the mass for different positions of the agents. The case for which the massive body is the Earth is analyzed. This protocol shares similarities with the gravitational quantum switch of the previous chapter, even though the gravitational field is not in a quantum superposition. The point in studying this specific implementation is that the switch can be used as a resource probe whether delocalized quantum systems feel distinct gravitational time dilations as expected from them. Its realization would probe the physical regime described by quantum mechanics on curved spacetimes, which has
not yet been explored experimentally. 

\section{Introduction}
As we have seen, the quantum switch is a class of protocols, involving a \emph{target} quantum system, two \emph{operations} $\mathcal{A},\mathcal{B}$ that can be applied on it and a way to associate the order of their application to the state of another quantum system, the \emph{control}. If the control is prepared in a superposition, the final state of the target indicates that there was a superposition of the orders of application: $\mathcal{A}$ before $\mathcal{B}$ and $\mathcal{B}$ before $\mathcal{A}$. The quantum switch is an example of process with indefinite order that has been reproduced in optical tables~\cite{Procopio,Rubino,Goswami,Taddei}, where the control and target are usually two degrees of freedom (dofs) of the same system, such as the path and polarization of a photon. Indefiniteness of order in a quantum switch can be testified using a causal witness~\cite{Araujo2015,Rubino,Goswami} and via the Bell's theorem for temporal order~\cite{tbell,RubinoAgain}.

We have discussed the gravitational quantum switch, a thought experiment in which the control system is spacetime itself. It is just one of the many proposals  in quantum gravity phenomenology making use of the hypothesis that the gravitational field can exist in a superposition of classical configurations~\cite{tbell,Ford,Anastopoulos,Bose, Vedral,Belenchia,Rovelli,Howl}. The relation between the gravitational quantum switch and optical implementations in classical spacetime was discussed in~\cite{Voji}. A proposal for simulating the gravitational quantum switch using accelerated agents on Minkowski spacetime was described in~\cite{Rindler}. 

With the current technology, it is still a challenge to put a body of sufficient mass in a superposition for enough time to realize the gravitational quantum switch. In the present work, we introduce a strategy for implementing a quantum switch that also happens because of the gravitational time dilation in an analogous way, but in a classical background. For this, we can borrow some ideas from the gravitational protocols we have seen in chapter~\ref{Chap:QGravitySwitch}. In the gravitational quantum switch, the relevant feature for indefinite order to emerge is the relative distance from the massive spherical body to the agents. In the simplest case presented in section~\ref{gravQSwitch}, the mass is in a superposition of being on the left or on the right side of two centralized agents, see Fig.~\ref{fig:ZychLightcones}. The control of order can also be achieved in a less symmetric situation as in the figure below.
\begin{figure}[ht]
\centering
   \includegraphics[scale=0.22]{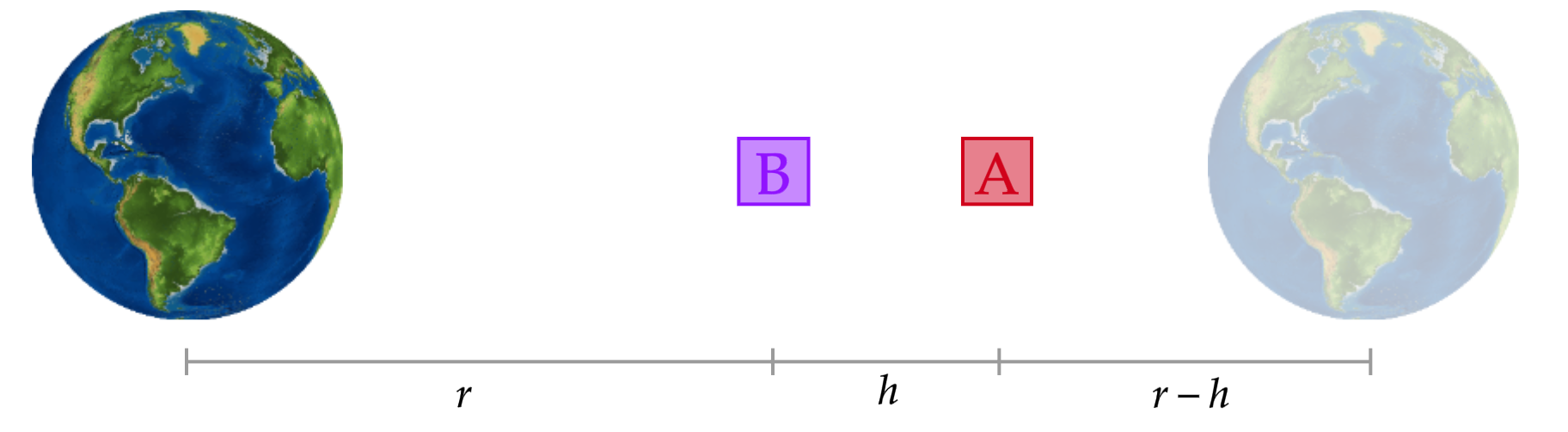}
    \caption{A mass configuration for a gravitational quantum switch.}
    \label{fig:Earthsup}
\end{figure}   

The boxes A and B represent the laboratories where the agents realize operations $\mathcal{A}$ and $\mathcal{B}$. The superposition manifests itself in the proper times perceived by the agents' clocks due to gravitational time dilation. The calculations for choosing the events A and B for the switch are analogous to what was made in \eqref{tauafCondition2}, using the distances of the problem. The target system then travels back and forth a number of times between the laboratories, but the agents can only apply their operations once, when their clocks show a predetermined proper time $\tau_{a/b}^*$. This procedure could also be accomplished, for example, if agent A was a quantum system capable of interacting with the target and if we prepared it in a superposition of positions as depicted in Fig.~\ref{fig:agentssup}.
\begin{figure}[ht]
\centering
\includegraphics[scale=0.22]{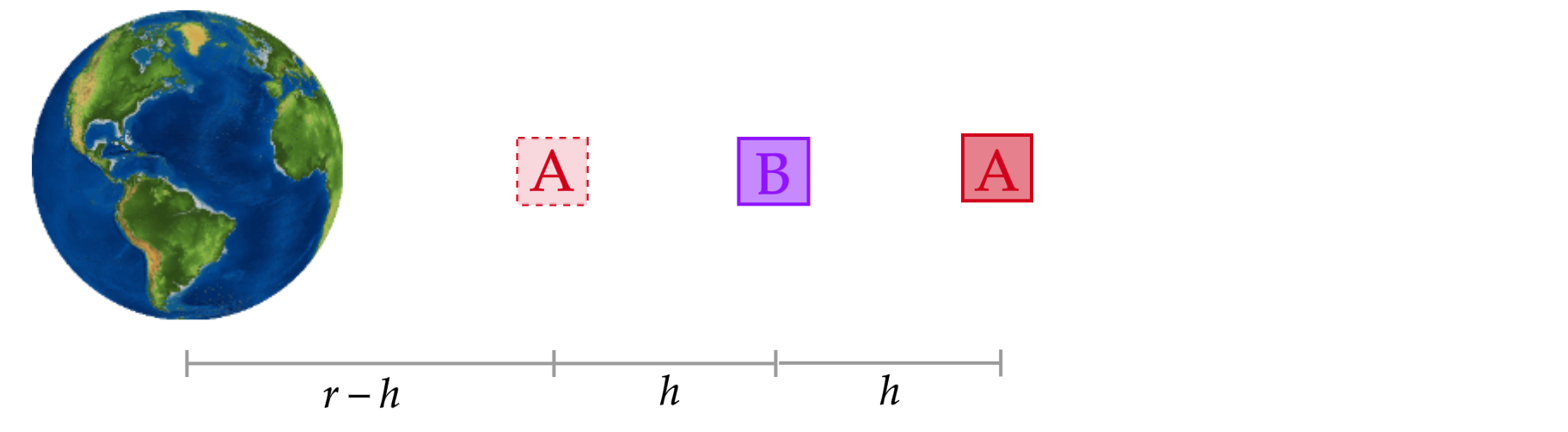}
    \caption{Configuration of quantum agents for a quantum switch in a classical metric.}
    \label{fig:agentssup}
\end{figure}

 The protocol presented here is based on this idea, also explored in~\cite{Rindler}. Since we are not putting the mass in superposition anymore, the main reason why this alternative experiment could be unrealistic on Earth is its duration for quantum agents at reasonable positions (heights). That time scale is determined by how long it takes for the agents to accumulate the necessary relative proper time delay so that their events can be used to make a switch. Here, we further allow the distances from the alternative positions of the agents to the central mass to change over time. That is, the boxes in the picture above will not be static. We will show that the freedom to move the agents during the protocol can be used to considerably decrease the total time of realization, provided we make a careful choice of paths. When the massive body is the Earth, the duration of the experiment, which in the static case of Fig.~\ref{fig:agentssup} would be so large as to prevent any possibility of experimental realization, can be brought from the time scale of a year to that of seconds.

We first describe our general protocol and then compute its minimum duration on Earth's surface. Next we illustrate how the protocol could be implemented using few-level systems
as agents. Then we discuss our results and their relation to previous protocols for the quantum switch. The content of the next sections is mostly a transcript of~\cite{QSonEarth}.

\section{Quantum switch with entangled agents}

Consider a spherical body with mass $M$ and radius $R$. The gravitational field outside the body is described by the Schwarzschild metric,
\begin{equation}
ds^2 = -\left(1-\frac{R_S}{r}\right) c^2 dt^2 + \left(1-\frac{R_S}{r}\right)^{-1} dr^2 + r^2 d\Omega^2 \, ,
\label{eq:metric}
\end{equation}
where $d\Omega^2$ is the metric of the unit sphere and $R_S=2GM/c^2$ is the Schwarzschild radius. Our protocol for the quantum switch involves three quantum systems that can be manipulated in the vicinity of its surface $r=R$, which we call the agents A and B and the target system. By agents we mean systems that are able to interact with the target system, and thereby operate on its state. The three systems have nontrivial internal structures, which we will discuss in detail later. In this section, we first introduce the relevant features of quantum mechanics on curved spacetimes required for the description of the quantum agents in the Schwarzschild metric, and then present our proposed protocol for general noncommuting operations $\mathcal{A}$ and $\mathcal{B}$. A concrete implementation with a specific choice of operations is described in the next section.

\subsection{Quantum systems with internal dofs in a weak gravitational field}

The dynamics of quantum systems with internal degrees of freedom in a curved spacetime has been analyzed in several works in the regime of weak gravitational fields and motions that are slow in comparison with the speed of light (see \cite{Zych} and references therein). In particular, the case of a system with internal degrees of freedom in a weak gravitational field produced by a central mass is discussed in \cite{Zych2011,tbell}. The metric is then given by equation~\eqref{eq:metric}, with $R_S/r\ll1$, and is completely characterized by the gravitational potential
\begin{equation}
\Phi = - \frac{GM}{r} \, .
\label{eq:potential}
\end{equation}

Consider a system with internal degrees of freedom described in the absence of the gravitational potential by a Hilbert space $\mathcal{H}^{int}$. The same Hilbert space describes the internal degrees of system in the presence of the gravitational potential $\Phi$. The spatial configuration of the system is represented by a wavefunction $\psi(x) \in \mathcal{H}^{ext} \simeq L^2(\mathbb{R}^3;\mathbb{C})$. The full Hilbert space of the system is then $\mathcal{H} = \mathcal{H}^{int} \otimes \mathcal{H}^{ext}$. The Hamiltonian in the curved geometry can be written in terms of the Hamiltonian $H_{int}$ describing the system in the absence of a gravitational potential and the gravitational potential $\Phi$, as discussed in \cite{Zych,Zych2011,tbell}. It describes a modified Schr\"odinger equation that includes corrections due to the effect of gravitational time dilation in the evolution of the wavefunction and of the internal states.

The evolution of the internal degrees of freedom has a simple description when the wavefunction of the system is well localized at all times. Let  the support of $\psi(t,x)$ be restricted, for each $t$, to a finite region $V_{x(t)}$ around a point $x(t)$, within which the variation of the potential is negligible, $(\Phi(t,x')- \Phi(t,x))/ \Phi(x(t)) \ll 1$, $\forall x'\in V_{x(t)}$. In this case, the system has a well defined position at each time and a well defined proper time along its evolution, which for a slow motion is simply determined by
\begin{equation}
\frac{d \tau}{dt} = \sqrt{1 - \frac{R_S}{r}}  \, .
\end{equation}
We define a path $\rm{P}$ as a worldline $x(t)$ in spacetime. A path state $\ket{\rm{P}}$ is a wavefunction $\psi(t,x) \in \mathcal{H}^{ext}$ that is well localized at the event $x(t)$ of the path $\rm{P}$ for each instant of time $t$. The state at a given time $t$ is represented as $\ket{{\rm P};t}$. For localized states, the explicit form of the wavepacket is not relevant for the evolution of the internal state $\ket{\phi} \in \mathcal{H}^{int}$, which is determined by
\begin{equation}
i \frac{d}{d\tau_P} \ket{\phi}_{\rm P} = H_{int} \ket{\phi}_{\rm P} \, ,
\label{eq:internal-evolution-general}
\end{equation}
where $\tau_P$ is the proper time along the path $\rm{P}$, and $H_{int}$ is the Hamiltonian for the internal degrees of freedom in the absence of a gravitational potential. The state evolves with respect to the proper time as in the absence of the gravitational potential, but the proper time $\tau_P$ and the coordinate time $t$ are now distinct, being related by a factor describing the gravitational time dilation. Equivalently, the evolution can be written in terms of the coordinate time $t$ by
\begin{equation}
i \frac{d}{dt} \ket{\phi}_{\rm P} = H \ket{\phi}_{\rm P} \, ,
\end{equation}
with the Hamiltonian
\begin{equation}
H = \left. \frac{d\tau}{dt}\right|_{\rm P} H_{int} \simeq \left( 1+\frac{\Phi|_{\rm P}}{c^2} \right) H_{int} \, ,
\end{equation}
as presented in \cite{tbell}.

Consider a superposition of localized states of the form
\begin{equation}
\ket{\Psi} = \sum_i \ket{\rm{P}_i}\ket{\phi_i} \, .
\end{equation}
For brevity, we call such states path superposition states. For each path $\rm{P}_i$ in the superposition, the corresponding internal state $\ket{\phi_i} $ evolves according to equation~\eqref{eq:internal-evolution-general} with respect to the proper time along $\rm{P}_i$. Suppose that the initial state is separable, with $\ket{\phi_i}=\ket{\phi}$ at $t=t_0$. Time evolution will in general produce entanglement with respect to the bipartition  $\mathcal{H} = \mathcal{H}^{int} \otimes \mathcal{H}^{ext}$, as the internal state will evolve by distinct amounts of proper time along the distinct paths. In this sense, the proper times along the paths become entangled with the paths. On the other hand, if the proper times along the paths are the same at some instant of time $t$, then the time-evolved state will be separable again at such a time $t$, if the initial state is separable.

The dynamics of localized states in a curved spacetime provides a simple context for the analysis of gravitational effects in quantum systems. It is natural to expect the evolution of internal degrees of freedom to take place with respect to the proper time along the path followed by the localized state, as follows from the approach developed in \cite{Zych,Zych2011}. For path superposition states, the entanglement of the proper times with the spatial localization of the paths leads to new quantum effects that depend on the curvature of spacetime, as for instance a drop of visibility in quantum interferometric experiments \cite{Zych2011,Zych_2012} and a universal mechanism for decoherence in the position of composite particles \cite{Pikovski2015}.

In addition, models explored in recent works for the formulation of a phenomenology of the low-energy limit of quantum gravity \cite{Bose,Vedral,Rovelli,tbell} also rely on the validity of the quantum mechanics of nonrelativistic systems in curved spaces. In order to describe the evolution of a quantum system in a superposition of geometries, one must first be able to describe its evolution in each of the geometries in the superposition. In such works, it is assumed that an entanglement between proper times and paths takes place for each geometry. In the context of quantum gravity, the superposition of proper times in each geometry is combined with effects due to the superposition of geometries. The interferometric experiment proposed in \cite{Bose,Vedral} constitutes an example of a model involving superpositions of proper times in a superposition of geometries, as discussed in \cite{Rovelli}, as well as the gravitational quantum switch proposed in \cite{tbell}.


\subsection{Protocol for the quantum switch}
\label{sec:general-protocol}

Let us now describe our protocol for the quantum switch. The agents A and B and the target are quantum systems with internal degrees of freedom in the Schwarzschild metric \eqref{eq:metric}. We restrict to the weak-field regime $R_S/r \ll 1$. According to the discussion in the previous subsection, the Hilbert space of the agent A has the form $\mathcal{H}^{ext}_A \otimes \mathcal{H}^{int}_A$, where $\mathcal{H}^{int}_A $ describes its internal state and $\mathcal{H}^{ext}_A$ describes its position, and similarly for the agent B and target.

Our protocol involves a path superposition state for the agent A that includes two path states $|{\rm P}_{{\rm A}\prec {\rm B}}\rangle,  |{\rm P}_{{\rm B}\prec {\rm A}}\rangle \in \mathcal{H}^{ext}_A$, while B remains at a constant position at a height $h$ above the surface $r=R$ of the massive body that produces the gravitational field. The paths for A are represented in Fig.~\ref{LabsTraveling}. Both start from a common departure point at $r=R$ with the same angular position as that of agent B. Next, they separate horizontally in a symmetric manner up to a distance $d$. For the path ${\rm P}_{{\rm A}\prec {\rm B}}$, A starts traveling up at the instant $t_0$ until it reaches a point $X_{{\rm A}\prec {\rm B}}$ at $r=R+h$ at time $t_1$. Put $\Delta t_v=t_1-t_0$. A remains at this position afterwards. For ${\rm P}_{{\rm B}\prec {\rm A}}$, A also travels up to a point $X_{{\rm B}\prec {\rm A}}$ at $r=R+h$ in an interval $\Delta t_v$, but starting at a later time $t_2=t_1+\Delta t_s$. The target system, traveling horizontally, meets the point $X_{{\rm A}\prec {\rm B}}$ at $t_3$ and then travels towards $X_{{\rm B}\prec {\rm A}}$ in a time interval $\Delta t_c$. After that, the agent A is measured in a diagonal basis, as we will discuss in more detail later.

\begin{figure}
\hspace{-0.9cm}\includegraphics[scale=0.066]{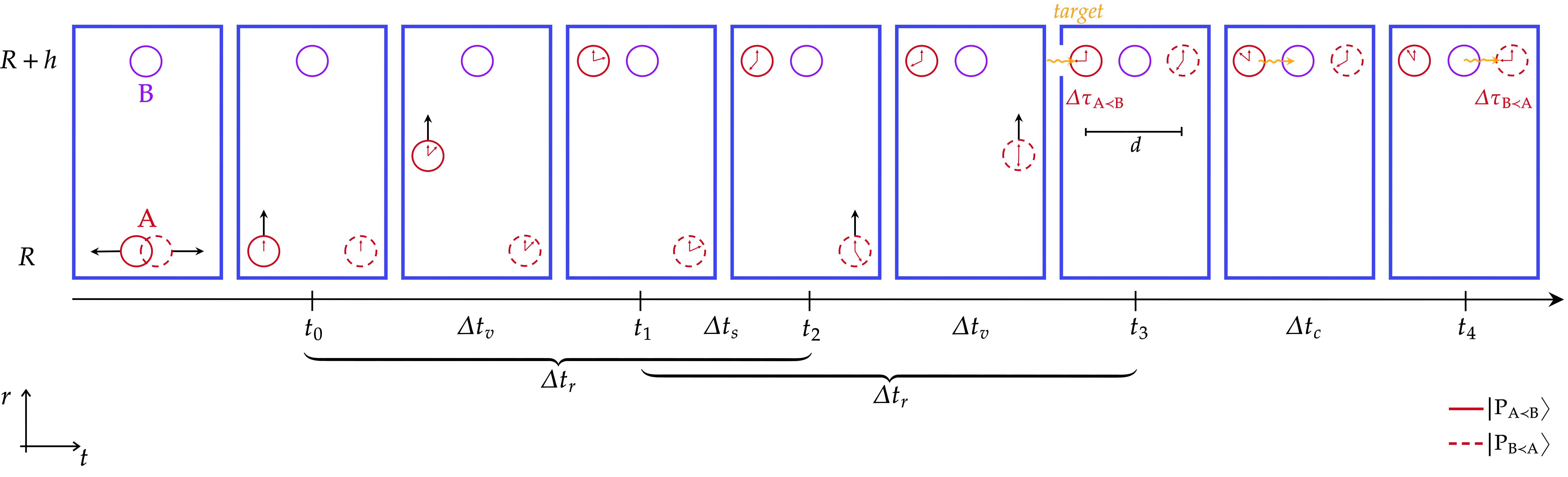}
\caption{Superposition of paths. The vertical axis represents the radius $r$ and the horizontal axis represents time. At $t=t_0$, the agent A is prepared in a superposition state $(|{\rm P}_{{\rm A}\prec {\rm B}}\rangle+|{\rm P}_{{\rm B}\prec {\rm A}}\rangle)/\sqrt{2}$. For the path $|{\rm P}_{{\rm A}\prec {\rm B}}\rangle$, A starts traveling up at $t_0$ to a height $h$ above the surface $r=R$. For the path $|{\rm P}_{{\rm B}\prec {\rm A}}\rangle$, it travels up in the same manner starting at $t_2$. The target system, traveling horizontally at height $h$, crosses $|{\rm P}_{{\rm A}\prec {\rm B}}\rangle$, meets agent B, and then crosses $|{\rm P}_{{\rm B}\prec {\rm A}}\rangle$.}
\label{LabsTraveling}
\end{figure}

We consider a path superposition of the form
\begin{equation}
\ket{\Psi_{\rm A}} = \frac{1}{\sqrt{2}} \left(\ket{{\rm P}_{{\rm A}\prec {\rm B}}} \ket{\phi_{\rm A}}_{{\rm P}_{{\rm A}\prec {\rm B}}} + \ket{{\rm P}_{{\rm B}\prec {\rm A}}} \ket{\phi_{\rm A}}_{{\rm P}_{{\rm B}\prec {\rm A}}} \right) \, ,
\end{equation}
with
\begin{equation}
\ket{\phi_{\rm A};t_0}_{{\rm P}_{{\rm A}\prec {\rm B}}} = \ket{\phi_{\rm A};t_0}_{{\rm P}_{{\rm B}\prec {\rm A}}} \, ,
\end{equation}
i.e., we assume that the internal state of A is the same for both paths before they separate, in which case it remains the same while the paths remain at a common height, and that both paths are equally probable.

The agent A is configured to operate on the target at a specific instant $\tau^*$ in its proper time as indicated by an internal clock. This means that it is prepared in a state for which the probability of interacting with the target is considerable at $\tau^*$, but not at other times. The proper time of A must then be equal to $\tau^*$ when the target meets it for the interaction to take place. This must happen for both ${\rm P}_{{\rm A}\prec {\rm B}}$ and ${\rm P}_{{\rm B}\prec {\rm A}}$ for the interaction to occur regardless of the path taken. The agent B can interact with the target when their worldlines intersect. Under these conditions, the operations $\mathcal{A}$ and $\mathcal{B}$ are applied in distinct orders for each component $|{\rm P}_{{\rm A}\prec {\rm B}}\rangle$ or $|{\rm P}_{{\rm B}\prec {\rm A}}\rangle$ in the superposition of paths.

Let $\Delta\tau_{{\rm A}\prec {\rm B}}$ be the proper time along the path ${\rm P}_{{\rm A}\prec {\rm B}}$ from $t_0$ to the moment the target reaches it at $t_3$, and $\Delta\tau_{{\rm B}\prec {\rm A}}$ be the proper time along the path ${\rm P}_{{\rm B}\prec {\rm A}}$ from $t_0$ to the moment the target reaches it at $t_4$. Then the quantum switch will happen only if
\begin{equation}\label{1stCond}
\Delta\tau_{{\rm A}\prec {\rm B}}=\Delta\tau_{{\rm B}\prec {\rm A}} = \tau^* \, .
\end{equation}
Put $\Delta t_r\equiv\Delta t_v+\Delta t_s$. The interval $\Delta \tau_{{\rm A}\prec {\rm B}}$ has contributions from the time elapsed for A while it travels up and while it remains at radius $R + h$,
\begin{equation}\label{taua}
\Delta\tau_{{\rm A}\prec {\rm B}} 
=\Delta\tau_v +\sqrt{\left(1-\frac{R_S}{R+h}\right)} \ \Delta t_r \, ,
\end{equation}
where $\Delta \tau_v$ is the proper time elapsed for A while it travels up, corresponding to the coordinate time $\Delta t_v$. The interval $\Delta \tau_{{\rm B}\prec {\rm A}}$ includes contributions from the time elapsed for A while it stays at $r=R$, while it travels up, and while it waits the arrival of the target at $R+h$,
\begin{equation}\label{taub}
\Delta\tau_{{\rm B}\prec {\rm A}}
=\Delta\tau_v +\sqrt{\left(1-\frac{R_S}{R}\right)} \ \Delta t_r +\Delta\tau_{c} \, ,
\end{equation}
where $\Delta\tau_{c} = \sqrt{1-R_S/(R+h)} \ \Delta t_c$ and the proper time $\Delta \tau_v$ elapsed for the agent while it travels up is the same as in the other path. Substituting Eqs.~(\ref{taua}) and (\ref{taub}) into equation~(\ref{1stCond}), we find
\begin{equation} \label{main}
\left(
\sqrt{1-\frac{R_S}{R+h}}-\sqrt{1-\frac{R_S}{R}} \
\right)
\frac{\Delta t_r}{\Delta t_c} =\sqrt{1-\frac{R_S}{R+h}} \, .
\end{equation}
In the weak field regime, characterized by $R_S \ll R$, this equation reduces to
\begin{align}
\Delta t_r &= \frac{R}{R_S} \left( \frac{2R}{h} + 2  \right) \Delta t_c \nonumber \\
	& = \left( \frac{c^2}{g h} - \frac{c^2}{2} \frac{R_{0101}}{g^2} \right) \Delta t_c \, ,
	\label{eq:weak-field}
\end{align}
where $g=GM/R^2$ and $R_{0101}=-c^2 R_S/R^3$ is a component of the curvature tensor of the Schwarzschild metric (\ref{eq:metric}). The subleading term in the weak-field approximation is independent of $R_S/R$ and the following terms are proportional to higher powers of $R_S/R$. The first term in equation~(\ref{eq:weak-field}) depends on the acceleration of gravity at the radius where the experiment is performed. The second term describes the effect of curvature.

The parameter $\Delta t_r$ sets a time scale for the duration of the experiment. The total time of the protocol is $\Delta t_{\rm exp}\equiv t_4 - t_0$. For small $d$, such that $\Delta t_c \ll t_3-t_0$, this is well approximated by $t_3-t_0=\Delta t_r+\Delta t_v$. If the paths remain at distinct heights for a large amount of time, $\Delta t_v \ll \Delta t_s$, then we have $\Delta t_{\rm exp} \simeq \Delta t_r$. On the other hand, if $\Delta t_s= 0$, then $\Delta t_{\rm exp} \simeq 2 \Delta t_r$. In general, $\Delta t_{\rm exp} \sim \Delta t_r $.

Near the surface of the spherical mass, we can take $h\ll R$. The first term in equation~(\ref{eq:weak-field}) is then dominant. Considering the target to be a photon, we have $\Delta t_c\simeq d/c$. Under these approximations,
\begin{equation}\label{simple}
\Delta t_r \simeq \frac{cR^{2}d}{GMh} \, .
\end{equation}
We see that $\Delta t_r$ depends on two fundamental constants, $c$ and $G$; two properties of the massive body, $M$ and $R$; and two variables $d$ and $h$ that can be adjusted in the experiment. The duration of the experiment is minimized for the smallest possible distance $d$ between the paths and the largest possible height $h$. The distance $d$ in any implementation of the protocol will be limited by possible interactions between the agents and the precision of the clock. If the distance between the agents is so small that they can interact, their operations on the target will not be independent, as assumed. In addition, the clock must be sufficiently precise to resolve the time of flight $d/c$ of the photon between the paths of A. The height $h$ will be limited by the experimental capability of transporting A along its path-superposition state without decoherence.

Substituting the numerical values of $c,G$ and the radius $R_\odot$ and mass $M_\odot$ of the Earth in equation~(\ref{main}), we can estimate the duration of the experiment near the surface of Earth,
\begin{equation}\label{factorEarth}
\Delta t_{\rm exp} \sim 3 \times 10^7 \, \frac{d}{h} {\rm  s}\, .
\end{equation}
For an atomic clock with a precision of $10^{15} {\rm  Hz}$, for instance, the time of flight of the photon can be resolved for a distance of $0.3 \, \mu{\rm m}$. Setting $d = 0.3 \, \mu{\rm m}$ and $h=1 {\rm  m}$, we find $\Delta t_{\rm exp} \sim 9 {\rm  s}$.

We can also consider the case of a small mass. As in~\cite{tbell}, this example can be used to show that the effect does not require any physical quantity to be at the Planck scale to be observed. In this setting, it is natural to bring the departure point for the paths of A as close as possible to the mass, and we can take $h \gg R \gg R_S$. The second term in equation~(\ref{eq:weak-field}) is then dominant.  In this regime,
\begin{equation} \label{small-mass}
\Delta t_r \simeq \frac{cRd}{G M} \, .
\end{equation}

A well-known protocol for a gravitational quantum switch was previously formulated in the context of quantum gravity in \cite{tbell}. In that case, the agents and target move in a quantum state of the gravitational field produced by a mass in a superposition of positions.  In \cite{ZychRelQuantSup}, however, it is argued on general grounds that outcomes of a process in which a localized system interacts with a system in a superposition of positions can be reproduced by a process in which the first system is delocalized while the second system is localized. This suggests the possibility of simulating the gravitational quantum switch proposed in \cite{tbell} with delocalized quantum agents in the classical gravitational field of a central mass at a definite position. Our work was motivated by this correspondence, but it was not our purpose to exactly simulate the protocol of \cite{tbell}. Instead, we aimed to reproduce its relevant features using quantum agents in the Schwarzschild metric in order to obtain an efficient implementation of the quantum switch in this context, as a possible test of quantum mechanics on curved spacetimes. We can now compare the duration of our protocol with that expected for the protocol of \cite{tbell} under the correspondence proposed in \cite{ZychRelQuantSup}, as a means of testing its efficiency.

In the protocol of \cite{tbell}, for a small distance $d\ll R$ between the agents, the minimum proper time elapsed for the agents for the quantum switch to occur is $\tau^*= 2 r_b^2 c/GM$, where $r_b$ is the distance from the agents to the mass. Setting $r_b = R_\odot$, we find that $\tau^*$ is of the order of a year. This result can be compared with equation~\eqref{factorEarth}. The duration of the experiment is suppressed by a factor of $d/h$ using dynamical agents as described here.

For the small mass limit, a mass of $M = 0.1 \, \mu{\rm g}$ was considered in \cite{tbell}, with one agent at a distance of $1 {\rm  fm}$ and the other at a distance of $0.1\, \mu{\rm m}$ from the mass. The protocol for the Bell test using static agents explored in~\cite{tbell} would then take around $10 \, {\rm  h}$. Setting $R=1 \, {\rm  fm}$ in equation~\eqref{small-mass} and assuming $d=R$, we obtain $\Delta t_{\rm exp} \sim 5 \times 10^{-2} \, {\rm  s}$. In general, the duration of our protocol is of the order of one second if $Rd \sim 10^{-28} \, {\rm  m}^2$, and grows linearly with $Rd$.

\section{A model for the operations} \label{sec:Modelfortheops}

We discussed the spacetime features required for the realization of the quantum switch in a Schwarzschild spacetime. We now explore possibilities for the operations performed by the agents. For concreteness, we consider a model involving a particular choice of quantum systems as agents and target.  The relevant features of the model are not restricted to this specific quantum system, however, which provides an illustration of a procedure that can be adapted to other systems of interest.

\subsection{Operations with indefinite order}

The internal Hilbert space of agent A has a subsystem $\mathcal{H}_\mathsmaller{\tiny \VarClock}$ which we call the trigger. It also includes a subsystem $\mathcal{H}_A$ with six energy levels $\ket{A_i}$, $i=0,\dots,5$. That is, $\mathcal{H}^{int}_A = \mathcal{H}_\mathsmaller{\tiny \VarClock} \otimes \mathcal{H}_A$. The trigger system will play the role of an internal clock for the agent A. The agent B is a system with internal degrees of freedom described by a Hilbert space $\mathcal{H}^{int}_B = \mathcal{H}_B$ with five energy levels $\ket{B_i}$, $i=1,\dots,5$. The energy level diagrams for $\mathcal{H}_A$ and $\mathcal{H}_B$ are represented in Fig.~\ref{model}. The labels $e_0$, $e_1$, etc, are energy differences between pairs of levels for the allowed transitions. We assume that the transitions are induced by the absorption and emission of photons.

\begin{figure}[ht]
	\begin{center}
\includegraphics[scale=.3]{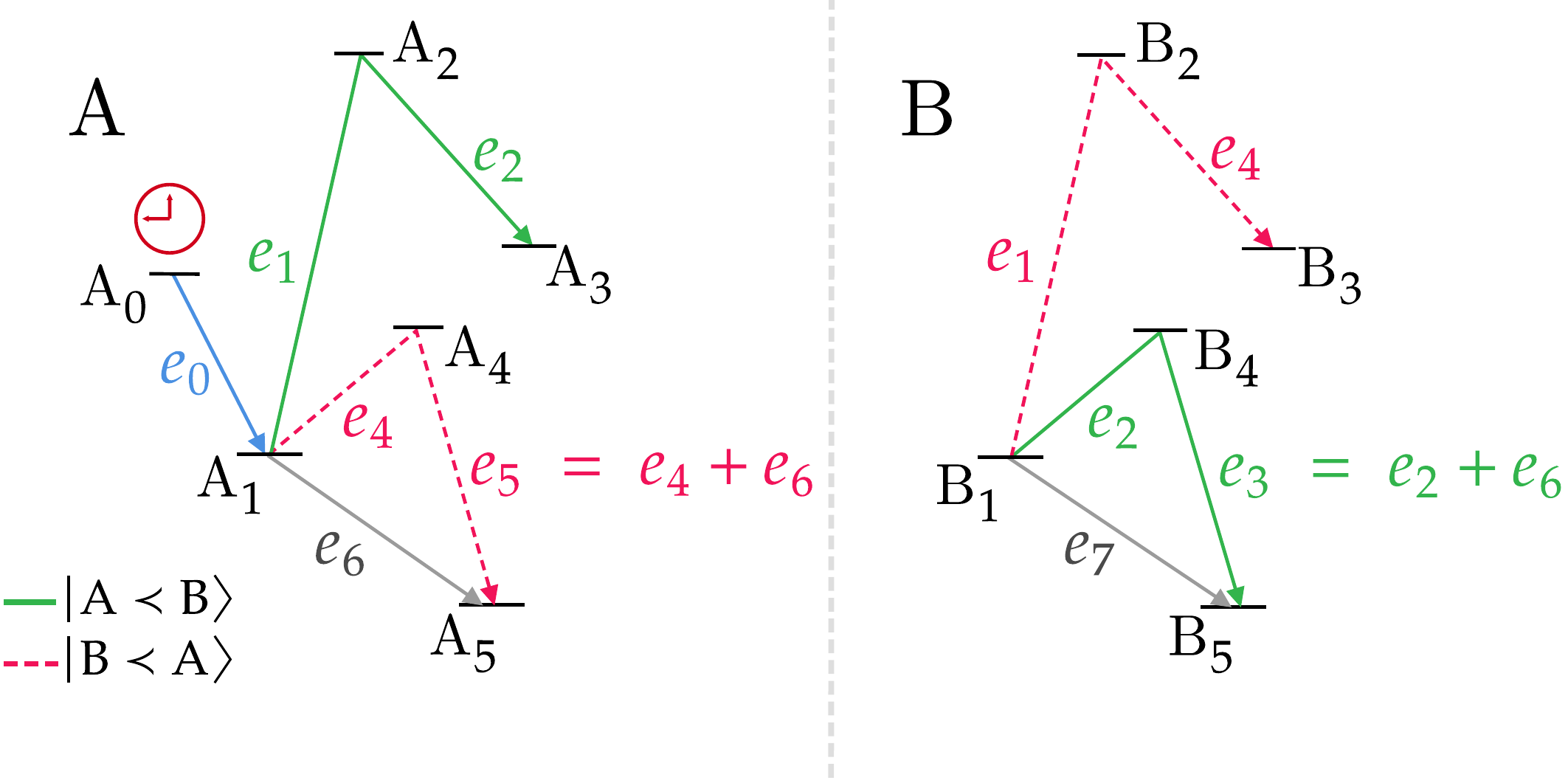}
		\caption{Energy levels of the agents A and B.}
        \label{model}
        \end{center}
\end{figure}

The trigger is coupled to the six-level system $\mathcal{H}_A$. The time-evolution of internal degrees of freedom of a quantum system following a path $P$ in a curved spacetime is generated by an internal Hamiltonian $H_{int}$ evolving with respect to the proper time $\tau_P$ along $P$ \cite{tbell,Pikovski2015,Zych}, as described by equation~\eqref{eq:internal-evolution-general}. For the agent A, in particular,
\begin{equation}
i \frac{d}{d\tau_P} \ket{\phi_{\rm A}}_P = H_{int} \ket{\phi_{\rm A}}_P \, ,
\label{eq:internal-evolution}
\end{equation}
where $\ket{\phi_{\rm A}}_P \in \mathcal{H}_{\tiny \VarClock} \otimes \mathcal{H}_A$. For a path superposition state, the evolution of the internal state is described by equation~(\ref{eq:internal-evolution}) for each path in the superposition.  A complete description of the state $\ket{\Psi_{\rm A}}$ of A also includes its spatial location,
\begin{equation}
\ket{\Psi_{\rm A}} = \frac{1}{\sqrt{2}} \left(\ket{{\rm P}_{{\rm A}\prec {\rm B}}} \ket{\phi_{\rm A}}_{{\rm P}_{{\rm A}\prec {\rm B}}} + \ket{{\rm P}_{{\rm B}\prec {\rm A}}} \ket{\phi_{\rm A}}_{{\rm P}_{{\rm B}\prec {\rm A}}} \right) \, .
\label{eq:internal-evolution-superposition}
\end{equation}

At the beginning of the experiment, $\mathcal{H}_A$ is prepared in the state $\ket{A_0}$, which is stable in the absence of the trigger. The trigger is prepared in a state $\ket{\raisebox{-2pt}{\footnotesize \VarClock} \, ; \tau=0}$. We assume that a sharp transition from $\ket{A_0}$ to $\ket{A_1}$ is induced by the trigger after a proper time $\tau^*$ has elapsed for A since $t_0$, with $\tau^*$ given by equation~(\ref{1stCond}). Denoting the unitary evolution under $H_{int}$ by $U(\tau^*,0)$ and putting
\begin{equation}
\ket{\Psi_A;0} \equiv \ket{\raisebox{-2pt}{\footnotesize \VarClock}\, ;\tau=0} \ket{A_0} \, ,
\end{equation}
we require that
\begin{equation}
U(\tau,0) \ket{\Psi_{\rm A};0} \simeq \left\{ \begin{array}{ll}
	\ket{\raisebox{-2pt}{\footnotesize \VarClock}\, ;\tau} \ket{A_0}   \, , & {\rm for } \, \tau < \tau^* - \epsilon \, ,\\
	\ket{\raisebox{-2pt}{\footnotesize \VarClock}\, ;\tau^*} \ket{A_1}  \, , & {\rm for } \, \tau = \tau^* \, .
\end{array}
\right. \label{eq:trigger-condition}
\end{equation}
We provide a concrete example of a unitary evolution satisfying the above properties in Appendix~\ref{App:trigger}, as an illustration of how such a trigger could be implemented. 

The trigger plays the role of a clock that at $\tau^*$ changes the state of A into a new state that can interact with the target. In other words, the instrument of A, represented by $\mathcal{H}_A$, is switched on by the trigger at $\tau^*$. It is in fact sufficient that the trigger generates a nonzero projection on $\ket{A_1}$. The agent B is prepared in the state $\ket{B_1}$ at the time $t_3+d/2c$. As B remains at a fixed position, an external clock can be used to prepare it in the required state at the scheduled time.

We require the levels $\ket{A_1}$ and $\ket{B_1}$ to have a small decay time satisfying $\Delta \tau_1 \ll d/c$ and $\epsilon \ll \Delta \tau_1$. A can then absorb a photon of energy $e_1$ or $e_4$ and get excited to the level $\ket{A_2}$ or $\ket{A_4}$ only if the photon arrives at $\tau^*$, within a time-window of approximately $\Delta \tau_1$. If no photon reaches the system at this time, it decays to the level $\ket{A_5}$ by emitting a photon of energy $e_6$, which testifies that A has not absorbed an incoming photon during the process. Similarly, the system $\mathcal{H}_B$ can be excited only at the coordinate time $t_3+d/2c$. If no photon reaches the system at this time, it decays to the level $\ket{A_5}$ by emitting a photon of energy $e_6$.

The experiment is designed so that, for $|{\rm P}_{{\rm A}\prec {\rm B}}\rangle$, a photon of energy $e_1$ meets A at $t_3$. If this photon is absorbed, A is excited to the level $\ket{A_2}$ and then rapidly decays to $\ket{A_3}$, emitting a photon of energy $e_2$. We represent such interaction as $\mathcal{A} \left(\ket{A_1} \ket{e_1} \right)=\ket{A_3} \ket{e_2}$. This step is understood as an operation $\mathcal{A}_\text{targ}$ on the photon state, i.e., $\mathcal{A}_\text{targ} |e_1\rangle=|e_2\rangle$. The emitted photon can be absorbed by B. When this does not happen, B decays to its ground state $\ket{B_5}$, emitting a photon of energy $e_7$ that testifies that the experiment was not completed. If the photon is absorbed, B is excited to the level $\ket{B_4}$ and quickly decays to $\ket{B_5}$ by emitting a photon of energy $e_3$. We represent such interaction as $\mathcal{B} \left(\ket{B_1} \ket{e_2} \right)=\ket{B_5} \ket{e_3}$. The operation $\mathcal{B}_\text{targ}$ on the photon state is $\mathcal{B}_\text{targ}|e_2\rangle=|e_3\rangle$. The operations are performed in the order $\mathcal{B} \mathcal{A}$. Defining
\begin{equation}
\ket{{\rm A}\prec{\rm B}} = \left|{\rm P}_{{\rm A}\prec{\rm B}}\right \rangle \ket{\raisebox{-2pt}{\footnotesize \VarClock}}_{{\rm A}\prec{\rm B}} \, ,
\end{equation}
where $\ket{\raisebox{-2pt}{\footnotesize \VarClock}}_{{\rm A}\prec{\rm B}}$ is the final state of the trigger, and introducing
\begin{equation}
\ket{\psi_1} = \ket{A_1} \ket{B_1} \ket{e_1} \, ,
\end{equation}
the final joint state of the agents and photon is 
\begin{equation}
\ket{{\rm A}\prec{\rm B}} \mathcal{B} \mathcal{A} \ket{\psi_1} = \ket{{\rm A}\prec{\rm B}} \ket{A_3}\ket{B_5} \ket{e_3} \, .
\end{equation}

For the path $|{\rm P}_{{\rm B}\prec {\rm A}}\rangle$, the sequence of events proceeds analogously, implementing operations $\mathcal{B} \left( \ket{B_1} \ket{e_1} \right)=\ket{B_3} \ket{e_4}$, with an action $\mathcal{B}_\text{targ} |e_1\rangle=|e_4\rangle$ on the target, and $\mathcal{A} \left( \ket{A_1} \ket{e_4} \right)=\ket{A_5} \ket{e_5}$, with an action $\mathcal{A}_\text{targ} |e_4\rangle=|e_5\rangle$ on the target, performed now in the switched order $\mathcal{A} \mathcal{B}$. In this case, the final joint state is
\begin{equation}
\ket{{\rm B}\prec{\rm A}} \mathcal{A} \mathcal{B} \ket{\psi_1} = \ket{{\rm B}\prec{\rm A}} \ket{A_5}\ket{B_3} \ket{e_5} \, ,
\end{equation}
where 
\begin{equation}
\ket{{\rm B}\prec{\rm A}} = \left|{\rm P}_{{\rm B}\prec{\rm A}}\right \rangle \ket{\raisebox{-2pt}{\footnotesize \VarClock}}_{{\rm B}\prec{\rm A}} \, .
\end{equation}

We assume the amplitude transitions for the processes in which the target interacts with both agents to be the same in the two paths. Then the final state of the system is
\begin{equation}
\frac{ \ket{{\rm A}\prec{\rm B}} \mathcal{B} \mathcal{A} \ket{\psi_1} + \ket{{\rm B}\prec{\rm A}} \mathcal{A} \mathcal{B} \ket{\psi_1} }{\sqrt{2}} \, .
\label{eq:switch-composite}
\end{equation}
A quantum switch is thus implemented, with the two alternative paths of the agent A playing the role of the control of the switch. As the agents are quantum systems, their interactions with the target are described by operators acting on $\mathcal{H}_A \otimes \mathcal{H}_B \otimes \mathcal{H}_\text{targ}$, where $\mathcal{H}_\text{targ}$ is the Hilbert space of the target, spanned by states $\ket{e_i}$. When the agents are classical, their operations are represented by operators on $\mathcal{H}_\text{targ}$.

Measuring the agents in a diagonal basis, the superposition of orders can be encoded in a superposition of target states. We define
\begin{align}
|{\rm F}_{{\rm A}\prec {\rm B}}\rangle &= \ket{{\rm A}\prec{\rm B}} \ket{A_3}\ket{B_5} \, , \\
|{\rm F}_{{\rm B}\prec {\rm A}}\rangle &= \ket{{\rm B}\prec{\rm A}} \ket{A_5}\ket{B_3}  \, .
\end{align}
Measuring the agents in the basis $|{\rm F}_{{\rm A}\prec {\rm B}}\rangle \pm|{\rm F}_{{\rm B}\prec {\rm A}}\rangle$ takes the photon to the state 
\begin{equation} \label{atomsfinal}
\frac{\mathcal{B}_\text{targ} \mathcal{A}_\text{targ} |e_1\rangle \pm \mathcal{A}_\text{targ} \mathcal{B}_\text{targ} |e_1\rangle}{\sqrt{2}}=\frac{|e_3\rangle \pm |e_5\rangle}{\sqrt{2}} \, ,
\end{equation}
and we obtain a superposition of the orders of the operations $\mathcal{A}_\text{targ}$ and $\mathcal{B}_\text{targ}$ on the target. The superposition of orders can then be verified by performing observations on the target system.

The measurement on the basis $|{\rm F}_{{\rm A}\prec {\rm B}}\rangle \pm|{\rm F}_{{\rm B}\prec {\rm A}}\rangle$ includes the measurement of the clock. This can be avoided by resynchronizing the clock states after the application of the operations, which would disentangle the clock from the rest of the system, allowing the measurement on the basis $|{\rm F}_{{\rm A}\prec {\rm B}}\rangle \pm|{\rm F}_{{\rm B}\prec {\rm A}}\rangle$ to be performed only on the path and few-level systems. This could be done by making A follow the paths of the protocol in a reversed way, similarly as done in~\cite{tbell}. Another possibility is to artificially synchronize the clock states by directly manipulating them, as done for instance in~\cite{Margalit}. 

We described the result of the operations of the agents A and B on an incoming photon of energy $e_1$, selecting runs of the experiment in which both agents absorbed some photon in the process. In this case, the operations $\mathcal{A} \mathcal{B}$ and $\mathcal{B} \mathcal{A}$ produce outgoing photon states of different energies, and the superposition of orders in the quantum switch leads to a superposition of final energies for the photon after a measurement in a diagonal basis, as described by equation~\eqref{atomsfinal}. As discussed in subsection~\ref{sec:general-protocol}, this can happen only if there is a superposition of distinct proper times along the alternative paths of A, allowing the photon to cross the path ${\rm P}_{{\rm A}\prec {\rm B}}$ at $t_3$ and the path ${\rm P}_{{\rm B}\prec {\rm A}}$ at $t_4$ at the same proper time $\tau^*$ of A, as required for the operation $\mathcal{A}$ to be applied for both paths. Hence, the verification of the operation of the quantum switch for an incoming photon of energy $e_1$, as described by equation~\eqref{atomsfinal}, testifies to the superposition of proper times along the alternative paths.

The case of incoming photons of definite energies $e \neq e_1$ can be analyzed similarly. If the incoming photon has an energy $e \neq e_1$, then at most one agent can operate on it nontrivially. As a result, the final state of the system formed by the target and the few-level systems is the same for both paths after the application of the operations, making the switch of the order of operations trivial. For instance, an incoming photon with energy $e_4$ can interact nontrivially only with A, in which case the operation on the target is given by $\mathcal{A}_\text{targ} \ket{e_4}= \ket{e_5}$. For the interaction with B, we have $\mathcal{B}_\text{targ} \ket{e_4}=\ket{e_4}$ and $\mathcal{B}_\text{targ} \ket{e_5}=\ket{e_5}$. If A does not absorb the incoming photon, then it decays to its ground state, emitting a photon of energy $e_6$ that testifies that the interaction did not take place, and we can discard this run of the experiment.  The final state for the path ${\rm P}_{{\rm A}\prec {\rm B}}$ is
\begin{equation}
\ket{{\rm A}\prec{\rm B}}\ket{A_5}\ket{B_5} \ket{e_5}\, ,
\end{equation}
while the final state for ${\rm P}_{{\rm A}\prec {\rm B}}$ is obtained by replacing ${\rm A}\prec{\rm B}$ with ${\rm B}\prec{\rm A}$ in the expression above. The final state of the system is
\begin{equation}
\frac{\left( \ket{{\rm A}\prec{\rm B}} + \ket{{\rm B}\prec{\rm A}} \right)}{\sqrt{2}}\ket{A_5}\ket{B_5} \ket{e_5}
\end{equation}
The target is already disentangled from the rest of the system, and its final state is simply
\begin{equation}
\frac{\mathcal{B}_\text{targ} \mathcal{A}_\text{targ} |e_4\rangle + \mathcal{A}_\text{targ} \mathcal{B}_\text{targ} |e_4\rangle}{\sqrt{2}}= \ket{e_5} \, .
\label{eq:final-input-4}
\end{equation}
In this case, as for any state with definite energy $e \neq e_1$, it is not necessary to perform a measurement of the states of the agents before the measurement of the target, as the target disentangles from the rest of the system. The quantum switch is trivial, with both orders of operations producing the same result.

\subsection{Quantum switch for arbitrary input states}

In order to conceptually clarify the nature of the quantum switch implemented by the protocol, let us now discuss the case of a generic target state. As the quantum switch is nontrivial only for an incoming photon with initial energy $e_1$, our main purpose in analyzing the case of an arbitrary input target space is to discuss the general features of our proposal and clarify its relation to other implementations of the quantum switch.

In the examples previously considered, runs of the experiment were selected according to whether or not photons of energies $e_6$ or $e_7$ were produced. Accordingly, the formulation of the quantum switch for a generic input state involves a postselection of runs of the experiment, referring to the presence or not of the photons $e_6,e_7$. In addition, as the agents are quantum systems, each operation corresponds to an interaction between the target and the agent. In constrast, in implementations of the quantum switch with classical agents, the operations are represented directly on $\mathcal{H}_\text{targ}$.

Let the target system be described by the Hilbert space $\mathcal{H}_\text{targ}$ spanned by the states $\{\ket{e_i}; \, i=1,\dots,5\}$. This is sufficient for our purposes, since it includes all states involved in nontrivial interactions with the agents. The photons of energies $\ket{e_6}$ and $\ket{e_7}$ indicate whether or not each agent has absorbed some photon. Let $\mathcal{H}^A_d$ be the Hilbert space with basis $\{ \ket{0}_A,\ket{1}_A\}$, where the states $\ket{0}_A,\ket{1}_A$ describe configurations in which the photon $\ket{e_6}$ is absent or present, respectively. The Hilbert space $\mathcal{H}^B_d$ with basis $\{ \ket{0}_B,\ket{1}_B\}$ is defined analogously. Such photons are used to select runs of the experiment. There are four possible postselections, which we denote by $\zeta=0,1,2,3$ and which correspond to the situations where both photons, only $e_6$, only $e_7$, or none was emitted, respectively. We call the systems $\mathcal{H}^A_d$ and $\mathcal{H}^B_d$ the detectors of A and B.

Let us first consider the case of the path ${\rm P}_{{\rm A}\prec {\rm B}}$, for which the target first interacts with agent A and then with B. An incoming photon $\ket{e_i}$ can be absorbed or not by A. Let the amplitudes for these processes be
\begin{align}
& c_{iA} \, ,  & \text{if absorbed by A} \, , \nonumber \\
& d_{iA} = e^{i \delta_{iA}} \sqrt{|1-c_{iA}|^2} \, , & \text{if not absorbed by A} \, .
\end{align} 
The amplitudes $c_{iA}$ are nonzero only for $i=1,4$. The incoming photon may not be absorbed by A, but then be absorbed by B. The interaction of the incoming photon with B is described similarly in terms of amplitudes $c_{iB}$, which are nonzero for $i=1,2$. The photon can also interact nontrivially with both agents. This is possible only for an incoming photon of energy $e_1$. Let $f_{BA}$ be the amplitude for an incoming photon with energy $e_1$ that was scattered by A with energy $e_2$ to also be scattered by B, and the amplitude for the second scattering not to occur be $g_{BA}=e^{i \gamma_{BA}} \sqrt{1-|f_{BA}|^2}$.  The amplitude for double scattering is then $f_{BA} c_{1A}$. This exhausts all possible processes for ${\rm P}_{{\rm A}\prec {\rm B}}$.

By computing the  state that results from the interactions of A and B successively with a generic input state of the form 
\begin{equation}
\ket{\psi} = \ket{A_1} \ket{B_1} \sum_{i=1}^5 \alpha_i \ket{e_i} \, ,
\end{equation}
we find
\begin{multline}
U_B U_A \left( \ket{0}_A \ket{0}_B \ket{\psi} \right) 
= \ket{0}_A \ket{0}_B \ket{\psi_{BA}} + \ket{0}_A \ket{1}_B\ket{\psi_{0A}} \\
+ \ket{1}_A \ket{0}_B \ket{\psi_{B0}} + \ket{1}_A \ket{1}_B \ket{\psi_{00}}  \, ,
\label{eq:BA-operations-1}
\end{multline}
where
\begin{align}
\ket{\psi_{BA}} &=  \alpha_1 c_{1A} f_{BA} \ket{A_3} \ket{B_5} \ket{e_3} \, ,\nonumber \\
\ket{\psi_{0A}} &=  \alpha_1 c_{1A} g_{BA} \ket{A_3} \ket{B_5} \ket{e_2} + \alpha_4 c_{4A} \ket{A_5} \ket{B_5} \ket{e_5} \, , \nonumber \\
\ket{\psi_{B0}} &= \alpha_1 d_{1A} c_{1B} \ket{A_5} \ket{B_3} \ket{e_4} + \alpha_2 c_{2B} \ket{A_5} \ket{B_5} \ket{e_3} \, , \nonumber \\
\ket{\psi_{00}} &= \sum_i \alpha_i d_{iA} d_{iB} \ket{A_5} \ket{B_5} \ket{e_i} \, .
\label{eq:BA-operations-2}
\end{align}

Let us now consider the path ${\rm P}_{{\rm B}\prec {\rm A}}$. Let $f_{AB}$ be the amplitude for an incoming photon with energy $e_1$ that was scattered by B with energy $e_4$ to be also scattered by A, and $g_{AB}= e^{i \gamma_{AB}} \sqrt{1-|f_{AB}|^2}$ be the amplitude for the second scattering not to occur. The amplitude for double scattering is then $f_{AB} c_{1B}$. After the application of both operations,
\begin{multline}
\hspace{-0.3cm}U_A U_B \left( \ket{0}_A \ket{0}_B \ket{\psi} \right)
= \ket{0}_A \ket{0}_B \ket{\psi_{AB}} 
\\ + \ket{0}_A \ket{1}_B\ket{\psi_{A0}} 
+ \ket{1}_A \ket{0}_B \ket{\psi_{0B}} + \ket{1}_A \ket{1}_B \ket{\psi_{00}}  \, ,
\label{eq:AB-operations-1}
\end{multline} where
\begin{align}
\ket{\psi_{BA}} &=  \alpha_1 c_{1B} f_{AB} \ket{A_5} \ket{B_3} \ket{e_5} \, ,\nonumber \\
\ket{\psi_{0B}} &=  \alpha_1 c_{1B} g_{AB} \ket{A_5} \ket{B_3} \ket{e_4} + \alpha_2 c_{2B} \ket{A_5} \ket{B_5} \ket{e_3} \, , \nonumber \\
\ket{\psi_{A0}} &= \alpha_1 d_{1B} c_{1A} \ket{A_3} \ket{B_5} \ket{e_2} + \alpha_4 c_{4A} \ket{A_5} \ket{B_5} \ket{e_5} \, ,
\label{eq:AB-operations-2}
\end{align}
and $\ket{\psi_{00}}$ is given in equation~\eqref{eq:BA-operations-2}.

When the agent A is in the path superposition state \eqref{eq:internal-evolution-superposition}, the final state is given by
\begin{equation}
\frac{ \ket{{\rm A}\prec{\rm B}} U_B U_A  + \ket{{\rm B}\prec{\rm A}} U_A U_B }{\sqrt{2}} \left( \ket{0}_A \ket{0}_B \ket{\psi} \right) \, .
\end{equation}
The final state is a superposition of the states resulting from the interactions of agents A and B with the target in switched orders. This describes a quantum switch in an extended target space that includes the states of the detectors and of the few level systems. As the agents are quantum systems in our protocol, it is natural that their actions are described in a Hilbert space that includes the few-level systems. On the other hand, the detectors play a distinct role, allowing us to distinguish runs of the experiment where the target was scattered or not by the five- and six-level systems. We are interested in the case where the state of the detectors is measured after the interactions between the target and the agents. 

It turns out that for each possible outcome for the measurement of the detectors, the final state in $\mathcal{H}_A \otimes \mathcal{H}_B \otimes \mathcal{H}_\text{targ}$ is a superposition of states obtained by the application of the operations of A and B in switched orders, projected into the subspace associated with such an outcome. Concretely, let $\ket{\zeta}$ be the state of the detectors associated with the postselection $\zeta$ and $P^{(\zeta)}$ be the orthogonal projection on $\ket{\zeta}$. For instance, $\ket{\zeta=0} = \ket{1}_A \ket{1}_B$ and $P^{(0)} = \ket{1}_A \bra{1}_A \otimes \ket{1}_B \bra{1}_B$, and similarly for the other postselections. Let us introduce
\begin{align}
P^{(\zeta)} \left[ U_B U_A \left( \ket{0}_A \ket{0}_B \ket{\psi} \right) \right] &\equiv  \ket{\zeta} \mathcal{B}^{(\zeta)} \mathcal{A}^{(\zeta)} \ket{\psi} \, ,\nonumber \\
P^{(\zeta)} \left[ U_A U_B \left( \ket{0}_A \ket{0}_B \ket{\psi} \right) \right] &\equiv  \ket{\zeta} \mathcal{A}^{(\zeta)} \mathcal{B}^{(\zeta)} \ket{\psi} \, .
\label{eq:proj-zeta}
\end{align}
The explicit form of the states $\mathcal{B}^{(\zeta)} \mathcal{A}^{(\zeta)} \ket{\psi}$  and $\mathcal{A}^{(\zeta)} \mathcal{B}^{(\zeta)} \ket{\psi}$ can be directly extracted from Eqs.~\eqref{eq:BA-operations-1} and \eqref{eq:AB-operations-1} using the definitions \eqref{eq:proj-zeta}. We find that, for each postselection, the final state of the system formed by the few-level systems and target assumes the form:
\begin{equation}
\ket{{\rm A}\prec{\rm B}} \mathcal{B}^{(\zeta)} \mathcal{A}^{(\zeta)}  \ket{\psi} + \ket{{\rm B}\prec{\rm A}} \mathcal{A}^{(\zeta)} \mathcal{B}^{(\zeta)}  \ket{\psi} \, .
\label{eq:before-diagonal}
\end{equation}
The result is a quantum switch in $\mathcal{H}_A \otimes \mathcal{H}_B \otimes \mathcal{H}_\text{targ}$ controlled by the path of A. If the input target state has a vanishing projection in the state $\ket{e_1}$, i.e. $\alpha_1=0$, both orders of operations produce the same result, and the switch is trivial. The postselection $\zeta=3$, for which no detector clicked, selects states with $\alpha_1 \neq 0$, and the final state is independent of the other components $\alpha_i$, $i=2,\dots,5$, of the input state. After normalization, it is then always given by equation~\eqref{eq:switch-composite}. This postselection thus allows us to restrict to the nontrivial part of the quantum switch.

For any input target state, a measurement in the diagonal basis $\ket{{\rm A}\prec{\rm B}}\pm \ket{{\rm B}\prec{\rm A}}$ can be performed on the final state \eqref{eq:before-diagonal} in order to transfer the superposition of orders into a superposition of states in $\mathcal{H}_A \otimes \mathcal{H}_B \otimes \mathcal{H}_\text{targ}$, resulting in a state of the form
\begin{equation}
\mathcal{B}^{(\zeta)} \mathcal{A}^{(\zeta)}  \ket{\psi} \pm  \mathcal{A}^{(\zeta)} \mathcal{B}^{(\zeta)}  \ket{\psi} \, .
\end{equation}
In special cases, one can alternatively perform a measurement in a diagonal basis that includes the states of the few-level systems in order to transfer the superposition of orders into a superposition of target states, as in the cases previously discussed where the input state is a basis vector $\ket{e_i}$. This is convenient, in particular, for the most relevant case, to our purposes, of an input state $\ket{e_1}$.

\section{Discussion}

We have introduced a protocol for the implementation of a quantum switch in a gravitational system. Instead of considering classical agents A and B operating on a target moving on a superposition state of the gravitational field, we allowed the agents to be quantum systems, with A in a path superposition state, on a fixed curved background geometry produced by a central mass. Proper times along distinct paths are then entangled with the paths. With a careful choice of paths, we constructed a protocol that mirrors the relevant features of the protocol for a gravitational quantum switch proposed in \cite{tbell}. A test of Bell's inequality for temporal order can be implemented with two entangled copies of the agents and target.

In our protocol, the order of the operations is not entangled with the spacetime metric, which is classical, but with paths of a quantum system in this fixed curved background. Its realization would then consist of a test of quantum mechanics on curved spacetimes \cite{Pikovski2015,LAMMERZAHL,Kiefer1991}, the limit of quantum field theory on curved spacetimes with negligible particle creation or annihilation and nonrelativistic speeds. This physical regime has not yet been probed experimentally, and our results provide a tool for testing the frequently adopted formulation of time-evolution on a curved spacetime leading to equation~(\ref{eq:internal-evolution-superposition}).

The quantum switch has been realized experimentally in non-gravitational systems~\cite{ReviewExp}. In such experiments, one does not keep track of the proper times at which agents perform their operations. If $\mathcal{A}$ is applied at distinct proper times of A for the orders $\mathcal{A} \mathcal{B}$ or $\mathcal{B} \mathcal{A}$, then measuring the time of the operation would in fact destroy the superposition of the order of operations. In our case, one agent is assumed to be equipped with an internal clock and apply its operation only at a prescribed time. This ensures that the influence of gravity on proper times along the distinct paths is the underlying effect allowing for the superposition of orders to occur.

Experiments that attest quantum phenomena due to the gravity of Earth in the Newtonian regime have already been made~\cite{Collela,Strelkov}. Time dilation is a dominant general relativistic correction to Newtonian gravity, and can be observed even for a height difference of $1 \, {\rm m}$~\cite{Chou2010}. A natural next step would be the exploration of superposition and entanglement of quantum clocks taking time dilation into account, an issue that has been theoretically explored \cite{Zych2011,Zych_2012,Terno2015,Rivera_Tapia_2020,RelHOM,Roura} and simulated with magnetic fields~\cite{Margalit}, but for which an experimental test with the gravitational field is still missing. With the progress on techniques for manipulating path superposition states at macroscopic scales~\cite{Dickerson2013,Kovachy2015,Hannover}, such tests might provide a path for the observation of quantum effects in gravitational systems, and our results include the quantum switch in a list of possible experiments aimed in this direction.

\chapter{Conclusions}
In this master’s thesis, we presented an introductory overview on the area of indefinite orders, focusing on scenarios in which gravity plays a role.  

In the first part, we have seen how to characterize causal relations in a theory-independent way and learned that quantum evolutions, even when constructed from basic principles with no reference to a time parameter, are always compatible with a causal structure (see chapter \ref{Chap CausalityinQT}). In GR, although causal structure is also definite, it is only established after solving the theory's equations, rather than given a priori. A first step in making the two theories merge is to construct a quantum framework in which order is not assumed from the beginning either. In chapter \ref{Chap Process Matrix}, we constructed the (bipartite) process matrix formalism, which does this by considering that QT predictions are valid only inside local laboratories. This generates a new class of evolutions with indefinite order, a typical example being the quantum switch. 

In the second part, we developed the formalism of classical and quantum clocks (chapter \ref{Chap Clocks}), which is valuable on its own for understanding how systems behave in the interface of quantum mechanics and gravity. Moreover, it is used to define the operational notion of events in spacetime, which is made with respect to the proper time displayed by ideal clocks. We then demonstrate, based on this notion and under minimal assumptions, that a superposition state of positions of a spherical mass could serve as the control of a quantum switch (chapter \ref{Chap:QGravitySwitch}). The notion of event in this problem has a natural ambiguity in which a situation where a mass is in a superposition next to a clock is equivalent to another where the mass is fixed and the clock is in a superposition involving the same relative distances. Using this analogy, we proposed a quantum switch using quantum clocks on a classical Schwarzschild spacetime. The realization of the protocol in the gravity of Earth constitutes a test of the predictions of quantum mechanics on curved spacetimes.

The area of indefinite orders is relatively new and extremely active. Some topics appearing in this work are right now under discussion. To highlight one of them, we have the debate on whether the table-top quantum switch\cite{ReviewExp} should be regarded as a simulation. As we commented, the work in \cite{Voji} argues that the switch \emph{can} be described as causally-separable process if one considers the extra spacetime events that are being used for Alice and Bob and not accounted for. However, it is argued that there exists a broader operational sense in which the order is indeed indefinite, if input and output Hilbert spaces of the laboratories are defined as those corresponding to time delocalized systems~\cite{Oreshkov2019timedelocalized}. Over the last years, indefinite causal order has been certified for quantum switch experiments via the theory-dependent test of measuring a causal witness~\cite{Goswami,Rubino}, and also there have been claims of theory-independent certifications, including a version of the Bell's theorem for temporal order in an optical table~\cite{SemiDeviceIndp,RubinoAgain}. Reference~\cite{ChiribellaDesigualdade} suggests a device-independent certification of indefinite order with a quantum switch, but points out that the use of table-top switches to violate their inequalities would not represent ``an interesting notion of indefinite causality'', because they include delayed measurements: Alice's and Bob's measurement results are only read in the intersection of their future lightcones.
Recently, the authors of reference~\cite{SectorialConstraints} presented an argument, without making reference to spacetime events, stating that indefinite order is realized only in a weak sense in the experiments. More comprehensive discussions about this can be found in references~\cite{Voji,Vilasini:2022ist,Oreshkov2019timedelocalized,SectorialConstraints}. 

Another topic of interest is the operational characterization of events through ideal clocks~\cite{Zych,ClocksClassicalandQuantum}. We have discussed the application of this idea to the examples of the quantum switch in a quantum spacetime and of the quantum switch with delocalized clocks in a classical spacetime. Several works have analyzed the operational view with which Einstein constructed the theory of relativity, using classical rods and clocks, and further explored it with quantum clocks leading to the notion of quantum reference frames---see~\cite{FramesOfRef1984, FlaminiaRefFrames, Esteban2020quantum,FlaminiaEinsteinsEquivRefFrames,Giacomini2021spacetimequantum} and references therein for approaches with and without the presence of gravitating systems.

We have also seen, in chapter \ref{Chap Process Matrix}, for instance, that there are several references about indefinite orders relating to the fields of quantum information processing, computation and thermodynamics. But there are only a few works concerning direct treatment of gravitational setups, or at least exploring the analogy coming from the equivalence principle, to analyze how indefinite order could arise in these scenarios~\cite{tbell, Rindler,RindlerDaZych,QSonEarth}. We expect that, with the development of techniques to describe systems on superpositions of spacetimes at low energies\cite{Foo_2021,BlackHolesSuperp}, this will change in the near future. 

Another possible research direction consists in the analysis of the differences between the three quantum switches presented in this thesis, namely, the table-top experimental switch~\cite{Goswami}, the switch in a quantum gravity scenario~\cite{tbell} and the switch in the gravity of Earth~\cite{QSonEarth}. When it comes to probabilities, all of them produce the same results.
However, it is important to understand where are the probabilities coming from, that is, what are the events or laboratories being considered. In optical implementations, the objects which realize Alice's and Bob's operations are positioned in an interferometer in a way that the target either has the possibility of encountering them at  two different spatial locations (see Fig.~\ref{fig:ExpQSProcopio}) or at two different times (Fig.~\ref{fig:ExpQSGoswami}) as seen from the reference frame of the optical table. Then, the input Hilbert spaces of the agents' laboratories are a tensor product of two Hilbert spaces associated with the target system at different times~\cite{Oreshkov2019timedelocalized}.
 For example, if we talk about Alice's ``laboratory'', this works as if Alice opens the door of her laboratory at two distinct time windows during the experiment, even if the system only uses one of them in each round. If she had a clock, she could measure the time at which her operation was applied and learn which time window was used. But, in these implementations, such a measurement would destroy the path superposition state and is not made. So, the two time (or space) windows are needed for the protocol. 
 In the gravitational switches, the events at which Alice and Bob realize their operations are specified with respect to the agents' clocks. Therefore, not only the superposition is not destroyed by acknowledging the times of the operations, but the agents are indeed expected to apply the operation at a definite proper time. Furthermore, the mechanism behind the gravitational switches is time dilation, which makes order indefinite independently of the specific nature of the systems involved. The quantum gravity description is yet different from the others, because any two local systems near a a mass in a superposition state would attest indefinite orders between their events. In the case of the switch in the gravity of Earth, only those clocks which were put in a superposition of positions identify indefinite order. And in the photonic case, indefinite order can only be attested by the specific photon that goes through the beam splitter (see the supplementary information of~\cite{tbell}).

In chapter~\ref{Chap QSonEarth} we presented our contribution to the area~\cite{QSonEarth} and its conclusions. Some topics to explore in future works concern reviewing the definitions of event used in the literature, the analysis of order relations in other classes of quantum  spacetimes, the formulation of protocols corresponding to other causally non-separable processes, besides the quantum switch, and the study of order relations and causality in protocols on top of quantum spacetimes which have no correspondence with any protocol on a classical spacetime.



\bibliographystyle{ieeetr}
\bibliography{ThesisIOQMGravMain}

\includepdf{ApendixPage}

%
%


\begin{apendicesenv}

\chapter{A model for the trigger}\label{App:trigger}

Let us describe a concrete implementation of a trigger satisfying the condition \eqref{eq:trigger-condition} discussed in section \ref{sec:Modelfortheops} of the main text. We model the trigger as a harmonic oscillator $\mathcal{H}_\mathsmaller{\tiny \VarClock}=L^2(\mathbb{R})$ with free Hamiltonian
\begin{equation}
H_0 = - \frac{\hbar^2}{2m} \nabla^2 + \frac{m \omega^2}{2} q^2
\label{eq:free-hamiltonian-trigger}
\end{equation}
and period
\begin{equation}
T = \frac{2 \pi}{\omega} = 4 \tau^* \, .
\label{eq:alarm-time}
\end{equation}
In the proposed protocol, the trigger plays the role of a clock that is programmed to change the state of the six-level system $\mathcal{H}_A$ of the agent A from $\ket{A_0}$ to $\ket{A_1}$ at a time $\tau^*$. We store the information about the predetermined time $\tau^*$ in the period $T$ of the oscillator through equation~\eqref{eq:alarm-time}.

The interaction of the oscillator with the system $\mathcal{H}_A$ is described by a Hamiltonian $H_{int}$ that is nonzero only on the subspace generated by the relevant states $\ket{A_0},\ket{A_1} \in \mathcal{H}_A$. Let $\sigma_x$ be the first Pauli matrix on this subspace,
\begin{equation}
\sigma_x = \begin{pmatrix}
0 & 1 \\ 1 & 0 
\end{pmatrix} \, .
\end{equation}
The interaction Hamiltonian is defined as
\begin{equation}
H_{int} = P_\Delta \otimes V_0 \, \sigma_x   \, , \qquad V_0 >0 \, ,
\end{equation}
where $P_\Delta$ is the orthogonal projection onto the region $x \in [0,\Delta]$, which acts on the wavefunction $\phi(x)$ of the oscillator according to
\begin{equation}
P_\Delta \phi(x) = \begin{cases}
\phi(x)  \, , & \text{if } x \in [0,\Delta] \\
0 \, , & \text{else} \, .
\end{cases}
\end{equation}
The interaction is nontrivial only when the oscillator is in the region $[0,\Delta]$, which we call the interaction zone. The full Hamiltonian of the system is
\begin{equation}
H = H_0 \otimes \bm{1}+ H_{int} \, .
\end{equation}

Let $\ket{\pm}$ be the eigenvectors of $\sigma_x$ with eigenvalues $\pm 1$,
\begin{equation}
\sigma_x \ket{\pm} = \pm \ket{\pm} \, .
\end{equation}
Then,
\begin{align*}
( H_0 \otimes \bm{1} + H_{int}) (\ket{\phi} \ket{+}) = \left[ (H_0 + V_0 P_\Delta) \ket{\phi}\right] \ket{+} \, , \\ 
( H_0 \otimes \bm{1} + H_{int}) (\ket{\phi} \ket{-}) = \left[ (H_0 - V_0 P_\Delta) \ket{\phi}\right] \ket{-} \, .
\end{align*}
Hence, for $\ket{\pm}$, the wavefunction evolves under a Hamiltonian
\[
H_0 + V_\Delta^\pm \, , \quad V_\Delta^\pm = \begin{cases}
\pm V_0 \, , & \text{if } x \in [0,\Delta] \, , \\
0 \, , & \text{else} \, .
\end{cases}
\]
For $\ket{+}$, the oscillator encounters a potential barrier; for $\ket{-}$, it encounters a potential well. The general case is a superposition of these situations.

Let $\ket{\alpha}$ be a coherent state of the trigger, $a \ket{\alpha}=\alpha \ket{\alpha}$, where $a$ is the annihilation operator of the harmonic oscillator. The trigger is prepared in a coherent state $\ket{\raisebox{-2pt}{\footnotesize \VarClock}\, ;\tau=0}=\ket{\alpha_0}$, where
\begin{equation}
\alpha_0 = \frac{A}{\sqrt{2} \sigma} \, , \qquad \sigma = \sqrt{\frac{\hbar}{m\omega}} \, , \qquad A = \frac{2 \Delta V_0}{\pi \hbar \omega} \, .
\label{eq:parameters}
\end{equation}
The parameter $\sigma$ is the width of the wavepacket. The coherent state describes a configuration of maximum positive displacement for an oscillation of amplitude $A$. The system $\mathcal{H}_A$ is prepared in the state $\ket{A_0}$ at $\tau=0$. The state of the composite system will be represented by $\ket{\Phi(\tau)}$.

We assume that $A \gg \Delta \gg \sigma$. The inequality $\Delta \gg \sigma$ means that the width of the wavepacket is much smaller than the width of the interaction region, i.e., that the state of the oscillator is well localized with respect to the potential step. The condition $A \gg \Delta$ means that the oscillator is initially far away from the interaction zone. Its evolution is thus initially determined by the free Hamiltonian $H_0$. As a result, it remains a coherent state $\ket{\raisebox{-2pt}{\footnotesize \VarClock}\, ;\tau}=\ket{\alpha(\tau)}$, where $\alpha(\tau)=\alpha_0 e^{-i\omega \tau}$, until it reaches the interaction zone. As the average position of such a coherent state is simply
\begin{equation}
\langle x \rangle = A \cos \omega t \, ,
\end{equation}
the wavepacket reaches the boundary of the interaction zone at $x=\Delta$ with a speed $v \sim \omega A$ after a time $\Delta \tau \simeq \tau^* - \epsilon$, where $\epsilon \sim \Delta/v$.  From $A, \Delta \gg \sigma$ and equation~\eqref{eq:parameters}, we also find that the energy of the wavepacket satisfies
\begin{equation}
\frac{m\omega^2 A^2}{2} \gg V_0 \, ,
\end{equation}
i.e., the energy of the wavepacket is much larger than the potential step $V_0$. We can then neglect the reflection of the wavepacket by the potential step and adopt the approximation of perfect transmission.

For $\tau < \tau^* - \epsilon$, the interaction Hamiltonian is negligible, since the wavepacket is outside the interaction zone:
\[
H_{int} \left( \ket{\alpha(t)} \ket{\chi} \right) = 0 \, , \quad \text{for } \, \tau<\tau^* -\epsilon \, ,
\]
for any $\ket{\chi} \in \mathcal{H}_A$. Therefore, the six-level system remains at the initial state $\ket{A_0}$ for $\tau<\tau^* -\epsilon$, and the first condition in equation~\eqref{eq:trigger-condition} is satisfied. The wavepacket then enters the interaction zone and crosses it in a time interval $\Delta \tau \simeq \Delta/v=\epsilon$. During this time, we have $P_\Delta \ket{\raisebox{-2pt}{\footnotesize \VarClock}\, ;\tau}\simeq \ket{\raisebox{-2pt}{\footnotesize \VarClock}\, ;\tau}$, so that 
\begin{align*}
H \ket{\Phi} &=  (\bm{1} \otimes H_0 + H_{int}) \ket{\Phi} \nonumber \\
	&\simeq (\bm{1} \otimes H_0 + V_0 \, \sigma_x \otimes \bm{1}) \ket{\Phi} \, ,\quad \text{for } \, \tau \in [\tau^*-\epsilon, \tau^*] \, .
\end{align*}
The time evolution generated by this Hamiltonian can be integrated exactly. We find that, for $\tau \in [\tau^*-\epsilon, \tau^*]$, the system evolves according to
\begin{equation}
\ket{\Phi(\tau)} = \left\{ e^{-iV_0 \sigma_x [\tau-(\tau^*-\epsilon)]/ \hbar} \ket{A_0} \right\} \ket{\alpha(\tau)} \, .
\end{equation}
At $\tau=\tau^*$, when the wavepacket reaches the opposite boundary of the interaction zone at $x=0$, we have
\begin{equation}
\ket{\Phi(\tau^*)} = \ket{A_1} \ket{\alpha(\tau^*)} \, ,
\end{equation}
and the second condition in equation~\eqref{eq:trigger-condition} is also satisfied, showing that the model described in this Appendix satisfies the properties required of the trigger.

Let us note that the scattering of a Gaussian wavepacket by a potential step is studied in \cite{Bernardini,Norsen} in the regime where the wavepacket has a small width in comparison with the step. The evolution of the wavepacket in this regime, which is usually not discussed in basic textbooks, can be described as a process involving multiple instantaneous scatterings with the boundaries of the potential step. The incident wavepacket branches into two wavepackets when it meets the potential step, with one branch corresponding to the transmitted wave and the other describing the reflected component. The transmitted wavepacket branches again into two new wavepackets when it meets the opposite boundary of the potential step, and so on. The initial wavepacket evolves in this manner into an infinite train of successive wavepackets, both reflected and transmitted, of progressively smaller amplitudes, as described in \cite{Bernardini}. We adopted the approximation of perfect transmission, valid for large energies, for the calculations above, so that the wavepacket crosses the interaction zone without branching into superpositions of localized states. While it crosses the potential $V_0 \sigma_x$ in the interaction zone $[0,\Delta]$, it induces a rotation of the state of $\mathcal{H}_A$, and the parameters of the model can be adjusted so that the initial state $\ket{A_0}$ evolves into the state $\ket{A_1}$, as required.

\end{apendicesenv}












\end{document}